\tikzset{
    master/.style={
        execute at end picture={
            \coordinate (lower right) at (current bounding box.south east);
            \coordinate (upper left) at (current bounding box.north west);
        }
    },
    slave/.style={
        execute at end picture={
            \pgfresetboundingbox
            \path (upper left) rectangle (lower right);
        }
    }
}
   \newcommand{\bfb}{\ensuremath{{\mathbf{b}}}\xspace}
\newcommand{\bfC}{\mathbf{C}}   \newcommand{\bfc}{\mathbf{c}}
   \newcommand{\bfd}{\mathbf{d}}
   \newcommand{\bfe}{\mathbf{e}}
\newcommand{\bfI}{\mathbf{I}}
\newcommand{\bfM}{\mathbf{M}}
\newcommand{\bfP}{\mathbf{P}}   \newcommand{\bfp}{\ensuremath{{\mathbf{p}}}\xspace}
   \newcommand{\bfx}{\mathbf{x}}
\DeclareMathOperator*{\newargmax}{arg\,max}
\definecolor{darkred}{rgb}{0.5, 0, 0}
\definecolor{darkgreen}{rgb}{0, 0.5, 0}
\definecolor{darkblue}{rgb}{0,0,0.5}
\newcommand\markx[2]{}
\renewcommand{\path}{\ensuremath{\mathsf{path}}\xspace}
\newcommand{\N}{\ensuremath{\mathbb{N}}\xspace}
\newcommand{\ignore}[1]{}
\newcommand{\R}{\ensuremath{\mathbb{R}}}
\newcommand{\E}{\mathbb{E}}
\definecolor{darkgreen}{rgb}{0,0.5,0}
\definecolor{lightblue}{RGB}{0,176,240}
\definecolor{darkblue}{RGB}{0,112,192}
\definecolor{lightpurple}{RGB}{124, 66, 168}
\definecolor{grey}{RGB}{139, 137, 137}
\definecolor{maroon}{RGB}{178, 34, 34}
\definecolor{green}{RGB}{34, 139, 34}
\definecolor{types}{RGB}{72, 61, 139}
\definecolor{gold}{rgb}{0.8, 0.33, 0.0}
\definecolor{darkgray}{gray}{0.3}
\newcounter{task}
    \newtheorem{thm}{Theorem}[section]      
    \newtheorem{theorem}[thm]{Theorem}
    \newtheorem{lemma}[thm]{Lemma}
    \newtheorem*{lemma*}{Lemma}
    \newtheorem{claim}[thm]{Claim}
    \newtheorem{corollary}[thm]{Corollary}
    \newtheorem{fact}[thm]{Fact}
    \newtheorem{proposition}[thm]{Proposition}
    \newtheoremstyle{boxes}
    {2pt}
    {0pt}
    {}
    {}
    {\bfseries}
    {}
    {\newline}
    {\thmname{#1}\thmnumber{ #2}:  
    \thmnote{#3}}
    \theoremstyle{boxes}
    \theoremstyle{definition}
    \newtheorem{definition}{Definition}
    \newtheorem{remark}{Remark}
\newcommand{\elaine}[1]{{\footnotesize\color{magenta}[Elaine: #1]}}
\newcommand{\Hao}[1]{{\footnotesize\color{blue}[Hao: #1]}}
\renewcommand{\elaine}[1]{}
\renewcommand{\Hao}[1]{}
\newcounter{cnt:challenge}
\begin{document}
\begin{titlepage}
\title{Foundations of Transaction Fee Mechanism Design}
\author{}
\author{Hao Chung\thanks{Supported by Packard Fellowship, NSF award 2044679,  and a gift from Nikolai Mushegian.} \\ CMU \\ {\tt haochung@andrew.cmu.edu} \and Elaine Shi\footnotemark[1] \\ CMU \\{\tt runting@cs.cmu.edu}}
\date{\vspace{-20pt}}

\maketitle
\thispagestyle{empty}

\begin{abstract}
In blockchains such as Bitcoin and Ethereum, users compete in a transaction fee auction
to get their transactions confirmed in the next block.
A line of recent works set forth the desiderata for a ``dream'' 
transaction fee mechanism (TFM), and explored whether 
such a mechanism existed. 
A dream TFM should satisfy 1)
{\it user incentive compatibility} (UIC), i.e., 
truthful bidding should be a user's dominant strategy;
2) {\it miner incentive compatibility} (MIC), 
i.e., the miner's dominant  strategy 
is to faithfully implement the prescribed mechanism;
and 3) {\it miner-user side contract proofness} (SCP), i.e., 
no coalition of the miner and one or more user(s) can increase
their joint utility by deviating from the honest behavior.
The weakest form of SCP is called $1$-SCP, where 
we only aim to provide resilience against the collusion of the miner 
and {\it a single} user.
Sadly, despite the various attempts, to the best of knowledge, no existing
mechanism can satisfy all three properties in all situations.

Since the TFM  
departs from classical mechanism design in modeling and assumptions, 
to date, our understanding
of the design space is relatively little.
In this paper, we 
further unravel the mathematical structure 
of transaction fee mechanism design by proving the following results: 
\begin{itemize}[leftmargin=5mm]
\item 
{\it Can we have a dream TFM?}
We prove a new impossibility result: {\it assuming finite block size}, 
no single-parameter, non-trivial, 
possibly randomized TFM 
can simultaneously satisfy UIC and $1$-SCP.
Consequently, no non-trivial TFM can satisfy all three desired
properties simultaneously.  
This answers an important open question raised by Roughgarden in his recent work.
\item 
{\it Rethinking the incentive compatibility notions.}
We observe that the prevalently adopted incentive compatibility notions may be   
too draconian and somewhat flawed. 
We rectify the existing modeling techniques, and
suggest a relaxed incentive compatibility notion 
that captures additional hidden costs of strategic deviation. 
We construct a new mechanism called the ``burning second-price auction'',
and show that it indeed satisfies the new incentive compatibility notions.
We additionally prove that
the use of randomness is necessary 
under the new incentive compatibility notions for 
``useful''
mechanisms that resist
the coalitions of the miner and at least $2$ users.
\item 
{\it Do the new design elements make a difference?}
Unlike classical mechanisms, TFMs may employ a couple new design elements
that are idiosyncratic to blockchains.
For example, a burn rule (employed by Ethereum's EIP-1559)
allows part to all of the payment from the users to be burnt rather than paid
to the miner. Some mechanisms also allow unconfirmed transactions to be
included in the block, to set the price for others. 
Our work unveils how these new design elements actually make a difference
in TFM design, allowing us to achieve incentive compatible properties 
that would otherwise be impossible.

\ignore{
Roughgarden 
showed that Ethereum's EIP-1559 achieves UIC, MIC, 
and SCP simultaneously --- but {\it only when the block size is infinite}. 
Intriguingly, Ethereum's EIP-1559 makes use a {\it burn rule}, where 
all users' payment (except possibly a bare minimum tip) is {\it burnt}
rather than paid to the miner.
Such a burn rule is novel and idiosyncratic to blockchains.
We show that even when assuming 
infinite block size, 
having a burn rule is necessary for any (possibly randomized) TFM
that simultaneously satisfies UIC and $1$-SCP; and thus
the burn rule indeed plays a critical role in EIP-1559.
Without a burn 
rule, the only TFM that satisfies both UIC and $1$-SCP is the trivial mechanism
where users always pay nothing and the miner gets nothing.
\item 
{\it Is it useful for blocks to contain unconfirmed transaction?}
There is an ongoing debate in the community whether 
we should allow a block to contain unconfirmed transactions which might be there
just to ``set the price''.
We show that 
the ability for a block to contain 
unconfirmed transactions may be useful. Specifically, we prove 
that if we insisted that all transactions in a block must be confirmed, 
it is impossible to have a non-trivial TFM that satisfies
even weak incentive compatibility. 
}
\end{itemize}


\ignore{
Our impossibility result leads to 
an important implication for Ethereum's recent EIP-1559 proposal.
EIP-1559 is arguably (among) the closest we have
come to in terms of 
satisfying all three properties. 
When there is congestion,  EIP-1559 approximates a first-price auction 
and therefore is not UIC. On the other hand, when the block size 
is infinite (i.e., plentiful),
EIP-1559 approximates a simple posted-price 
auction where all the proceeds (except possibly a bare minimal tip) are {\it burnt}
rather than paid to the miner.  
The burn rule is a novel idea that is idiosyncratic to blockchains.
Our result implies that such a burn rule (or the like) 
is indeed necessary for any non-trivial, single-parameter
mechanism that achieves all three properties,
even if only in the infinite block size regime.
}


\end{abstract}

\end{titlepage}

\tableofcontents
\thispagestyle{empty}
\newpage
\setcounter{page}{1}
\thispagestyle{empty}

\elaine{discuss split bid in one place}

\section{Introduction}
\label{sec:intro}
In decentralized blockchains such as Bitcoin and Ethereum, miners
are incentivized 
to collectively maintain the public ledger, since they can collect block rewards and
transaction fees.
Today, a simple ``pay your bid'' auction is implemented
by major blockchains like Bitcoin. 
In a ``pay your bid'' auction, the miners' dominant strategy
is to take the highest bids. However, 
users may be incentivized to bid strategically, e.g., 
bid close to $0$ when there is no congestion, or bid the minimum possible
to get selected when there is congestion. 
Earlier works~\cite{zoharfeemech,yaofeemech,functional-fee-market} pointed out  
such strategic bidding indeed happens in real life, 
and is considered undesirable.
Consequently, 
several works~\cite{zoharfeemech,yaofeemech,functional-fee-market,eip1559,roughgardeneip1559,roughgardeneip1559-ec,dynamicpostedprice}
call out to the community to rethink the design of transaction  
fee mechanisms (TFMs). These works raise the following  
important question: {\it what is the ideal transaction fee mechanism}?

\paragraph{Desiderata of a dream TFM.}
Partly due to its decentralized nature, transaction fee mechanism (TFM)
design departs from classical mechanism design~\cite{myerson,agt} 
in modeling and assumptions.
We face several challenges
that arise from the strategic behavior of the miner and 
of miner-user coalitions: 
\begin{itemize}[leftmargin=5mm]
\item 
{\it Challenge 1: strategic behavior of the miner.}
The vast majority of work 
in the classical mechanism design literature (with some 
exceptions~\cite{credibleauction,commit-credible-auction,shillbid02,shillbid00,shillbid01}
which we discuss further in Section~\ref{sec:related})
assumes that the auctioneer
is trusted and implements the prescribed mechanism honestly 
---  therefore, we mainly care about how to design mechanisms 
such that the users are incentivized to bid truthfully.
In a decentralized environment, 
the auctioneer is no longer fully trusted. 
In a blockchain transaction fee mechanism, 
the miners and the logic of the blockchain jointly serve
as the ``auctioneer''. 
Although the logic of the blockchain is hard-coded and unalterable, 
miners can deviate from the prescribed mechanism, 
and behave strategically to increase their financial gains. 
As a simple example, consider a classical Vickrey auction~\cite{vickrey}. 
Suppose that each block has size $B$. 
We can then include the top $B$ bids 
into the block, among which the first $B-1$ are considered {\it confirmed}
and they pay the $B$-th price. 
If there are strictly  fewer than $B$ bids, everyone gets 
confirmed and they all pay a price of $0$.
All users' payment goes to the miner that mines the block.
Classical algorithmic game theory~\cite{vickrey,agt} 
tells us that such a Vickrey auction is 
dominant strategy incentive compatible (DSIC) for the users,
assuming that the miner indeed behaves honestly. 
Unfortunately, several prior 
works~\cite{functional-fee-market,roughgardeneip1559} pointed out
that the Vickrey auction is not incentive compatible
for the miner, since the miner may want to inject
a fake transaction whose price is between the $(B-1)$-th
and $B$-th price to increase its revenue.
\item
{\it Challenge 2: miner-user collusion.}
In a decentralized blockchain, it is easy for two or more parties
to form binding side contracts through smart contracts. 
A miner could collude with a user to increase 
the joint utility of the coalition, and 
the two can then split the gains with a binding side contract. 
In the aforementioned Vickrey auction 
example, the miner could alternatively ask the $B$-th bidder
to raise its bid to be infinitesimally smaller 
than the $(B-1)$-th bid, and then split its gains with the 
$B$-th bidder in a side contract.

Most prior works~\cite{zoharfeemech,functional-fee-market,roughgardeneip1559,roughgardeneip1559-ec} focused on 
miner-user rather than user-user collusion, 
likely for the following reason:
it is much
easier to facilitate miner-user  
rendezvous since the big miners are well-known. 
In comparison, users are ephemeral and thus user-user rendezvous
is much more costly to facilitate.
\end{itemize}


With these challenges in mind, prior works~\cite{zoharfeemech,roughgardeneip1559} have 
suggested the following desiderata for 
a ``dream'' transaction fee mechanism:
\begin{enumerate}[leftmargin=6mm]
\item {\it User incentive compatibility (UIC).}
Assuming that the miner implements the mechanism honestly, 
then following the honest bidding strategy or truthful bidding
should be a dominant strategy for the 
users\footnote{Roughgarden~\cite{roughgardeneip1559,roughgardeneip1559-ec}'s 
definition requires 
the honest strategy of the user (not necessarily truthful bidding)
be the dominant strategy, 
but he also pointed out that 
one can always convert such mechanisms to one where truthful
bidding is dominant due to the revelation principle (see
footnote 14, page 11 of \cite{roughgardeneip1559-ec}).
Therefore, we require truthful bidding to be a dominant  
strategy without loss of generality.
}.  

\item {\it Miner incentive compatibility (MIC).}
A miner's dominant strategy should be to implement the prescribed
mechanism faithfully. 
\item {\it Miner-user side contract proofness ($c$-SCP).}
No coalition of the miner and 
up to $c$ users can increase their joint utility through any deviation.
In the above, $c$ is a parameter that specifies an upper bound
on the coalition's size. The larger the $c$, the more side contract resilient.
Note that it is generally harder for a miner 
and a large number of users to engage in  
a side contract, than, say, a miner and a single user.
\end{enumerate}

To the best of our knowledge, 
all prior works~\cite{zoharfeemech,yaofeemech,functional-fee-market,roughgardeneip1559} 
fall short of achieving all three properties at the same time --- see 
Section~\ref{sec:related}
for more detailed discussions on these prior works.
The closest we have come to achieving all three properties 
is Ethereum's recent EIP-1559~\cite{eip1559} proposal.  
The very recent work of Roughgarden~\cite{roughgardeneip1559}
showed that (a close variant of) EIP-1559 
can achieve all three properties 
{\it assuming that the block size is infinite} (or more precisely, assuming
that the base fee is set high enough such that the number of transactions 
willing to pay the base fee is upper bounded by the block size). 
However, in practice, congestions do occur, e.g.,
when there is a peak in demand or when the mining power drops
causing inter-block time to be longer~\cite{gapgame,bitcoininstability}.
It is also well-understood that we cannot arbitrarily increase
the block size since this would  harm the security 
of the underlying consensus~\cite{backbone,pss17,rethinking-csf17}.
Roughgarden~\cite{roughgardeneip1559,roughgardeneip1559-ec}
argued that when there is congestion,  
EIP-1559 acts like a first-price auction and therefore fails to satisfy UIC, i.e., 
strategic bidding could improve an individual user's utility.

\paragraph{\underline{Open question 1:}}
With all these failed attempts, it is natural to ask: 
{\it is it actually feasible to have a ``dream'' transaction fee mechanism
that satisfies all three properties simultaneously?}
Is the community's lack of success 
so far due to a more fundamental mathematical
impossibility?
Roughgarden also 
raised this as a major open question 
in his recent work~\cite{roughgardeneip1559,roughgardeneip1559-ec}.

\paragraph{\underline{Open question 2:}}
If there is indeed a mathematical impossibility, then the natural next question to
ask is: are the current incentive compatibility notions 
overly stringent? If so, 
can we relax the incentive compatibility notion  
to circumvent the impossibilities?

\paragraph{TFM design space enriched by new elements.}
Transaction fee mechanisms often employ a couple interesting features
that are not commonly used in classical mechanisms.
For example, Ethereum's EIP-1559~\cite{eip1559,roughgardeneip1559} 
suggested the usage of 
a {\it burn rule}, where part to all of the 
fees collected from the confirmed transactions
may be ``burnt'' rather than paid to the miner.
Earlier work also considered ``complete burning'' of payments 
in environments 
where money transfer is not possible~\cite{moneyburnmech}. 
By contrast, in TFM, the burning may be partial.

Another design consideration that  
is being debated in the community 
is whether we should allow blocks to contain unconfirmed 
transactions that are just there to ``set the price''. 
Although this approach has been employed by 
some suggested mechanisms~\cite{zoharfeemech,yaofeemech} (see
the paragraph before Section 1.1 in Lavi et al.~\cite{zoharfeemech} 
\elaine{refer to appendix?}), 
an argument against it is that real estate on a blockchain is scarce --- 
therefore, we ideally do not 
waste space including unconfirmed transactions.  
An intriguing question is the following:

\paragraph{\underline{Open question 3:}}
Do these elements idiosyncratic to blockchains actually make a difference
in the design of transaction fee mechanisms?
Can they help achieve incentive compatible mechanism designs 
that would otherwise be impossible?

\ignore{
\subsubsection{to move}
Of course, for the decentralized parties to form side contracts 
would require a rendezvous process. The rendezvous process
is easy to implement for miner-user coalitions 
since the big miners are well-known in major blockchains today. 
In comparison, users are ephemeral, 
and thus user-user rendezvous is much more costly to implement.
Perhaps for this reason, 
prior works on transaction 
fee mechanisms~\cite{zoharfeemech,roughgardeneip1559,roughgardeneip1559-ec} 
focused mostly on resilience to miner-user collusion (as opposed to user-user collusion).
}

\subsection{Our Results and Contributions}

\subsubsection{Impossibility of a Having a ``Dream'' Transaction Fee Mechanism}
We prove an impossibility result 
(Theorem~\ref{thm:intromain})
showing that assuming finite block size, there is no non-trivial 
transaction fee mechanism (TFM) that satisfies 
UIC and 1-SCP, where 1-SCP means resilience against side contracts
between the miner and {\it a single} user.
Consequently, there is also no non-trivial TFM that satisfies
all three desired properties.

\begin{theorem}[Impossibility of a ``dream'' transaction fee mechanism (informal)]
Suppose that the block size is finite.
There does not exist a non-trivial, single-parameter transaction fee mechanism (TFM)
that simultaneously satisfies UIC and 1-SCP.
Moreover, this impossibility holds 
for both deterministic and randomized
mechanisms.
\label{thm:intromain}
\end{theorem}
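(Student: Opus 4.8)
The plan is to first reduce to a clean structural form using UIC, then exploit the scarcity of confirmation slots created by the finite block size through a single well-chosen miner--user deviation.

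\textbf{Step 1 (structure from UIC).} Model the TFM by a (possibly randomized) allocation rule assigning each user $i$ an expected confirmation probability $x_i(\mathbf b)$ with $\sum_i x_i \le B$, a payment rule $p_i(\mathbf b)\ge 0$, and a miner-revenue rule $\mu(\mathbf b)$ with $0 \le \mu(\mathbf b) \le \sum_i p_i(\mathbf b)$; write the \emph{burnt amount} as $\beta(\mathbf b) = \sum_i p_i(\mathbf b) - \mu(\mathbf b) \ge 0$. Since UIC assumes an honest miner, a user's only move is to misreport its bid, so Myerson's lemma applies: for every fixed $\mathbf b_{-i}$ the expected allocation $x_i(\cdot,\mathbf b_{-i})$ is monotone non-decreasing in the bid, losers pay $0$, and a confirmed user's expected payment equals the Myerson threshold integral (the critical bid in the deterministic case). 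Individual rationality $x_i v_i - p_i \ge 0$ follows. I would phrase \emph{non-triviality} as: there is a profile at which some confirmed user pays a strictly positive expected amount.

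\textbf{Step 2 (the deviation that finiteness enables).} The coalition is the miner together with one user, say user $1$; its joint utility on a reported block is $\mu + x_1 v_1 - p_1$. The decisive move is to let the miner \emph{drop a non-colluding competitor} (or, symmetrically, refrain from \emph{injecting} a fake competitor): under a finite block this frees a scarce confirmation slot and raises user $1$'s confirmation probability. The key simplification is to rewrite the joint utility via the burn identity, $\mu + x_1 v_1 - p_1 = \sum_{j\ne 1} p_j - \beta + x_1 v_1$, so that comparing the congested profile to the relieved one collapses the payment and revenue terms, reducing $1$-SCP to a crisp inequality of the shape $(x_1^{\mathrm{relieved}} - x_1^{\mathrm{cong}})\,v_1 + (\beta^{\mathrm{relieved}} - \beta^{\mathrm{cong}}) \le (\text{competitors' payments in the congested profile})$. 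I would pair this with the mirror fake-injection deviation to bound the burnt-amount difference from the other side.

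\textbf{Step 3 (contradiction via congestion).} Using non-triviality together with the Myerson structure, I would pass to a symmetric congested configuration --- e.g. $B+1$ identical high bidders at value $H$ overflowing the block --- where rationing forces each confirmation probability below $1$ while a lone high bidder is confirmed with probability near $1$. Substituting into the Step~2 inequality, the gain term $(x_1^{\mathrm{relieved}} - x_1^{\mathrm{cong}})\,H$ grows linearly in $H$, whereas the competitors' admissible payments are bounded through IR and the threshold characterization; taking $H$ large makes the coalition's slot-freeing gain strictly exceed the right-hand side, violating $1$-SCP. This proves no non-trivial UIC mechanism can be $1$-SCP under a finite block; since any SCP guarantee ($c$-SCP with $c\ge 1$) implies $1$-SCP, the same argument rules out the full ``dream'' TFM. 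Randomization is handled throughout by working with expected allocations and payments, for which both Myerson's characterization and the deviation accounting hold unchanged.

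\textbf{Main obstacle.} The hard part will be controlling the two quantities the clean inequality still leaves free: the change in the \emph{burnt amount} between the congested and relieved profiles, and the \emph{monotonicity of confirmation probability under competitor removal}. Concretely I expect to need (i) a lemma that removing a competitor never decreases, and in the congested regime strictly increases, a user's confirmation probability; (ii) a bound showing the burnt amount cannot silently absorb the coalition's linear-in-$H$ gain; and (iii) the reduction tying mere non-triviality (which a plain posted price exhibits even \emph{without} congestion) to the existence of a genuinely \emph{exploitable} congested configuration. This is exactly where the finite-versus-infinite dichotomy lives: with an infinite block there is no slot scarcity, the relieving deviation yields no confirmation gain, and the inequality becomes satisfiable (as EIP-1559 with burning shows) --- so the proof must invoke finiteness precisely at the point of manufacturing the competition that the coalition can then dissolve.
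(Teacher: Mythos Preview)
Your plan has a genuine gap, and the three items you flag as the ``main obstacle'' are not peripheral technicalities but exactly where the argument fails. Two of them are fatal as stated. First, cross-monotonicity of $x_1$ under competitor removal is \emph{not} implied by UIC: Myerson gives monotonicity only in one's \emph{own} bid, and nothing rules out a UIC mechanism with $x_1^{\text{relieved}} \le x_1^{\text{cong}}$ (a mechanism could, say, confirm bids only when enough high bids are present). Second, in your symmetric congested configuration with $B{+}1$ bidders at $H$, the competitors' payments are bounded only by $H$ (IR gives $p_j \le H$, not a constant independent of $H$), and the burnt amount $\beta$ can likewise be of order $H$; both sides of your inequality therefore scale linearly in $H$, so sending $H \to \infty$ does not force a violation. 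Your ``mirror fake-injection'' idea does not close this either, since an unconfirmed fake bid costs zero in the utility model and hence gives no leverage on $\beta$. Finally, your reduction from ``some user pays a positive amount'' to ``there is an exploitable congested configuration with the needed gap $x_1^{\text{relieved}} - x_1^{\text{cong}} > 0$'' is left entirely open.

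The paper bypasses all three obstacles with one structural lemma proved \emph{before} invoking finite block size: any (possibly randomized) TFM that is both UIC and $1$-SCP satisfies $\mu(\mathbf b) \equiv 0$ on every input. The argument treats the miner--user-$i$ coalition as a ``virtual bidder'' whose effective payment is $p_i(\mathbf b) - \mu(\mathbf b)$; $1$-SCP says this virtual bidder has no profitable misreport, so Myerson's payment-sandwich inequalities apply to $p_i - \mu$ exactly as UIC makes them apply to $p_i$. Both functions therefore satisfy the same Myerson integral formula (after normalizing the boundary condition at bid $0$), forcing $\mu(\mathbf b_{-i}, \cdot)$ to be constant in $b_i$; iterating over coordinates drives $\mu$ down to $\mu(\mathbf 0) = 0$. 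Once $\mu \equiv 0$, the finite-block step needs none of your three lemmas: take any $\mathbf b$ with $x_{i^*}(\mathbf b) > 0$, flood the mempool with enough copies of $b_{i^*} + \epsilon$ to exceed $B$, find by pigeonhole a copy $j$ whose inclusion probability is tiny, and have the miner collude with $j$ by asking $j$ to bid $b_{i^*}$ and then simply \emph{pretending} the world is $\mathbf b$ (with $j$ in the role of $i^*$). The miner's revenue is $0$ regardless of what it includes, so it is indifferent; but $j$'s expected utility jumps to roughly $\epsilon \cdot x_{i^*}(\mathbf b) > 0$, and the coalition strictly gains. The $\mu \equiv 0$ lemma is precisely what dissolves the burn term and makes cross-monotonicity irrelevant --- that is the idea your proposal is missing.
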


Another way to understand Theorem~\ref{thm:intromain}
is the following: the only TFM that satisfies UIC and 1-SCP 
simultaneously is the trivial mechanism that always confirms
nothing and pays the miner nothing.
Our impossibility result  
holds no matter whether transactions take up the same
space or not. For example, it also holds for Ethereum's gas model
where each transaction may consume a different amount of space. 

\subsubsection{Definitional Contribution: Incentive Compatibility under $\gamma$-Strict Utility}
\label{sec:introweakic}
While our aforementioned impossibility result
paints a pessimistic outlook, we observe 
that the previously formulated incentive compatibility 
notions appear too draconian and somewhat flawed.
So far, almost all prior  
works~\cite{zoharfeemech,yaofeemech,functional-fee-market,roughgardeneip1559,roughgardeneip1559-ec} 
model the TFM in a standalone setting, where the players 
are myopic and care only about their gain or loss in the current auction instance.
In this setting, if 
a strategic player (which is either a user, a miner, or a miner-user coalition)
injects a fake transaction whose true value is $0$, 
or if it overbids (i.e., bids more than the transaction's true value),  
we assume that the offending transaction is free of charge as long as it is not
confirmed in the present block --- since an unconfirmed transaction need
not pay any fees.

In practice, however, the TFM is executed repeatedly as blocks
get confirmed. 
In a Bitcoin-like cryptocurrency, 
any transaction that has been posted to the network 
cannot be retracted even if unconfirmed 
in the present block\footnote{In some smart-contract capable blockchains such as 
Ethereum, it might be possible for a transaction to declare
a desired block number such that it is no longer considered valid for later blocks.
For our feasibility result, 
we shall focus on cryptocurrencies such as Bitcoin, as well as any cryptocurrency system
where retracting a posted transaction is not possible.
Note that posting a transaction that conflicts with the offending
transaction later does not fundamentally remove the cost, since an honest miner
may include the one with the higher fee.
}.
In particular, a fake or overbid transaction could 
be confirmed in a future block, and thus the strategic player 
would end up paying fees to the future block, potentially mined by a different miner.
For example, consider the Vickrey auction example again, where we include
the $B$ highest bids in the block, among which the top $B-1$ 
are confirmed and pay the $B$-th price.
Suppose that all payment goes to the miner.
In this case, the miner may want to inject a fake transaction whose bid
is in between the $(B-1)$-th and the $B$-th price, to increase its revenue.
In prior works as well as our aforementioned impossibility result, 
we assume that injecting this fake transaction is free because it is unconfirmed.
However, in practice, the injected transaction may be 
confirmed and paying fees in a future block.

A natural question is whether we can capture this cost of cheating
in our model, and thus circumvent the impossibility.
In our new approach, we still model the TFM as a single-shot auction, but
we want to more accurately charge the cost of cheating in the utility model.
Unfortunately, we face a notable challenge: 
accurately predicting the cost of cheating is difficult,
since what the offending transaction actually pays in the future 
depends on the environment, e.g., what other users are bidding, as
well as the mechanism itself.

\paragraph{Defining $\gamma$-strict utility.}
To make progress, we take the following approach. We first ask what is the worst-case cost 
of cheating. This is when the overbidding or fake transaction 
that is unconfirmed in the present ends up paying its full bid in the future,
thus incurring a cost as high as the difference
between the bid and the true value of the transaction.
This setting makes it the hardest for the 
cheater to gain, and the easiest for the mechanism designer to 
satisfy incentive compatibility.
Asking whether there is a mechanism that 
satisfies incentive compatibility under the most strict cost model  
is equivalent to asking: can we at least design mechanisms that defend 
against {\it paranoid} strategic players who only 
want to deviate if there is a sure chance of gain and no chance of losing.
Understanding the feasibility of mechanism design under the worst-case 
cost can shed light on whether this is a worthwhile direction.
Further, it is also useful to adopt the worst-case cost 
model in proving lower   
bounds, since that makes the lower bounds stronger. 

Next, we generalize the cost model and imagine that 
in reality, the offender only needs to pay $\gamma$ fraction of the worst-case cost,
where $\gamma \in [0, 1]$ is also called the {\it discount} factor.
This generalization may be useful because in practice, we can often estimate 
the cost of cheating from the recent historical data, or even  
adjust the choice of $\gamma$ dynamically over time (similar to 
how Bitcoin adjusts the mining difficulty or 
how Etheurem's EIP-1559 dynamically adjusts 
their base fee based on recent 
historical data~\cite{eip1559,roughgardeneip1559,roughgardeneip1559-ec}). 
Another motivation for introducing $\gamma$
is to enable a knob that allows us to engineer a tradeoff between
the efficiency of the mechanism  
and its resilience to strategic behavior. In this sense, {\it estimating
the exact $\gamma$ is not too important}. As mentioned, setting $\gamma = 1$
gives reasonable incentive compatibility guarantees, namely, 
against {\it paranoid} players.


A mechanism that satisfies UIC (or MIC, $c$-SCP, resp.) under 
this cost model is also said to satisfy
UIC (or MIC, $c$-SCP, resp.) under {\it $\gamma$-strict utility}. 
Specifically, when $\gamma = 0$, there is no cost of cheating --- in this case, 
our new incentive compatibility notions would degenerate to the previous notions.
When $\gamma =1$, this is when we are charging the worst-case cost for cheating.
Since we are often particularly interested in  
the case of $\gamma = 1$ (e.g., when proving lower bounds), 
for convenience, a mechanism that satisfies
UIC (or MIC, $c$-SCP, resp.) under
$1$-strict utility  
is also said to satisfy
{\it weak UIC (or weak MIC, $c$-weak-SCP,  resp.)}.

We present our new incentive compatibility notions formally in Section~\ref{sec:weakic}.

\ignore{
\paragraph{Definitional contribution: weak incentive compatibility.}
As a result, a strategic player who is paranoid and risk-averse  
may be deterred from strategic overbidding or injection of fake transactions  
for fear or losing fees to a future block.
We therefore 
suggest a relaxed notion called weak incentive compatibility,  
which captures the potential cost of an overbid/fake and unconfirmed transaction. 
Specifically, 
if an overbid/fake transaction is unconfirmed in the present, 
the strategic player will assume the worst case, i.e., it will cost fees
as high as the bid 
in a future block.
Based on this, we redefine the 
utility function of the strategic player (see Section~\ref{sec:weakic} 
for more details).
We then define 
{\it weak UIC}, {\it weak MIC}, and {\it 1-weak-SCP} 
in the same way as before, except
that we now adopt the new utility function. 
}
\subsubsection{The Mathematical Structure of Incentive Compatibility 
under $\gamma$-Strict Utility}

\noindent\textbf{The burning second-price auction.}
Using our new $\gamma$-strict utility notion, we can circumvent
the aforementioned impossibility (Theorem~\ref{thm:intromain}).
Specifically, we describe a new mechanism called 
the burning second-price auction (see Section~\ref{sec:burn2ndprice})
that achieves UIC, MIC, and $c$-SCP under $\gamma$-strict utility for
any $\gamma \in (0, 1]$, and any choice 
of coalition resilience parameter $c \geq 1$.
The mechanism is randomized, and  
one can view the parameters $c$ and $\gamma$ that allow
us to tradeoff the degree 
of incentive compatibility and
the efficiency of the mechanisms (in terms the expected number of  
bids confirmed).


\ignore{
\begin{mdframed}
\begin{center}
{\bf The burning second-price auction} 
\end{center}
\paragraph{Parameters:} the block size $B$,  
and $0 < k' \leq k < B$
such that $k + k' = B$, where $k$ denotes the number of confirmed
transactions per block, and $k'$ denotes the number of unconfirmed
transactions in a block that are used to set the price and miner revenue.

\paragraph{Mechanism:}
\begin{itemize}[leftmargin=5mm,itemsep=1pt]
\item 
Choose the $B$ 
highest bids to include in the block.
The highest $k$
bids are considered confirmed, and they each pay the $(k+1)$-th price.
\item 
The miner is paid the sum of the $(k+1)$-th 
to the $B$-th prices.
All remaining payment collected from the confirmed
transactions is burnt.
\item 
If the block is not fully filled, any remaining empty slot is treated
as a bid of $0$.
\end{itemize}
\end{mdframed}

In Section~\ref{sec:weakic}, we shall prove that the burning second-price auction
indeed satisfies weak UIC, weak MIC, and 1-weak-SCP, as stated in the following
theorem:
}
Formally, we prove the following theorem:

\begin{theorem}[Burning second price auction]
For any $\gamma \in (0, 1]$ and any $c \geq 1$, 
there exists a TFM that 
satisfies UIC, MIC, and $c$-SCP under $\gamma$-strict utility.
Further, the TFM can support any finite block size,  
and except for the case when $c = 1$ and $\gamma = 1$, the TFM 
is randomized (the paragraph ``Necessity of randomness'' below 
gives more explanations about randomized TFMs).
\label{thm:intro-upper}
\end{theorem}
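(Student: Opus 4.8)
The plan is to construct and analyze the burning second-price auction of Section~\ref{sec:burn2ndprice}. Recall its shape: on a bid profile we order the bids, include the top $B$ in the block, mark the highest $k$ as confirmed and charge each of them the $(k{+}1)$-st highest bid as a uniform threshold price, pay the miner out of the remaining $k' = B-k$ ``decoy'' slots, and burn whatever the confirmed users paid beyond the miner's take (the burning rule). For $c=1$ and $\gamma=1$ this rule can be taken deterministic; for every other regime we randomize the allocation/revenue rule, and I would verify the three properties --- UIC, MIC, and $c$-SCP --- one at a time under the $\gamma$-strict utility of Section~\ref{sec:weakic}.

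For UIC I would argue exactly as for a $k$-slot Vickrey auction. Each realization of the (possibly randomized) rule is a threshold mechanism: a user's probability of confirmation is monotone in its own bid, and the price it pays when confirmed is a threshold that does not depend on its own bid. Hence underbidding can only forfeit a slot it would have won at a profit, and overbidding can only purchase a slot at a price above its true value, so truthful bidding maximizes expected utility. The $\gamma$-strict cost only reinforces this: any overbid that remains unconfirmed now carries a strictly positive charge $\gamma\,(b-v)$, so it is never beneficial. Because the mechanism's coins are drawn independently of any user's report, the randomized rule inherits dominant-strategy truthfulness from its deterministic realizations.

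MIC is where the burning rule and the randomization do the real work, under the assumption that the mechanism's randomness is not under the miner's control. The miner's two levers are injecting fake bids and dropping real ones; dropping only weakly lowers revenue, so the binding case is a fake bid $f$ that displaces the lowest surviving decoy bid and thereby inflates the miner's take. The central estimate is to bound the \emph{marginal} revenue of such an injection and compare it to the cost the fake incurs: since the fake is unconfirmed and has true value $0$, $\gamma$-strict utility charges it $\gamma f$. I would choose the revenue rule so that the expected marginal revenue from any injected bid is at most $\gamma$ times that bid --- this is precisely the role of randomization when $\gamma<1$, and it is the reason a deterministic rule fails in that regime --- whence injection is weakly unprofitable in expectation and faithful behavior is optimal.

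The crux is $c$-SCP, which must absorb both previous arguments simultaneously for a coalition of the miner and up to $c$ users. I would write the coalition's joint $\gamma$-strict utility as the total value of confirmed coalition transactions, plus the miner's revenue, minus the payments that actually leave the coalition (the burnt mass and payments by coalition users, netted against the miner's take), minus the $\gamma$-strict charges on the coalition's unconfirmed overbids and fakes; internal transfers between the miner and its colluding users cancel. The difficulty is that the coalition can act on all fronts at once --- colluding users can overbid to raise the threshold charged to outsiders, the miner can stuff decoy slots with the coalition's fakes, and confirmations can be reshuffled among the $\le c$ members --- so the bound must be over the coalition's aggregate deviation rather than any single move. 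I expect this joint accounting to be the main obstacle, and I would handle it by tuning the decoy-to-confirmed ratio and the randomization parameters as functions of $c$ and $\gamma$ so that the combined revenue gain is always dominated by the combined $\gamma$-strict cheating cost. Once the coalition bound is in hand, non-triviality, support for an arbitrary finite block size, and the deterministic specialization at $c=1,\gamma=1$ follow by bookkeeping.
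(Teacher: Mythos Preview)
Your high-level plan tracks the paper's, but two concrete pieces are missing, and without them the $c$-SCP step will not close.

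First, the mechanism you describe is the $c=1,\gamma=1$ special case. In the general mechanism the top $k$ bids are only \emph{eligible}; the blockchain then confirms a uniformly random subset of size $\lfloor \gamma k/c\rfloor$, so each eligible bid is confirmed with probability exactly $\gamma/c$. The miner-revenue rule, by contrast, is \emph{deterministic}: it is $\gamma\cdot(b_{k+1}+\cdots+b_B)$. So your diagnosis of where the randomness lives is inverted. The deterministic $\gamma$ scaling on revenue is what caps the marginal gain from a fake decoy bid at $\gamma f$ (your MIC inequality); the randomness in confirmation is what handles the case where a fake lands in the eligible region, and --- more importantly --- it is what makes $c$-SCP go through.

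Second, the $c$-SCP argument is not an aggregate bound; it hinges on a per-user accounting that your proposal does not surface. Any manipulation that lowers the threshold from $b_{k+1}$ to some $q$ gains each colluding eligible user at most $\tfrac{\gamma}{c}(b_{k+1}-q)$ in expectation, because confirmation happens only with probability $\gamma/c$. With at most $c$ colluding users, their combined gain is at most $\gamma(b_{k+1}-q)$, which is exactly offset by the miner's revenue drop of at least $\gamma(b_{k+1}-q)$. This ``$\gamma/c$ per user, times $c$ users, equals $\gamma$ equals miner's scale'' identity is the reason the specific confirmation probability $\gamma/c$ is chosen, and it is the linchpin of the proof. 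The paper then pushes this through by decomposing any deviation into three ordered steps --- (i) miner drops real bids, (ii) miner replaces included bids by fakes, (iii) colluding users alter their bids one by one in ascending order of true value --- and shows each step is weakly unprofitable using the identity above together with the MIC case analysis. The ascending-order convention is needed in step (iii) to control the utility of a colluding user whose (already-altered) bid gets promoted into the eligible set when another colluder drops out. Your ``joint accounting over all fronts at once'' would have to rediscover this structure; as written it does not explain why $c$ users cannot jointly extract more than the miner loses.
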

Our burning second-price auction
enables a smooth trade-off between the efficiency 
of the TFM and its resilience to strategic deviations.
In particular, as $\gamma$ approaches $0$, the number
of transactions confirmed by the TFM approaches $0$ as well --- this is
in some sense inherent due to our earlier impossibility result.

\ignore{
One way to view this positive result is that {\it the repeated nature 
of the TFM can help us in mechanism design} --- 
even though our modeling approach is still single-shot even
for weak incentive compatibility,
here, we do charge costs
potentially lost in future auction instances due to strategic deviation
in the present.

\elaine{move this text}
For our impossibility results,
we model the TFM as a single-shot auction --- this is also the
modeling approach
adopted in prior works\cite{zoharfeemech,yaofeemech,functional-fee-market,eip1559,roughgardeneip1559,dynamicpostedprice}. 
In other words, we assume that the players are {\it myopic} and care only 
about their utility in the current auction instance.
For our new, weak incentive compatibility notion, 
we still model the TFM as a single-shot 
auction. However, 
in this case, we assume that the strategic player 
is no longer completely myopic --- in particular, it 
will be aware of potential losses in the future that stem from 
strategic deviations, and these costs are accounted for in the utility function.
}

\ignore{
\elaine{move elsewhere:
In general, miners and users can also adopt strategic behaviors over
a longer time-scale~\cite{roughgardeneip1559,roughgardeneip1559-ec}. 
However, since transaction fee mechanism in a decentralized environment
remains poorly understood,   
in this paper, we shall focus on {\it myopic} players
who care about only their utility in the next auction instance, 
associated with the mining of a single block.
The same approach was also taken by 
most prior works in this space~\cite{zoharfeemech,yaofeemech,functional-fee-market,eip1559,roughgardeneip1559,dynamicpostedprice}.
\elaine{modify the text: we have weak ic}
}
}

\paragraph{Necessity of randomness.}
As mentioned in Theorem~\ref{thm:intro-upper}, except for the special case
$c = 1$ and $\gamma = 1$, our burning second-price
auction is randomized. 
In general, a randomized TFM allows 
the miner and/or the blockchain to employ random coins
to decide which bids to include, to confirm, 
the payment of each confirmed bid, as well as the miner revenue.
In particular, our burning second price auction (Theorem~\ref{thm:intro-upper})
employs trusted on-chain randomness
to pick a random subset of eligible, included transactions to confirm.
Although unbiased and unpredictable on-chain randomness can be generated
using standard cryptographic 
techniques~\cite{Cachin00randomoracles,randpiper,spurt}, 
such coin toss protocols introduce some extra overhead, and ideally we would
like to avoid them.
Unfortunately, we prove a lower bound that for $c \geq 2$, 
randomness is 
necessary to achieve UIC and $c$-SCP 
for any $\gamma \in [0, 1]$, as long as the mechanism
is ``useful'' in the sense that it sometimes confirms at least $2$ bids.

\begin{theorem}[Necessity of randomness for weak incentive compatibility]
Consider an arbitrary deterministic TFM and assume finite block size.
Suppose that there exists
a bid vector such that the TFM confirms at least two bids.
Then, the TFM
cannot satisfy both weak UIC and $2$-weak-SCP simultaneously.
\label{thm:intro-lb-weak}
\end{theorem}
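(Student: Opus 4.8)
The plan is to argue by contradiction: assume the deterministic TFM is both weak UIC and $2$-weak-SCP yet confirms at least two bids on some input, and manufacture a profitable deviation for a coalition of the miner and two users. First I would extract structure from weak UIC. Crucially, the cost-of-cheating discount of $1$-strict utility only penalizes \emph{unconfirmed} over-bids, so the \emph{under-bidding} constraints of weak UIC are identical to those of ordinary (standalone) UIC. These under-bidding constraints already force the familiar single-parameter characterization: the allocation is monotone in a user's own bid, and every confirmed user pays exactly its critical value (the infimum bid at which it would still be confirmed, given the other submitted bids). This is the only place weak UIC is used, and it is precisely the direction that the relaxed over-bidding constraints leave untouched.

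Next I would reduce to a canonical instance. Starting from the promised bid vector that confirms $m \ge 2$ bids, I would lower confirmed users' bids toward their critical values, using own-bid monotonicity, until exactly two bids remain confirmed; call the two confirmed users $1$ and $2$, and let $p_1,p_2$ be the critical prices they pay. The intended deviation is for the coalition $\{\text{miner},1,2\}$ to perform a single price-setting operation --- either excluding the marginal transaction that pins down $p_1$ and $p_2$, or (symmetrically) injecting an unconfirmed fake bid at the confirmation margin --- chosen so that it shifts the common critical price of \emph{both} confirmed users by the same amount $\delta$. Because the mechanism is deterministic, the confirmed set and the critical prices are fixed functions of the submitted bids, so a single manipulation necessarily moves $p_1$ and $p_2$ in lockstep; the two users therefore reap an aggregate surplus change of $2\delta$, while the coalition expends only the cost of \emph{one} miner-side action.

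To turn ``$2\delta$ of benefit against one action'' into a strict inequality, I would calibrate the miner-side bookkeeping against the single-user case. Since $2$-weak-SCP implies $1$-weak-SCP, applying the \emph{same} manipulation to a coalition containing only one of the two users bounds the change in the miner's revenue (and the weak-utility cost of any injected fake) against a single user's price change $\delta$. Substituting this calibration back into the two-user coalition then leaves a strictly positive residual of order $\delta>0$, contradicting $2$-weak-SCP and hence proving Theorem~\ref{thm:intro-lb-weak}. I expect the main obstacle to be exactly this miner-side bookkeeping: weak UIC pins down \emph{user} payments as critical values but says nothing about how the miner's revenue and the burnt amount respond to exclusions or injections, so the construction must choose the bid vector and the manipulation so that the relevant revenue change is forced and so that the two confirmed users genuinely share a single marginal price-setter (ensuring one operation moves both prices equally), while correctly charging the $1$-strict-utility cost of any unconfirmed fake. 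Finally, it is worth noting why this argument collapses under randomization, consistent with the positive result of Theorem~\ref{thm:intro-upper}: a randomized rule can confirm each eligible bid only with a calibrated probability, decoupling the ``one price-setter, two beneficiaries'' rigidity that the deterministic lockstep exploits.
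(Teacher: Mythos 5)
Your first step is sound in conclusion but slightly off in justification: the under-bidding constraints alone do not pin down Myerson's characterization (they give only one side of the payment sandwich). The correct route, which the paper takes in Fact~\ref{fct:myerson-weakuic}, is that for a \emph{deterministic} mechanism, weak user-DSIC implies full user-DSIC, because an overbid-and-unconfirmed deviation yields utility $0$ under the old notion and hence is never strictly profitable there either; so the full set of DSIC constraints survives the relaxation. Your observations that all confirmed users share a universal critical price (so one manipulation moves both payments ``in lockstep'') and that a two-user coalition reaps $2\delta$ of surplus against one miner-side action bounded via $1$-weak-SCP are both genuinely present in the paper (Lemmas~\ref{lemma:samePayment}, \ref{lem:minerutilkto0}, and~\ref{lemma:paychangeslow}).

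The gap is in how you close the argument. Your plan derives the contradiction from a \emph{single} local manipulation that ``shifts the common critical price of both confirmed users by the same amount $\delta$,'' but nothing forces $\delta>0$: the universal payment may simply not respond to excluding the marginal bid or injecting a fake, and then the coalition gains nothing. Worse, your proof never invokes the finite block size, yet the theorem is false without it: the solitary-or-posted-price mechanism (Appendix~\ref{sec:solitarypostedprice}) is deterministic, $2$-user-friendly, weak UIC, and $c$-weak-SCP for all $c$ under infinite block size, and there the payment is a posted price $r$ that no exclusion or injection can move --- so no profitable ``price-setting operation'' exists. Any correct proof must therefore route the contradiction through the capacity constraint rather than through a one-shot profitable deviation. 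That is exactly what the paper does: the $2\delta$-versus-$\delta$ calibration is used only to show the universal payment grows at rate at most $1/2$ relative to a single bid increase (Lemma~\ref{lemma:paychangeslow}), and this is then iterated --- raising fresh bidders one at a time, each of whom overtakes the slowly-moving payment and must be confirmed by Lemma~\ref{lemma:unconfirmedPayment} --- until $B+1$ users are simultaneously confirmed, contradicting block size $B$ (Figure~\ref{fig:proofroadmap}). Your proposal is missing this iterative, capacity-violating step, and without it the argument cannot be completed.
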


In the above lower bound, 
the restriction that the mechanism must sometimes confirm $2$ bids  
is necessary. Specifically,
we construct a deterministic mechanism called the solitary mechanism
(Appendix~\ref{section:solitary})
that always confirms a single bid, and  
achieves weak UIC, weak MIC, and $c$-weak-SCP for any $c \geq 1$. 

\subsubsection{Understanding New Design Elements for TFM}

As mentioned, 
TFMs often employ
a couple new design elements that are not so common in classical mechanisms. 
First, the mechanism can employ a ``burn rule'', 
which allows part to all of the users' payment to be ``burnt''
on the blockchain, and not
paid to the miner of the present block.
For example, Ethereum's EIP-1559 makes critical 
use of such a burn rule~\cite{eip1559,roughgardeneip1559,roughgardeneip1559-ec}. 
Second, some prior works~\cite{zoharfeemech,yaofeemech} 
have suggested including transactions in a block
that are not confirmed eventually, but serve the role of setting
the price for others. 
For example, even though we know that the  
Vickrey auction is not an awesome auction in a decentralized environment, 
hypothetically, imagine we want to implement the Vickrey auction on a blockchain. 
This would require the block to include $B$ bids, 
among which only the top $B-1$ are eventually confirmed, whereas the $B$-th 
bid is  included only to set the price.
Moreover, our own burning second-price auction (Theorem~\ref{thm:intro-upper})
also includes some transactions in the block that have no chance
of being confirmed, but are just there to set the price.

Do these new design elements make a difference in TFM design,
and can they help us achieve incentive compatibility designs  
that would otherwise be impossible?
We give a nuanced 
answer to this question. First, we point out that our 
earlier impossibility results (Theorems~\ref{thm:intromain}
and \ref{thm:intro-lb-weak})
hold even when the TFM is allowed to employ both of these design elements. 

\ignore{
As mentioned, the decentralized nature raises new challenges
for the design of transaction fee mechanisms.  
On the other hand, 
there are also features of the blockchain that can potentially help
us in designing TFM.
}
\ignore{
Our results reveal a nuanced answer to this question.
On one hand, we show that 
assuming finite block size, 
the impossibility stated in Theorem~\ref{thm:intromain}
still holds even for (possibly randomized) TFMs 
with a burn rule:
}

On the other hand, we also show scenarios in which these new design
elements do make a difference.
Specifically, we prove the following results.

\paragraph{The burn rule is critical to Ethereum's EIP-1559.}
\ignore{
On the other hand, we show 
having a burn rule does make a difference 
assuming infinite block size. In particular, 
this implies that {\it the burn rule indeed makes a fundamental difference 
in a mechanism like EIP-1559}.
}
Recall that Roughgarden~\cite{roughgardeneip1559,roughgardeneip1559-ec}
argued that {\it assuming infinite block size},
Ethereum's EIP-1559 approximates 
a simple 
``posted price, burn all''  auction: 
there is an a-priori fixed price tag $r$, and
anyone who bids at least $r$ would get their transaction confirmed, 
paying only $r$. All users' payment 
is burnt, and 
the miner gets nothing.
Roughgarden~\cite{roughgardeneip1559,roughgardeneip1559-ec}
also proved that this simple 
``posted price, burn all''  auction
would indeed satisfy UIC, MIC, and $c$-SCP for any $c \geq 1$, 
assuming infinite block size. 
\ignore{
\begin{corollary}[Impossibility for TFMs with burn (informal)]
Suppose that the block size is finite.
The impossibility result stated in Theorem~\ref{thm:intromain}
holds even for (possibly randomized) TFMs with a burn rule. 
\label{cor:intromain}
\end{corollary}
}
We show that without the burn rule, even under infinite block size, 
the only way for a TFM to satisfy both UIC and $1$-SCP
is for users to always pay nothing.
More generally, we prove that even when assuming infinite block size 
and whether we allow a burn rule or not, 
any (randomized) TFM 
that satisfies both UIC and 1-SCP must always pay the miner nothing:

\begin{theorem}[The burn rule makes a difference assuming infinite block size]
Any (randomized) TFM 
that satisfies both UIC and 1-SCP must always pay the miner nothing.
In other words, any non-trivial (randomized) TFM that does not always burn  
all payment cannot be both UIC and 1-SCP.
This impossibility 
holds regardless of whether the block size is infinite or finite, and   
regardless of whether the TFM has a burn rule or not\footnote{In fact, 
in our actual proof, we prove Theorem~\ref{thm:introburn} first,
which is then used as a stepping stone towards proving
Theorem~\ref{thm:intromain}.}.
\label{thm:introburn}
\end{theorem}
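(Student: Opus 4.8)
The plan is to show that UIC together with 1-SCP forces the miner-revenue rule to be a constant, and that money conservation pins this constant to $0$. Throughout, describe a (possibly randomized) TFM by its expected confirmation rule $x_i(\mathbf{b})$ (the probability that user $i$'s transaction is confirmed), its expected user-payment rule $p_i(\mathbf{b}) \ge 0$, and its expected miner-revenue rule $\mu(\mathbf{b}) \ge 0$; feasibility, whether or not there is a burn rule, gives $0 \le \mu(\mathbf{b}) \le \sum_i p_i(\mathbf{b})$. For a coalition consisting of the miner and a single user $i$ of true value $v_i$, the joint utility on reported profile $\mathbf{b}$ is $U_i(\mathbf{b}) = v_i\,x_i(\mathbf{b}) - p_i(\mathbf{b}) + \mu(\mathbf{b})$, since the portion of $p_i$ that flows to the miner is an internal transfer within the coalition.

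First I would extract two Myerson-type characterizations. UIC says that with an honest miner, truthful bidding maximizes user $i$'s own utility $v_i x_i(\mathbf{b}) - p_i(\mathbf{b})$; hence $(x_i, p_i)$ is DSIC in $b_i$, so $x_i$ is monotone in $b_i$ and $p_i(\mathbf{b}) = b_i x_i(\mathbf{b}) - \int_0^{b_i} x_i(z, \mathbf{b}_{-i})\,dz + h_i(\mathbf{b}_{-i})$. Next, 1-SCP \emph{in particular} forbids the restricted deviation in which only user $i$ alters its reported bid while the miner stays honest; comparing against the honest-truthful baseline, this means truthful bidding maximizes the joint utility $U_i$ over reported $b_i$, i.e. $(x_i, p_i - \mu)$ is DSIC in $b_i$. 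Applying Myerson again with the \emph{same} monotone allocation $x_i$ gives $p_i(\mathbf{b}) - \mu(\mathbf{b}) = b_i x_i(\mathbf{b}) - \int_0^{b_i} x_i(z, \mathbf{b}_{-i})\,dz + h_i'(\mathbf{b}_{-i})$.

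The two expressions share an identical threshold term, so subtracting them yields $\mu(\mathbf{b}) = h_i(\mathbf{b}_{-i}) - h_i'(\mathbf{b}_{-i})$, which is independent of $b_i$. Since this holds for every coordinate $i$, the revenue $\mu$ depends on no bid and is therefore a constant $\mu_0$. To determine $\mu_0$, evaluate at the all-zero profile $\mathbf{0}$: individual rationality together with non-negativity of payments forces $h_i \equiv 0$ (a zero bid pays nothing), so $p_i(\mathbf{0}) = 0$ for all $i$, whence feasibility gives $0 \le \mu_0 = \mu(\mathbf{0}) \le \sum_i p_i(\mathbf{0}) = 0$. Thus $\mu \equiv 0$, the desired conclusion. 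Observe that the argument never mentions block size and uses only $0 \le \mu \le \sum_i p_i$, so it applies to finite or infinite blocks and with or without a burn rule; it also covers randomized mechanisms, since the Myerson characterization applies verbatim to the expected confirmation and payment quantities.

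The step I expect to be most delicate is the second characterization, i.e. arguing that 1-SCP genuinely implies that $(x_i, p_i - \mu)$ is DSIC in $b_i$. This requires (a) confirming that the coalition's feasible strategies include ``user $i$ alone misreports, miner honest,'' with the honest-truthful outcome as the comparison baseline, so that no-profitable-deviation collapses exactly to a single-parameter truthfulness condition on $U_i$; and (b) verifying that Myerson's lemma may be invoked even though $p_i - \mu$ need not be a non-negative ``payment'' — which is fine, as the lemma characterizes DSIC payment rules for a fixed monotone allocation irrespective of sign. A secondary care point is justifying $h_i \equiv 0$ from the model's individual-rationality convention that an unconfirmed or zero-value transaction pays nothing.
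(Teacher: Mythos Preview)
Your proposal is correct and follows essentially the same route as the paper: derive the Myerson payment identity for $p_i$ from UIC, derive the analogous identity for $p_i-\mu$ from 1\nobreakdash-SCP (the paper does this by establishing the ``payment sandwich'' for $p_i-\mu$ and then invoking the technical lemma behind Myerson), subtract to conclude $\mu(\mathbf{b}_{-i},b_i)$ is independent of $b_i$, and iterate coordinates down to $\mathbf{0}$ where feasibility forces $\mu=0$. The delicate points you flag --- that the needed coalition deviation is within scope, and that the boundary condition for $p_i-\mu$ must be handled by a shift --- are exactly the ones the paper addresses (see its Lemma~\ref{lemma:randomInequality} and the normalization $\widetilde{\pi}_{\mathbf{b}_{-i}}(r)=\pi_{\mathbf{b}_{-i}}(r)-\pi_{\mathbf{b}_{-i}}(0)$).
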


\ignore{
Based on Theorem~\ref{thm:intromain}, 
for TFMs {\it without} a burn rule, to satisfy both UIC and 1-SCP, 
it is necessary that users always pay nothing.
In other words, 
}
As a direct implication, 
if we want a mechanism like EIP-1559 that has non-trivial user payment
and satisfies all three properties in the infinite block size regime, 
 a burn rule is necessary.
Therefore, Theorem~\ref{thm:introburn} and  
Roughgarden's result~\cite{roughgardeneip1559,roughgardeneip1559-ec}
together show that 
having a burn rule 
does make a difference assuming 
infinite block size.

\begin{remark}[Regarding burning]
During uncongested periods, EIP-1559 burns almost all user payment 
since the tips should approach $0$ during these times.
The reader may be concerned why miners would still be incentivized
to mine in Ethereum. This is because 
the miners always get a constant block reward which has no effect
on our game theoretic modeling and thus is omitted in our utility definition. 
Burning part to all payments
may be used to introduce deflation or regulate inflation in cryptocurrencies, 
and the burnt payment can also be repurposed 
to offset block rewards for future miners~\cite{roughgardeneip1559}. 
Since such usage
has no effect on our game-theoretic modeling, we simply assume
the part of payment not directly paid to the miner is ``burnt'', just like
Roughgarden~\cite{roughgardeneip1559,roughgardeneip1559-ec}.
\end{remark}

\paragraph{Necessity for blocks to contain unconfirmed transactions.}
\ignore{
Observe that 
in the burning second price auction, not all transactions included in the block
are confirmed. Specifically, the $k$-th bid is included in the block 
but left unconfirmed. In some sense, it is included just to set the price.
}
In the cryptocurrency 
community, there is an ongoing debate whether it is a waste of space for blocks 
to contain unconfirmed transactions.
We argue that the ability for a block to contain unconfirmed transactions  
could indeed make a difference for the mechanism designer.
Specifically, we prove a corollary showing that 
if one insists on confirming all transactions included in a block, then 
even with the weak incentive compatibility notion, 
it is still impossible to construct any 
non-trivial (possibly randomized) TFM 
that satisfies weak UIC 
and 1-weak-SCP.

\begin{corollary}[Allowing unconfirmed transactions in a block can make a difference]
Suppose that all transactions in a block must be confirmed. 
Then, there is no non-trivial (possibly randomized) TFM that satisfies
weak UIC and 1-weak-SCP simultaneously.
\label{cor:intro-incleqconf}
\end{corollary}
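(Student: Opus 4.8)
The plan is to reduce the restricted, weak-utility setting back to the standard-utility impossibility of Theorem~\ref{thm:intromain}, by showing that the restriction neutralizes exactly the feature that let the burning second-price auction escape that impossibility. Concretely, I would argue that once we insist that every transaction appearing in the block be confirmed, the $\gamma$-strict utility (here $\gamma = 1$) collapses back to the ordinary utility on every deviation that could possibly witness a violation, so that weak UIC and $1$-weak-SCP impose the very same constraints as their standard counterparts.

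First I would pin down the role of the restriction in the model: under ``all included $=$ confirmed'', the outcome (who is confirmed, the payments, and the miner revenue) is a function only of the confirmed set, and a bid that is not placed in the block is simply invisible to the mechanism. This is the step that rules out the price-setting trick: in the burning second-price auction the miner's fake sits \emph{in} the block while remaining \emph{unconfirmed}, and it is precisely this mechanism-visible-but-unconfirmed transaction that the $\gamma$-strict penalty taxes. Removing it is what re-opens the impossibility.

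Next comes the key lemma: under the restriction, weak UIC implies standard UIC and $1$-weak-SCP implies standard $1$-SCP. The engine here is that the penalty fires only on an overbid or fake transaction that ends up \emph{unconfirmed}. I would show that no profitable deviation ever needs such a transaction. For a single user, a deviation can be profitable only if its real transaction is confirmed (an unconfirmed real transaction yields utility at most $0$, which the individual rationality of truthful bidding already matches); on a confirmed transaction there is no penalty, so the deviating and truthful utilities coincide under both utility models, and any standard-UIC violation is therefore also a weak-UIC violation. For the miner--user coalition the same logic applies, with the extra observation that an injected fake is either placed in the block --- in which case it is confirmed, incurs no penalty, and (its payment being either an internal transfer to the miner or burnt) can only waste a block slot --- or it is withheld from the block, in which case it cannot influence the outcome at all yet still incurs the penalty, and is thus strictly dominated. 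Hence an optimal coalition strategy never triggers the penalty, so the best achievable joint utility is the same under the weak and the standard utility, and the two notions of $1$-SCP coincide. I would also check that this reasoning is insensitive to randomization, since under the restriction confirmation equals inclusion and any internal randomness only shuffles payments among confirmed transactions, never creating a mechanism-visible unconfirmed transaction.

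Finally I would invoke Theorem~\ref{thm:intromain}: a TFM satisfying weak UIC and $1$-weak-SCP under the restriction satisfies standard UIC and $1$-SCP, and hence must be the trivial mechanism, proving the corollary. I expect the main obstacle to be the key lemma, and specifically the careful verification that withheld (unincluded) fakes are genuinely useless under the restriction --- this is the crux distinguishing the restricted setting from the general one, where the analogous fakes are exactly the price-setters that power the burning second-price auction. It is worth remarking that the finite-block-size hypothesis inherited from Theorem~\ref{thm:intromain} is essential: with an infinite block the ``posted-price, burn-all'' mechanism confirms everything it includes, respects the restriction, and still satisfies all the incentive properties, so the corollary cannot hold without it.
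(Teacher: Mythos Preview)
Your $1$-weak-SCP $\Rightarrow$ $1$-SCP argument is correct and is essentially the paper's own: under the restriction an unconfirmed overbid is absent from the block, so the strategic miner can reproduce the identical included set while the colluding user bids truthfully, yielding a penalty-free deviation with the same standard utility. This survives randomization because a strategic miner may without loss of generality fix a single best inclusion set deterministically.

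The gap is on the UIC side, where the miner is \emph{honest} and you cannot derandomize its inclusion rule. Your dichotomy ``either confirmed (no penalty) or unconfirmed (utility at most $0$, hence never profitable)'' is only a dichotomy for deterministic mechanisms: when the honest inclusion rule is itself randomized, a user's overbid $b'>v$ can be confirmed with probability $x'\in(0,1)$, and the standard and weak expected utilities differ by $(1-x')(b'-v)$. Your assertion that under the restriction ``any internal randomness only shuffles payments among confirmed transactions'' is not correct---the restriction trivializes the confirmation rule but not the inclusion rule. The paper does not try to collapse weak UIC to standard UIC; instead it re-derives Myerson's lemma in this setting by bringing in weak MIC as an additional hypothesis. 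Weak MIC forces the miner to be indifferent across all possible inclusion outcomes (Lemma~\ref{lem:minerindiff}), and combined with $1$-weak-SCP this forces any bid that is sometimes included and sometimes not to pay its full bid whenever it is included (Lemma~\ref{lem:payfullbid}), which kills the intermediate-probability case. Your plan works cleanly for deterministic mechanisms (this is Fact~\ref{fct:myerson-weakuic}), but for the randomized statement the paper's formal version (Corollary~\ref{cor:includedeqconfirmed}) carries weak MIC as an extra hypothesis, and that is exactly the ingredient your UIC reduction is missing.
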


\paragraph{Additional results.}
In the appendices, we additionally show 
a variant of Theorem~\ref{thm:intro-lb-weak}
that says if the TFM is not allowed  
to have a burning rule, then Theorem~\ref{thm:intro-lb-weak} holds
even when the block size is infinite.

\paragraph{Additional related work.}
Transaction fee mechanism is incomparable to the credible
auction model proposed by Akbarpour and Li~\cite{credibleauction}
--- see a detailed discussion in \cref{sec:related}.
In \cref{sec:related}, we also discuss additional related work.

\ignore{
Moreover, 
our modeling of TFM is general  
and captures 
mechanisms that adopt a burn rule like 
in Ethereum's EIP-1559~\cite{eip1559,roughgardeneip1559}.
Specifically, a burn rule allows part to all of the users' payment to be ``burnt''
on the blockchain, and not 
paid to the miner of the present block.
}

\ignore{

\subsection{Old Text: Our Results and Contributions}

\paragraph{Main impossibility result.}
In this paper, we prove an impossibility result 
(Theorem~\ref{thm:intromain})
showing that unfortunately, there is no non-trivial transaction fee mechanism (TFM)
that satisfies all three properties at the same time.
Here, non-trivial means that users do not always pay $0$.
The impossibility results holds 
for the following natural family of TFMs: 
1) we consider the standard single parameter environment~\cite{myerson,agt}, i.e.,
each user has a single parameter
that characterizes its true value of  
getting the transaction confirmed, and moreover,  
each user's bid is also captured by a single parameter; 
and 2) we assume that a user's payment 
goes to the miner of the block that confirmed its transaction.

\begin{theorem}[Main impossibility result]
Consider a single parameter environment 
and a family of transaction fee mechanisms (TFMs)
where all users' payment goes to the miner of the block.
There does not exist a non-trivial transaction fee mechanism (TFM)
that simultaneously satisfies 
user incentive compatibility (UIC), miner incentive compatibility (MIC),
and resilience to the collusion of the miner and a single user.
\end{theorem}

Moreover, the impossibility holds in a strong sense: 
\begin{enumerate}[leftmargin=8mm,itemsep=1pt]
\item  even when the miner is restricted to colluding with only a single user,
asking it to change its bid; 
\item even if the only possible unilateral miner deviation
is injecting a single fake transaction; and  
\item no matter whether the block size is bounded or infinite. 
\end{enumerate}

Our impossibility result also sheds important new light 
regarding an open question raised
by Roughgarden~\cite{roughgardeneip1559-ec}, that is, 
can we characterize
the mechanisms that satisfy 
UIC, MIC, and side contract resilience?

\paragraph{Implications for EIP-1559: is the burn rule necessary?}
\ignore{
EIP-1559 works as follows:  
a user's payment 
is computed from the following:
\begin{enumerate}[leftmargin=5mm,itemsep=1pt]
\item a fixed {\it base fee}
$r$ that is determined by the history of the blockchain and not by the miners; 
importantly, the base fee is {\it burnt} (i.e., paid to no one), rather than 
paid to the miner;
\item 
a {\it tip} $t$ proposed by the user 
and paid directly to the miner; and 
\item 
a user may specify a fee cap $c \geq r$ such that if the $r + t > c$, the user 
would actually pay a tip of $c - r$.
\end{enumerate}
}
Ethereum's 
EIP-1559~\cite{eip1559} is designed 
to gracefully switch between a first-price auction  
and a posted-price auction, depending on the congestion level.
When there is congestion, the mechanism approximates 
a first-price auction and thus is not UIC.
On the other hand, assuming that the block size is infinite (i.e., plentiful),
the mechanism approximates a single-parameter, posted-price 
auction with a novel {\it burn rule} which we shall explain below. 
In the infinite block regime, 
Roughgarden~\cite{roughgardeneip1559} showed that 
indeed, (a slight variant of) 
EIP-1559 satisfies all three properties. 
\ignore{
achieves all three properties in the infinite block size regime
but fails to satisfy UIC when there is congestion~\cite{roughgardeneip1559}. 
Henceforth, let us focus on the infinite block size regime
where EIP-1559 behaves nicely. 
Although it may seem like the EIP-1559 requires multiple parameters to specify
a bid,} 
Concretely, EIP-1559 approximates the 
following {\it single-parameter}, {\it posted-price} auction in the infinite
block size regime: 
\begin{mdframed}
{\bf A single-parameter, posted-price auction that EIP-1559 approximates in the 
infinite block size regime}
\begin{enumerate}[leftmargin=5mm,itemsep=1pt]
\item 
Users bid their true value;
\item 
There is an a-priori fixed reserve price $r$ (i.e., the base fee). If
a user's bid is at least the reserve price $r$, then its transaction gets confirmed
and it pays a price of $r$;
\item  
The entirety of $r$ is burnt and not paid to the miner\footnote{Note that in reality, to give the miner some small incentive to 
include the transaction in the block, we can burn almost the entirety of 
the total payment $r$
but pay $1$ gwei to the miner where we assume that a gwei is the smallest currency
unit in Ethereum.
}.
\end{enumerate}
\end{mdframed}

Showing that this posted-price mechanism 
satisfies UIC, MIC, and side contract resilience in 
the infinite block size regime 
is fairly straightforward~\cite{roughgardeneip1559}. 
Intriguingly, the above mechanism employs
a novel ingredient that is idiosyncratic to blockchains, that is, 
a {\it burn rule}\footnote{Throughout this paper, 
we use the term ``burn rule'' in a broad sense --- 
as long as the ``burnt'' amount is not paid to the current miner, we do not
care whether it removed from circulation or not. Roughgarden~\cite{roughgardeneip1559} discussed some other possibilities besides just removing it from circulation.}.
Roughgarden~\cite{roughgardeneip1559} argued  
that the burn rule
is essential 
to achieving side-contract resilience {\it as far this particular mechanism is concerned}.
Had we removed the burn rule 
and instead paid all proceeds to the miner, 
the above posted-price auction would still satisfy
UIC and MIC, but would not be side contract resilient.
In particular, suppose a user's true value is $r - \epsilon$ for 
some very small positive $\epsilon$, 
the miner can ask the user to raise its bid to $r$. In this way,
the coalition's joint utility would increase by $r - \epsilon$
which can be redistributed through a binding side contract.

Recall that our impossibility 
holds regardless of whether block size is infinite or bounded; however,
the impossibility holds assuming that all proceeds are paid to the miner.
Therefore, an important implication of our main theorem  
is that 
{\it any} non-trivial, single-parameter mechanism
that achieves all three properties, 
even if only for the infinite block
size regime, {\it would have to rely on a burn rule}. 
In other words, not only is the burn rule important to EIP-1559,
it is essential
to {\it any} non-trivial, single-parameter mechanism
that matches EIP-1559's properties in the infinite block size regime.


\ignore{
\paragraph{All three properties are needed for the impossibility.}
\elaine{NOT TRUE}
Observe that 
by our definition, 
a TFM that is fails to satisfy either UIC or MIC cannot be resilient 
to the coalition of the miner and a single user
either: 
if an individual user or the miner alone can increase its utility,
so can a miner-user coalition. 
We also know that the Vickrey auction 
and the first-price auction would satisfy
UIC and MIC alone, respectively.
So, does there exist a non-trivial mechanism 
that is MIC and UIC but is not 
side contract resilient?
The answer is yes. Consider 
the following simple posted-price
auction. 
There is a reserve price $r$, and let $k$ be the block size. 
Among all bids that are at least $r$, select $k$ of 
them to include 
in the block and confirm, and all of them pay the posted price $r$.
It is important that if there are more than $k$ bids, we {\it select
a subset in a way that is independent of the actual bids}, e.g.,
we can select by the bidders' identities.
}

\elaine{uic and mic: posted price}

}

\section{Technical Roadmap}

\subsection{Transaction Fee Mechanism and Incentive Compatibility}
\label{sec:roadmap-tfm}
In a transaction fee mechanism (TFM), we are selling slots
in a block to bidders who want to get 
their transactions included and confirmed  
in the block.
For simplicity, we assume that all slots are identical commodities, and 
we  often use the terms ``transaction'' and ``bid'' interchangeably.
For convenience, we assume that each bid comes from a different user.

\paragraph{Transaction fee mechanism.}
A transaction fee mechanism (TFM)
includes the following rules:
\begin{itemize}[leftmargin=5mm,itemsep=0pt,topsep=2pt]
\item 
An {\it inclusion rule} executed by the miner. 
Given a bid vector $\bfb = (b_1, b_2, \ldots, b_m)$,
the inclusion rule decides which of the bids to include in the block;
\item  
A {\it confirmation rule} executed by the blockchain. 
The confirmation rule chooses a subset of the included bids 
to be confirmed.
In the most general form, 
not all transactions included in the block are necessarily confirmed, and  
only confirmed transactions are considered final, 
i.e., the money has been transferred to the merchant's account
and the merchant can now provide the promised service.
\item 
A {\it payment rule} and a {\it miner revenue} rule executed by the blockchain,
which decides (using only information recorded
in the block) how much each confirmed bid pays, and  
how much revenue the miner gets.
Any (possibly included) transaction that is not confirmed pays nothing.
Furthermore, we assume that the miner's revenue is upper bounded
by the total payment collected from all 
confirmed bids\footnote{All existing major cryptocurrencies such as 
Bitcoin and Ethereum satisfy this assumption (ignoring the fixed 
block reward paid to the miner on top of the fees).
We discuss what it might mean to relax 
this assumption in Section~\ref{sec:conclusion} and leave this as an exciting
future direction.
}.
In particular, if the miner's revenue is strictly smaller
than the total payment of all bids, then we often say that part of the payment  
is {\it burnt}. 
\end{itemize}

In our model, a strategic miner (possibly colluding with some users) 
may not implement the honest inclusion rule, if  
deviating can benefit the miner (or coalition).
However, the blockchain is trusted 
to implement the confirmation, payment, and miner-revenue rules honestly. 

In comparison with Roughgarden's model~\cite{roughgardeneip1559,roughgardeneip1559-ec}, 
we explicitly distinguish the inclusion rule from the 
confirmation rule in our modeling. 
By contrast,
Roughgarden's model calls the union of the inclusion rule and the  
confirmation rule the {\it allocation rule}.
Making the distinction between 
the inclusion and confirmation rules explicit is 
useful for us since we want to tease out the fine boundaries
between feasibility and infeasibility, depending on whether 
the block size is finite or infinite.

\paragraph{Strategy space and incentive compatibility.}
A {\it strategic player} can be a user, a miner, or the coalition of the miner
and up to $c$ users. 
The strategic player can deviate in the following ways:
1) if one or more users are involved, then some of the users
can decide to bid untruthfully, possibly {\it after} examining some or all other bids; 
2) the strategic player can inject fake bids, possibly
 after examining some or all other bids;
and 3) if the miner is involved, then
the miner may not implement the inclusion rule honestly.

Every user has a true value for its transaction to be confirmed.
If confirmed, its utility is 
its true value minus its payment. An unconfirmed user has utility $0$.
The miner's utility is its revenue.
If the miner colludes with some users, 
the coalition's joint utility is the sum of the utilities of all coalition members. 

\paragraph{Incentive compatibility.}
The honest strategy for a user is to bid its true value.
The honest strategy for a miner is to implement the correct inclusion rule.
A TFM is incentive compatible for a strategic player 
iff 
deviating from the honest strategy cannot increase the strategic player's expected
utility; i.e., 
playing honestly is the strategic player's best strategy (or one
of the best strategies).
A TFM is said to be user incentive compatible (UIC), 
if it is incentive compatible for any individual user.
A TFM is said to be miner incentive compatible (MIC), 
if it is incentive compatible for the miner.
Finally, 
a TFM is said to be $c$-side-contract-proof, 
if it is incentive compatible for any coalition consisting
of the miner and at least $1$ and at most $c$ users.
The notions UIC, MIC, and $c$-SCP are incomparable as shown in Appendix~\ref{sec:ic-compare}.

Note that in a blockchain environment, user-user coalitions are much harder to form:  
since users are ephemeral, rendezvous
between them is challenging. By contrast, there are typically a stable set
of big miners which makes miner-user rendezvous easy. 
For this reason, most works in this space are more interested in defending
against miner-user rather than user-user coalitions.

\begin{remark}
In this paper, for our upper bounds, 
we assume that all transactions are equal size and 
we do not consider the scenario where the utility may depend
on the position in the block~\cite{flashboy}.
However, we stress that our lower bounds hold even when transactions are not of the same
size and even when utility may be an arbitrary 
function of the position in the block.
Similarly, for our upper bounds, we 
assume that a user's utility depends only on whether it is confirmed
and how much it pays, and we do not consider the case where a user's utility may depend on 
whether someone else's transaction being confirmed or not~\cite{LedgerHedger, massExitAttack}.
\end{remark}

\subsection{Impossibility of a ``Dream'' TFM under Finite Block Size}
\label{sec:roadmap-main-lb}
We now sketch how to prove Theorem~\ref{thm:intromain}, that is, 
assuming finite block size, no non-trivial TFM can achieve 
UIC and 1-SCP at the same time.
We shall first sketch how the proof works for deterministic TFMs, then
we explain how to generalize the proof to the randomized case. 

\paragraph{Deterministic case: miner has $0$ revenue.}
Recall that if a TFM satisfies  
UIC, it must respect the constraints imposed by the 
famous Myerson's Lemma~\cite{myerson}. For deterministic mechanisms, 
this means that the confirmation decision is {\it monotone}, and moreover,
every confirmed bid 
pays {\it the minimum price it could have bid and still remained confirmed},
assuming everyone else's bids remain the same.

To prove Theorem~\ref{thm:intromain}, we go through an intermediate stepping
stone: we shall actually prove  
Theorem~\ref{thm:introburn} first, that is, 
any TFM that satisfies both UIC and $1$-SCP {\it must always pay the miner nothing},
regardless whether the block size is finite or infinite.
Henceforth, let $\mu(\bfb)$ denote the miner 
revenue under the bid vector $\bfb$.
We use $p_i(\bfb)$ to denote user $i$'s payment under $\bfb$, and if
user $i$ is unconfirmed, $p_i(\bfb) = 0$.

Consider an arbitrary deterministic TFM
that is UIC and $1$-SCP.
Consider an arbitrary bid vector $\bfb = (b_1, \ldots, b_m)$
and we want to argue that the miner has $0$ revenue under $\bfb$.
To do this, we want to lower each user's bid to $0$ one by one,
and argue that the miner revenue is unaffected in this process.
If this is the case, we can show that the miner revenue is $0$ under $\bfb$, since 
at the very end of this process, when we have lowered
everyone's bid to $0$, the miner revenue must be $0$.

\ignore{
First, we show that 
if user $i$ bids anywhere in between $[p_i, \infty]$ which 
allows user $i$ to be confirmed
by Myerson's Lemma, then 
the miner revenue must be unaffected.
If this is not true, then suppose 
that the miner and user $i$ form a coalition, and 
suppose that user $i$'s true value is at least $p_i$.
In this case, user $i$ should bid not necessarily its true value, the 
amount that maximizes the miner revenue in the range $[p_i, \infty]$. 
The coalition gains under this strategic behavior since 
user is indifferent to any bid in this range, whereas the miner's utility 
is maximized, and this violates $1$-SCP. 
}

It suffices to prove the following.
Let $\bfb = (b_1, \ldots, b_m)$ be an arbitrary bid vector
and $i \in [m]$ be an arbitrary user.
We want to show that $\mu(\bfb) = \mu(\bfb_{-i}, 0)$.
First, we show that if a user changes its bid such that its confirmation status 
remains unaffected, then the miner revenue should stay the same
(Claim~\ref{clm:inconsequentialbidchange}). 
If this is not true, then the miner and the user can collude,
and there is a way for the user to bid untruthfully without affecting
its confirmation status and thus its utility, but increasing the miner
revenue. Overall, the coalition strictly gains and this violates $1$-SCP.
Suppose that $p_i$ is the minimum price that some user
$i$ could bid to let it be confirmed, assuming that everyone else is
bidding $\bfb_{-i}$.
The above means that if user $i$ bids anywhere between
$[p_i, \infty]$ such that it remains confirmed, 
then the miner revenue 
is unaffected. Similarly, if user $i$ bids
anywhere between $[0, p_i)$ such that it is unconfirmed, 
then the miner revenue is unaffected too.

It remains to rule out the possibility that
there is a sudden jump in miner revenue, when user $i$ lowers
its bid from $p_i$ to $p_i - \epsilon$ for an aribitrarily small $\epsilon$.
Suppose for the sake of contradiction 
that there is a sudden $\Delta > 0$ increase in the miner 
revenue when user $i$ lowers
its bid from $p_i$ to $p_i - \epsilon$ 
(and the proof for the other direction is similar). 
From what we proved earlier, 
the entire jump of $\Delta$ must occur within an arbitrarily
small interval $p_i$ and $p_i -\epsilon$, 
and in particular, we may assume that $\epsilon < \Delta$.
\ignore{
We next show that 
$\mu(\bfb_{-i}, p_i) = \mu(\bfb_{-i}, 0)$ which is sufficient to prove the statement. 
Below we rule out the case $\mu(\bfb_{-i}, p_i) > \mu(\bfb_{-i}, 0)$,
since the other case $\mu(\bfb_{-i}, p_i) < \mu(\bfb_{-i}, 0)$
has a similar proof.
Let 
$\epsilon < \Delta := \mu(\bfb_{-i}, p_i) -  \mu(\bfb_{-i}, 0)$
be an arbitrarily small number.
We have shown that
$\mu(\bfb_{-i}, 0) =  \mu(\bfb_{-i}, p_i - \epsilon)$.
This means that if user $i$ bids $p_i$ instead of $p_i-\epsilon$,
the miner can gain $\Delta > \epsilon$ in utility. 
}
In this case, if the miner colludes with user $i$ whose true value is
actually $p_i-\epsilon$, the user 
should bid $p_i$ instead. 
This way, the miner's gain $\Delta$
outweighs the user's loss $\epsilon$, and the coalition strictly gains.
This violates $1$-SCP\footnote{An anonymous reviewer suggested the following alternative way
to think about the proof, 
assuming that the user utility and miner revenue
functions are differentiable (and our proof need not make this assumption). 
Since bidding the true value $v$
maximizes the user's utility, the derivative of the user's utility as a function
of bid is zero at the true value $v$. 
This means that the miner's 
revenue must have derivative $0$ at the bid $v$, since otherwise
the user-miner coalition can profit by deviating.
Observe also that the above argument must hold for any choice of $v$, we conclude
that the miner revenue must be unaffected by the user's bid. 
}.
A formal presentation of this proof can be 
found in Section~\ref{sec:trivial-miner-rev}.

\paragraph{Theorem~\ref{thm:introburn} + finite block size $\Longrightarrow$
Theorem~\ref{thm:intromain}.}
Once we have proven Theorem~\ref{thm:introburn}, 
i.e., the miner always has $0$ revenue, 
we can now throw in the finite block size assumption, 
to prove Theorem~\ref{thm:intromain}.
We show it for the deterministic case below.
Specifically, suppose there is a bid vector $\bfb = (b_1, \ldots, b_m)$ 
under which some bid $b_i$ is confirmed where $i \in [m]$.
Now, imagine that the real world
actually consists of the bids $\bfb$ 
plus sufficiently many users bidding $b_i + \epsilon$, such that 
the number of users bidding $b_i + \epsilon$ exceeds the block size.
We know that one of the users
bidding $b_i + \epsilon$ must be unconfirmed --- let us call this user
$u$.
The miner can now form a coalition with $u$, and ask $u$ to bid $b_i$ instead.
The miner can now pretend that the world consists of the bid vector
$\bfb$
where $b_i$ is replaced with $u$'s bid, and run the honest mechanism. 
This helps the user $u$ get confirmed  
and gain a positive utility, and meanwhile, the miner 
itself always gets $0$ revenue no matter what it does.
Thus, overall, the coalition strictly gains, which violates $1$-SCP.


\ignore{
Next, we argue that user $i$ can change its bid in between $(p_i, 0]$ 
without affecting the miner revenue.
Suppose that user $i$'s true value is $p_i$. Bidding anywhere
in this range gives user $i$ utility $0$, since it is either
unconfirmed or confirmed but paying its true value $p_i$.
If bidding in the range $[p_i, 0]$
}

\paragraph{Generalizing to randomized TFMs.}
At a high level, our earlier impossibility proof 
for {\it deterministic} TFMs use Myerson's Lemma 
as a blackbox. Since the TFM is UIC, we argue that the mechanism
must fall within the solution space characterized by Myerson's Lemma.
Our proof then shows that the constraints imposed by Myerson conflict
with the requirements of 1-SCP. 
We want to generalize the impossibility to even randomized mechanisms,
where the inclusion rule, confirmation rule,  
payment and miner revenue rules are allowed to employ randomness.
For the randomized case, instead of following the same blueprint as before, 
we present an 
alternative proof that uses Myerson's Lemma (the randomized case) 
in a slightly non-blackbox manner --- we review 
 Myerson's Lemma generalized to the randomized case in Section~\ref{sec:myerson}.
Below, keep in mind that the notations  
$p_i(\bfb)$ and $\mu(\bfb)$ can be random variables.

We first give a slightly incorrect intuition. 
As a thought experiment,
imagine that the coalition of the miner and user  
$i$ forms a ``meta-user'' $i$.
Meta-user $i$ 's true value
is $v_i$, i.e., same as user $i$'s true value. 
Meta-user $i$'s payment is $p_i(\bfb) - \mu(\bfb)$. 
Observe that meta-user $i$'s true value minus its payment
is exactly the coalition's utility in the original TFM.
Now, imagine a ``meta-auction''
among a set of meta-users, 
where each meta-user $i$ is the coalition of 
the miner and the user $i$. 
Each meta-user's strategy space is either overbidding or underbidding. 
Since the original 
TFM satisfies 1-SCP, it must be that 
each meta-user does not want to overbid or underbid, i.e., 
the meta-auction is dominant strategy incentive compatible
for each meta-user. 
Now, we can apply Myerson's Lemma 
 to this meta-auction, 
and argue that each meta-user's 
payment $p_i(\bfb) - \mu(\bfb)$ must satisfy the unique payment rule 
stipulated by Myerson's Lemma.
However, since the original TFM is UIC, it must be that each user's payment
$p_i(\bfb)$ 
also satisfies the unique payment rule stipulated by Myerson's Lemma.
This gives us $p_i(\bfb) - \mu(\bfb) = p_i(\bfb)$, i.e., $\mu(\bfb) = 0$.

The above argument is slightly incorrect, though, since the unique payment rule 
of Myerson's Lemma
relies on the border condition that if a user bids $0$, it pays $0$.
When we consider the meta-auction, a meta-user's payment is
of the form $p_i(\bfb) - \mu(\bfb)$ --- and it is not immediately clear that this quantity
is $0$ (even though at the end of the proof, we can see that it is indeed $0$). 
It takes a little more work to make this intuition correct, 
and we give a formal proof below that makes slightly non-blackbox
usage of the proof of the Myerson's Lemma --- see Section~\ref{sec:randomized-lb}
for details.

The above proves Theorem~\ref{thm:introburn}
for the randomized case.
Similarly, we can now rely on Theorem~\ref{thm:introburn}
and additionally throw in the finite block size assumption
to get Theorem~\ref{thm:intromain}
for the randomized case. 
The proof of this is a little more complicated
than the deterministic case, and we defer the formal details to 
Section~\ref{sec:finite}.

\subsection{Incentive Compatibility under $\gamma$-Strict Utility}

\noindent\textbf{$\gamma$-strict utility.}
As observed earlier 
in Section~\ref{sec:introweakic}, the current modeling approach
does not charge for certain costs of cheating. Specifically,  
an overbid or fake transaction that is not confirmed in the present 
is incorrectly assumed to be free of cost.
We therefore refine the model by changing the utility definition
to account for this cost.
As mentioned, since the exact cost is hard to predict, we define 
a parametrizeable
utility notion called $\gamma$-strict utility, where the 
discount factor $\gamma \in [0, 1]$
serves as a knob to tune 
the tradeoff between efficiency and resilience --- see Section~\ref{sec:introweakic}
for more philosophical discussions of introducing
the $\gamma$ parameter and why it is a good idea even when we cannot
obtain an accurate estimate of $\gamma$. 

In comparison with the utility notion
introduced in Section~\ref{sec:roadmap-tfm}, 
the only difference here is that 
for any overbid or fake transaction that is not confirmed
in the present, we 
charge the strategic player 
$\gamma$ times the worst-case cost, 
where the worst-case cost
is the difference between the bid amount and the true value, since
the strategic player may end up paying the full bid
amount in a future block (of which it may not be the miner). 
We may assume that any fake transaction has a true value of zero.

We can define UIC, MIC, and $c$-SCP just like before but now
using the $\gamma$-strict utility  
notion.
The notions UIC, MIC, and $c$-SCP 
under $\gamma$-strict utility 
are incomparable for any $\gamma \in [0, 1]$ 
as shown in Appendix~\ref{sec:ic-compare}.

\label{sec:burn2ndprice}


\paragraph{Burning second-price mechanism: special case $c = 1, \gamma = 1$.}
To aid understanding, we first give the special case of the 
mechanism for $c = 1$ and $\gamma = 1$, and we then generalize
it to arbitrary choices of $c$ and $\gamma$. 
\begin{mdframed}
\begin{center}
{\bf The burning second-price auction: special case when $c = 1$, $\gamma = 1$} 
\end{center}
\paragraph{Parameters:} the block size $B$,  
and $0 < k' \leq k < B$
such that $k + k' = B$, where $k$ denotes the number of confirmed
transactions per block, and $k'$ denotes the number of unconfirmed
transactions in a block that are used to set the price and miner revenue.

\paragraph{Mechanism:}
\begin{itemize}[leftmargin=5mm,itemsep=1pt]
\item 
Choose the $B$ 
highest bids to include in the block.
The highest $k$
bids are considered confirmed, and they each pay the $(k+1)$-th price.
Unconfirmed transactions, included or not,  pay nothing.
\item 
The miner is paid the sum of the $(k+1)$-th 
to the $B$-th prices (which cannot exceed the total payment by construction).
All remaining payment collected from the confirmed
transactions is burnt.
\item 
If the block is not fully filled, any remaining empty slot is treated
as a bid of $0$.
\end{itemize}
\end{mdframed}
In the above mechanism, 
the top $k$ users pay the $(k+1)$-th price, the miner gets the 
sum of the $(k+1)$-th to $B$-th prices, and the rest of the payment 
is burnt.
It is easy to see that the mechanism satisfies UIC  since it behaves
exactly like a second-price auction from the user's perspective.
For MIC and $1$-SCP, observe that if a miner or miner-user
coalition overbids at $b$ which is strictly greater than 
the true value $v$, then it may be able to gain $b-v$
extra in miner revenue.
However, when the offending transaction eventually becomes 
confirmed sometime in the future, it may end up paying $b$, thus
incurring a cost of $b-v$, and offsetting the gain in revenue.
The above argument also holds for injecting a fake bid 
which can be viewed as a special case where the true value $v = 0$.
Of course, the above intuition is not a formal proof, 
we provide the formal proof in the subsequent technical sections.


\paragraph{Burning second-price auction: general case.}
We now generalize the mechanism to arbitrary choices 
of $\gamma \in (0, 1]$ 
and $c \geq 1$. The resulting mechanism achieves UIC, MIC, and $c$-SCP
under $\gamma$-strict utility.

\begin{mdframed}
\begin{center}
{\bf The burning second-price auction: general case}
\end{center}
\paragraph{Parameters:}
\begin{itemize}[leftmargin=5mm,itemsep=0pt,topsep=2pt]
\item the block size $B$,
\item the maximum coalition size $c \in \N$,
\item the discount factor $\gamma \in [0,1]$,
\item $k,k' \in \N$ such that $k + k' = B$ \elaine{note: changed to = here} 
and $1\leq k' \leq \lfloor\frac{\gamma k}{c} \rfloor$\footnote{When $\lfloor\frac{\gamma k}{c} \rfloor = 0$, the mechanism reduces to a trivial case where no transaction is confirmed and the miner is paid nothing. Thus, we only specify the case where $\lfloor\frac{\gamma k}{c} \rfloor \geq 1$.}, 
where $k$ denotes the number of included bids that might be confirmed with some probability,
and $k'$ is the number of included 
bids that are not confirmed, but are used to set the price.
(The probability is defined in the confirmation rule below.)

\end{itemize}

\paragraph{Mechanism:}
\begin{itemize}[leftmargin=5mm,itemsep=1pt,topsep=2pt]
\item 
{\it Inclusion rule.}
Choose the $B$ highest bids to include in the block, 
breaking ties arbitrarily.
Let $(b_1,\ldots, b_B)$ denote the included bids where $b_1 \geq \cdots \geq b_B$.
If the block is not fully filled, any remaining empty slot is treated as a bid of $0$.
\item 
{\it Confirmation rule.}
Select a random subset $S \subseteq \{b_1, \ldots, b_k\}$ 
of size exactly $\lfloor\frac{\gamma k}{c}\rfloor$ using (trusted) on-chain randomness.
The set $S$ is confirmed
and all other bids $\{b_1, \ldots, b_B\} \setminus S$ are unconfirmed.
\item 
{\it Payment rule.}
Any confirmed bid pays $b_{k+1}$.
All unconfirmed bids pay nothing. 
\item 
{\it Miner revenue rule.}
The miner is paid $\gamma \cdot (b_{k+1} + \cdots +  b_{k + k'})$.
Burn any remaining payment collected from the confirmed bids.
\end{itemize}
\end{mdframed}

\paragraph{A note about randomized TFM and implementation of the random coins.}
In general, a randomized mechanism may employ random coins  
in the the miner-implemented inclusion rule,
as well as the blockchain-implemented confirmation rule,
payment and miner revenue rules.
In our burning second-price auction specifically,
the inclusion rule executed by the miner is
deterministic, and only the   
confirmation rule that is executed by the blockchain is randomized.
To implement such a mechanism in practice, we will need trusted on-chain randomness.
How to generate unbiased and unpredictable random coins in distributed environment has
been extensively studied~\cite{Cachin00randomoracles,randpiper,spurt}. 
Since such ``trusted'' random coins could be expensive to generate in a decentralized environment, 
we would ideally like to avoid them. 
Unfortunately, 
we will show 
later that randomness is actually necessary
to get weak incentive compatibility when $c \geq 2$. 
\ignore{
For example, in certain consensus protocols, the consensus nodes
can jointly run a coin toss protocol  
for each block mined.
The coin toss outcome is guaranteed to be unbiased and unpredicable as long as
a majority or super-majority of the nodes behave honestly.
}

\paragraph{Some interesting observations.}
We can make a few intereseting 
observations about this mechanism:
\begin{enumerate}[leftmargin=5mm,topsep=2pt]
\item 
First, the larger the coalition resistance parameter $c$, 
the smaller the number of confirmed bids 
$\floor{\frac{\gamma k}{c}}$.
Similarly, when $\gamma$ is larger, i.e., when we are charging harsher costs for  
cheating, the mechanism can confirm more bids. 
In other words, both $c$ and 
$\gamma$ can be viewed as knobs that allow us to smoothly tradeoff 
the {\it strength of incentive compatibility}
and the {\it efficiency} of the mechanism.
We stress that such a tradeoff
is inevitable due to our earlier impossibility result
for strong incentive compatibility (see Corollary~\ref{cor:finiteblocksize}).
Our burning second-price auction
gives a {\it mathematically quantifiable tradeoff}
between the resilience towards strategic behavior 
and the efficiency of the mechanism.

As a special case, 
when $\gamma = 0$, i.e., when there is no cost for overbid/fake unconfirmed bids,  
the number 
of confirmed bids $\floor{\frac{\gamma k}{c}} = 0$ --- in other words, the mechanism becomes
degenerate. This is consistent with our earlier impossibility result
for strong incentive compatibility.

\item 
Second, when $\gamma = 1$ and $c = 1$,  
the mechanism acutally becomes {\it deterministic}, since 
the number of confirmed bids $\floor{\frac{\gamma k}{c}} = k$. In other words,
the top $k$ included bids are surely confirmed.  We 
give a full description of the mechanism 
for this particularly interesting special case below.

On the other hand, if $c > 1$, the mechanism is randomized even for $\gamma = 1$. 
This is no co-incidence, since later, 
we will prove that randomness
is actually necessary for $c > 1$ for any ``interesting'' mechanism.
\elaine{can we prove that randomness is needed for c = 1 and $\gamma < 1$}
\end{enumerate}

\ignore{TODO: comment on how to realize on-chain randomness, randomness expensive.
write the deterministic special case gamma = 1 and c =1}

\begin{theorem}[Burning second-price auction, restatement of Theorem~\ref{thm:intro-upper}]
For any $c \geq 1$ and $\gamma \in (0, 1]$, 
the burning second-price 
auction satisfies UIC, MIC, and $c$-SCP under $\gamma$-strict utility.
\label{thm:burn2ndprice}
\end{theorem}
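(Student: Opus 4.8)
The plan is to verify the three incentive properties separately, in increasing order of difficulty: UIC, then MIC, then $c$-SCP, which is the crux. Throughout write $m := \lfloor\frac{\gamma k}{c}\rfloor$ for the number of confirmed bids, and recall the two structural facts built into the parameters: $\frac{m}{k}\le\frac{\gamma}{c}$ and $1\le k'\le m$.

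For UIC I would fix all other bids and analyze a single user with true value $v$. Let $\tau$ be the $k$-th largest of the remaining bids. Then the user enters the eligible top-$k$ set exactly when it bids above $\tau$, in which case the $(k+1)$-st included bid, and hence its price, is exactly $\tau$, and it is confirmed with probability $\frac{m}{k}$; otherwise it is never confirmed. Thus from the user's viewpoint the mechanism is a randomized second-price auction with threshold $\tau$, constant win-probability, and price $\tau$: the expected allocation is monotone and the expected payment $\frac{m}{k}\tau$ equals the unique Myerson payment. Truthful bidding therefore maximizes $\frac{m}{k}(v-\tau)^+$, and since the $\gamma$-strict rule only adds a nonnegative penalty to overbids that end up unconfirmed, it can only make deviations worse. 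Hence UIC holds for every $\gamma\in[0,1]$.

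For MIC, assume truthful bids, so the miner's utility is its revenue $\gamma\sum_{j=k+1}^{B}b_j$ (i.e.\ $\gamma$ times the sum of the bottom $k'$ included bids) plus the nonpositive $\gamma$-strict utility of any fakes it injects. First, among pure inclusion deviations the honest rule is optimal, since including the top $B$ real bids pointwise maximizes each of the positions $k+1,\dots,B$. Next I would rule out fakes by a displacement computation on the sorted included bids $b_1\ge\cdots\ge b_B$. A fake of value $f\ge b_B$ placed in the revenue region raises the revenue sum by $f-b_B$ but, being unconfirmed, costs the miner exactly $\gamma f$, for a net change $\gamma(f-b_B)-\gamma f=-\gamma b_B\le 0$. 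A fake placed in the top-$k$ region (so $f\ge b_k$) raises the revenue sum by $b_k-b_B$ but costs $\tfrac{m}{k}b_k+(1-\tfrac{m}{k})\gamma f$ in expectation, and using $f\ge b_k$ and $\gamma\le 1$ the net change is at most $-\gamma b_B-\tfrac{m}{k}b_k(1-\gamma)\le 0$. Iterating over multiple fakes gives MIC.

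The heart of the proof is $c$-SCP. Fixing the non-coalition bids and writing $T$, $R$ for the coalition-controlled bids that land in the eligible top-$k$ region and in the revenue region respectively, the coalition's expected joint utility over the random confirmation set is
\[
\E[U] \;=\; \gamma\sum_{j=k+1}^{B} b_j \;+\; \tfrac{m}{k}\sum_{i\in T}(v_i - b_{k+1}) \;+\; \sum_{i\in R} u_i,
\qquad u_i=-\gamma(B_i-v_i)^+\le 0 .
\]
The coalition has only two levers: lowering $b_{k+1}$ (cheaper confirmation for users in $T$, paid for by lost miner revenue) and inflating the revenue positions (more revenue, but a fake there is net-zero since its $\gamma B_i$ contribution is exactly cancelled by its $\gamma$-strict cost, while a coalition user there contributes a combined $\gamma B_i+u_i=\gamma\min(B_i,v_i)$). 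The parameters are calibrated so neither lever pays: each of the at most $c$ users in $T$ is confirmed with probability $\tfrac{m}{k}\le\tfrac{\gamma}{c}$, so the aggregate confirmation mass is $c\cdot\tfrac{m}{k}\le\gamma$, whence any per-unit price reduction handed to confirmed coalition users is dominated by the matching $\gamma$-weighted revenue the miner forgoes; and $k'\ge 1$ guarantees at least one revenue position while $k'\le m$ keeps total collectable revenue below the total payment. I expect the main obstacle to be precisely this final accounting: the price $b_{k+1}$ and the revenue positions are coupled, and the region into which a given coalition bid falls is endogenous to the manipulation, so one cannot bound each user's contribution in isolation. I anticipate needing a normalization of the coalition strategy first (no coalition bid confirmed at a loss; fakes used only to fill or displace revenue slots; coalition users never overbid past their value into the revenue region), after which the endogenous assignment can be compared slot-by-slot against the honest configuration and the two inequalities $c\cdot\frac{m}{k}\le\gamma$ and $k'\le m$ applied to conclude $\E[U^{\mathrm{dev}}]\le U^{\mathrm{hon}}$.
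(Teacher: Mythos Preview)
Your UIC and MIC arguments are correct and track the paper's reasoning closely; the paper carries out the same displacement computations via an explicit four-case split on where the replaced bid and the fake land, but the content is the same.

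The genuine gap is in the $c$-SCP argument. You have isolated the right calibration inequality, namely $c\cdot\frac{m}{k}\le\gamma$, and you correctly see that it forces any per-unit price reduction to cost the miner (in $\gamma$-weighted revenue) at least as much as the $\le c$ confirmed coalition users can collectively save. But your own caveat is exactly the problem: the price $b_{k+1}$, the miner-revenue positions $k{+}1,\dots,k{+}k'$, and the $T$/$R$ region assignment of each bid are all endogenous to the deviation, so the ``two levers'' cannot be varied independently. Your expected-utility formula also silently drops the $(1-\tfrac{m}{k})\gamma(B_i-v_i)^+$ penalty for overbidding coalition users in $T$, and your proposed normalization (``no coalition bid confirmed at a loss; fakes only in revenue slots; no overbids into $R$'') does not by itself rule out overbids \emph{into} $T$. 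A slot-by-slot comparison after normalization may be completable, but you have not shown how to decouple the slots, and this is where the real work lies.

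The paper avoids the coupling entirely by a sequential-modification argument rather than a global comparison. It decomposes any deviation into three phases --- (i) the miner deletes honest bids one at a time, (ii) the miner replaces real bids by fakes one at a time, (iii) the colluding users change their bids one at a time --- and shows that each single modification cannot increase the coalition's expected joint utility. Each step is then a local four-case analysis (old position in/out of top-$k$, new position in/out of top-$k$) in which only one sorted position shifts, so the bookkeeping is tractable and the inequality $\frac{m}{k}\le\frac{\gamma}{c}$ suffices at each step. The one nontrivial idea you are missing is in phase (iii): the paper processes the colluding users in \emph{ascending order of true value}. This is what handles the case where user $i$'s bid change pushes a previously-modified coalition user $r$ from position $k{+}1$ into the top-$k$ region; the ordering guarantees $v_r\le v_i$, which is exactly the inequality needed to bound user $r$'s gain against user $i$'s and the miner's combined loss.
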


The proof of Theorem~\ref{thm:burn2ndprice}
is provided in Section~\ref{sec:proof-burn2ndprice}.
We provide some informal intuition about the proof below.
First, if $\gamma = c = 1$, the mechanism is deterministic.
The top $k$ users pay the $(k+1)$-th price, and the miner gets the sum of the $(k+1)$-th to $B$-th prices. 
For this case, the proof can be accomplished through a careful case-by-case analysis.
Next, when $\gamma < 1$ and $c \geq 1$, the mechanism becomes randomized, and the miner's revenue is the sum of the $(k+1)$-th to $B$-th prices, but discounted by $\gamma$. 
The discount factor in miner revenue is necessary since otherwise, the miner (or miner-user coalition) will be incentivized to raise the $(k+1)$-th price by $\Delta$ (through a fake or overbid transaction) for some $\Delta > 0$. 
The cost to the miner is $\gamma \cdot \Delta$, but the miner would get paid $\Delta$ more if there were no discount factor in miner revenue. 
With this newly introduced $\gamma$ factor, we need to randomly sample $\gamma/c$ fraction of the top $k$ transactions to confirm. 
Otherwise, a miner colluding with $c$ users will be incentivized to lower the $(k+1)$-th price by $\Delta$. 
In this way, each of the $c$ colluding users saves $\Delta$ in payment, but the miner loses only $\gamma \cdot \Delta$ in revenue. 
The random sampling ensures that in expectation, only $\gamma/c$ fraction of the colluding users' transactions are confirmed, thus avoiding this problem.
Formalizing the above intuition into a proof requires a careful case by case analysis, and the complete proof is provided in Section~\ref{sec:proof-burn2ndprice}.

Although the tradeoff between resilience and efficiency 
as observed in the burning second-price auction is inevitable,
currently we do not understand whether our mechanism
achieves the optimal tradeoff curve, i.e., whether it  
achieves optimality for every choice of $\gamma$ 
in terms of utilization of on-chain space  
and money burnt. We leave this as an exciting direction
for future work (see also Section~\ref{sec:conclusion}
for numerous other open questions 
in this exciting and little explored space).

\subsection{Necessity of Randomness for Weak Incentive Compatibility}
\label{sec:roadmap-lb-weak}
We present an informal roadmap for the proof of Theorem~\ref{thm:intro-lb-weak}.
Briefly speaking, if there exists a bid vector such that the TFM confirms at least two bids,
we say that the TFM is {\it 2-user-friendly}.
Then, any deterministic and 2-user-friendly TFM 
cannot satisfy weak UIC and 2-weak-SCP simultaneously. 
Recall that weak incentive compatibility corresponds to the case
when $\gamma = 1$. In other words, we are charging the worst-case cost
for cheating, and this makes our lower bounds stronger.

\paragraph{Myerson's lemma holds for deterministic and weak UIC mechanisms.}
Recall that Myerson's Lemma holds for any UIC mechanism.
Since we now are considering a more relaxed notion, namely, weak UIC, 
it may not be immediately clear that 
Myerson's Lemma still holds. 
Fortunately, we can prove that 
assuming {\it deterministic} and no random coins,  then even weak UIC  
mechanisms must satisfy the requirements imposed by Myerson's Lemma
(Fact~\ref{fct:myerson-weakuic}).
We stress that this observation is actually somewhat subtle, since
it is not too clear whether Myerson's Lemma
holds for {\it randomized} mechanisms that satisfy weak UIC.

\paragraph{Weak UIC + $2$-weak-SCP + $2$-user-friendly $\Longrightarrow$ several natural properties.}
Next, we establish a few natural structural properties
for any deterministic, $2$-user-friendly TFM that is both weak UIC and $2$-weak-SCP.
\begin{enumerate}[leftmargin=5mm]
\item 
All confirmed bids must pay the same, and thus there is a universal payment 
(Lemma~\ref{lemma:samePayment});
\item 
The mechanism must confirm the highest bids where the number of confirmed 
bids may depend on the bid vector (Lemma~\ref{lemma:ordered}); 
and 
\item  
The universal payment 
must be at least as high as the top unconfirmed bid (Lemma~\ref{lemma:unconfirmedPayment}).
In other words, anyone bidding strictly higher than the universal payment must be confirmed.
\end{enumerate}

\paragraph{Influence of an individual bidder.}
When we proved the impossibility 
for (strong) incentive compatibility, we used the fact 
that when an individual user moves its bid up or down, 
as long as its confirmation decision is unaffected, 
the user's own utility does not change. 
Now, due to $1$-SCP, the miner's revenue should be unaffected too. 
This statement is not entirely true any more now that
we have changed our utility definition.
In particular, if an unconfirmed user 
increases its bid while still remaining unconfirmed, there is now 
an extra cost to the user.
The key to proving Theorem~\ref{theorem:twoweakSCPwithburn}
is to understand how fast the universal payment and 
miner revenue can change as we change a single user's bid.
There are a few cases (stated informally below): 
\begin{itemize}[leftmargin=5mm]
\item 
{\bf Lemma~\ref{lemma:confirmInvariant}\footnote{We 
in fact need to use this lemma to prove the aforementioned natural properties.
}.}
If a confirmed user changes its bid such that it is still confirmed,
then the miner revenue is unaffected.
This can be shown using the same argument as 
in Section~\ref{sec:trivial-miner-rev} relying on weak UIC and 1-weak-SCP, 
since for a confirmed bid, the new and old utility notions coincide.

Additionally, using $2$-weak-SCP, we can show something even stronger:
if there are two confirmed bids $b_1$ and $b_2$ such that $b_1 > p$ where
$p$ is the universal payment, 
then, $b_1$'s confirmation status and the universal payment
amount 
are also unaffected when $b_2$ changes its bid as long as it remains confirmed. 

\item 
{\bf Lemma~\ref{lem:minerutilkto0}.}
If an individual user changes its bid by $\Delta$, 
then the miner utility cannot change by more than $\Delta$.
Roughly speaking, this is because even under our new utility notion, 
the extra cost to a user is at most $\Delta$ if it changes its bid by $\Delta$. 
If the miner revenue changed 
by more than $\Delta$, then the miner-user coalition has a deviating strategy  
that allows them to strictly gain.

\item 
{\bf Lemma~\ref{lemma:paychangeslow}.} 
If a user $i$ increases its bid from  
$0$ to $\Delta$, the universal payment cannot increase by more than $\Delta/2$.
Had it not been the case, then the coalition of a miner 
and two confirmed users can gain in the following way: 
if user $i$ actually bids $b_i > 0$, the miner can replace $b_i$ with a $0$-bid.
In this way, the two colluding users each pay a lot less, and due to 
the earlier Lemma~\ref{lem:minerutilkto0}, the miner's revenue does not change that much.
So overall, the coalition can strictly gain.

\item 
{\bf Lemma~\ref{lemma:paychangeslow2}.} 
If a user $i$ drops its bid from $b_i$ to $0$, then the universal payment
cannot increase by more than $b_i$. Otherwise, the miner can collude with one paying user, and 
suppose user $i$'s actual bid is $0$, but the miner changes it to a fake bid of $b_i$.
In this case, the paying user 
would pay a lot less which outweighs the cost to the miner is only $b_i$.
\end{itemize}

\begin{figure*}[t]
\[\begin{array}{cccccccl}
& & \text{\bf bids} & & & & \text{\bf universal payment}\\
b_1 > p, & b_2 > p, & \_, & \_, & \ldots, & \_  & p\\[-6pt]
& & & & & & & \biggl.\biggr\} \text{\ Lemma~\ref{lemma:confirmInvariant}}\\[-7pt]
{\color{blue}\Gamma \ \text{(big)}}, & {\color{blue}\Gamma \ \text{(big)}}, & \_, & \_, & \ldots, & \_  & p\\[-6pt]
& & & & & & & \biggl.\biggr\} \text{\ Lemma~\ref{lemma:paychangeslow2}}\\[-8pt]
\Gamma \ \text{(big)}, & \Gamma \ \text{(big)}, & {\color{blue}0}, & \_, & \ldots, & \_  & 
{\color{blue} p_1}\\[-6pt]
& & & & & & & \biggl.\biggr\} \text{\ Lemma~\ref{lemma:paychangeslow2}}\\[-8pt]
\Gamma \ \text{(big)}, & \Gamma \ \text{(big)}, & {0}, & {\color{blue}0}, & \ldots, & \_  &  {\color{blue} p_2}\\[-3pt]
& & & & \vdots & & & \biggl.\biggr\} \text{\ Lemma~\ref{lemma:paychangeslow2}}\\[-4pt]
\Gamma \ \text{(big)}, & \Gamma \ \text{(big)}, & {0}, & {0}, & \ldots, & {\color{blue} 0}  &  {\color{blue} p'}\\[-9pt]
& & & & & & & \biggl.\biggr\} \text{\ Lemma~\ref{lemma:paychangeslow}}\\[-9pt]
\Gamma \ \text{(big)}, & \Gamma \ \text{(big)}, & {\color{blue} p'_1 + \epsilon}, & {0}, & \ldots, & {0}  &  {\color{blue} p'_1}\\[-5pt]
& & & & & & & \biggl.\biggr\} \text{\ Lemma~\ref{lemma:confirmInvariant}}\\[-9pt]
\Gamma \ \text{(big)}, & \Gamma \ \text{(big)}, & {\color{blue} \Gamma \ \text{(big)}}, 
& {0}, & \ldots, & {0}  &  {p'_1}\\[-5pt]
& & & & & & & \biggl.\biggr\} \text{\ Lemma~\ref{lemma:paychangeslow}}\\[-8pt]
\Gamma \ \text{(big)}, & \Gamma \ \text{(big)}, & \Gamma \ \text{(big)}, & 
{\color{blue} p'_2 + \epsilon}, & \ldots, & {0}  &  {\color{blue} p'_2}\\[-5pt]
& & & & & & & \biggl.\biggr\} \text{\ Lemma~\ref{lemma:confirmInvariant}}\\[-8pt]
\Gamma \ \text{(big)}, & \Gamma \ \text{(big)}, & \Gamma \ \text{(big)}, & 
{\color{blue} \Gamma \ \text{(big)}}, & \ldots, & {0}  &  {p'_2}\\[-3pt]
& & & & \vdots & & & \biggl.\biggr\} \text{\ Lemmas~\ref{lemma:confirmInvariant}, \ref{lemma:paychangeslow}}\\[-9pt]
\Gamma \ \text{(big)}, & \Gamma \ \text{(big)}, & { \Gamma \ \text{(big)}}, & { \Gamma \ \text{(big)}}, & \ldots, & {\color{blue}  \Gamma \ \text{(big)}}  &  {\color{blue} p''}\\[-9pt]
\end{array}
\]
\caption{{\bf Proof roadmap for Theorem~\ref{theorem:twoweakSCPwithburn}.}
We construct a sequence of bid vectors, and show that if
the mechanism satisfies the desired properties, 
then, in the last configuration, every bid must 
be confirmed. Since there are more bids than the block size, we reach  
a contradiction. The notation ``\_'' denotes a bid whose value 
we do not care about (as long as $\Gamma$ is big enough w.r.t. all 
these values).}
\label{fig:proofroadmap}
\end{figure*}

\paragraph{Demonstrating the contradiction (Figure~\ref{fig:proofroadmap}).}
With the above key observations, we can finally 
demonstrate a contradiction, assuming that there indeed
exists a deterministic, $2$-user-friendly mechanism that is weak UIC and 
$2$-weak-SCP.
\begin{enumerate}[leftmargin=5mm]
\item 
First, we show that there exists a bid vector $\bfb = (b_1, b_2, \ldots, b_m)$ 
such that there are two (or more) users
confirmed, and both users bid strictly higher than the payment.
Note that $2$-user-friendliness guarantees the existence 
of a vector $\bfb$
such that two users are confirmed, but does not directly guarantee
that both of them bid strictly above the payment --- it actually requires a bit of work to show this
(which we defer to the subsequent formal presentation).
Henceforth, without loss of generality, we may assume that $b_1$ and $b_2$
are the two confirmed bids, and let $p$ be the payment.
We know that $b_1 > p$, and $b_2 > p$.

\item 
Next, using Lemma~\ref{lemma:confirmInvariant}, we can increase both $b_1$ and $b_2$  
to some sufficiently large number $\Gamma$, without affecting the payment $p$
or the miner revenue, and the resulting 
bid vector is $(\Gamma, \Gamma, b_3, \ldots, b_m)$.

\item 
Next, we can lower $b_3, \ldots, b_m$ all to $0$. 
Due to Lemma~\ref{lemma:paychangeslow2} and the sufficiently large choice of $\Gamma$,
the increase in the payment is relatively small in comparison with $\Gamma$.
This means that 
at the end, the first two users' bid amount $\Gamma$ is still much greater than the universal payment,
despite the possible increase in the universal payment.
Therefore,  the first two users must be still confirmed at the end 
(formally showing this requires a bit extra work).
At this moment, we have a bid vector $(\Gamma, \Gamma, 0, 0, \ldots, 0)$,
where the first two users are confirmed, and there is still a sufficiently
large gap between their bid $\Gamma$ and the universal payment $p'$.

\item 
Next, one by one, we shall increase the bids of users $3$ through $m$.
For each user $j \in \{3, 4, \ldots, m\}$, as we increase their bid at some rate
$r$,  the universal payment increases 
at rate at most $r/2$ due to Lemma~\ref{lemma:paychangeslow}.
At some point, $j$'s bid will surpass the universal payment, and at this point,
due to  
the third \elaine{hardcoded ref} natural property mentioned earlier, user
$j$ must become confirmed.
Note that during this entire process, users $1$ and $2$ remain confirmed since
their bids $\Gamma$ is sufficiently large.

\item 
Repeating the above process, 
we will eventually obtain a bid vector such that all $m$ users  
are confirmed. Now, as long as $m$ is strictly greater than the block size $B$, 
we reach a contradiction --- note that this is the only place where 
we use the finite block size assumption in the entire proof.
It turns out that we can safely assume 
$m > B$, since if the initial vector $\bfb$ has fewer than $B$ users, we can always 
append $0$ bids to $\bfb$ ``for free'' (and showing this requires a little extra
work which we defer to the subsequent formal exposition).
\end{enumerate}




\subsection{Additional Related Work}
\label{sec:related}

\paragraph{Transaction fee mechanism.}
We now review some additional related work besides
the most closely related work EIP-1559~\cite{eip1559}
and that of Roughgarden~\cite{roughgardeneip1559,roughgardeneip1559-ec}.
Specifically, we will review the transaction fee mechanisms that have been
proposed, and explain  
which of the three properties they each fail to satisfy.

Lavi, Sattath, and Zohar~\cite{zoharfeemech}
pointed out that today's ``pay your bid'' auction 
has resulted in complex strategic bidding behavior. 
In particular, when there is no congestion, users would 
bid almost $0$, resulting in very little
transaction fee revenue for the miners.
To alleviate the problem, 
\cite{zoharfeemech} suggests
two alternative mechanisms, 
Monopolistic Price, and Random Sampling Optimal Price (RSOP), initially
proposed in 
\cite{competitiveauction}.
As \cite{zoharfeemech} acknowledged, Monopolistic Price
is not strictly user incentive compatible (by the classical DSIC notion),  
and is not even 1-side-contract resilient. 
For RSOP, \cite{zoharfeemech} demonatrated an attack showing that it is not MIC. 
In fact, a slightly modified attack can also 
show that RSOP is not side contract resilient.
Yao~\cite{yaofeemech} proved that 
although Monopolistic Price is not strictly UIC, it is nearly  
UIC assuming any i.i.d. distribution  
of the users' true values, 
and as the number of users goes to infinity.
Further, Yao also proved a conjecture in \cite{zoharfeemech}
regarding the relative revenue of the two mechanisms.

Basu, Easley, 
O'Hara, and Sirer~\cite{functional-fee-market}
suggested mechanism that involves paying the transfaction fees forward 
to some number of future blocks.
Roughgarden~\cite{roughgardeneip1559}
simplified and analyzed their scheme, and argued that it does not satisfy  
any of the three properties, although it is approximately UIC when the number
of users goes to infinity.

Ferreira, Moroz, 
Parkes,  and Stern~\cite{dynamicpostedprice}
suggest a modification to EIP-1559: whereas EIP-1559 approximates a first price auction
in the congested regime and approximates a posted price auction
in the infinite block size regime, 
\cite{dynamicpostedprice}
suggest to adopt a posted price mechanism  
no matter which regime one is in, by modifying the 
reserve price over time.
 \cite{dynamicpostedprice}'s approach does not adopt a burn rule, and fails to satisfy
even $1$-side-contract-proofness.

\vspace{3pt}
\noindent \underline{\it More detailed explanations:}
in Appendix~\ref{sec:detailedrelatedwork}, we explain
each of these known TFMs~\cite{zoharfeemech,functional-fee-market,dynamicpostedprice}
in more detail, and explain why they fail to satisfy
one or more of the desired incentive compatibility properties.

\paragraph{Auctioneer misbehavior in auctions.}
Akbarpour and Li~\cite{credibleauction}
proposed a notion called {\it credible} auctions.
In particular, in their model,
the users each communicate with the auctioneer
over a {\it private channel}. 
Credibility requires that the auctioneer does not have incentives
to implement any ``safe'' deviations, i.e., deviations
where the miner can plausibly explain away to every user without
being implicated (e.g., by lying about the message other users have sent).
Akbarpour and Li~\cite{credibleauction} showed  
a two-out-of-three type impossibility result for {\it optimal} credible auctions,
where optimality implies that the auctioneer's revenue is maximized.
While their definition is somewhat similar in spirit 
to miner incentive compatibility (MIC), 
we stress that 
{\bf their lower bound for credibility does not imply a corresponding
lower bound for MIC in our model, since 
a TFM that is MIC may not be credible in their model}, as shown
in the following counter-example. 
\ignore{
First, {\it a credible auction in their model
does not immediately imply a TFM that is MIC}. Their model
wants to remove incentives for the auctioneer to 
lie about users' messages in justification of its own misbehavior,
whereas in our model, since all messages are posted publicly, 
such lying is inherently prevented.
Instead, the kinds of miner deviations we want to discourage such as injecting 
fake transactions are not easily captured by their model
where the set of users are fixed  and known a-priori (and thus such deviations
are not considered ``safe'' in their model).
Second, {\bf a TFM that is MIC does not necessarily 
give rise to a credible auction in their model}.
}
Consider our own deterministic burning second-price mechanism
(where $k' = 1$) can be viewed as follows:
the top bidder pays the $(k+1)$-th price to the 
auctioneer, every other top $k$ bidder 
pays the $(k+1)$-th price but the payment is burnt.
This mechanism is MIC but not credible in the model of 
Akbarpour and Li~\cite{credibleauction}, 
since the auctioneer can lie about the $(k+1)$-th price
to the top bidder and thus earn more from the top bidder.
Moreover, we also stress that the style of our impossibility
is of a stronger nature than that of  Akbarpour and Li~\cite{credibleauction}:  
we rule out {\it any} 
TFM whatsoever satisfying UIC and 1-SCP even when allowing burning, and 
{\it without any regards to optimality}.
The elegant work of Ferreira and Weinberg~\cite{commit-credible-auction} showed  
that using cryptographic commitments can help overcome some 
of the lower bound results 
shown by Akbarpour and Li~\cite{credibleauction}. 

We want to discourage miners from injecting fake transactions. In the economics
literature, auctioneer injecting transactions is also sometimes referred 
to as shill bidding~\cite{shillbid02,shillbid00,shillbid01}. 
Earlier works have found that shill bidding can help increase
the auctioneer's profit~\cite{shillbid02,shillbid00,shillbid01}.

\paragraph{User collusion in auctions.}
\ignore{
Several works in the mechanism design literature 
are related to our work.
Akbarpour and Li~\cite{credibleauction}
proposed a notion called credible auctions, 
i.e., auctions where the auctioneer does not have incentives
to implement any ``safe'' deviations. 
In particular, safe deviations are for which there exists
a plausible explanation. 
While their definition is somewhat similar in spirit 
to miner incentive compatibility (MIC), 
their modeling is incomptible with TFM. In TFM, 
all transactions included
in the block must be visible to the public, whereas  
Akbarpour and Li~\cite{credibleauction} consider auctions where 
a bidder may not be able to see others' bids --- and the cheating
auctioneer could exploit this to explain his cheating behavior away.
The elegant work of Ferreira and Weinberg~\cite{commit-credible-auction} showed  
that using cryptographic commitments can help overcome some 
of the lower bound results 
shown by Akbarpour and Li~\cite{credibleauction}. 
}
A line of works also consider collusion 
among bidders in auctions~\cite{coalitionic,goldberghartline,chen-collusive,optcollusionproof,econcollusion,decklbaummicali}.
Traditional auctions like the Vickrey auction do not satisfy incentive compatibility
if bidders can collude through binding side contracts.
Therefore, this line of work 
explores under what modeling assumptions or incentive compatibility  
notions is it possible to resist bidder collusion. 
The transaction fee mechanism 
(TFM) line of work has not focused 
on user-user collusion --- as mentioned earlier, user-user
rendezvous is difficult to facilitate since users are ephemeral in 
decentralized blockchain settings.

\elaine{anything else to cite?}



\section{Definitions}\label{section:definitions}

In this section, we define a transaction fee mechanism (TFM) 
formally, as well as incentive compatibility notions.
%
Our modeling choice can be viewed as a generalization of 
that of Roughgarden's~\cite{roughgardeneip1559,roughgardeneip1559-ec}.
Specifically, Roughgarden's model only cares about which transactions
are eventually confirmed, but does not care about which ones
are included in the block. 
By contrast, our modeling explicitly separates the ``inclusion rule''
from the ``confirmation rule''.
Both our lower bound and upper bound 
will demonstrate that explicitly separating the ``inclusion rule''
and the ``confirmation rule'' is important for understanding the  
feasibilities and infeasibilities 
of transaction fee mechanism design.
See also Remarks~\ref{rmk:defn-tfm} and \ref{rmk:defn-ic} for 
additional philosophical discussions 
about the modeling.


\subsection{Transaction Fee Mechanism}

We consider a single auction instance corresponding to 
the action of mining the next block.
Suppose that there is a mempool 
containing the list of pending transactions submitted by users.
We may assume that each transaction is submitted
by a distinct user.
We consider a single parameter environment, i.e.,
each user $i$ has a true value $v_i \in \R$ 
for getting its transaction confirmed in the next block; moreover, 
its bid contains only a single value $b_i \in \R$ as well.
Henceforth, we use 
${\bf b} := (b_1, b_2, \ldots, b_m)$ 
to denote the vector of all bids; we also use the same notation
${\bf b}$ to denote the current mempool.
For convenience, we often use the terms {\it bid} and {\it transaction}
interchangeably, e.g., $b_i$ can be called a bid or a transaction.

As defined earlier in \Cref{sec:roadmap-tfm}, 
a Transaction Fee Mechanism (TFM) consists of 
the following possibly randomized algorithms:
\begin{itemize}[leftmargin=5mm,itemsep=1pt]
\item 
an {\it inclusion} rule
henceforth denoted $\bfI(\cdot)$, 
\item 
a {\it confirmation} rule henceforth denoted $\bfC(\cdot)$, 
\item 
a {\it payment} rule henceforth denoted $\bfP(\cdot)$, 
and 
\item 
a {\it miner revenue} rule henceforth denoted $\bfM(\cdot)$.
\end{itemize}
\ignore{
the following 
(possibly randomized) algorithms:
\begin{itemize}[leftmargin=5mm,itemsep=1pt]
\item 
{\bf Inclusion rule $\bfI(\cdot)$}:
given a bid vector $\bfb$, $\bfI(\bfb)$
outputs a subset 
of the vector $\bfb$, denoting the bids to be included in the block.

\item 
{\bf Confirmation rule $\bfC(\cdot)$}:
given a set of bids
$\bfb'$ included in the block, the confirmation rule
$\bfC(\bfb')$ outputs
which of these bids are confirmed.

\item 
{\bf Payment rule $\bfP(\cdot)$}:
given a set of bids
$\bfb'$ included in the block, the payment rule
$\bfP(\bfb')$ outputs
how much each confirmed bid pays. We assume that 
1) any bid that is not confirmed
pays a price of $0$; and 2) each transaction pays at most what it bids. 

\item 
{\bf Miner-revenue rule $\bfM(\cdot)$}:
given a set of bids
$\bfb'$ included in the block, the miner-revenue rule
$\bfM(\bfb')$ outputs
the total payment received by the miner for mining this block.
We assume that the 
miner revenue 
does not exceed the total payment of all confirmed bids.
\end{itemize}
}
The inclusion rule is implemented by the miner, possibly subject to certain
validity constraints
enforced by the blockchain (e.g., block size limit).
The other rules, including confirmation, payment, and miner-revenue rules
are enforced by the blockchain itself; and they use only on-chain information.
We assume that any unconfirmed transaction must pay $0$ and each transaction
pays no more than the bid amount.

There are a few important things to note about this definition:
\begin{enumerate}[leftmargin=5mm,itemsep=1pt]
\item  
{\it Included vs confirmed:}
In the most general form, 
not all transactions included in the block must be confirmed.
It could be that some transactions are included in the 
block to set the price, but they are not considered confirmed.
For example, consider a Vickrey auction where the $k$ highest
bids are included in the block, among which the $k-1$ highest are considered 
confirmed, paying the $k$-th price.
In this case, the $k$-th transaction
is included just to set the price.
\item 
{\it Encoding the burn rule.}
Not all the payment from the users 
will necessarily go to the miner of the block.
It was pointed out earlier, e.g., in Ethereum's 
EIP-1559~\cite{eip1559,roughgardeneip1559,roughgardeneip1559-ec}
that in a blockchain, 
part to all of the payment 
can be burnt.
In our definition, we require that {\it the miner revenue 
be upper bounded
by the total payment from all confirmed transactions}.
In case the miner's revenue is strictly less
than the total user payment, the difference is essentially ``burnt''.
\end{enumerate}

In some cases, the TFM may need to perform tie breaking\footnote{We assume
that the honest TFM is a randomized algorithm and thus it does not have
any non-deterministic behavior. For example, the mechanism
cannot say ``pick an arbitrary subset of bids among those
that bid at least $5$'' --- the mechanism should fully specify how to break ties
when choosing the subset (possibly using random coins). }. For example,
if there are more bids bidding the same price 
than the block can contain, only a subset of them will be included.
Our formulation implicitly implies that the TFM 
is {\it identity agnostic}, i.e., the TFM 
does not use the bidders' identities for tie-breaking.
In other words, if we swap two users' actions, their outcomes would be swapped too.
More formally, 
given a bid vector $\bfb := (b_1, \ldots, b_m)$ and two different users $i$ and $j$,
let $x_i, x_j \in \{0, 1\}$ denote whether  
each user is confirmed, and let $p_i, p_j$ denote their respective payments.
Now, imagine that  
we swap users $i$ and $j$'s roles as follows.
We make $i$ bid $b_j$ and make $j$ bid $b_i$ instead, and we swap $i$ and $j$'s positions
in the bid vector. 
In other words, we still have the same bid vector 
$\bfb$ as before. However, the $i$-th coordinate now contains the bid from user $j$
and the $j$-th coordinate now contains the bid from user $i$.
In this case, 
the outcomes for $i$ and $j$ would be swapped too, that is, user $i$'s outcome
becomes $x_j, p_j$ and user $j$'s outcome becomes $x_i, p_i$.
\elaine{check that this is what we need in all the proofs later for tie-breaking}

\ignore{
Throughout the paper, we assume that a TFM  
treats all users equally in the following sense:
let $\bfb$ be an arbitrary bid vector where $b_i$ and $b_j$
correspond to users $i$ and $j$'s bids, respectively. 
If we now consider the same bid vector $\bfb$ but $b_i$ encodes
user $j$'s bid, 
and $b_j$ encodes user $i$'s bid
}

\begin{remark}[On separating the inclusion and confirmation rules]
In comparison, Roughgarden~\cite{roughgardeneip1559,roughgardeneip1559-ec}
adopts a simpler notation that does not explicitly differentiate
between the inclusion rule
and the confirmation rule. 
Indeed, parts of our 
impossibility proofs do not care about  
this differentiation --- and in these cases,  
we use a simplified notation that coalesces the inclusion and confirmation
rules
(see Section~\ref{sec:imp-notation}).
However, our results show that it is important to explicitly
separate the inclusion rule and the confirmation rule in the modeling, to further
our understanding about TFMs.
For example, making the inclusion rule explicit is important
for proving the impossibility under {\it finite} block 
size (see Corollary~\ref{cor:finiteblocksize}).
Having this distinction is also useful 
in constructing our upper bounds. 
\label{rmk:defn-tfm}
\end{remark}

\subsection{Strategic Behavior and Utility}
\paragraph{Strategic player.}
We will consider three types of {\it strategic players}, 
1) an individual user; 2) the miner of the current block;
and 3) the miner colluding with a single user.
Henceforth, we will use the term {\it strategic player} to refer
to either a user, the miner, or the coalition of a miner and a single user.

As mentioned earlier, user-user 
rendezvous is much more difficult since users are ephemeral, and this is
likely why this 
line of works~\cite{zoharfeemech,functional-fee-market,roughgardeneip1559} 
focused on miner-user collusion (as opposed to user-user collusion).
Moreover, it is easier for the miner to form a side contract 
with a single user rather than more users. 


\paragraph{Strategy space.}
A strategic player may rely on strategic deviations to improve its utility.
We first define the strategy space in the most general form, capturing  
all possible deviations.  
Our impossibility proof will rely on 
a much more restricted strategy space (which makes the impossibility result 
stronger) --- we will explicitly point out
the strategy space needed by our impossibility 
in Section~\ref{sec:impossible}. 
On the other hand, 
our weakly incentive compatible upper bound in Section~\ref{sec:weakic}
defends against the broad strategy space defined below.

A strategic player can engage in the following
types of deviations or a combination thereof:
\begin{itemize}[leftmargin=5mm,itemsep=1pt]
\item 
{\it Bidding untruthfully}. 
A user or a user-miner coalition can bid untruthfully, possibly after examining some or 
all other users' bids.
\item 
{\it Injecting fake transactions}. 
A user, miner, or a user-miner coalition can inject fake transactions,
possibly after examining 
some or 
all other users' bids.
Fake transactions offer no intrinsic value to anyone, and 
their true value is $0$. 
\Hao{Does user injecting transaction mean spliting transaction attack? That is, a user can announce multiple transactions. If so, the current proof of 1-weak-SCP may not include this notion.}
\item 
{\it Strategically choosing which transactions to include in the block.} 
A strategic miner or a miner-user coalition   
may not implement the inclusion rule faithfully. 
It may choose an arbitrary subset of transactions from the mempool
to include in the block, as long as it satisfies
any block validity rule enforced by the blockchain.
\end{itemize}

Just like the prior work of Roughgarden~\cite{roughgardeneip1559,roughgardeneip1559-ec},
we do not consider the strategic behavior of 
splitting a single bid $b$ into multiple bids $b_1, \ldots, b_k$
whose sum is equal to $b$ --- in this case,
a more complicated utility definition is needed
when only a proper subset of these bids are confirmed.
To capture such strategies may require additional tools, e.g.,  
modeling TFM as a combinatorial auction~\cite{agt}, which we leave as future work. Such
strategies may also be remotely related to an elegant line of work on false-name bids 
in the mechanism design 
literature~\cite{falsenamebid00,falsenamebid01,falsenamebid02}. 
Again, our work, just like Roughgarden's~\cite{roughgardeneip1559,roughgardeneip1559-ec}, 
is among the very first mathematical explorations
of transaction fee mechanism design, and part of our contribution
is to expose the lack of knowledge and the
 abundance of open questions in this space.

\paragraph{Utility.}
The utility of 
the miner or a miner-user coalition 
is computed as the following, where $S$
denotes the set of all real and fake transactions\footnote{Recall the true value of a fake transaction is defined to be zero.}
submitted by 
the miner or the miner-user coalition:
\[
\text{miner revenue} +  
\sum_{\forall b \in S \text{\  and $b$ confirmed}}
(\text{true value of $b$} - \text{payment of $b$} ) 
\]

The utility of a sole user is computed
as the following, where $S$
denotes the set of all real and fake transactions  
submitted by the user:
\[
\sum_{\forall b\in S \text{\ and $b$ confirmed}}
(\text{true value of $b$} - \text{payment of $b$} ) 
\]

\ignore{
\begin{remark}
It is also possible for a user to split a transaction 
that wants to pay $\chi = \chi_1 + \chi_2$ into  
two transactions each paying $\chi_1$ and $\chi_2$  
to the same recipient.
\end{remark}
}

\subsection{Incentive Compatibility}
\label{sec:defn-ic}
We would like to have mechanisms that incentivize honest behavior, i.e., 
no deviation of a strategic player 
can increase its utility.
Depending on whether the strategic player
is a user, the miner, or the coalition of the miner
and a single user, 
we can define user incentive compatibility, miner incentive compatibility,
and side-contract-proofness, respectively.

\begin{definition}[User incentive compatibility]
 A TFM is said to be user incentive compatible (UIC),
iff the following holds: 
assuming that 
the miner implements the mechanism honestly, 
an individual user's (expected) utility is always maximized  
if it bids truthfully, no matter what the other users' bids are.
\label{defn:uic}
\end{definition}

\begin{definition}[Miner incentive compatibility]
 A TFM is said to be miner incentive compatible (MIC),
iff no matter what the users' bids are, 
the miner's (expected) utility is always maximized if 
it creates the block by honestly implementing the inclusion rule.
\label{defn:mic}
\end{definition}

\begin{definition}[$c$-side-contract-proofness]
For any $c \in \N$, a TFM is said to be 
$c$-side-contract-proof ($c$-SCP), 
iff 
for any coalition consisting of the miner and at least one and at most $c$ user(s), 
its (expected) utility is maximized when the colluding users 
bid truthfully and the miner plays by the book, no matter 
what the other users' bids are.
\label{defn:scp}
\end{definition}


\begin{remark}[Comparison with Roughgarden's incentive compatibility notions]
Our UIC and MIC notions are equivalent to 
Roughgarden's notions~\cite{roughgardeneip1559,roughgardeneip1559-ec}.
For the SCP notion, we modify Roughgarden's offchain-agreement-proofness 
notion and parametrize it with the coalition size $c$. 
Note that Roughgarden's notion wants that there is no side contract
that strictly benefits {\it every} coalition member
in comparison with the honest on-chain strategy --- this is equivalent
to saying that the coalition cannot deviate strategically to 
increase their {\it joint} utility. 
If they can increase their {\it joint} utility
there is always a way to split it off using a binding side contract 
such that every coalition member strictly benefits.

\ignore{
Our UIC and MIC notions are equivalent to 
Roughgarden's notions~\cite{roughgardeneip1559,roughgardeneip1559-ec}
when subject to the same strategy space. 
While we describe the strategy space in the most general form, 
Roughgarden omits some possible strategies, e.g., he omits 
the possibility that a user (as opposed to the miner) 
may inject a fake transaction, or the possibility that a miner can arbitrarily
choose which transactions to include in a block, which implies
that the miner can strategically ignore some transactions. 
It turns out that the specific 
mechanism Roughgarden studied indeed 
resists the strategic behaviors 
he omitted --- but this is not necessarily true for all mechanisms, and
there is value in making them explicit in a good definition.
\Hao{Maybe add a remark here saying that Aviv Zohar considers user splitting bid attack>}

As for the SCP notion, Roughgarden's 
definitions~\cite{roughgardeneip1559,roughgardeneip1559-ec}, as written,
appear somewhat incomplete --- see 
Appendix~\ref{sec:roughgarden-defn} for details. 
We therefore fix the SCP definitions, and we believe that our notion is likely 
what Roughgarden actually meant.
\elaine{todo: write this appendix}
}
\label{rmk:defn-ic}
\end{remark}

\section{Impossibility Results}
\label{sec:impossible}

\subsection{Simplified Notation and Restricted Strategy 
Space for our Impossibility}
\label{sec:imp-notation}

To rule out the existence of a UIC and 1-SCP mechanism  
under finite block size, our proof takes two main steps. 
First, we shall prove
that any TFM that satisfies UIC and 1-SCP simultaneously 
must always have $0$ miner-revenue 
(Theorem~\ref{theorem:deterministic} and \ref{theorem:randomized}), no matter whether the block
size is infinite or finite.
These theorems hold 
even when the strategic player is 
confined to a very restricted
strategy space: 
assuming that the miner always implements the mechanism faithfully;
however, either an individual user or a user colluding with the miner may
bid untruthfully.
In the second part of the proof, 
we additionally throw in the finite block size restriction which 
leads to the stated impossibility result (Corollary~\ref{cor:finiteblocksize}).

\paragraph{Simplified notations for deterministic mechanisms.}
We can simplify the notation in the first part of our proof, since 
this part makes use of a very restricted strategy space as mentioned above.
Instead of using the full tuple $(\bfI, \bfC, \bfP, \bfM)$
to denote the TFM, 
we will use the following simplified notation:

\begin{enumerate}[leftmargin=5mm]
\item  {\bf Allocation rule} ${\bf x}$:
given a bid vector ${\bf b} := (b_1, \ldots, b_m) \in \R^m$, the allocation rule 
${\bf x}({\bf b})$
outputs a vector $(x_1, x_2, \ldots, x_m) \in \{0, 1\}^m$, indicating
whether each transaction (i.e., bid)  
in ${\bf b}$ is {\it confirmed} in the next block.

\item  {\bf Payment rule} ${\bf p}$:
given a bid vector ${\bf b} := (b_1, \ldots, b_m) \in \R^m$, the payment rule
${\bf p}({\bf b})$
outputs a vector $(p_1, p_2, \ldots, p_m) \in \R^m$, indicating
the price paid by each transaction in ${\bf b}$.
It is guaranteed that 
$p_i \leq b_i$ for $i \in [m]$, i.e., a user never pays more than its bid.

\item {\bf Miner-revenue} rule ${\mu}$:
given a bid vector ${\bf b} := (b_1, \ldots, b_m) \in \R^m$, the miner-revenue 
rule 
$\mu(\bfb)$
outputs a single value in $\R$ denoting the 
amount paid to the miner.
\end{enumerate}

More specifically, one can 
view:
\begin{itemize}[leftmargin=5mm,itemsep=1pt]
\item $\bfx$ as the composition
of the inclusion rule $\bfI$ and the  
blockchain-enforced confirmation rule $\bfC$; 
\item $\bfp$
as the composition 
of the inclusion rule $\bfI$ and the blockchain-enforced payment rule $\bfP$;  and 
\item 
$\mu$ as the 
composition of the inclusion rule $\bfI$ and the blockchain-enforced 
miner-revenue rule $\bfM$.
\end{itemize}

\paragraph{Additional notations.}
For convenience, we often use the notation $x_i({\bf b})$ 
and $p_i({\bf b})$ to denote whether the $i$-th transaction in ${\bf b}$ 
is confirmed in the next mined block, 
and what price it actually pays.
We assume that if $x_i({\bf b}) = 0$, 
then, $p_i({\bf b}) = 0$ --- in other words, 
if the $i$-th transaction is not confirmed in the next block, then 
the $i$-th user pays nothing.
Let $\bfb = (b_1, b_2, \ldots, b_m)$ be
a bid vector. 
We often use the notation $\bfb_{-i} = (b_1, b_2, \ldots, b_{i-1}, b_{i+1}, \ldots, b_m)$ 
to denote everyone except user $i$'s bids; 
and the notation $(\bfb_{-i}, b_i)$
and $\bfb$ are used interchangeably.

\paragraph{Notations for randomized mechanisms.}
We use the same notations $({\bf x}, {\bf p}, \mu)$ 
to denote a randomized mechanism but their meaning is modified
as follows.
The allocation rule now outputs the probability that each bid is confirmed,
that is, $x_i(\bfb) \in [0,1]$ is the probability 
that user $i$'s bid is confirmed given the included bids are $\bfb$.
Also, we view $p_i(\bfb)$ as the expected payment of user $i$ 
and $\mu(\bfb)$ as the expected miner-revenue.

\elaine{TODO: move this to defn section in the end.}

\ignore{
\begin{itemize}
\item  {\bf Burning rule} ${\bf q}$:
    given a bid vector ${\bf b} := (b_1, \ldots, b_m) \in \R^m$, the burning rule
    ${\bf q}({\bf b})$
    outputs a vector $(q_1, q_2, \ldots, q_m) \in \R^m$, indicating
    the amount of currency being destroyed.
\end{itemize}
}

We say that a TFM
enjoys {\it non-trivial miner revenue}
iff $\mu(\cdot)$ is not the constant $0$ function, i.e.,
the miner sometimes can receive positive revenue.

\subsection{Preliminary: Myerson's Lemma}
\label{sec:myerson}

If a single-parameter
TFM satisfies UIC (even when the user's strategy space
is restricted only to untruthful bidding), the mechanism's allocation rule ${\bf x}$
and payment rule ${\bf p}$ must satisfy the
famous Myerson's Lemma~\cite{myerson}.
\ignore{
Note that our UIC notion 
adopts a richer strategy space than the classical
notion of Dominant Strategy Incentive Compatible (DSIC) --- we consider
untruthful bidding and 
injecting fake bids as possible strategic deviations, 
whereas the traditional DSIC notion considers
only untruthful bidding. The enriched strategy space makes the mechanism
design harder, and therefore Myerson's Lemma should still apply just like
for traditional, single-parameter, DSIC auctions.
}
Specifically, we only need a special case of Myerson's Lemma:
the mechanism can be randomized, and each user's bid is either confirmed or unconfirmed.
In this case, the allocation rule $x_i$ returns a real number in $[0,1]$, which is the probability that user $i$'s bid is confirmed.
Additionally, $p_i$ is the expected payment of user $i$.
\ignore{
Consider a single-parameter auction defined
by the pair $({\bf x}, {\bf p})$.
}
Myerson's Lemma implies the following:

\begin{lemma}[Myerson's Lemma]
Let $({\bf x}, {\bf p}, \mu)$ be a single-parameter TFM that is UIC.
Then, it must be that  
\begin{enumerate}[leftmargin=5mm]
\item 
The allocation rule 
${\bf x}$ is {\it monotone},  
where monotone is defined as follows.
Consider ${\bf b} := (b_1, \ldots, b_m)$, and let ${\bf b}_{-i}$ be
the vector obtained when we remove $b_i$ from ${\bf b}$.
An allocation rule ${\bf x}$ is said to be monotone
iff for any ${\bf b} := (b_1, \ldots, b_m)$, 
and any $b'_i > b_i$, 
it must be that $x_i({\bf b}_{-i}, b'_i) \geq x_i({\bf b}_{-i}, b_i)$.
\item 
The payment rule ${\bf p}$ is defined as follows.
For any user $i$, bids $\bfb_{-i}$ from other users, and bid $b_i$ from user $i$,
it must be
\begin{equation}\label{eq:payment}
	p_i(\bfb_{-i}, b_i) = b_i \cdot x_i(\bfb_{-i}, b_i) - \int_0^{b_i} x_i(\bfb_{-i}, t) dt.
\end{equation}
\end{enumerate}
\label{lem:myerson}
\end{lemma}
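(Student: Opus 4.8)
The plan is to invoke the incentive-compatibility constraints of UIC essentially as a blackbox, since these are precisely the constraints driving the classical proof of Myerson's Lemma, and to specialize the standard single-parameter argument to the confirmation-probability setting at hand. Fix the other users' bids $\bfb_{-i}$ and, to lighten notation, write $x(b) := x_i(\bfb_{-i}, b)$ and $p(b) := p_i(\bfb_{-i}, b)$ for user $i$'s confirmation probability and expected payment when it bids $b$. Since UIC guarantees that truthful bidding maximizes expected utility, for any true value $v$ and any alternative bid $b$ we have the incentive constraint $v\cdot x(v) - p(v) \geq v\cdot x(b) - p(b)$.

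First I would establish monotonicity. Take two bids $y > z$ and apply this constraint twice: once with true value $y$ (so bidding $y$ must beat bidding $z$) and once with true value $z$ (so bidding $z$ must beat bidding $y$), yielding
\[
y\cdot x(y) - p(y) \geq y\cdot x(z) - p(z), \qquad z\cdot x(z) - p(z) \geq z\cdot x(y) - p(y).
\]
Adding the two inequalities cancels the payment terms and leaves $(y-z)\,(x(y) - x(z)) \geq 0$; since $y > z$ this forces $x(y) \geq x(z)$, which is exactly monotonicity of $\bfx$.

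Next I would pin down the payment rule. Rearranging the same pair of inequalities isolates the payment difference inside a squeeze,
\[
z\cdot (x(y) - x(z)) \leq p(y) - p(z) \leq y\cdot (x(y) - x(z)),
\]
so on any interval the incremental payment is trapped between $z$ and $y$ times the incremental allocation. Heuristically, sending $z \uparrow y$ gives $dp = t\,dx(t)$; integrating from $0$ to $b_i$ and applying integration by parts yields $p(b_i) = b_i\cdot x(b_i) - \int_0^{b_i} x(t)\,dt + p(0)$. The boundary condition $p(0) = 0$ (a bid of $0$ is either unconfirmed, hence pays nothing, or is confirmed but then the constraint $p_i \leq b_i = 0$ together with non-negativity of payments forces a payment of $0$) removes the last term and gives exactly Equation~\eqref{eq:payment}.

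The hard part will be making the integration step rigorous when $\bfx$ is merely monotone rather than differentiable: a bounded monotone $x$ may have countably many jump discontinuities, so $dp = t\,dx(t)$ cannot be read literally as $p'(t) = t\cdot x'(t)$. To avoid differentiating I would argue directly from the squeeze on a partition $0 = t_0 < t_1 < \cdots < t_n = b_i$. Summing $t_{j-1}\,(x(t_j) - x(t_{j-1})) \leq p(t_j) - p(t_{j-1}) \leq t_j\,(x(t_j) - x(t_{j-1}))$ over $j$ telescopes the middle to $p(b_i) - p(0)$ and sandwiches it between the lower and upper Riemann--Stieltjes sums of $\int_0^{b_i} t\,dx(t)$. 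Because $x$ is monotone and bounded these sums share a common limit as the mesh tends to $0$, identifying $p(b_i) - p(0)$ with the Stieltjes integral; integration by parts then delivers the stated closed form, with the jumps of $x$ automatically absorbed by the Stieltjes integral. A final remark I would make is that this entire derivation only ever uses deviations to other bids, so it holds even under the restricted strategy space (untruthful bidding only) that the lemma's hypothesis allows.
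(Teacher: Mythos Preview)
The paper does not supply its own proof of this lemma: it cites Myerson's Lemma as a classical result~\cite{myerson,myerson-lecture-hartline} and merely records, in the surrounding text and in Lemma~\ref{lem:sandwich}, the ``payment sandwich'' step that its later arguments need. Your proposal reconstructs precisely that classical proof---pairwise IC constraints give monotonicity, rearranging gives the sandwich $z(x(y)-x(z)) \le p(y)-p(z) \le y(x(y)-x(z))$, and telescoping over a partition identifies $p(b_i)-p(0)$ with $\int_0^{b_i} t\,dx(t)$, which integration by parts converts to the stated formula. This is correct and is exactly the argument the paper is invoking by citation; your Riemann--Stieltjes treatment of the nondifferentiable case is the standard way to make the limit rigorous, and your justification of $p(0)=0$ from the model's constraints (unconfirmed bids pay $0$; confirmed bids pay at most their bid) matches the paper's conventions.
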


\ignore{
Fix an allocation rule ${\bf x}$, if there is a corresponding payment rule
${\bf p}$ such that the TFM $({\bf x}, {\bf p}, \_)$
is 
UIC, 
then we say that ${\bf x}$ is {\it implementable} 
(note that 
we do not care about 


\begin{lemma}[Myerson's Lemma]
Myerson's lemma characterizes the space of 
DSIC auctions.
Let $({\bf x}, {\bf p})$ a single-parameter 
auction where ${\bf x}$ gives binary outcomes.
Then, 
\begin{enumerate}[leftmargin=5mm]
\item 
The allocation rule 
${\bf x}$ is implementable iff ${\bf x}$ is {\it monotone},  
where monotone is defined as follows.
Consider ${\bf b} := (b_1, \ldots, b_m)$, and let ${\bf b}_{-i}$ be
the vector obtained when we remove $b_i$ from ${\bf b}$.
An allocation rule ${\bf x}$ is said to be monotone
iff for any ${\bf b} := (b_1, \ldots, b_m)$, 
and any $b'_i > b_i$, 
it must be that $x_i({\bf b}_{-i}, b'_i) \geq x_i({\bf b}_{-i}, b_i)$.
\item 
If ${\bf x}$ is implementable, then  
there is a unique payment rule ${\bf p}$ that makes the auction DSIC, and moreover
this ${\bf p}$ is defined as follows.
For any user $i$, bids $\bfb_{-i}$ from other users, and bid $b_i$ from user $i$,
it must be
\begin{equation}\label{eq:payment}
	p_i(\bfb_{-i}, b_i) = b_i \cdot x_i(\bfb_{-i}, b_i) - \int_0^{b_i} x_i(\bfb_{-i}, t) dt.
\end{equation}
\end{enumerate}
\label{lem:myerson}
\end{lemma}
}

\paragraph{Deterministic special case.}
When the mechanism is deterministic, the allocation rule $x_i$ returns either $0$ or $1$. 
In this case, the unique payment rule can be simplified as 
\[
p_i(\bfb_{-i}, b_i) = 
	\left\{\begin{matrix}
	\min \{z \in [0, b_i]: x_i(\bfb_{-i}, z) = 1\}& \text{ if $x_i(\bfb_{-i}, b_i) = 1$,} \\ 
	0& \text{ if $x_i(\bfb_{-i}, b_i) = 0$.}
	\end{matrix}\right.
\]
Conceptually, user $i$ only needs to pay the minimal price which makes its bid confirmed.

To prove our impossibility for randomized mechanisms, 
we need to open up Myerson's Lemma and use
the following technical lemma that is used in the 
proof of Myerson's Lemma.
More specifically, the proof of Myerson's Lemma showed
that if a mechanism is UIC, then a user $i$'s payment
must satisfy the following inequality (also called a ``payment sandwich'')
where the allocation rule ${\bf x}$ is monotone:
\[
        r \cdot \left(x_i(\bfb_{-i}, r') - x_i(\bfb_{-i},r)\right)
        \leq p({\bfb_{-i}}, r') - p({\bfb_{-i}}, r)
        \leq r' \cdot \left(x_i(\bfb_{-i}, r') - x_i(\bfb_{-i},r)\right)
\]
Assume that the above payment sandwich holds for a non-decreasing
function $x_i(\bfb_{-i}, \cdot)$, and moreover, 
$p(\bfb_{-i}, 0) = 0$, then Myerson showed that the payment rule
is of a unique form as shown in Equation~(\ref{eq:payment}).
To prove this, Myerson essentially proved the following technical lemma.

\begin{lemma}[Technical lemma implied by the proof of Myerson's Lemma~\cite{myerson,myerson-lecture-hartline}]
Let $f(z)$ be a non-decreasing function. 
Suppose that $ z \cdot (f(z')-f(z)) \leq g(z') - g(z) \leq z' \cdot (f(z')-f(z))  $ 
for any $z' \geq z \geq 0$, and moreover, $g(0) = 0$.
Then, it must be that 
\[
g(z) = z \cdot f(z) - \int_0^{z} f(t) dt.
\]
\label{lem:sandwich}
\end{lemma}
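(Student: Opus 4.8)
The plan is to prove this by a Riemann-sum squeeze, which is essentially the classical derivation of the Myerson payment formula that the lemma is abstracting. First I would dispose of the trivial case $z = 0$: both sides equal $0$ since $g(0) = 0$. So fix an arbitrary $z > 0$ and introduce the uniform partition $z_j = jz/n$ for $j = 0, 1, \ldots, n$, with the intention of letting $n \to \infty$ at the end.

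Next I would apply the hypothesized sandwich inequality to each pair of adjacent grid points. Taking $z = z_j$ and $z' = z_{j+1}$ in the assumption gives
\[
z_j\bigl(f(z_{j+1}) - f(z_j)\bigr) \le g(z_{j+1}) - g(z_j) \le z_{j+1}\bigl(f(z_{j+1}) - f(z_j)\bigr).
\]
Summing over $j = 0, \ldots, n-1$, the middle telescopes to $g(z_n) - g(z_0) = g(z)$ (using $g(0) = 0$), so that $L_n \le g(z) \le U_n$, where $L_n$ and $U_n$ denote the resulting left and right sums. The point is that $g(z)$ itself is pinned between these two sums for every $n$.

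Then I would evaluate the limit of $U_n$. The key algebraic identity is
\[
z_{j+1}\bigl(f(z_{j+1}) - f(z_j)\bigr) = \bigl(z_{j+1} f(z_{j+1}) - z_j f(z_j)\bigr) - f(z_j)(z_{j+1} - z_j),
\]
so that $U_n$ splits into a telescoping sum equal to $z_n f(z_n) - z_0 f(z_0) = z f(z)$ and a left-endpoint Riemann sum for $\int_0^z f(t)\,dt$. Because $f$ is non-decreasing it is Riemann integrable on $[0,z]$, and the uniform left-endpoint sum converges to $\int_0^z f$; hence $U_n \to z f(z) - \int_0^z f(t)\,dt$. Finally, the gap collapses cleanly because the mesh is uniform:
\[
U_n - L_n = \sum_{j=0}^{n-1} (z_{j+1}-z_j)\bigl(f(z_{j+1}) - f(z_j)\bigr) = \frac{z}{n}\bigl(f(z) - f(0)\bigr) \to 0.
\]
Combining the squeeze $L_n \le g(z) \le U_n$ with $U_n \to z f(z) - \int_0^z f$ and $U_n - L_n \to 0$ yields $g(z) = z f(z) - \int_0^z f(t)\,dt$, as claimed.

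The only genuinely delicate point is the convergence of the Riemann sum to the integral; everything else is pure telescoping. That step leans entirely on $f$ being monotone — which guarantees both Riemann integrability on $[0,z]$ and the convergence of the one-sided endpoint sum — together with the telescoping gap formula, which forces $U_n$ and $L_n$ to a common limit regardless of any discontinuities of $f$. Monotonicity is exactly what lets the argument go through without assuming any continuity or differentiability of either $f$ or $g$, which is important since in the intended application $x_i(\bfb_{-i}, \cdot)$ need only be non-decreasing.
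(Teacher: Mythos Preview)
Your proof is correct and is exactly the standard Riemann-sum squeeze that underlies Myerson's payment identity. The paper does not actually supply its own proof of this lemma; it states it as a known technical fact extracted from the proof of Myerson's Lemma and cites \cite{myerson,myerson-lecture-hartline}. Your argument is precisely the one those references use, so there is nothing to compare beyond noting that you have reconstructed the intended proof faithfully.
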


\subsection{Deterministic Mechanisms: UIC + 1-SCP $\Longrightarrow$ Zero Miner Revenue}
\label{sec:trivial-miner-rev}

As a warmup, we first prove a lower bound for deterministic mechanisms.
Then, in Section~\ref{sec:randomized-lb}, we generalize the proof
to randomized mechanisms.
The following theorem states that no deterministic TFM 
with non-trivial miner revenue 
can achieve UIC and $1$-SCP simultaneously, no matter whether
the block size is finite or infinite.

\begin{theorem}[Deterministic TFM: UIC + 1-SCP $\Longrightarrow$ 0 miner revenue]
\label{theorem:deterministic}
There is no deterministic TFM with non-trivial miner revenue 
that achieves UIC and $1$-SCP at the same time.
Moreover, the theorem 
holds no matter whether the block size is finite or infinite.
\end{theorem}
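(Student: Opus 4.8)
The plan is to prove the equivalent statement that any deterministic TFM satisfying UIC and $1$-SCP has $\mu(\bfb) = 0$ for every bid vector $\bfb$. The overall strategy is to fix an arbitrary $\bfb = (b_1, \ldots, b_m)$ and drive the bids down to $0$ one coordinate at a time, showing that each such step leaves the miner revenue unchanged; once every bid is $0$, Myerson's payment rule (Lemma~\ref{lem:myerson}, deterministic case) forces every confirmed bid to pay $0$, so the miner-revenue bound gives $\mu(0,\ldots,0) = 0$, and chaining the equalities yields $\mu(\bfb) = \mu(\bfb_{-1}, 0) = \cdots = \mu(0,\ldots,0) = 0$. Thus the entire proof reduces to the single claim that for every $i$ and every $\bfb$, $\mu(\bfb) = \mu(\bfb_{-i}, 0)$. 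Throughout, UIC lets me invoke Myerson: the allocation rule is monotone, so there is a threshold $p_i$ (the minimum confirming price given $\bfb_{-i}$) such that user $i$ is confirmed and pays exactly $p_i$ whenever $b_i \ge p_i$, and is unconfirmed and pays $0$ whenever $b_i < p_i$.

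First I would establish the ``inconsequential bid change'' step (Claim~\ref{clm:inconsequentialbidchange}): if user $i$ moves its bid without changing its own confirmation status, the miner revenue is unaffected. I would prove this by contradiction using $1$-SCP. If two bids on the same side of the threshold gave different miner revenues, then a coalition of the miner and user $i$ (with user $i$'s true value taken to be whichever of the two bids yields the smaller revenue) could re-declare user $i$'s bid as the higher-revenue one: user $i$'s own utility is unchanged, since its confirmation status and, if confirmed, its Myerson payment $p_i$ do not move, while the miner strictly gains, so the joint utility strictly increases, contradicting $1$-SCP. Applied on each side of the threshold, this shows $\mu(\bfb_{-i}, b_i)$ is constant on $[p_i, \infty)$, equal to $\mu(\bfb_{-i}, p_i)$, and constant on $[0, p_i)$, equal to $\mu(\bfb_{-i}, 0)$.

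The main obstacle, and the crux of the argument, is ruling out a discontinuity in miner revenue exactly at the threshold, i.e.\ showing $\mu(\bfb_{-i}, p_i) = \mu(\bfb_{-i}, 0)$; the previous step says nothing here because crossing $p_i$ flips user $i$'s confirmation status. I would handle this with a ``small loss beats large gain'' argument. Suppose for contradiction that $\mu(\bfb_{-i}, p_i) > \mu(\bfb_{-i}, 0)$, set $\Delta := \mu(\bfb_{-i}, p_i) - \mu(\bfb_{-i}, 0) > 0$, and pick $\epsilon \in (0, \Delta)$. Consider a coalition of the miner and user $i$ whose true value is $p_i - \epsilon$. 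Honestly, user $i$ bids $p_i - \epsilon$, is unconfirmed, and the joint utility is $\mu(\bfb_{-i}, p_i - \epsilon) = \mu(\bfb_{-i}, 0)$ by the previous step. If instead user $i$ overbids to $p_i$, it becomes confirmed paying $p_i$, for a personal loss of only $\epsilon$, while the miner collects $\mu(\bfb_{-i}, p_i)$; the joint utility becomes $\mu(\bfb_{-i}, 0) + \Delta - \epsilon > \mu(\bfb_{-i}, 0)$, contradicting $1$-SCP. The reverse inequality $\mu(\bfb_{-i}, p_i) < \mu(\bfb_{-i}, 0)$ is symmetric: take user $i$'s true value to be exactly $p_i$, so honest confirmation gives it utility $0$, and have the coalition instead underbid to $0$, gaining $\mu(\bfb_{-i}, 0) - \mu(\bfb_{-i}, p_i) > 0$ in miner revenue at no cost to the user.

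Combining the two steps gives $\mu(\bfb_{-i}, b_i) = \mu(\bfb_{-i}, 0)$ for every $b_i \ge 0$, hence in particular $\mu(\bfb) = \mu(\bfb_{-i}, 0)$, which is exactly the reduction claim; iterating over $i = 1, \ldots, m$ and invoking the all-zero base case completes the argument. I expect the bookkeeping around the threshold argument to be the only genuinely delicate part: one must choose the colluding user's true value so that overbidding or underbidding costs it at most $\epsilon$ (or nothing), while the miner's revenue change is a fixed $\Delta$ independent of $\epsilon$, so that driving $\epsilon \to 0$ makes the deviation strictly profitable. Everything else follows directly from Myerson's Lemma together with the single-user collusion constraint, and none of it uses the block size, so the conclusion holds in both the finite and infinite regimes.
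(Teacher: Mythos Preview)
Your proposal is correct and is essentially the same argument as the paper's, matching the roadmap in Section~\ref{sec:roadmap-main-lb} almost step for step: Claim~\ref{clm:inconsequentialbidchange} for bid changes that do not flip confirmation, then a threshold-crossing argument to rule out a jump in $\mu$ at $p_i$, then chain down to the all-zero vector. The only cosmetic difference is that the paper's formal write-up in Section~\ref{sec:trivial-miner-rev} packages things as Lemma~\ref{lem:decrease} (some coordinate must cause a strict drop) plus Lemma~\ref{lem:nodecrease} (a strict drop violates $1$-SCP), so it only explicitly argues one direction of the threshold jump, whereas you argue both directions directly; the content is the same.
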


The rest of this section will be dedicated to proving the theorem.
The following claim states that if an individual user changes
its bid in a way that does not affect whether it is confirmed, 
then the miner's revenue should not change.

\begin{claim}
Suppose that a TFM $({\bf x}, {\bf p}, \mu)$
satisfies UIC and 1-SCP.
Suppose that 
$x_i(\bfb_{-i}, b_i) = x_i(\bfb_{-i}, b'_i)$.
Then, it must be that 
$\mu(\bfb_{-i}, b_i) = \mu(\bfb_{-i}, b'_i)$.
\label{clm:inconsequentialbidchange}
\end{claim}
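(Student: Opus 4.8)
The plan is to prove the claim by contradiction: assuming user $i$'s confirmation status is the same under $b_i$ and $b'_i$ while the miner revenue strictly differs, I would exhibit a profitable deviation for the coalition of the miner and user $i$, contradicting 1-SCP. The first and most important step is to establish that, whenever user $i$'s confirmation status is preserved, its \emph{own} utility is preserved as well. I would split into two cases. If user $i$ is unconfirmed under both bids, then by the mechanism's convention an unconfirmed bid pays nothing, so $p_i(\bfb_{-i}, b_i) = p_i(\bfb_{-i}, b'_i) = 0$ and user $i$'s utility is $0$ in both scenarios, regardless of its true value. If user $i$ is confirmed under both bids, I would invoke Myerson's Lemma (Lemma~\ref{lem:myerson}), which applies because the TFM is UIC: in the deterministic case the payment of a confirmed bid equals the threshold $\min\{z : x_i(\bfb_{-i}, z) = 1\}$, a quantity depending only on $\bfb_{-i}$. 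By monotonicity, both $b_i$ and $b'_i$ lie at or above this threshold, so $p_i(\bfb_{-i}, b_i) = p_i(\bfb_{-i}, b'_i)$, and hence user $i$'s utility is the same in both scenarios.

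Next, suppose toward a contradiction that $\mu(\bfb_{-i}, b_i) \neq \mu(\bfb_{-i}, b'_i)$; without loss of generality $\mu(\bfb_{-i}, b'_i) > \mu(\bfb_{-i}, b_i)$, the reverse direction being symmetric under swapping the roles of $b_i$ and $b'_i$. I would then instantiate the scenario in which user $i$'s true value is exactly $v_i = b_i$, every other user bids $\bfb_{-i}$ truthfully, and consider the coalition of the miner and user $i$. Write $U$ for user $i$'s utility, which by the first step is identical whether it bids $b_i$ or $b'_i$. Honest play has user $i$ bid its true value $b_i$ while the miner mines faithfully, yielding joint utility $U + \mu(\bfb_{-i}, b_i)$. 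The coalition instead deviates by having user $i$ bid $b'_i$, with the miner still running the mechanism honestly on $(\bfb_{-i}, b'_i)$; this yields joint utility $U + \mu(\bfb_{-i}, b'_i)$. Since $\mu(\bfb_{-i}, b'_i) > \mu(\bfb_{-i}, b_i)$, the deviation strictly increases the joint utility, contradicting 1-SCP, which requires that truthful bidding and honest mining maximize the coalition's joint utility. Hence $\mu(\bfb_{-i}, b_i) = \mu(\bfb_{-i}, b'_i)$.

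Notably, this deviation uses only untruthful bidding by the single colluding user — the miner need not alter the inclusion rule — so the argument goes through even within the restricted strategy space, which only strengthens the eventual impossibility. The main obstacle, and the sole place where real care is needed, is the first step: securing genuine indifference of the user. This is exactly where UIC (via Myerson) is essential, since without it a confirmed user's payment could in principle vary with its exact bid at fixed allocation, and then the joint utility would no longer track the miner revenue cleanly. Once indifference is in hand, the remainder is a direct swap argument, and I anticipate no difficulty with the unconfirmed subcase, where the payment is zero by definition.
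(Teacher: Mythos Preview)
Your proposal is correct and follows essentially the same approach as the paper: first use Myerson's Lemma (in the confirmed case) and the zero-payment convention (in the unconfirmed case) to show user $i$'s payment and hence utility is unchanged, then set $v_i = b_i$ and exhibit the coalition deviation to $b'_i$ to contradict 1-SCP. Your write-up is in fact slightly cleaner in that you abstract user $i$'s utility as a single quantity $U$ rather than writing out $b_i - p_i$, which in the paper's version is only literally correct in the confirmed subcase.
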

\begin{proof}
Since the TFM satisfies UIC, the tuple $(\bfx, \bfp)$ satisfies Myerson's Lemma.
We know that 
$x_i(\bfb_{-i}, b_i) = x_i(\bfb_{-i}, b'_i) = 0$ or 
$x_i(\bfb_{-i}, b_i) = x_i(\bfb_{-i}, b'_i) = 1$.
In the former case, 
$p_i(\bfb_{-i}, b_i) = p_i(\bfb_{-i}, b'_i) = 0$.
In the latter case, by Myerson's Lemma, 
no matter whether user $i$'s bid is $b_i$ or $b'_i$, 
its payment equals the minimal amount it bids that still allows the transaction
to be confirmed. 
Therefore, in either case, we have that 
$p_i(\bfb_{-i}, b_i) = p_i(\bfb_{-i}, b'_i)$.

Suppose that 
$\mu(\bfb_{-i}, b_i) \neq \mu(\bfb_{-i}, b'_i)$.
Without loss of generality, we may assume that 
$\mu(\bfb_{-i}, b'_i) > \mu(\bfb_{-i}, b_i)$.
In this case, imagine that 
all users' true values are represented by the vector $(\bfb_{-i}, b_i)$.
Now, consider the coalition of the miner and user $i$.
If user $i$ bids $b_i$ truthfully, 
the coalition's joint utility is $U := \mu(\bfb_{-i}, b_i) + b_i - p_i(\bfb_{-i}, b_i)$.
However, if user $i$' strategically bids $b'_i$
instead, 
the coalition's joint utility is $U' := \mu(\bfb_{-i}, b'_i) + b_i - p_i(\bfb_{-i}, b'_i)$.
Since $p_i(\bfb_{-i}, b_i) = p_i(\bfb_{-i}, b'_i)$, 
$U' - U = \mu(\bfb_{-i}, b'_i) - \mu(\bfb_{-i}, b_i) > 0$.
This shows that the coalition can gain if user $i$ bids untruthfully, thus
violating $1$-SCP. 
\end{proof}

\begin{lemma}
Let $({\bf x}, {\bf p}, \mu)$ be any TFM with non-trivial miner revenue.
Then, there exists a bid vector 
$\bfb = (b_1,\ldots,b_m)$ and a user $i$ such that $\mu(\bfb_{-i},0) < \mu(\bfb)$.
\label{lem:decrease}
\end{lemma}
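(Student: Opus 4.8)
The plan is to avoid invoking UIC or 1-SCP entirely — this lemma is a purely structural statement about \emph{any} TFM with non-trivial miner revenue — and instead derive the conclusion from two ingredients: a boundary computation showing the miner earns nothing on the all-zero bid vector, and a simple telescoping/pigeonhole argument. First I would unpack the hypothesis: by the definition of non-trivial miner revenue given at the end of Section~\ref{sec:imp-notation}, $\mu(\cdot)$ is not identically $0$, which means there is some bid vector $\bfc = (c_1, \ldots, c_m)$ with $\mu(\bfc) > 0$. Fix such a $\bfc$.

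Next I would establish the boundary fact $\mu(\mathbf{0}) = 0$, where $\mathbf{0}$ denotes the all-zero vector of the same length $m$. This follows directly from the two structural constraints the model imposes: each transaction pays at most what it bids (so on $\mathbf{0}$ every payment is at most $0$, and being a fee, is exactly $0$), and the miner revenue is upper bounded by the total payment of all confirmed bids (so $\mu(\mathbf{0}) \le 0$). Combined with the nonnegativity of revenue, this gives $\mu(\mathbf{0}) = 0$.

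The main step is then a telescoping argument. I would define a sequence of bid vectors $\bfc^{(0)}, \bfc^{(1)}, \ldots, \bfc^{(m)}$ where $\bfc^{(0)} = \bfc$ and each $\bfc^{(j)}$ is obtained from $\bfc^{(j-1)}$ by zeroing out the $j$-th coordinate, so that $\bfc^{(m)} = \mathbf{0}$. Consecutive vectors differ in exactly one coordinate, namely $\bfc^{(j-1)} = (\bfc^{(j)}_{-j}, c_j)$ and $\bfc^{(j)} = (\bfc^{(j)}_{-j}, 0)$. Because $\mu(\bfc^{(0)}) > 0 = \mu(\bfc^{(m)})$, the net change of $\mu$ along the sequence is strictly negative, so at least one step must strictly decrease the revenue: there is an index $j$ with $\mu(\bfc^{(j)}) < \mu(\bfc^{(j-1)})$. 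Setting $\bfb := \bfc^{(j-1)}$ and $i := j$ then yields $\mu(\bfb_{-i}, 0) = \mu(\bfc^{(j)}) < \mu(\bfc^{(j-1)}) = \mu(\bfb)$, which is exactly the statement of the lemma.

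I do not expect a serious obstacle here; the only points requiring care are (a) pinning down $\mu(\mathbf{0}) = 0$ from the payment bound ($p_i \le b_i$) and the miner-revenue bound rather than assuming it outright, and (b) observing that, although $\mu$ may fluctuate non-monotonically as bids are zeroed out one by one, the existence of a single strictly-decreasing single-coordinate step is guaranteed purely by the endpoint values $\mu(\bfc) > 0 = \mu(\mathbf{0})$ — so no monotonicity of $\mu$, and in particular no incentive-compatibility assumption, is needed.
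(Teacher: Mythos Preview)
Your proposal is correct and is essentially the same argument as the paper's: pick a bid vector with positive miner revenue, zero out coordinates one at a time to reach the all-zero vector (where $\mu(\mathbf{0})=0$ by the payment and revenue bounds), and conclude by telescoping that some single step must strictly decrease $\mu$. Your write-up is in fact slightly more careful than the paper's in spelling out why $\mu(\mathbf{0})=0$ and in noting that no monotonicity of $\mu$ is needed along the sequence.
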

\begin{proof}
Since the mechanism enjoys non-trivial miner revenue, 
there exists a bid vector $\bfb^{(0)} = (b_1,\ldots,b_m)$ such that $\mu(\bfb^{(0)}) > 0$.
Now, consider the following sequence of
bid vectors:  
for $i \in [m]$, let $\bfb^{(i)}$
be obtained by setting the first $i$ coordinates of $\bfb^{(0)}$ to $0$. 
Observe that $\bfb^{(m)} = {\bf 0}$.

Since a user can pay at most its bid, we have $\mu(\bfb) \leq |\bfp(\bfb)|_1
\leq |\bfb|_1$ for any bid vector $\bfb$.
Therefore, $\mu(\bfb^{(m)}) \leq |\bfb^{(m)}|_1 = 0$.
Since $\mu(\bfb^{(0)}) > 0$, there exists an $i\in [m-1]$ 
such that $0 = \mu(\bfb^{(i)}) < \mu(\bfb^{(i-1)})$.
\end{proof}

\begin{lemma}
    If there exists a bid vector $\bfb = (b_1,\ldots,b_m)$ and a user $i$ such that $\mu(\bfb_{-i}, 0) < \mu(\bfb)$,
    then the TFM $({\bf x}, {\bf p}, \mu)$ is either not UIC or 
not 1-SCP.
    \label{lem:nodecrease}
\end{lemma}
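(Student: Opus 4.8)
The plan is to prove the statement by contradiction: assume the TFM is simultaneously UIC and $1$-SCP, and then construct a coalition of the miner and a single user that strictly increases its joint utility by deviating, contradicting $1$-SCP. Since this lemma lives in the deterministic setting, UIC lets me invoke Myerson's Lemma (Lemma~\ref{lem:myerson}): the allocation $x_i(\bfb_{-i}, \cdot)$ is monotone, and a confirmed user $i$ pays exactly its threshold $\theta_i := \min\{z : x_i(\bfb_{-i}, z) = 1\}$.

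First I would pin down user $i$'s confirmation status at the bids $0$ and $b_i$. We are given $\mu(\bfb_{-i}, 0) < \mu(\bfb) = \mu(\bfb_{-i}, b_i)$, so by Claim~\ref{clm:inconsequentialbidchange} the two bids cannot induce the same confirmation outcome (otherwise the two revenues would coincide). Combined with monotonicity and $b_i \geq 0$, this forces $x_i(\bfb_{-i}, 0) = 0$ and $x_i(\bfb_{-i}, b_i) = 1$, and in particular $\theta_i > 0$.

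Next I would set up the deviating coalition. Write $\Delta := \mu(\bfb) - \mu(\bfb_{-i}, 0) > 0$, and consider the true-value profile in which the other users have values $\bfb_{-i}$ while user $i$ has true value $v_i := \theta_i - \epsilon$ for an arbitrarily small $\epsilon$ with $0 < \epsilon < \min(\theta_i, \Delta)$. Under honest play, user $i$ bids $\theta_i - \epsilon < \theta_i$ and is therefore unconfirmed, so its own utility is $0$; and since this bid shares user $i$'s confirmation status with a $0$-bid, Claim~\ref{clm:inconsequentialbidchange} gives miner revenue $\mu(\bfb_{-i}, \theta_i - \epsilon) = \mu(\bfb_{-i}, 0)$, for a joint utility of $\mu(\bfb_{-i}, 0)$. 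Under the deviation, user $i$ bids $b_i$ instead, becoming confirmed and paying $\theta_i$; its utility is $(\theta_i - \epsilon) - \theta_i = -\epsilon$, the miner collects $\mu(\bfb)$, and the joint utility is $\mu(\bfb) - \epsilon$. Hence the coalition gains $\Delta - \epsilon > 0$, contradicting $1$-SCP and completing the argument.

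The one step that needs care — and what I regard as the crux — is controlling the miner revenue across the jump in confirmation status at $\theta_i$. I would lean on Claim~\ref{clm:inconsequentialbidchange} to argue that the revenue is constant (equal to $\mu(\bfb_{-i}, 0)$) throughout the unconfirmed range of bids and constant (equal to $\mu(\bfb)$) throughout the confirmed range, so the entire gap $\Delta$ materializes the instant user $i$ crosses its threshold. Choosing $v_i$ just below $\theta_i$ then lets the colluding user nudge its bid over the threshold while sacrificing only $\epsilon$ of its own utility, which is dwarfed by the miner's windfall $\Delta$; this is exactly why the true value must be set to $\theta_i - \epsilon$ rather than, say, $0$, where the user's loss $\theta_i$ might exceed $\Delta$ and the attack would fail.
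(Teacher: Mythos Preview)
Your proposal is correct and follows essentially the same argument as the paper: derive $x_i(\bfb_{-i},0)=0$, $x_i(\bfb)=1$ from Claim~\ref{clm:inconsequentialbidchange} and monotonicity, set user $i$'s true value just below the Myerson threshold $\theta_i=p_i(\bfb)$, and use Claim~\ref{clm:inconsequentialbidchange} on each side of the threshold so that the coalition's gain is $\Delta-\epsilon>0$. The only cosmetic difference is that the paper has user $i$ deviate to the bid $p_i(\bfb)$ rather than $b_i$, but both land in the confirmed region with the same payment and (again by Claim~\ref{clm:inconsequentialbidchange}) the same miner revenue $\mu(\bfb)$.
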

\begin{proof}
For the sake of reaching a contradiction, suppose that $({\bf x}, {\bf p}, \mu)$ is both UIC and 1-SCP.
By Myerson's Lemma, we have $x_i(\bfb_{-i}, 0) \leq x_i(\bfb)$.
Due to Claim~\ref{clm:inconsequentialbidchange}, 
it must be $x_i(\bfb_{-i}, 0) = 0$ and $x_i(\bfb) = 1$.

Let $\Delta = \mu(\bfb) - \mu(\bfb_{-i}, 0) > 0$ and $\epsilon = \frac{1}{2} \cdot \min(\Delta, p_i(\bfb)) > 0$.
Imagine that everyone else except user $i$ is bidding $\bfb_{-i}$, and 
user $i$'s true value is $v_i = p_i(\bfb) - \epsilon > 0$.
Due to Myerson's Lemma, since $v_i < p_i(\bfb)$, 
user $i$'s bid would be unconfirmed if it were to bid truthfully.
In this case, 
by Claim~\ref{clm:inconsequentialbidchange}, 
the miner's utility is $\mu(\bfb_{-i}, 0)$ and user $i$'s utility is zero.

However, the miner can sign a side contract and ask user $i$ to bid $p_i(\bfb)$ instead. 
By Myerson's Lemma, 
at this moment, user $i$'s bid will indeed be confirmed. 
By Claim~\ref{clm:inconsequentialbidchange}, 
the miner's utility is now $\mu(\bfb)$ and user $i$'s 
utility is now $v_i - p_i(\bfb) = -\epsilon$.
Consequently, their joint utility becomes $\mu(\bfb) - \epsilon$, 
which has increased by $\Delta - \epsilon > 0$.
This violates 1-SCP.
\end{proof}

\paragraph{Proof of Theorem~\ref{theorem:deterministic}.}
Theorem~\ref{theorem:deterministic} follows directly from the combination of 
Lemma~\ref{lem:decrease}
and Lemma~\ref{lem:nodecrease}.

\subsection{Randomized Mechanisms: UIC + 1-SCP $\Longrightarrow$ Zero Miner Revenue}
\label{sec:randomized-lb}

We now generalize Theorem~\ref{theorem:deterministic}
to even randomized mechanisms.
In a randomized TFM, the random coins
could come from either the miner or the blockchain itself.
Since we are proving an impossibility, without loss of generality,
we may assume that the blockchain 
comes with an unpredictable random source. 
Our impossibility result actually does not care 
where the random coins come from.

Earlier in Section~\ref{sec:roadmap-main-lb}, we presented
the intuition for this impossiblity.  
Therefore, below, we directly jump to the formal description.
 
\paragraph{Notations for randomized mechanisms.}
Recall that for randomized mechanisms,
the allocation rule now outputs the probability that each bid is confirmed;
that is, $x_i(\bfb) \in [0,1]$ is the 
probability that user $i$'s bid is confirmed given the included bids are $\bfb$.
Also, $p_i(\bfb)$ is now the expected payment of user $i$ 
and $\mu(\bfb)$ is the expected miner-revenue.

\elaine{moved a bunch of text to roadmap}

For convenience, we define the following quantity:
$$
\pi_{\bfb_{-i}}(r) = p_i(\bfb_{-i},r) - \mu(\bfb_{-i},r) 
$$
One can think of $\pi_{\bfb_{-i}}(r)$
as a meta-user $i$'s payment in the meta-auction 
(see Section~\ref{sec:roadmap-main-lb}).
The following theorem is a generalization of Theorem~\ref{theorem:deterministic}
to even randomized mechanisms.
\begin{theorem}[Randomized TFM: UIC + 1-SCP $\Longrightarrow$ 0 miner revenue]
\label{theorem:randomized}
	There is no randomized TFM with non-trivial miner revenue 
	that achieves UIC and $1$-SCP at the same time.
	Moreover, the theorem 
	holds no matter whether the block size is finite or infinite.
\end{theorem}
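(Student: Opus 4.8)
The plan is to show that the expected miner revenue $\mu(\bfb)$ cannot depend on any single user's bid, and then to collapse every bid to $0$ so that $\mu$ equals whatever constant it takes on the all-zero vector, which is at most $0$; since a non-trivial mechanism has $\mu(\bfb^*) > 0$ for some $\bfb^*$, this yields a contradiction. Throughout I would work in the restricted strategy space of Section~\ref{sec:imp-notation}, where the miner implements the mechanism honestly while an individual user, or a user colluding with the miner, may bid untruthfully; this only strengthens the impossibility.

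First I would fix a user $i$ and a profile $\bfb_{-i}$ for the others and record what we get for free. Because the mechanism is UIC, Myerson's Lemma (Lemma~\ref{lem:myerson}) tells us that $x_i(\bfb_{-i}, \cdot)$ is monotone, that $p_i(\bfb_{-i}, 0) = 0$, and that $p_i(\bfb_{-i}, \cdot)$ is given by the closed form~\eqref{eq:payment}. Next I would exploit $1$-SCP through the ``virtual auction'' lens: imagine the coalition of the miner and user $i$ as a single virtual bidder whose true value equals $v_i$ and whose expected payment, when user $i$ bids $r$, is $\pi_{\bfb_{-i}}(r) = p_i(\bfb_{-i}, r) - \mu(\bfb_{-i}, r)$. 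The coalition's joint expected utility is then exactly $v_i \cdot x_i(\bfb_{-i}, r) - \pi_{\bfb_{-i}}(r)$, so $1$-SCP says bidding $r = v_i$ maximizes it for every true value $v_i \geq 0$, i.e.\ the virtual auction with allocation $x_i(\bfb_{-i}, \cdot)$ and payment $\pi_{\bfb_{-i}}(\cdot)$ is incentive compatible. Playing the two truthful-deviation inequalities against each other then yields the payment sandwich
\[
z \cdot \bigl(x_i(\bfb_{-i}, z') - x_i(\bfb_{-i}, z)\bigr) \leq \pi_{\bfb_{-i}}(z') - \pi_{\bfb_{-i}}(z) \leq z' \cdot \bigl(x_i(\bfb_{-i}, z') - x_i(\bfb_{-i}, z)\bigr)
\]
for all $z' \geq z \geq 0$, governed by the same monotone $f := x_i(\bfb_{-i}, \cdot)$ that governs $p_i$.

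The step I expect to be the main obstacle is the boundary condition, which is exactly where the naive argument of Section~\ref{sec:roadmap-main-lb} breaks: Myerson's closed-form payment is derived under the normalization $g(0) = 0$, but here $\pi_{\bfb_{-i}}(0) = -\mu(\bfb_{-i}, 0)$ need not vanish. To get around this I would \emph{not} apply Myerson as a black box to $\pi_{\bfb_{-i}}$; instead I would apply the bare technical Lemma~\ref{lem:sandwich} to the shifted function $\widetilde{\pi}(r) := \pi_{\bfb_{-i}}(r) - \pi_{\bfb_{-i}}(0)$. The shift leaves every difference untouched, so $\widetilde{\pi}$ still obeys the sandwich above and now also satisfies $\widetilde{\pi}(0) = 0$. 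Lemma~\ref{lem:sandwich} then forces $\widetilde{\pi}(r) = r \cdot x_i(\bfb_{-i}, r) - \int_0^r x_i(\bfb_{-i}, t)\, dt$, which is precisely the right-hand side of~\eqref{eq:payment}, so $\widetilde{\pi}(r) = p_i(\bfb_{-i}, r)$. Unwinding the definitions of $\widetilde{\pi}$ and $\pi_{\bfb_{-i}}$ gives $p_i(\bfb_{-i}, r) - \mu(\bfb_{-i}, r) - \pi_{\bfb_{-i}}(0) = p_i(\bfb_{-i}, r)$, hence $\mu(\bfb_{-i}, r) = -\pi_{\bfb_{-i}}(0) = \mu(\bfb_{-i}, 0)$ for every $r \geq 0$; that is, the expected miner revenue is invariant under any change to user $i$'s bid.

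Finally I would iterate this invariance coordinate by coordinate: starting from an arbitrary $\bfb = (b_1, \ldots, b_m)$, replace $b_1$ by $0$, then $b_2$ by $0$, and so on, leaving $\mu$ unchanged at each step, to conclude $\mu(\bfb) = \mu(\mathbf{0})$. Since every entry of $\mathbf{0}$ is $0$ and each user pays at most its bid, the total payment on $\mathbf{0}$ is $0$, and the miner-revenue bound gives $\mu(\mathbf{0}) \leq 0$; thus $\mu(\bfb) \leq 0$ for all $\bfb$, contradicting the existence of some $\bfb^*$ with $\mu(\bfb^*) > 0$. The argument nowhere invokes the block-size bound, so it holds for both finite and infinite block size, as claimed. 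The only genuinely delicate point is the non-black-box handling of the term $\pi_{\bfb_{-i}}(0)$; everything else is a routine repackaging of the UIC and $1$-SCP constraints into two instances of the same Myerson sandwich.
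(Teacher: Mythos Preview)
Your proposal is correct and follows essentially the same approach as the paper: derive the payment sandwich for $\pi_{\bfb_{-i}}$ from $1$-SCP (the paper's Lemma~\ref{lemma:randomInequality}), shift by the constant $\pi_{\bfb_{-i}}(0)$ to meet the boundary condition of Lemma~\ref{lem:sandwich}, compare the resulting closed form with Myerson's formula for $p_i$ to conclude $\mu(\bfb_{-i},\cdot)$ is constant, and then zero out coordinates one by one. The only cosmetic difference is that you invoke $p_i(\bfb_{-i},0)=0$ explicitly to write $\mu(\bfb_{-i},r)=\mu(\bfb_{-i},0)$ directly, whereas the paper carries the term $p_i(\bfb_{-i},0)$ along and concludes constancy; both are equivalent.
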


We will now prove this theorem. First, we introduce a useful lemma.
\begin{lemma}\label{lemma:randomInequality}
Let $({\bf x}, {\bf p}, \mu)$ be any randomized TFM.
If $({\bf x}, {\bf p}, \mu)$ is 1-SCP, then,
for any bid vector $\bfb$, user $i$, and $r, r'$ such that $r < r'$, it must be \[
	r \cdot \left(x_i(\bfb_{-i}, r') - x_i(\bfb_{-i},r)\right)
	\leq \pi_{\bfb_{-i}}(r') - \pi_{\bfb_{-i}}(r)
	\leq r' \cdot \left(x_i(\bfb_{-i}, r') - x_i(\bfb_{-i},r)\right).
\]
\end{lemma}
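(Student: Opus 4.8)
The plan is to exploit 1-SCP directly by treating the coalition of the miner and user $i$ as a single ``virtual user'', and to show that this virtual user obeys exactly the two incentive inequalities that, in the proof of Myerson's Lemma, give rise to the payment sandwich. The first step is a bookkeeping observation: suppose everyone except user $i$ bids $\bfb_{-i}$, user $i$'s true value is $v$, and the coalition has user $i$ submit bid $b$ while the miner plays honestly. Then the coalition's expected joint utility is the miner revenue plus user $i$'s expected surplus, namely $\mu(\bfb_{-i}, b) + v\cdot x_i(\bfb_{-i}, b) - p_i(\bfb_{-i}, b)$, which by the definition of $\pi$ equals $v\cdot x_i(\bfb_{-i}, b) - \pi_{\bfb_{-i}}(b)$. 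Thus $\pi_{\bfb_{-i}}$ plays precisely the role of a payment rule for the virtual user, and $x_i(\bfb_{-i},\cdot)$ its allocation rule.

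The second step is to invoke 1-SCP twice, restricting attention to deviations in which the miner stays honest and only user $i$ misreports its bid. Since 1-SCP rules out any profitable coalition deviation over the full strategy space, it in particular rules out these restricted deviations, so honest (truthful) bidding must maximize the virtual user's utility. Instantiating this with true value $v = r$ and comparing the honest bid $r$ against the deviation bid $r'$ gives $r\cdot x_i(\bfb_{-i}, r) - \pi_{\bfb_{-i}}(r) \ge r\cdot x_i(\bfb_{-i}, r') - \pi_{\bfb_{-i}}(r')$, which rearranges to the lower bound $\pi_{\bfb_{-i}}(r') - \pi_{\bfb_{-i}}(r) \ge r\cdot(x_i(\bfb_{-i}, r') - x_i(\bfb_{-i}, r))$. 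Symmetrically, instantiating with true value $v = r'$ and comparing the honest bid $r'$ against the deviation bid $r$ yields $r'\cdot x_i(\bfb_{-i}, r') - \pi_{\bfb_{-i}}(r') \ge r'\cdot x_i(\bfb_{-i}, r) - \pi_{\bfb_{-i}}(r)$, which rearranges to the upper bound $\pi_{\bfb_{-i}}(r') - \pi_{\bfb_{-i}}(r) \le r'\cdot(x_i(\bfb_{-i}, r') - x_i(\bfb_{-i}, r))$. Chaining the two gives the claimed sandwich.

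I expect no deep obstacle here; the work is almost entirely in getting the utility bookkeeping right and in noting that the restricted strategy space (honest miner, misreporting user) is a legitimate special case of the coalition deviations covered by 1-SCP. Two points deserve a sentence of care. First, the argument needs a valid environment in which user $i$'s true value is exactly $r$ (resp.\ $r'$); this is fine because true values range over $\R$ and everyone else's bids $\bfb_{-i}$ are held fixed. Second, unlike the UIC-based payment sandwich, we do not separately need monotonicity of $x_i(\bfb_{-i},\cdot)$ to \emph{state} the inequality---monotonicity follows from UIC via Myerson's Lemma and will be used downstream, but the sandwich itself is a pure consequence of the two 1-SCP inequalities. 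This lemma is deliberately shaped to feed into the technical core of Myerson's proof (Lemma~\ref{lem:sandwich}), so that combining it with the analogous UIC sandwich for $p_i$ will later force $\mu \equiv 0$.
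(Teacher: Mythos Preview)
Your proposal is correct and takes essentially the same approach as the paper: both arguments instantiate 1-SCP with the coalition of miner and user $i$, once with true value $r$ (deviation to $r'$) and once with true value $r'$ (deviation to $r$), and read off the two inequalities. The paper phrases each direction as a proof by contradiction while you derive them directly, and your ``virtual user'' framing matches the intuition the paper gives in its roadmap; otherwise the content is identical.
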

\begin{proof}
First, we prove the case of $r \cdot \left(x_i(\bfb_{-i}, r') - x_i(\bfb_{-i},r)\right)
\leq \pi_{\bfb_{-i}}(r') - \pi_{\bfb_{-i}}(r)$.
For the sake of reaching a contradiction, suppose there exists a vector $\bfb$, a user $i$ and $r < r'$ such that
\begin{equation}\label{eq:sidecontract}
	r \cdot \left(x_i(\bfb_{-i}, r') - x_i(\bfb_{-i},r)\right)
	> \pi_{\bfb_{-i}}(r') - \pi_{\bfb_{-i}}(r).
\end{equation}
Imagine that the real bid vector is $(\bfb_{-i}, r)$ and user $i$'s true value is $r$.
If they do not have a side contract, the miner's expected utility is $\mu(\bfb_{-i},r)$ and user $i$'s expected utility is $r\cdot x_i(\bfb_{-i},r) - p_i(\bfb_{-i},r)$.
However, the miner can sign a contract with user $i$ and ask user $i$ to bid $r'$ instead.
In this case, the miner's expected utility becomes $\mu(\bfb_{-i},r')$ and user $i$'s expected utility becomes $r\cdot x_i(\bfb_{-i},r') - p_i(\bfb_{-i},r')$ since the user's 
true value is still $r$.
By Eq.(\ref{eq:sidecontract}), their joint expected utility increases by 
$r \cdot \left(x_i(\bfb_{-i}, r') - x_i(\bfb_{-i},r)\right) - 
(\pi_i(\bfb_{-i},r') - \pi_i(\bfb_{-i},r))
> 0$.
This violates 1-SCP.

The other case $\pi_{\bfb_{-i}}(r') - \pi_{\bfb_{-i}}(r)
\leq r' \cdot \left(x_i(\bfb_{-i}, r') - x_i(\bfb_{-i},r)\right)$ can be proven by a similar argument, so we only sketch the proof.
Suppose the inequality does not hold, that is,
suppose that $\pi_{\bfb_{-i}}(r') - \pi_{\bfb_{-i}}(r)
> r' \cdot \left(x_i(\bfb_{-i}, r') - x_i(\bfb_{-i},r)\right)$. 
Imagine that the real bid vector is $(\bfb_{-i}, r')$ and user $i$'s true value is $r'$.
The miner can sign a contract with user $i$ and ask user $i$ to bid $r$ instead.
In this case, their joint expected utility increases by $\pi_{\bfb_{-i}}(r') 
- \pi_{\bfb_{-i}}(r) - r' \cdot \left(x_i(\bfb_{-i}, r') - x_i(\bfb_{-i},r)\right) > 0$.
This violates 1-SCP.
\end{proof}

\paragraph{Proof of Theorem~\ref{theorem:randomized}}
We now continue with the proof of Theorem~\ref{theorem:randomized}.
Consider the following quantity:
$$
\widetilde{\pi}_{\bfb_{-i}}(r) = p_i(\bfb_{-i},r) - \mu(\bfb_{-i},r) - 
(p_i(\bfb_{-i},0) - \mu(\bfb_{-i},0))
$$
By Lemma~\ref{lemma:randomInequality}, 
and the fact that definition of $\widetilde{\pi}_{\bfb_{-i}}(r)$
and ${\pi}_{\bfb_{-i}}(r)$ differs by only a fixed constant,
it must be that 
\begin{equation}
        r \cdot \left(x_i(\bfb_{-i}, r') - x_i(\bfb_{-i},r)\right)
        \leq \widetilde{\pi}_{\bfb_{-i}}(r') - \widetilde{\pi}_{\bfb_{-i}}(r)
        \leq r' \cdot \left(x_i(\bfb_{-i}, r') - x_i(\bfb_{-i},r)\right).
\label{eqn:sandwich}
\end{equation}
Now, observe that the above expression exactly 
agrees with the ``payment sandwich'' in 
the proof of Myerson's Lemma~\cite{myerson,myerson-lecture-hartline}.
Furthermore, we have that 
$\widetilde{\pi}_{\bfb_{-i}}(0) = 0$
by definition; and 
${\bf x}$ must be monotone because the TFM is UIC and satisfies
Myerson's Lemma. 
Due to Lemma~\ref{lem:sandwich}, 
it must be that 
$\widetilde{\pi}_{\bfb_{-i}}(\cdot)$ obeys the unique payment rule
specified by Myerson's Lemma, that is, 
\[
\widetilde{\pi}_{\bfb_{-i}}(r) = 
 b_i \cdot x_i(\bfb_{-i}, b_i) - \int_0^{b_i} x_i(\bfb_{-i}, t) dt.
\]
On the other hand, since the TFM is UIC, its payment rule  
itself must also satisfy the same expression, that is, 
\[
p_i(\bfb_{-i}, r) = 
 b_i \cdot x_i(\bfb_{-i}, b_i) - \int_0^{b_i} x_i(\bfb_{-i}, t) dt.
\]
We therefore have that 
\[
\widetilde{\pi}_{\bfb_{-i}}(r) = p_i(\bfb_{-i},r) - \mu(\bfb_{-i},r) - 
(p_i(\bfb_{-i},0) - \mu(\bfb_{-i},0))
= p_i(\bfb_{-i}, r)
\]
In other words, $\mu(\bfb_{-i},r) =  \mu(\bfb_{-i},0) - p_i(\bfb_{-i},0)$,
which 
is a constant that is independent of user $i$'s bid $r$ when $\bfb_{-i}$ is fixed.

We now argue that this 
actually implies $\mu(\bfb_{-i},r) = 0$, i.e., a possibly 
randomized TFM that is UIC and 1-SCP must 
always have 0 miner revenue.
Suppose this is not true, i.e., 
suppose there exists a randomized TFM with non-trivial miner revenue $(\bfx, \bfp, \mu)$ that is UIC and $1$-SCP.
Since it enjoys non-trivial miner revenue, there exists a bid vector $\bfb^{(0)} = (b_1,\ldots,b_m)$ such that $\mu(\bfb^{(0)}) > 0$.
Now, consider the following sequence of
bid vectors:  
for $i \in [m]$, let $\bfb^{(i)}$
be obtained by setting the first $i$ coordinates of $\bfb^{(0)}$ to $0$. 
Observe that $\bfb^{(m)} = {\bf 0}$.    

Recall that we have argued for a fixed $\bfb_{-i}$, 
the miner revenue $\mu(\bfb_{-i}, \cdot)$,
is a constant function independent of user $i$'s bid. 
Thus, 
$\mu(\bfb^{(i-1)}) = \mu(\bfb^{(i)})$ for all $i \in [m]$.
Consequently, we obtain $\mu(\bfb^{(0)}) = \mu(\bfb^{(m)})$.
However, users can only pay their bids at most, so we have $\mu(\bfb^{(m)}) \leq |\bfb^{(m)}|_1 = 0$.
This contradicts the assumption that $\mu(\bfb^{(0)}) > 0$.

\ignore{
\subsection{Old Proof}
We will now prove this theorem. First, we introduce some useful lemmas.

\begin{lemma}\label{lemma:randomInequality}
Let $({\bf x}, {\bf p}, \mu)$ be any randomized TFM.
If $({\bf x}, {\bf p}, \mu)$ is 1-SCP, then,
\elaine{i removed UIC, you don't need UIC here.}
for any bid vector $\bfb$, user $i$, and $r, r'$ such that $r < r'$, it must be \[
	r \cdot \left(x_i(\bfb_{-i}, r') - x_i(\bfb_{-i},r)\right)
	\leq \pi_{\bfb_{-i}}(r) - \pi_{\bfb_{-i}}(r')
	\leq r' \cdot \left(x_i(\bfb_{-i}, r') - x_i(\bfb_{-i},r)\right).
\]
\end{lemma}
\begin{proof}
First, we prove the case of $r \cdot \left(x_i(\bfb_{-i}, r') - x_i(\bfb_{-i},r)\right)
\leq \pi_{\bfb_{-i}}(r) - \pi_{\bfb_{-i}}(r')$.
For the sake of reaching a contradiction, suppose there exists a vector $\bfb$, a user $i$ and $r < r'$ such that
\begin{equation}\label{eq:sidecontract}
	r \cdot \left(x_i(\bfb_{-i}, r') - x_i(\bfb_{-i},r)\right)
	> \pi_{\bfb_{-i}}(r) - \pi_{\bfb_{-i}}(r').
\end{equation}
Imagine that the real bid vector is $(\bfb_{-i}, r)$ and user $i$'s true value is $r$.
If they do not have a side contract, the miner's expected utility is $\mu(\bfb_{-i},r)$ and user $i$'s expected utility is $r\cdot x_i(\bfb_{-i},r) - p_i(\bfb_{-i},r)$.
However, the miner can sign a contract with user $i$ and ask user $i$ to bid $r'$ instead.
In this case, the miner's expected utility becomes $\mu(\bfb_{-i},r')$ and user $i$'s expected utility becomes $r\cdot x_i(\bfb_{-i},r') - p_i(\bfb_{-i},r')$ (note that user's true value does not change).
By Eq.(\ref{eq:sidecontract}), their joint expected utility increases by $r \cdot \left(x_i(\bfb_{-i}, r') - x_i(\bfb_{-i},r)\right) - \pi_{\bfb_{-i}}(r) + \pi_{\bfb_{-i}}(r') > 0$.
This violates 1-SCP.

The case of $\pi_{\bfb_{-i}}(r) - \pi_{\bfb_{-i}}(r')
\leq r' \cdot \left(x_i(\bfb_{-i}, r') - x_i(\bfb_{-i},r)\right)$ can be proven by a similar argument, so we only sketch the proof.
Suppose the inequality does not hold, that is,
suppose that $\pi_{\bfb_{-i}}(r) - \pi_{\bfb_{-i}}(r')
> r' \cdot \left(x_i(\bfb_{-i}, r') - x_i(\bfb_{-i},r)\right)$. 
Imagine that the real bid vector is $(\bfb_{-i}, r')$ and user $i$'s true value is $r'$.
The miner can sign a contract with user $i$ and ask user $i$ to bid $r$ instead.
In this case, their joint expected utility increases by $\pi_{\bfb_{-i}}(r) - \pi_{\bfb_{-i}}(r') - r' \cdot \left(x_i(\bfb_{-i}, r') - x_i(\bfb_{-i},r)\right) > 0$.
This violates 1-SCP.
\end{proof}

\begin{lemma}[Technical Lemma]\label{lemma:monomain}
Let $f:[0,\infty] \rightarrow [0,\infty]$ be a non-decreasing function.
If $g:[0,\infty] \rightarrow [0,\infty]$ is a function satisfying \[
a \cdot (f(b) - f(a)) \leq g(a) - g(b) \leq b \cdot (f(b) - f(a))
\]
for any $0 \leq a < b$,
then, it must be \[
	g(a) - g(b) = b \cdot f(b) - a \cdot f(a) - \int_a^b f(x) dx.
\]
\end{lemma}
\begin{proof}
The complete proof of this technical lemma is 
given in Section~\ref{sec:monotonicFunc}.
\end{proof}	

\begin{lemma}\label{lemma:randomEquality}
Let $({\bf x}, {\bf p}, \mu)$ be any randomized TFM.
If $({\bf x}, {\bf p}, \mu)$ is UIC and 1-SCP, then, 
for any bid vector $\bfb$, user $i$, and $r, r'$ such that $r < r'$, it must be \[
	\pi_{\bfb_{-i}}(r) - \pi_{\bfb_{-i}}(r') = p_i(\bfb_{-i},r') - p_i(\bfb_{-i},r).
\]
Equivalently, it must be
\[
	\mu(\bfb_{-i},r) = \mu(\bfb_{-i},r').
\]
\end{lemma}
\begin{proof}
Because $({\bf x}, {\bf p},\mu)$ is UIC, by Myerson's Lemma, $x_i(\bfb_{-i}, \cdot)$ is a non-decreasing function for any vector $\bfb$ and user $i$.
Thus, by Lemma \ref{lemma:randomInequality} and Lemma \ref{lemma:monomain}, we have \[
	\pi_{\bfb_{-i}}(r) - \pi_{\bfb_{-i}}(r') = r' \cdot x_i(\bfb_{-i}, r') - r \cdot x_i(\bfb_{-i}, r) - \int_r^{r'} x_i(\bfb_{-i}, x) dx.    
\]
Now, by Myerson's Lemma, we know that the payment rule must be \[
	p_i(\bfb_{-i}, r) = 
	r\cdot x_i(\bfb_{-i}, r) - \int_0^r x_i(\bfb_{-i}, t)dt.
\]
Therefore, we have \[
	\pi_{\bfb_{-i}}(r) - \pi_{\bfb_{-i}}(r') = p_i(\bfb_{-i},r') - p_i(\bfb_{-i},r).
\]
Recall that $\pi_{\bfb_{-i}}(r) = \mu(\bfb_{-i},r) - p_i(\bfb_{-i},r)$.
Direct calculation shows that \[
	\mu(\bfb_{-i},r) = \mu(\bfb_{-i},r').
\]
\end{proof}

\paragraph{Proof of Theorem \ref{theorem:randomized}}
For the sake of reaching a contradiction, suppose there exists a randomized TFM with non-trivial miner revenue $(\bfx, \bfp, \mu)$ that is UIC and $1$-SCP.
Because it enjoys non-trivial miner revenue, there exists a bid vector $\bfb^{(0)} = (b_1,\ldots,b_m)$ such that $\mu(\bfb^{(0)}) > 0$.
Now, consider the following sequence of
bid vectors:  
for $i \in [m]$, let $\bfb^{(i)}$
be obtained by setting the first $i$ coordinates of $\bfb^{(0)}$ to $0$. 
Observe that $\bfb^{(m)} = {\bf 0}$.    

By Lemma \ref{lemma:randomEquality}, we have $\mu(\bfb^{(i-1)}) = \mu(\bfb^{(i)})$ for all $i \in [m]$.
Consequently, we obtain $\mu(\bfb^{(0)}) = \mu(\bfb^{(m)})$.
However, users can only pay their bids at most, so we have $\mu(\bfb^{(m)}) \leq |\bfb^{(m)}|_1 = 0$.
This contradicts the assumption that $\mu(\bfb^{(0)}) > 0$.
}

\subsection{UIC + 1-SCP + Finite Block Size $\Longrightarrow$ Impossibility}
\label{sec:finite}

\elaine{TODO: in roadmap, write about implication for eip1559}

Theorem~\ref{theorem:randomized}
holds no matter whether the block size is finite or infinite.
In this section, we prove a corollary stating
that if the block size is finite, then no non-trivial TFM 
can satisfy UIC and 1-SCP
simultaneously. 
Particularly, assuming finite block size, 
the only TFM that satisfies both UIC and 1-SCP 
is the one that never confirms any transaction, and always pays the 
miner nothing.
This corollary holds assuming the following strategic behavior is possible:
an individual user 
or a user colluding with the miner can bid untruthfully;
and the miner can arbitrarily decide 
which transactions to include in the block (as long as 
it respects the block's validity constraint).

\begin{corollary}[UIC + 1-SCP + finite block size $\Longrightarrow$ impossibility]
Suppose the size of a block is finite.
Then, the only randomized TFM $(\bfx, \bfp, \mu)$ that satisfies both UIC and 1-SCP
is the trivial mechanism that 
never confirms any transaction no matter how users bid, and always 
pays the miner nothing.
\label{cor:finiteblocksize}
\end{corollary}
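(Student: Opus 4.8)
The plan is to bootstrap from Theorem~\ref{theorem:randomized}, which already guarantees that any UIC and $1$-SCP mechanism has $\mu \equiv 0$; this immediately settles the ``pays the miner nothing'' half of the statement. It then remains to show that finite block size forces the confirmation rule to be identically zero, i.e.\ $x_i(\bfb) = 0$ for every bid vector $\bfb$ and user $i$. I would argue by contradiction: suppose the mechanism is non-trivial, so there exist $\bfb$ and a user $i$ with $x_i(\bfb) = \alpha > 0$; write $v := b_i$ for that user's bid. Following the deterministic roadmap of Section~\ref{sec:roadmap-main-lb}, the idea is that finite block size lets us crowd the mempool with near-duplicates of this confirmed bid so that almost all of them must go unconfirmed, yet a colluding miner can carve out the favorable configuration $\bfb$ for one conspirator and confirm it ``for free'', since the miner's revenue is pinned at $0$.

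First I would upper-bound the honest utility of a designated conspirator. Fix $\epsilon > 0$, let $u$ be a user with true value $v + \epsilon$, and form the crowded mempool $M$ consisting of $\bfb_{-i}$ together with $N$ users (one of them $u$), all bidding $v + \epsilon$, where $N > B$. Since a block holds at most $B$ confirmed transactions, the sum of the confirmation probabilities of these $N$ symmetric users is at most $B$; by the identity-agnostic property they are all equal, so each one---$u$ in particular---is confirmed with probability at most $B/N$ when everyone bids truthfully and the miner includes honestly. By Myerson's Lemma~\ref{lem:myerson}, $u$'s honest utility equals its consumer surplus $\int_0^{v+\epsilon} x_u(M_{-u}, t)\,dt$, and monotonicity of $x_u(M_{-u}, \cdot)$ bounds the integrand by its top value $B/N$. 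Hence $u$'s honest utility is at most $(v+\epsilon)\,B/N$, which I can drive below any positive constant by taking $N$ large.

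Next I would exhibit a profitable deviation for the coalition of the miner and $u$. Let $u$ lower its bid to $v$, and let the miner include exactly the honestly-included set $\bfI(\bfb)$, with $u$'s (now bid-$v$) transaction placed in user $i$'s role and the remaining $N-1$ crowd transactions dropped. The blockchain then processes an included set that is identical, as a configuration, to $\bfI(\bfb)$, so by identity-agnosticism $u$ is confirmed with probability $\alpha$ and pays at most its bid $v$, while the enforced miner-revenue is $\bfM(\bfI(\bfb)) = \mu(\bfb) = 0$. Thus the coalition's joint utility is at least $\alpha(v+\epsilon) - \alpha v = \alpha\epsilon > 0$. Choosing $N$ so that $(v+\epsilon)B/N < \alpha\epsilon$ makes the deviation strictly beat honest play, contradicting $1$-SCP (Definition~\ref{defn:scp}) and establishing $x \equiv 0$. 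Note this is the single place where finite block size is used.

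The main obstacle I anticipate is the bookkeeping at the seam between the inclusion and confirmation rules: one must verify that when the miner presents $\bfI(\bfb)$ reconstituted from genuine mempool transactions, the blockchain's confirmation, payment, and miner-revenue rules act exactly as on the honest processing of $\bfb$---which is where the composition identities $x = \bfC\circ\bfI$ and $\mu = \bfM\circ\bfI$, the ``pays at most its bid'' assumption, and the identity-agnostic property do the real work. A secondary subtlety is that the honest utility is the Myerson surplus rather than simply $\alpha$ times the value, so the monotonicity bound on the whole allocation curve (not just its top point) is essential; and one must note that the degenerate case $v = 0$ is handled by the same argument, since the crowd is placed at $v + \epsilon > 0$.
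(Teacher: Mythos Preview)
Your proposal is essentially the paper's own argument: invoke Theorem~\ref{theorem:randomized} to pin $\mu\equiv 0$, crowd the mempool with many copies of $b_i+\epsilon$, find a crowd member with small honest confirmation probability, and have the miner place that user into user $i$'s slot in $\bfI(\bfb)$ to harvest $\alpha\epsilon$ of coalition utility. The construction, the deviation, and the final inequality $(v+\epsilon)B/N<\alpha\epsilon$ all match.

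One small overstatement to tighten: identity-agnosticism as defined in the paper does \emph{not} force all $N$ crowd users to have equal confirmation probability (it only says outcomes track positions, not identities; two equal bids at different positions may be treated differently). You don't need equality anyway---the averaging bound $\sum x_j \le B$ already gives \emph{some} crowd user with $x_j \le B/N$, and you should let $u$ be that user rather than fixing $u$ in advance. The paper does exactly this (``there must be a user $j$\ldots included with probability at most $B/n$''). Your Myerson-surplus bound on the honest utility is fine but more than needed; the paper just uses $\text{utility}\le (\text{true value})\cdot(\text{confirmation prob})$.
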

\begin{proof}
For the sake of reaching a contradiction, 
suppose that there is a non-trivial TFM that satisfies UIC and 1-SCP.
By Theorem~\ref{theorem:randomized}, any TFM that satisfies
both UIC and 1-SCP 
must have constant zero miner revenue. 
Henceforth, we may assume that the miner always gets zero payment.

Let $B$ denote an upper bound on the block size. 
Since the TFM is non-trivial, 
there exists a bid vector $\bfb = (b_1,\ldots,b_m)$ and a user $i^*$ such that $x_{i^*}(\bfb) > 0$.
Now, let $\epsilon$ be any positive number, let
$n > \frac{B \cdot (b_{i^*} + \epsilon)}{x_{i^*}(\bfb) \cdot \epsilon}$
be a sufficiently large integer.
Consider another bid vector $\bfb' = (b_1,\ldots,b_m, b_{m+1}, \ldots, b_{m+n})$ 
where $b_j = b_{i^*} + \epsilon$ for all $j \in [m+1, m+n]$.
Imagine that the real bid vector is actually $\bfb'$ and each user bids truthfully, 
i.e., user $j$'s true value is $v_j = b_j$ for all $j \in [m+n]$.
Since the block size is at most $B$, 
there must be a user $j \in [m+1, m+n]$ 
who bids $b_j$ is included 
with probability at most $B/n < 
\frac{x_{i^*}(\bfb) \cdot \epsilon}{b_{i^*} + \epsilon}$.

Consider the coalition of the miner and user $j$.
If everyone bids truthfully and the miners runs the honest
mechanism, then their joint utility is strictly less than  
$b_j \cdot \frac{B}{n} < (b_{i^*} + \epsilon) 
\cdot \frac{x_{i^*}(\bfb) \cdot \epsilon}{b_{i^*} + \epsilon} = x_{i^*}(\bfb) 
\cdot \epsilon$ --- since the miner always gets 0 revenue
and user $j$'s utility is upper bounded by 
$b_j \cdot \frac{B}{n}$.
However, the miner can sign a contract with user $j$.
The contract asks user $j$ to change the bid from $b_j$ to $b_{i^*}$, 
and the miner pretends that the actual bid vector is $\bfb$, 
where the coordinate $b_{i^*}$ actually comes from user $j$.
In this case, the coalition's joint utility 
is $(v_j - b_{i^*}) \cdot x_{i^*}(\bfb) = \epsilon \cdot x_{i^*}(\bfb)$.
Therefore, 
the coalition can increase its expected utility by deviating.
This violates $1$-SCP.
\ignore{
Notice that the miner's on-chain revenue is always zero, 
no matter which bids are included in the block.
However, the probability that user $j$'s bid is confirmed becomes $x_{i^*}(\bfb) > 0$.
Notice that the user's payment is zero when the bid is not confirmed, and the payment is upper bounded by its bid when the bid is confirmed.
Thus, the user's utility is zero when the bid is not confirm, and the utility is at least $v_j - b_{i^*} = \epsilon$ when the bid is confirmed.
Consequently, user $j$'s expected utility is at least $x_{i^*}(\bfb) \cdot \epsilon > 0$.
This violates 1-SCP.
}
\end{proof}

\section{Rethinking the Incentive Compatibility Notions} 
\label{sec:weakic}

So far in our impossibility results, we have assumed 
it is free of charge for 
a strategic player to inject a fake transaction or overbid
(i.e., bid higher than its true value),  
as long as the offending transaction is not confirmed in the present block.
\ignore{
Consider the Vickrey auction example again: suppose we 
include in the block the highest $k$ transactions, among which
the top $k-1$ are confirmed, paying the $k$-th price; suppose that all payment
goes to the miner.
In this case, the miner may want to inject a fake transaction
$b'$ whose bid is in between the $(k-1)$-th and $k$-th price.
So far, we have assumed that since $b'$ is not confirmed, the miner 
can inject $b'$ free of charge.
}
Not only so, in fact,  
the same model was implicitly or 
explicitly adopted in earlier works
on transaction fee 
mechanism 
design~\cite{roughgardeneip1559,roughgardeneip1559-ec,functional-fee-market}, too.

Such a model, however, 
may be overly draconian, since there is actually some cost associated with cheating
that the existing model does not charge. 
In reality, the TFM is not a standalone auction, it is repeatedly
executed as blocks get confirmed.
Although 
an overbid or fake transaction need not pay fees to the present miner
if it is not confirmed,  
in real life, 
any transaction that has been submitted
to the network cannot be retracted.
Therefore, the offending transaction 
could be 
confirmed and paying fees in a future block
(e.g., paid to a different miner or simply burnt). 
\ignore{
Observe that in the proof of Claim~\ref{claim:burningBelowClaim}, 
we used the following strategy 
from a miner-user coalition: 
the miner asks the user to raise its bid to more
than its true value. 
Although this offending transaction 
is not confirmed in the present block, 
it can possibly increase the miner's revenue --- note that this means that the offending
transaction must be included (but not confirmed) in the block, since the blockchain
eventually uses included transactions  
to decide the prices $\bfp$ and miner revenue $\mu$. 
If this offending transaction 
is not confirmed in the present block, 
it need not pay any fees to the current miner.
In our current model as well as those adopted
in prior work~\cite{roughgardeneip1559,roughgardeneip1559-ec}, 
we simply assumed that such an offending transaction
that is not confirmed in the present incurs no cost. 

In practice, however, 
the TFM is not run in a standalone fashion 
as in our model. Any transaction submitted 
to the network can never be revoked, so an offending transaction
that is not confirmed in the present may be confirmed in the future. 
}
Consequently, a risk-averse miner-user  
coalition may be deterred from such deviations 
for fear of losing the offending transaction's fees to a future block. 

Therefore, a natural and interesting question is:
\begin{itemize}[leftmargin=5mm]
\item[] 
{\it If we fix the existing model and more carefully 
account for the cost of such cheating, can this
help us circumvent the impossibility results?}
\end{itemize}

One challenge we are faced with, however,  
is the difficulty of accurately 
characterizing the cost of such cheating. 
If an overbid or fake transaction is confirmed in a future block, 
it is hard for us to predict how much the offending transaction
will end up paying, since the payment amount may not be equal to the bid, and
the payment amount depends on the environment (e.g., the other bids), 
as well as the mechanism itself.

Despite this difficulty, we still want to understand whether this 
direction is worth exploring. 
A reasonable approach is to start by asking what is 
the worst-case cost. Once we understand what is the worst-case cost, 
we can consider how to define a more general, parametrized cost model. 
\begin{enumerate}[leftmargin=6mm]
\item 
{\it Worst-case cost.}
A worthwhile first step is to consider
the {\it worst-case cost} for the aforementioned deviation.
Specifically, 
whenever a strategic player 
injects a fake transaction or overbids and the offending
transaction cannot be confirmed in the present block, the strategic
player assumes the worst case scenario, i.e., the offending
transaction can end up paying fees as high as its bid in the future.

\Hao{Maybe it is helpful to mention that the worst-case cost is ``the easiest case'' for mechanism design. Thus, that why it gives us the strongest lower-bound.}
Assuming the worst-case cost is useful 
in several ways. First, it is {\it useful for proving 
lower bounds}. If we can prove lower bounds even for 
the worst-case cost, it would directly imply
lower bounds if in reality, the cost is actually smaller than the worst case.
Second, 
assuming the worst-case cost is also equivalent to considering 
strategic players who are {\it paranoid} --- they only want to deviate
if they will surely benefit, and there is no possible scenario
in which they will lose.
In other words, we are asking whether there is a mechanism that can at least
discourage such paranoid players 
from deviating.
\item 
{\it General, parametrized cost model.} 
As mentioned, it is challenging to accurately capture or predict
the cost of overbid or fake transactions that are unconfirmed in the present.
In practice, however, one might be able
to measure the cost of such cheating from historical data.
This motivates a more generalized cost model,
where we assume that there is some discount factor $\gamma \in [0, 1]$, 
and the cost of such cheating is actually $\gamma$ times the  
worst-case cost.
\ignore{
In particular, when $\gamma = 0$, there is no cost to cheat --- in this case,
our new definitions would be equivalent to the incentive compability  
notions in Section~\ref{section:definitions}.
When $\gamma = 1$, we are effectively assuming worst-case cost. 
}
\end{enumerate}

\ignore{
In this section, we will define 
a weaker notion of incentive compatibility.  
In this weaker variant, we 
assume that the strategic player is paranoid or risk averse.
Whenever it injects a fake transaction or overbids, and the offending
transaction cannot be confirmed in the present block, the strategic
player will assume the worst case, i.e., the offending
transaction will lose fees as high as its bid in the future.
}


\subsection{Defining $\gamma$-Strict Utility}
\ignore{We define a weak incentive compatibility  
notion which can be applied to a user, a miner, or a miner-user coalition.
We will then describe a mechanism that satisfies 
this weak incentive compatibility notion.
}
As we argued, the utility 
notions in prior work ignore certain costs associated with cheating. 
We therefore define a more refined utility notion that charges 
such cost parametrized by a ``strictness'' parameter $\gamma \in [0, 1]$. 
In other words, 
when $\gamma = 1$, we are charging the worst-case cost,
and equivalently, we are asking 
whether there are incetive compatible TFMs against {\it paranoid} players
who only want to deviate if there is a sure gain and no risk of losing.
We will also be using $\gamma = 1$ to prove lower bounds, and this gives stronger lower bound results.
When $\gamma = 0$, we are charging no cost --- in this case,
our new incentive compatibility definitions 
would be equivalent to the old notions
in Section~\ref{section:definitions}.

Recall that the term ``strategic player''
can refer to a user, a miner, or a miner-user coalition. 
An {\it offending transaction}
is one whose bid exceeds the transaction's true value: 
it can be an untruthful bid
or an injected fake transaction, since we 
may assume that a fake transaction's true value is $0$.
In the {\it worst-case} scenario, 
an offending transaction 
that is not confirmed in the present block may be charged 
a transaction fee 
equal to its full bid, when it is confirmed in a future block (possibly mined
by a different miner).
Let $v$ be the true value of the offending transaction (and $v=0$ if the
offending transaction is fake), 
and let $b \geq v$ be the bid value. 
Therefore, in the worst-case scenario, the offending transaction can cost  
$b - v$ in utility, due to losing fees to a future block.

In practice, if we can measure the actual cost from historical data, 
we may be able to learn a parameter $\gamma \in [0, 1]$, and model the actual 
cost as $\gamma$ times the worst-case cost, 
that is, $\gamma \cdot (b - v)$.

\ignore{
In our weak incentive compatibility notion, we assume that the strategic 
player is pessimistic and paranoid: 
if any offending transaction 
is not confirmed in the present block, 
it will be charged 
a cost corresponding to the worst-case scenario, i.e, 
when the offending transaction is confirmed in a future block, costing 
a transaction fee equal to its bid. 
Specifically, 
let $v$ be the true value of the offending transaction (and $v=0$ if the
offending transaction is fake), 
and let $b$ be the bid value. 
In the worst-case scenario, the offending transaction can cost  
$b - v$ in utility, due to losing fees 
to a future block.
}
\ignore{If the offending transaction $b$ is 
a real, overbidding transaction, the value $v$ of 
getting confirmed in a future block may even be discounted 
in comparison with the value $v_{\rm now}$ 
of getting confirmed immediately. In this case,
the cost of the offending transaction could even be higher,
which only makes our mechanism design easier.
Therefore, without loss of generality, we may assume that $b-v$
is the worst-case cost of this offending transaction.
} 

\ignore{
\elaine{TODO: remove the following repeat text, i've moved it to defn sec}
\paragraph{Strategy space.}
Since we will be designing a weakly incentive compatible mechanism
rather than proving an impossibility now, we want to consider
a broad strategy space that captures all possible deviations.  
In the most general form, a strategic player can engage in the following
types of deviations or a combination thereof:
\begin{itemize}[leftmargin=5mm,itemsep=1pt]
\item 
{\it Bidding untruthfully}. 
A user or a user-miner coalition can bid untruthfully, after examining all 
other users' bids.
\item 
{\it Injecting fake transactions}. 
A user, miner, or a user-miner coalition can inject fake transactions,
after examining 
all other users' bids.
Fake transactions are offer no intrinsic value to anyone, and 
therefore their true value is $0$. 

\item 
{\it Strategically choosing which transactions to include in the block.} 
More generally, 
in a TFM, 
the miner decides which transactions from the mempool to include in a block
(possibly subject to validity constraints enforced
by the blockchain protocol such as block size limit),  
and then the blockchain decides from the on-chain state 
which included transactions are confirmed,
the payment vector $\bfp$
for confirmed transactions, 
as well as the miner income $\mu$.
A strategic miner or a miner-user coalition   
may not implement the inclusion rule faithfully. 
It may choose an arbitrary subset of transactions from the mempool
to include in the block, as long as it satisfies
any block validity rule enforced by the blockchain.
\end{itemize} 
}

\elaine{in practice, the true value may even decrease 
if confirmed in future blocks.}

\paragraph{$\gamma$-strict utility.}
We now formally define the utility function 
of a strategic player:
\begin{mdframed}
\begin{center}
{\bf $\gamma$-strict utility}
\end{center}
\begin{itemize}[leftmargin=5mm,itemsep=1pt]
\item 
If the strategic player includes the miner, then let $u 
\leftarrow \mu$ where $\mu$ is the miner's revenue in the present block;
else let $u \leftarrow 0$.
\item 
For any real or fake transaction
the strategic player has submitted with true value $v$
and a bid of $b$: 
\begin{itemize}[leftmargin=5mm,itemsep=0pt]
\item 
if the transaction is confirmed in the present block, 
let $u \leftarrow u + v - p$ where $p$ denotes its payment.
\item 
if the transaction is {\it not} confirmed in the present block
and moreover
$b > v$, 
then let $u \leftarrow u - \gamma\cdot(b-v)$. See also Remark~\ref{rmk:overbid-weakic}.
\end{itemize}
\item 
Output the final utility $u$.
\end{itemize}
\end{mdframed}

\begin{definition}[Incentive compatibility under $\gamma$-strict utility]
Let $X \in \{\text{UIC, MIC, $c$-SCP}\}$.
We can now define $X$ under $\gamma$-strict utility 
just like 
in Definitions~\ref{defn:uic},
\ref{defn:mic}, and \ref{defn:scp}, respectively, except
that now we adopt the aforementioned $\gamma$-strict utility.
\end{definition}

\begin{definition}[Weak incentive compatibility]
For convenience, for the special case $\gamma = 1$, 
we also refer to our incentive compatibility notions 
as {\it weak} incentive compatibility. More specifically, we use the following aliases:
\begin{align*}
\text{weak UIC} & = \text{UIC under $1$-strict utility}\\
\text{weak MIC} & = \text{MIC under $1$-strict utility}\\
\text{$c$-weak-SCP} & = \text{$c$-SCP under $1$-strict utility}
\end{align*}
\end{definition}

\ignore{
\begin{definition}[Weak incentive compatibility]
\ignore{
We say that a TFM is weakly incentive compatible 
for a strategic player, 
iff the following holds:
no deviations 
restricted to the aforementioned strategy space 
can increase the strategic player's utility 
in comparison with following the rules and bidding truthfully.

If a TFM is weakly incentive compatible for any  
user (or the miner, resp.),
we say that it is {\it weakly UIC} (or {\it weakly MIC}, resp.).
If a TFM is weakly incentive compatible for 
any miner-user coalition containing 
up to $c$ users, 
we say that the TFM is {\it $c$-weakly-SCP}.
}
The notions of weak UIC, 
weak MIC, and $1$-weak-SCP 
are defined exactly like in Definitions~\ref{defn:uic},
\ref{defn:mic}, and \ref{defn:scp}, respectively, except
that now we adopt the aforementioned new utility function.
\end{definition}
}

\begin{remark}
Since the
miner has the ability to include 
an arbitrary set of transactions in the block,
without loss of generality, 
we may assume that the following deviations never take place since they do not
help the miner or the miner-user coalition:
\begin{enumerate}[leftmargin=5mm,itemsep=1pt]
\item the miner or the miner-user coalition
never bids untruthfully for any transaction {\it not}
included in the block; 
\item miner or the miner-user coalition 
never 
injects a fake transaction that is {\it not} included in the block.
\end{enumerate}
Therefore, one can equivalently view our new utility definition 
as only charging an additional cost for overbid or fake
transactions that are unconfirmed but included in the block.
Note that any transaction included in the block must have been broadcast
to the network and cannot be retracted.
\label{rmk:overbid-weakic}
\end{remark}

\ignore{
It turns out that weak UIC and UIC are equivalent, but
weak MIC and weak SCP are strict relaxations
of MIC and SCP. 
In the following fact, we prove that weak UIC and UIC are equivalent. 
This means that the Myerson's lemma holds for weak UIC too.

\begin{fact}[Weak UIC = UIC]
A TFM is weak UIC if and only iff it is UIC.
\end{fact}
\begin{proof}
The direction that UIC implies weak UIC is straightforward.
Below, we prove that any weak UIC mechanism is also UIC.
For the sake of reaching a contradiction,
suppose there is some TFM 
that is weak UIC but not UIC.
This means that some user is incentivized
to deviate under the old utility notion, but would not be incentivized
any more under the new utility notion. 
In comparison with the old utility definition, 
the only difference in the new utility 
is that we now charge a cost for overbid or fake transactions
that are unconfirmed.
If there is some user deviation that is profitable
under the old utility but not the new one, such a deviation
must involve overbidding or injecting a fake transaction that is not confirmed. 
However, observe that under the old utility, 
\elaine{this is not true. but myerson's lemma still holds}
\end{proof}
}

\subsection{Burning Second-Price Mechanism}
\label{sec:proof-burn2ndprice}

Earlier in Section~\ref{sec:burn2ndprice}, 
we presented the burning second-price
mechanism which 
can be parametrized with any $\gamma \in (0, 1]$ and $c \geq 1$;
we additionally explained how to realize the on-chain random coins needed
for the mechanism.
We now prove 
Theorem~\ref{thm:burn2ndprice}, that is, 
for any $c \geq 1$ and $\gamma \in (0, 1]$, 
the burning second-price 
auction satisfies UIC, MIC, and $c$-SCP under $\gamma$-strict utility.

We prove the properties one by one.
Throughout this proof, we assume the $\gamma$-strict utility notion. 
We may assume that $\gamma \in (0, 1]$, since if 
$\gamma = 0$, the burning second-price auction 
always confirms nothing and it trivially satisfies all these properties.

\paragraph{UIC.}
According to the utility definition for the user, 
any injected fake transaction 
cannot lead to an increase in the user's utility. 
Therefore, we may assume that the user 
does not inject any fake transactions, and the only strategic
behavior is bidding untruthfully. 

Let $\bfb = (b_1,\cdots,b_m)$ be an arbitrary bid vector, 
where $b_1\geq\cdots \geq b_m$ and user $i$ bids truthfully (i.e.~$b_i = v_i$).
\elaine{fixed some tie breaking issues.}
Suppose $i \geq k+1$ and 
thus $b_i \leq b_{k+1}$. 
If user $i$ bids honestly, its utility is $0$ since it is unconfirmed.
Imagine that 
user $i$ changes its bids to $b_i'$. 
There are two cases.
First, user $i$'s new bid $b'_i$ is 
still not ranked among the top $k$ (possibly after the tie-breaking).
In this case, 
its utility is either zero if $b'_i < v_i$ or negative  
if $b'_i > v_i$.
Second, the new bid $b'_i$ is now ranked among the top $k$. 
We have $b'_i \geq b_{k+1}$. 
Further, user $i$'s utility becomes \[
    (1 - \frac{\gamma}{c}) \cdot \gamma (b'_i - b_i) + \frac{\gamma}{c} \cdot (b_i - b_{k}),
\]
where the first term captures the cost if $b'_i$ is not confirmed, and the second term captures the cost if $b'_i$ is confirmed.
Since $b'_i - b_i < 0$ and $b_i - b_{k} \leq 0$, its utility only decreases.
\elaine{the 2nd ineuquality should be $\leq$.}

The case of $i \leq k$ can be shown by a similar argument. 
In this case, if user $i$ bids honestly, it is among the top $k$, and its utility 
is at least $0$.
Now, imagine user $i$ changes its bid to $b'_i$.
There are two cases. First, if $b'_i$ cause user $i$ to be no longer among the top $k$,
then its utility is $0$.
Second, with the new bid $b'_i$, user $i$ is still among the top $k$.
If it underbids its utility is the same as bidding honestly.
If it overbids, its utility is the same as bidding honestly conditioned on it is confirmed, and its utility decreases by $\gamma (b'_i - b_i)$ conditioned on it is not confirmed.
\Hao{I explain the case "with the new bid $b'_i$, user $i$ is still among the top $k$" more.}

\Hao{We may need to add a remark that if the user's fake transaction has true value, then the spliting bid attack can happen. For example, a user with true value $10$ may want to bid $3,3,3$. Once three bids are all among the top $k$, it gets better chance to be confirmed (user wants ``at least'' one of them to be confirmed).}

\paragraph{MIC.}
A miner has two kinds of strategies to deviate from honest behavior: not to choose the highest bids and to inject fake bids.
Without loss of generality, we assume that the miner chooses the included bids first, 
and then replaces some of the real bids with fake bids.
We may also assume that all injected fake bids are included in the block.
We will show that both steps would not increase the miner's utility.
Let $(c_1,\ldots,c_B)$ be the highest $B$ bids in 
the bid vector, where $c_1 \geq \cdots \geq c_B$.
The miner's revenue is $\gamma(c_{k+1} + \cdots + c_B)$.
Now, suppose the miner does not choose the highest bids.
Let $(d_1,\ldots,d_B)$ be the resulting bids, where $d_1 \geq \cdots \geq d_B$ ---
we may assume that 
there are always infinitely many $0$-bids that are ``for free'', and the miner
can choose these $0$-bids too.
Then, miner's revenue becomes $\gamma(d_{k+1} + \cdots + d_B)$.
Since $(c_1,\ldots,c_B)$ are the highest $B$ bids, 
we have $c_j \geq d_j$ for all $j \in [B]$.
Thus, miner's revenue does not increase.

We will next show that whenever the miner replaces an included 
real bid with a fake bid, the miner's utility does not increase.
Notice that if the fake bid is confirmed, it costs the $(k+1)$-th price among the included bids.
If the fake bid $b$ is unconfirmed, it costs $\gamma \cdot b$, since its true value is zero.
Let $\bfe = (e_1,\ldots,e_B)$ be an arbitrary bid vector where $e_1 \geq \cdots \geq e_B$.
The bids $\bfe$ may or may not be the highest bids and some of them may be fake.
Suppose the miner replaces $e_i$ with the fake bid $f$.
There are four possible cases.
\begin{enumerate}
\item $i \leq k$ and $f$ is among the top $k$
\item $i \leq k$ and $f$ is not among the top $k$
\item $i > k$ and $f$ is among the top $k$
\item $i > k$ and $f$ is not among the top $k$
\end{enumerate}
Henceforth, no matter which case, let $e'_1 \geq \cdots \geq e'_B$
denote the included bids after replacing $e_i$ with $f$.
Let $\mu := \gamma(e_{k+1} + \cdots + e_B)$
be the miner's revenue before replacing $e_i$ with the fake bid $f$,
and let $\mu' := \gamma(e'_{k+1} + \cdots + e'_B)$
be the miner's revenue after replacing $e_i$ with the fake bid $f$.

\Hao{removed: it must be $f \geq e_{k+1}$. we don't need this condition.}
In the first case, for each bid among the top $k$, the probability that it is confirmed is $\gamma/c$, so the extra cost for the miner is \[
    (1 - \frac{\gamma}{c}) \cdot \gamma \cdot f + \frac{\gamma}{c} \cdot e_{k+1} \geq 0 
\]
where the first term captures the expected cost if $f$ is not confirmed, and the second term captures the expected cost if $f$ is confirmed.
In this case, it is easy to see that 
$e'_{k + j} = e_{k + j}$ for any $j > 0$, and thus $\mu' = \mu$. 
Therefore, miner's expected utility does not increase.

In the second case, $f$ must be unconfirmed, so 
it costs the miner $\gamma\cdot f$ additionally to inject $f$.
Also, since $f$ is not among the top $k$, it must be $f \leq e_{k+1}$.
Because $e_{k+1}$ (or another bid equal to $e_{k+1}$) 
is among the new top $k$, the miner's revenue becomes $\gamma(e_{k+2} + \cdots e_B + 
f)$. \elaine{note: i changed this expression}
Thus, the miner revenue decreases by $\gamma(e_{k+1} - f)$. 
\elaine{i changed this expression}
Including the extra cost $\gamma\cdot f$, miner's utility actually decreases by \[
\gamma(f + e_{k+1} - f) \geq \gamma \cdot e_{k+1}.
\]

In the third case, it must be $f \geq e_k$.
Moreover, $e_k$ 
becomes the largest definitely unconfirmed bid, 
so miner's extra cost is 
$(1 - \frac{\gamma}{c}) \cdot \gamma \cdot f 
+ \frac{\gamma}{c} \cdot e_{k}\geq \gamma \cdot e_k$. 
However, it is not hard to see that $\mu' - \mu \leq \gamma \cdot e_k$.
Therefore, overall, the miner's expected utility does not increase.

In the fourth case, $f$ is unconfirmed, so it costs the miner $\gamma \cdot f$ additionally.
If $f \leq e_i$, then it must be $e'_{k+j} \leq e_{k + j}$ for any $j  > 0$.
Therefore, we have $\mu' \leq \mu$,
and the miner's revenue does not increase. 
Otherwise, if $f > e_i$, we have $(e'_{k+1} + \cdots + e'_B) - (e_{k+1} + \cdots + e_B) = f - e_i$.
Thus, the increase in miner revenue is $\gamma(e'_{k+1} + \cdots + e'_B) - \gamma(e_{k+1} + \cdots + e_B) \leq \gamma(f - e_i) \leq \gamma \cdot f$, which is strictly smaller than the extra cost.
Thus the miner's expected utility does not increase.

Finally, because $\bfe$ is an arbitrary vector which may include fake bids already, 
we conclude that the miner's expected utility does not increase even if there are multiple fake bids.

\ignore{
Let $i' \in [B]$ be the rank of $f$ after replacing $e_i$ with $f$.
For any $t \in [B - k]$, 
conditioned on $T = t$, 
there are the following cases:
\begin{itemize}[leftmargin=5mm]
\item 
if $i > k+t$ and $i' > k+t$, then the miner's expected revenue is unaffected 
after replacing $e_i$ with $f$;
\item  
if $i \leq k + t$ and $i' \leq k+t$, 
then the miner's revenue increases by at most $\gamma \cdot \max(0, f - e_i) \leq 
\gamma f$;
\item 
if $i \leq k + t$ and $i' > k+t$, the miner's revenue cannot increase;
\item 
if $i > k+t$ and $i' \leq k+t$, 
the miner's revenue increases by 
$$\gamma \cdot 
\left((e_{k+1} + \ldots + e_{k + t - 1} + f) - (e_{k+1} + \ldots + e_{k + t})\right)
\leq \gamma \cdot f
$$
\end{itemize}
Since the above holds for any $t \in [B-k]$, 
the miner's increase in revenue is not enough to outweight its cost $\gamma f$. 
}

\paragraph{$c$-SCP.}
A coalition of a miner and up to $c$ user(s) has three kinds of strategies to deviate from the honest behavior: 
the miner may not include the highest bids, 
the miner can inject fake bids, and some of 
the user(s) can bid untruthfully.
Let $C$ be the set of colluding users, where $|C| \leq c$.
Without loss of generality, we assume the coalition prepares the block in the following order.
\begin{enumerate}
\item The miner chooses the included bids arbitrarily. We may
imagine that there are infinitely many $0$-bids that are ``for free''
and the miner can choose from these as well.
\item The miner replaces some of the included real bids (not including the users in $C$) with fake bids.
Without loss of generality, we may assume that all injected fake bids are 
included in the block.
\item 
A subset of users in $C$ change their bids and bid untruthfully. 
\end{enumerate}
\elaine{removed: C included assumption, does not need to be explicit?}
We now show that the joint utility of the coalition does not increase after each step.

\paragraph{The first step of $c$-SCP.}
We may imagine that the miner deletes  
the real bids one by one, and then includes the highest among the remaining bids. 
We argue that after deleting each bid, 
the coalition's expected utility does not increase.
Let $\bfe = (e_1, \ldots, e_m)$ be the current bid vector where $e_1 \geq \cdots \geq e_m$, 
which may already have some bids
deleted from the real bid vector. 
Suppose the miner deletes a bid from $\bfe$.
If the deleted bid is not among the top $B$, then it does not affect the coalition's utility.
If the deleted bid is ranked between $[k+2, B]$, then the miner's utility cannot increase
and no user's utility increases. 

If the deleted bid is among the top $k$, 
then the miner's revenue decreases by at least $\gamma \cdot (e_{k+1} - e_{k+2})$.
Every user who was among the top $k$ before and after
this deletion has $\frac{\gamma}{c} \cdot (e_{k+1} - e_{k+2})$
increase in expected utility.
The bid $e_{k+1}$ (or another bid of equal value) 
now becomes among the top $k$, and its increase in expected utility
is also $\frac{\gamma}{c} \cdot (e_{k+1} - e_{k+2})$.
The utility of the user who got deleted decreases.
All other users' utilities are unaffected.
Thus, as long as the number of colluding users $|C| \leq c$,  
the increase in utility for users in $C$ is upper bounded by 
$\gamma \cdot (e_{k+1} - e_{k+2})$. Overall, the coalition does not gain in expected utility. 

If the deleted bid is ranked $k+1$ in $\bfe$, 
the miner's decrease in revenue 
is at least $\gamma \cdot (e_{k+1} - e_{k+2})$.
For each user among the top $k$ in $\bfe$,  
its increase in utility is $\frac{\gamma}{c} \cdot (e_{k+1} - e_{k+2})$.
The utility of all other users are unaffected.
Thus, as long as $|C| \leq c$, 
the increase in utility for users in $C$ is upper bounded by
$\gamma \cdot (e_{k+1} - e_{k+2})$. Overall, the coalition does not gain in expected utility.
\elaine{i rewrote the first step.}
\Hao{Looks good to me}

\ignore{
From the proof of MIC, we have seen that miner's utility does not increase after this step.
Thus, we only need to argue how the joint utility changes when the users' utilities increase.
Let $\bfc = (c_1,\ldots,c_B)$ be the highest $B$ bids in the mempool and $\bfd = (d_1,\ldots,d_B)$ be the bids chosen by the miner, where $c_1 \geq \cdots \geq c_B$ and  $d_1 \geq \cdots \geq d_B$.
Since $\bfc$ contains the highest $B$ bids, it must be that $c_i \geq d_i$ for all $i$.
For any user $i$, there are two possible cases that user $i$'s utility can increase: 
\begin{enumerate}
\item User $i$'s bid is not among the top $k$ before injecting and is among the top $k$ afterward.
\item User $i$'s bid is among the top $k$ before and after injecting, but its payment reduces afterward.
\end{enumerate}
For the first case, when it is not among the top $k$, its utility is zero.
For each bid among the top $k$, the probability that it is confirmed is $\gamma/c$.
Thus, when it is among the top $k$, its utility is $(v_i - d_{k+1})\cdot \gamma / c$.
In other words, user $i$'s utility increases by $\gamma (v_i - d_{k+1}) / c$.
Because $v_i$ is unconfirmed in $\bfc$, we have $v_i \leq c_{k+1}$; because $v_i$ is confirmed in $\bfd$, we have $d_k \leq v_i$.
Thus, we have $d_{k+1} \leq d_k \leq v_i \leq c_{k+1}$, and 
$c_{k+1} - d_{k+1} \geq v_i - d_{k+1}$. 
Recall that miner's utility is $\gamma(c_{k+1} + \cdots + c_t)$ in the honest case and $\gamma(d_{k+1} + \cdots + d_t)$ in the strategic case.
Because $c_{k+1} - d_{k+1} \geq v_i - d_{k+1}$ and $c_i \geq d_i$ for all $i$, miner's utility decreases at least by $\gamma (v_i - d_{k+1})$.
For the second case, to make the payment lower, it must be $c_{k+1} > d_{k+1}$.
Thus, user $i$'s utility increases by $\gamma (c_{k+1} - d_{k+1}) / c$, while miner's utility decreases at least by $\gamma (c_{k+1} - d_{k+1})$ because $c_i \geq d_i$ for all $i$.
To sum up, in both cases, the lost of miner is at least $c$ times as more as an individual user's gain.
Thus, even the miner colludes with $c$ users, their joint utility does not increase.
}

\paragraph{The second step of $c$-SCP.}
Let $\bfe$ be an initial bid vector which may already have some bids deleted,
and some real bids replaced with fake bids.
Suppose the miner replaces some $e_i$ where $i \in [B]$ with a fake bid $f$.
Due to the proof of MIC, the miner's utility does not increase after the second step.
If no user's expected utility increases after replacing a real bid with a fake one, then the coalition's
expected utility cannot increase.
Therefore, we only need to consider the cases in which there exists some user 
whose expected utility increases after replacement.
Recall that in the proof of MIC, we divided into 
four possible cases. In cases 1 and 3, no user's expected utility would increase.
Below, we focus on cases 2 and 4. 

In case 2, $i \leq k$ and $f$ is not among the top $k$.
In this case, for every user $j \in [k]$ and $j\neq i$, its expected
utility increases by $\frac{\gamma}{c}(e_{k+1} - \max(f, e_{k+2}))$.
The bid $e_{k+1}$ now becomes the top $k$, and 
its utility also increases by  $\frac{\gamma}{c}(e_{k+1} - \max(f, e_{k+2}))$.
The bid $e_i$'s expected utility decreases, and all other users' expected utilities are unaffected.
However, the miner's utility decreases by at least 
$\gamma \cdot (e_{k+1} - \max(f, e_{k+2}))$.
Therefore, as long as $|C| \leq c$, the coalition's expected utility does not increase.

In case 4, $i > k$ and $f$ is not among the top $k$.
For some user's utility to increase, it must be that 
$i = k+1$ and $f < e_{k+1}$, i.e., the payment price must have decreased to $\max(f, e_{k+2})$.
Similarly, for every user $j \in [k]$, its increase 
in expected utility is $\frac{\gamma}{c}(e_{k+1} - \max(f, e_{k+2}))$, and every other user's utility
is unaffected. The miner's decrease in utility is 
at least $\gamma \cdot (e_{k+1} - \max(f, e_{k+2}))$.
Therefore, as long as $|C| \leq c$, the coalition's expected utility does not increase.
\elaine{i rewrote this proof to be more clear}
\Hao{Looks good to me}

\ignore{Therefore, 
we only need to argue how the joint utility changes when the users' utilities increase.
Let $\bfe = (e_1,\ldots,e_B)$ be an arbitrary bid vector where $e_1 \geq \cdots \geq e_B$.
The bids $\bfe$ may or may not be the highest bids and some of them may be fake.
We are going to show that whenever the miner replaces a single bid $e_j$ with its fake bid $f$, miner's lost is at least $c$ times as more as an individual user's gain.
For any user $i$, there are two possible cases that user $i$'s utility can increase: 
\begin{enumerate}
\item User $i$'s bid is not among the top $k$ before injecting and is among the top $k$ afterward.
\item User $i$'s bid is among the top $k$ before and after injecting, but its payment reduces afterward.
\end{enumerate}
In the first case, the following three conditions must hold: 1) $i$'s bid is $e_{k+1}$ in $\bfe$; 2) $j \leq k$; 3) $f$ is not among the top $k$.
Thus, user $i$'s utility increases by $\gamma (v_i - \max(f, e_{k+2})) / c$.
However, from the proof of MIC, we have seen that miner's utility decreases by $\gamma (e_{k+1} - e_{k+2})$ when $j \leq k$ and $f$ is not among the top $k$.
Recall that user $i$ bids truthfully at this step, we have $v_i = e_{k+1}$.
Consequently, miner loses $\gamma (e_{k+1} - e_{k+2})$, while user $i$ only gains $\gamma (e_{k+1} - \max(f, e_{k+2})) / c \leq \gamma (e_{k+1} - e_{k+2}) / c$.
In the second case, the miner must replace $e_{k+1}$ with $f$ such that $f < e_{k+1}$.
Let $e'_1 \geq \cdots \geq e'_B$ be the bid vector after replacing $e_{k+1}$ with $f$.
It must be $e'_z \leq e_z$ for all $z$, so miner's revenue at least decreases by $\gamma(e_{k+1} - e'_{k+1}) = \gamma (e_{k+1} - \max(f, e_{k+2}))$.
If we consider the extra cost of injecting $f$, miner's utility decreases even more.
On the other hand, user $i$'s payment reduces from $e_{k+1}$ to $\max(f, e_{k+2})$,
and its utility increases by $\gamma (e_{k+1} - \max(f, e_{k+2}))/c$.
To sum up, once a user benefits, miner's lost is at least $c$ times as more as an individual user's gain.
Therefore, even the miner colludes with $c$ users, their joint utility does not increase.
}

\paragraph{The third step of $c$-SCP.}
At this step, the colluding users change their bids one by one.
Without loss of generality, we assume the colluding users change their bids in 
an ascending order according to their true values; that is, the 
users with lower true values change their bids first.
Let $\bfe = (e_1,\ldots,e_B)$ be an arbitrary bid vector included
in the block, where $e_1 \geq \cdots \geq e_B$.
The bids $\bfe$ may or may not be the highest bids and some of them may be fake or overbidding bids.
Note that if any user whose bid is not 
included in the block changes its bid, the coalition's joint utility cannot increase.
Thus, we may assume that a colluding user $i$ included in the block changes its bid.
Since the colluding users change their bids one by one, that means user $i$ has not changed its bid before, and $e_i$ must be the user's true value.
Henceforth, we often use $e_i$ to refer to the user that placed this bid without risking ambiguity.
We will show that the joint utility of miner and all users in $C$ would not increase 
if $e_i$ changes its bid to $b_i$.
\Hao{I rewrite the narrative here}

When $e_i$ is replaced with $b_i$, there are four possible cases.
\begin{enumerate}[leftmargin=5mm,itemsep=1pt]
    \item $i > k$ and $b_i$ is among the top $k$
    \item $i > k$ and $b_i$ is not among the top $k$
    \item $i \leq k$ and $b_i$ is among the top $k$
    \item $i \leq k$ and $b_i$ is not among the top $k$
\end{enumerate}
Henceforth, no matter which case, let $e'_1 \geq \cdots \geq e'_B$
denote the included bids after replacing $e_i$ with $b_i$.

In the first case, $e_k$ becomes the largest unconfirmed bid.
$e_i$'s utility was zero before, and it becomes 
$(1 - \frac{\gamma}{c}) \cdot \gamma \cdot (e_i - b_i) + \frac{\gamma}{c} \cdot (e_i - e_k)$
afterwards. Since $b_i \geq e_k \geq e_i$, 
the decrease in utility is at least 
$\gamma (e_k - e_i)$. Besides $e_i$, all other users' utilities cannot increase.
The miner's revenue increases by  
at most $\gamma (e_k - e_{k+1}) \leq \gamma (e_k - e_i)$. 
Thus, the coalition's expected joint utility does not increase.
\ignore{
Because $e_k \geq e_{k+1}$, other users' payments never decrease and their utilities never increase.
Miner's on-chain revenue was $\gamma (e_{k+1} + \cdots + e_{k+k'})$ before, and it becomes $\gamma (e'_{k+1} + \cdots + e'_{k+k'})$ afterward.
However, $\sum_{z = k+1}^B e'_{z} - \sum_{z = k+1}^B e_{i} = e_k - v_i$.
Thus, miner's revenue increases at most by $\gamma(e_k - v_i)$, so the joint utility does not increase.
}

In the second case, we have $\sum_{i = 1}^{k'} e'_{k+i} - \sum_{i = 1}^{k'} e_{k+i} = b_i - e_i$.
Thus the miner's change in revenue is $\gamma \cdot (b_i - e_i)$.
For the users who are among the top $k$, their payment become $e'_{k+1}$.
There are two subcases.
\begin{itemize}[leftmargin=5mm,itemsep=1pt]
\item 
If $e_i$ is overbidding ($b_i > e_i$), 
it must be $e'_{k+1} \geq e_{k+1}$, so the utilities of the users among top $k$ do not increase.
$e_i$'s utility reduces from zero to $\gamma (e_i - b_i) < 0$.
Since the miner's revenue increases at most by $\gamma (b_i - e_i)$, 
the coalition's joint expected utility does not increase.
\item 
If $e_i$ is underbidding ($b_i < e_i$), it must be $e_{k+1} \geq e'_{k+1}$.
In this case, it must be $e_j \geq e'_j$ for all $j$.
Further, $e_i$'s utility is still zero; and  
the miner's revenue decreases at least 
$\gamma(e_i - b_i) \geq \gamma \cdot (e_{k+1} - e'_{k+1})$.
The utility of each user among top $k$ increases only by $\gamma(e_{k+1} - e'_{k+1})/c$.
All other users' utilities are unaffected.
Thus, even if the miner colludes with $c$ users, their joint expected utility does not increase.
\end{itemize}

In the third case, $e_i$'s utility and miner's utility do not change individually. 
Moreover, all other users' utilities do not change either, because $e_{k+1} = e'_{k+1}$.
Thus, the joint utility of the coalition does not change.

In the fourth case, 
the miner's revenue reduces from $\gamma(e_{k+1} + \cdots + e_B)$ to $\gamma(e_{k+2} + \cdots + e_B 
+ b_i) = \gamma(e_{k+1} - b_i)$.
We now consider each user's change in utility.
\begin{itemize}[leftmargin=5mm,itemsep=1pt]
\item 
Since the user $e_i$'s bid is replaced with $b_i$
and now becomes unconfirmed, its utility reduces from $\gamma(e_i - e_{k+1})/c$ to zero.
Also, note that $e_i$ (who now bids $b_i$) must belong to $C$.
\item 
For anyone that was among top 
$k$ before and after the replacement, its new payment is $\max(b_i, e_{k+2})$ if confirmed.
Thus its expected utility 
increases by $\frac{\gamma}{c} \cdot (e_{k+1} - \max(b_i, e_{k+2}))$.
\Hao{This case should be increases}
\item 
Now consider the user that bids $e_{k+1}$. This is the most complicated case.
Let $v$ be this user's true value. 
If this user is a coalition member, 
and its bid $e_{k+1}$ was previously changed,  
then we know that $v \leq e_i$ since 
we are changing the coalition users' bids 
in ascending order of their true value. 
In all other cases, $v = e_{k+1} \leq e_i$.  

After the replacement of $e_i$ with $b_i$, 
conditioned on not being confirmed, the user's utility 
does not change since previously it was always unconfirmed.
Conditioned on being confirmed, the user's utility increases by 
at most $v - \max(b_i, e_{k+2}) + \max(0, \gamma \cdot (e_{k+1} - v))$, where 
the part $\max(0, \gamma \cdot (e_{k+1} - v))$ is because the user might be overbidding, i.e., 
$v < e_{k+1}$, and before the replacement it was always unconfirmed. 
Therefore, the user's expected gain in 
utility is $\frac{\gamma}{c} (v - \max(b_i, e_{k+2}) + \max(0, \gamma \cdot (e_{k+1} - v)))$.

\item 
For every other user, its utility is unaffected.
\end{itemize}

Now, suppose that 
the user bidding $e_{k+1}$ belongs to the coalition.
We know that $e_i$, whose bid is being changed to $b_i$, belongs to the coalition too.
The joint utility of $e_{k+1}$ and $e_i$ increases by
$\frac{\gamma}{c} (v - \max(b_i, e_{k+2})
 + \max(0, \gamma \cdot (e_{k+1} - v))) - \frac{\gamma}{c} (e_i - e_{k+1})$.
If $e_{k+1} \geq v$, their increase in utility is upper bounded by  
\begin{align*}
& \frac{\gamma}{c} \big[ v - \max(b_i, e_{k+2})
 + \gamma \cdot (e_{k+1} - v)
\big] - \frac{\gamma}{c} (e_i - e_{k+1}) \\
= & \frac{\gamma}{c} \big[v - \max(b_i, e_{k+2}) 
 + \gamma \cdot (e_{k+1} - v)
- e_i + e_{k+1}\big]\\
= & 
\frac{\gamma}{c} \big[e_{k+1} - \max(b_i, e_{k+2})
+ \gamma \cdot (e_{k+1} - v)
 - (e_i - v) \big]\\
\leq & 
\frac{\gamma}{c} \big[e_{k+1} - \max(b_i, e_{k+2})
+ \gamma \cdot (e_{k+1} - v)
 - (e_{k+1} - v) \big]\\
\leq &  
\frac{\gamma}{c} (e_{k+1} - \max(b_i, e_{k+2}))
\end{align*}
If $e_{k+1} < v$, their increase in utility is upper bounded by  
$
\frac{\gamma}{c} (v - \max(b_i, e_{k+2}) - e_i + e_{k+1})
\leq \frac{\gamma}{c} (e_{k+1} - \max(b_i, e_{k+2}))$, too.

All other users' expected utilities are either unaffected 
or increases by at most 
$\frac{\gamma}{c} (e_{k+1} - \max(b_i, e_{k+2}))$.
Therefore, as long as $|C| \leq c$, the coalition's joint utility cannot increase. 
Suppose that
the user bidding $e_{k+1}$ does not belong to the coalition.
This case is easier since all users' expected utilities
cannot increase by more than $\frac{\gamma}{c} (e_{k+1} - \max(b_i, e_{k+2}))$.
Therefore, as long as $|C| \leq c$, the coalition's joint utility cannot increase.
\Hao{The proof looks good to me}

\ignore{
For those users who were among the top $k$ in $\bfe$ (except user $i$), their payment changes from $e_{k+1}$ to $\max(b_i, e_{k+2})$, so each utility increases by $\gamma(e_{k+1} - \max(b_i, e_{k+2}))/c$.
Finally, there is a lucky user $r$ who bids $e_{k+1}$ now becomes $e'_k$ and has a chance to be confirmed.
This case is more complicated, since it depends on whether $e_{k+1}$ is underbidding to user $r$.
Let $v_r$ be the true value of user $r$, and consider the following two cases.
\begin{itemize}
\item
Suppose $e_{k+1} \geq v_r$.
User $r$'s utility was $\gamma(v_r - e_{k+1})$ before, while it becomes \[
    (1-\frac{\gamma}{c}) \cdot \gamma (v_r - e_{k+1}) + \frac{\gamma}{c} \cdot (v_r - \max(b_i, e_{k+2})) 
\]
afterward.
Thus, user $r$'s utility increases at most by $\gamma(e_{k+1} - \max(b_i, e_{k+2})) / c$.
The joint utility of the coalition decreases at least by
\begin{equation}\label{eq:burning2price}
    \frac{\gamma}{c}(v_i - e_{k+1}) + \gamma(e_{k+1} - \max(b_i, e_{k+k'+1})) - (c-1)\frac{\gamma}{c}(e_{k+1} - \max(b_i, e_{k+2})) -\frac{\gamma}{c} (e_{k+1} - \max(b_i, e_{k+2})),
\end{equation}
where the first term is user $i$'s lost, the second term is miner's lost, the third term is the gain of users who were among the top $k$ (except user $i$), and the fourth term is user $r$'s gain.
Eq.(\ref{eq:burning2price}) can be simplified as \[
    \frac{\gamma}{c}(v_i - e_{k+1}) - \gamma \cdot \max(b_i, e_{k+k'+1}) + \gamma \cdot \max(b_i, e_{k+2}).
\]
Because $e_{k+2} \geq e_{k+k'+1}$, we have $\gamma \cdot \max(b_i, e_{k+2}) \geq \gamma \cdot \max(b_i, e_{k+k'+1})$.
Besides, in the fourth case, we have $v_i = e_j \geq e_k \geq e_{k+1}$.
Thus, Eq.(\ref{eq:burning2price}) must be non-negative, which means the joint utility does not increase.

\item
Suppose $e_{k+1} < v_r$.
That means user $r$ underbids, so user $r$ had changed its bid before user $i$ does.
Recall that we assume the colluding users change their bids in an ascending order according to their true values, so we have $v_i \geq v_r$.
In this case, user $r$'s utility was zero before, while it becomes $\gamma(v_r - \max(b_i, e_{k+2})) / c$ afterward.
The joint utility of the coalition decreases at least by
\begin{equation}\label{eq:burning2price2}
    \frac{\gamma}{c}(v_i - e_{k+1}) + \gamma(e_{k+1} - \max(b_i, e_{k+k'+1})) - (c-1)\frac{\gamma}{c}(e_{k+1} - \max(b_i, e_{k+2})) - \frac{\gamma}{c}(v_r - \max(b_i, e_{k+2})).
\end{equation}
Eq.(\ref{eq:burning2price2}) can be simplified as \[
    \frac{\gamma}{c}(v_i - v_r) - \gamma \cdot \max(b_i, e_{k+k'+1}) + \gamma \cdot \max(b_i, e_{k+2}).
\]
Because $v_i \geq v_r$, Eq.(\ref{eq:burning2price2}) must be non-negative, which means the joint utility does not increase.
\end{itemize}
}



\section{Randomness is Necessary for Weak Incentive Compatibility}
\label{sec:lb-rand-weak}

Recall that 
when $c = 1$, our burning 2nd price auction becomes deterministic; but for all 
$c > 1$, the mechanism is randomized. 
In this section, we show that 
the randomness is in fact necessary 
for $c \geq 2$. To state this impossibility result, 
we first need to introduce a new notion 
that captures ``non-degenerate'' mechanisms,
that is, 
we consider mechanisms that sometimes confirm $2$ or more transactions: 

\begin{definition}[2-user-friendly]
    We call a mechanism is \emph{2-user-friendly} if there exists a bid vector $\bfb$ such that $x_i(\bfb) = x_j(\bfb) = 1$ for some $i \neq j$.
\end{definition}

We prove 
the following impossibility result --- throughout this section, we will
assume that $\gamma = 1$ (also called weak incentive compatibility), 
since this makes our impossibility result stronger. 
The same impossibility result trivially extends to the case when $\gamma < 1$ as well. 

\begin{theorem}\label{theorem:twoweakSCPwithburn}
	Suppose the block size is finite. 
	Let $(\bfx, \bfp,\mu)$ be a deterministic mechanism.
	If $(\bfx, \bfp,\mu)$ is 2-user-friendly, then it cannot achieve weak 
UIC and $2$-weak-SCP at the same time.
\end{theorem}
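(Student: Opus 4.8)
The plan is to assume, for contradiction, that a deterministic, $2$-user-friendly mechanism $(\bfx,\bfp,\mu)$ is simultaneously weak UIC and $2$-weak-SCP, and then exhibit a bid vector in which more than $B$ transactions are all confirmed, contradicting the finite block size. The first step is to recover the classical structure. Since the mechanism is deterministic and weak UIC, Fact~\ref{fct:myerson-weakuic} lets me invoke Myerson's Lemma, so $\bfx$ is monotone and every confirmed bid pays the minimum price at which it would still be confirmed. Combining this with $2$-weak-SCP and $2$-user-friendliness, I would establish three structural facts: all confirmed bids pay a common \emph{universal payment} $p$ (Lemma~\ref{lemma:samePayment}); the confirmed set is exactly a prefix of the sorted bids, i.e.\ the mechanism confirms the highest bids (Lemma~\ref{lemma:ordered}); and the universal payment is at least the top \emph{unconfirmed} bid, so that any bid strictly exceeding $p$ must be confirmed (Lemma~\ref{lemma:unconfirmedPayment}).

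Next I would quantify how sensitively the universal payment and the miner revenue react to a single bidder's move; this is where the new cost model and $2$-weak-SCP do the real work. Four bounds are needed. First, a confirmed user may shift its bid within the confirmed region without changing the miner revenue, and $2$-weak-SCP upgrades this to: if two bids are confirmed and one of them strictly exceeds $p$, then the other can move freely, staying confirmed, without disturbing either the first bid's status or the payment (Lemma~\ref{lemma:confirmInvariant}). Second, moving any single bid by $\Delta$ changes the miner revenue by at most $\Delta$, since under $1$-strict utility the marginal cost to a colluding user of shifting its bid is at most $\Delta$ (Lemma~\ref{lem:minerutilkto0}). Third, raising one bid from $0$ to $\Delta$ can raise the universal payment by at most $\Delta/2$ (Lemma~\ref{lemma:paychangeslow}); otherwise a coalition of the miner with two confirmed users could profitably zero-out that bid, as their combined savings (more than $\Delta$) would exceed the miner's loss (at most $\Delta$ by Lemma~\ref{lem:minerutilkto0}). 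Fourth, dropping one bid from $b_k$ to $0$ raises the payment by at most $b_k$, else the miner could inject a fake bid of value $b_k$ and the single paying user would save more than the miner's cost $b_k$ (Lemma~\ref{lemma:paychangeslow2}).

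With these rate bounds in hand, I would run the staging argument of Figure~\ref{fig:proofroadmap}. I may assume $m>B$ by appending enough $0$-bids ``for free.'' Using $2$-user-friendliness and the structural lemmas I first exhibit a bid vector with two confirmed bids $b_1,b_2$ that both lie \emph{strictly} above the payment $p$; showing strictness is a small but genuine step, since $2$-user-friendliness alone only guarantees two confirmed bids. I then raise $b_1,b_2$ to a large anchor value $\Gamma$ via Lemma~\ref{lemma:confirmInvariant}; drop every remaining bid to $0$, one at a time, using Lemma~\ref{lemma:paychangeslow2} so the payment grows only slowly relative to $\Gamma$ and the two anchors remain confirmed; and finally raise the remaining bids back up one by one. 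By Lemma~\ref{lemma:paychangeslow} each such bid grows twice as fast as the payment, so it eventually overtakes $p$ and, by Lemma~\ref{lemma:unconfirmedPayment}, becomes confirmed, while $\Gamma$ keeps the anchors confirmed throughout. Iterating over all users yields a configuration in which all $m>B$ bids are confirmed simultaneously, the desired contradiction; this is the only place the finite-block-size hypothesis is used.

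The main obstacle I anticipate is the quantitative control of the universal payment, concentrated in Lemma~\ref{lemma:paychangeslow} and its repeated use. The factor $\Delta/2$ (rather than $\Delta$) is exactly what makes the staging argument close, since each revived bid must outrun the rising payment, and proving it requires carefully setting up a two-user-plus-miner deviation and charging the new $1$-strict cost correctly while leaning on the miner-revenue bound of Lemma~\ref{lem:minerutilkto0}. A secondary but persistent difficulty is the cumulative bookkeeping that guarantees the two anchor bids stay confirmed and that the payment never climbs enough to unconfirm them across the entire sequence of moves; this needs $\Gamma$ chosen large relative to all other bids and the payment increments tracked additively over all steps.
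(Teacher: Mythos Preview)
Your proposal is correct and follows essentially the same approach as the paper: the same structural lemmas (Lemmas~\ref{lemma:samePayment}, \ref{lemma:ordered}, \ref{lemma:unconfirmedPayment}), the same four rate-of-change bounds (Lemmas~\ref{lemma:confirmInvariant}, \ref{lem:minerutilkto0}, \ref{lemma:paychangeslow}, \ref{lemma:paychangeslow2}), the same preliminary step of securing two confirmed bids strictly above the payment (the paper isolates this as Lemma~\ref{lemma:twointervals}), and the same staging argument of Figure~\ref{fig:proofroadmap} culminating in more than $B$ confirmed bids. You have also correctly flagged the two places requiring care---the $\Delta/2$ bound in Lemma~\ref{lemma:paychangeslow} and the cumulative tracking that keeps the anchors confirmed---which the paper handles by choosing $\Gamma$ on the order of $2^{B}\cdot|\bfb|_1\cdot\max(m,B+1)$ and iterating the payment bounds geometrically.
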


We stress that the $2$-user-friendly restriction is in fact necessary for 
the above impossibility to hold. In particular, 
in Appendix~\ref{section:solitary}, we give a deterministic 
mechanism 
that always confirms only one transaction, and satisfies weak UIC, weak MIC, and 
$c$-weak-SCP for any $c$. 
\elaine{double check this statement} \Hao{Isn't it has been proven?}
Moreover, in Appendix~\ref{sec:solitarypostedprice}, we additionally show
that the finite block size requirement is also necessary for the above impossibility
to hold.

We presented a roadmap of the proof of Theorem~\ref{theorem:twoweakSCPwithburn} 
in Section~\ref{sec:roadmap-lb-weak}. Therefore, we now directly
jump to the detailed proof.
To prove Theorem~\ref{theorem:twoweakSCPwithburn}, we first 
prove that Myerson's lemma still holds for any 
deterministic,  weakly UIC mechanism.
\begin{fact}
Myerson's lemma holds for any deterministic, weakly UIC mechanism.
\label{fct:myerson-weakuic}
\end{fact}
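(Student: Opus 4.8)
The plan is to show that, for \emph{deterministic} mechanisms, weak UIC is in fact equivalent to ordinary UIC; once this equivalence is in hand, Myerson's Lemma (Lemma~\ref{lem:myerson}), which is already established for UIC mechanisms, immediately yields monotonicity of $\bfx$ together with the critical-price payment formula. Since UIC trivially implies weak UIC (shrinking a deviator's utility can only make honest bidding more attractive, and the honest action's utility is unchanged), the entire content of the fact is the reverse implication: a deterministic weak UIC mechanism must be UIC.

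First I would record individual rationality. Because every confirmed transaction pays at most its bid, truthful bidding yields nonnegative utility; moreover, since the honest bid $b=v$ involves no overbidding, the standard utility and the $\gamma$-strict utility (with $\gamma=1$) of the honest action coincide and are both $\ge 0$. Next I would compare, for a fixed true value $v$ and an arbitrary deviating bid $b$, the standard utility and the $1$-strict utility of that deviation. These differ only by the penalty $-(b-v)$ charged when the offending bid is unconfirmed and $b>v$; in every other situation they are identical. The crucial step is to exploit determinism: for a deterministic mechanism user $i$'s bid $b$ is either confirmed or not. If it is confirmed, the penalty is never triggered and the two utilities agree; if it is unconfirmed, the standard utility of the deviation is exactly $0$, which by individual rationality is at most the honest utility.

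With these two observations the reverse implication follows by a short case analysis. Suppose some deviation $b$ strictly beat honest bidding under the \emph{standard} utility. If $b\le v$, or if $b>v$ but $b$ is confirmed, then the standard and $1$-strict deviation utilities coincide, so the very same deviation would beat honest bidding under the $1$-strict utility as well, contradicting weak UIC. The only remaining case is $b>v$ with $b$ unconfirmed; but there the standard deviation utility equals $0$, which is at most the honest utility, so it cannot be strictly larger. Hence no deviation is profitable under the standard utility, i.e.\ the mechanism is UIC in the sense of Definition~\ref{defn:uic}, and Lemma~\ref{lem:myerson} applies verbatim.

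The delicate point --- and precisely the reason the statement is restricted to deterministic mechanisms --- is the use of determinism above. For a randomized mechanism a deviating bid $b>v$ may be confirmed with probability strictly between $0$ and $1$, so neither ``the penalty vanishes'' nor ``the standard utility is $0$'' holds; the residual penalty term $(1-x_i(\bfb_{-i},b))\,(b-v)$ cannot be absorbed, and weak UIC no longer forces the standard overbidding constraint. Thus the argument genuinely breaks in the randomized setting, consistent with the remark preceding the fact, and I expect the handling of this determinism-versus-randomization boundary to be the only real obstacle; everything else is a direct case split over underbids, confirmed overbids, and unconfirmed overbids.
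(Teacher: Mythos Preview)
Your proposal is correct and follows essentially the same approach as the paper: both argue that for a deterministic mechanism the only case where the old and new utilities differ is ``overbid but unconfirmed,'' and there the old utility is exactly $0$, hence never strictly profitable, so weak UIC forces user-DSIC and Myerson applies. One small terminological slip: you conclude the mechanism is UIC in the sense of Definition~\ref{defn:uic}, but your case analysis addresses only untruthful bidding (not fake-transaction injection); the paper carefully distinguishes this narrower property as ``user-DSIC,'' though since Myerson's Lemma requires only that narrower property, your conclusion is unaffected.
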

\begin{proof}
\elaine{maybe just say user-dsic is equivalent to uic and suppress the use of this term?}
Recall that in the definition of UIC or weak UIC, a user's strategy space
involves not only bidding untruthfully, but also injecting fake transactions.
To prove that Myerson's lemma holds for weak UIC, we only care about
bidding untruthfully, and we do not care about injecting fake transactions.
Henceforth, if a mechanism disincentivizes an individual user from overbidding
or underbidding under the {\it old} utility notion, 
we say that it is user-DSIC (short for dominant-strategy-incentive-compatible).
Similarly, 
if a mechanism disincentivizes an individual user from overbidding
or underbidding under the {\it new} utility notion, 
we say that it is weakly user-DSIC.
Clearly, UIC implies user-DSIC and weak UIC implies
weakly user-DSIC. 
Since Myerson's lemma holds for user-DSIC, 
it suffices to show that any {\it deterministic} TFM that is 
weakly user-DSIC must be user-DSIC, too. 

Suppose for the sake of contradiction 
that there is a deterministic TFM that is weakly user-DSIC
but not user-DSIC. 
This means that there is an untruthful bidding strategy
that is profitable under the old utility notion (i.e., $0$-strict utility)
but not profitable any more under the new utility notion (i.e., $1$-strict utility). 
In comparison with the old utility, the only difference 
in the new utility is that ``overbidding but unconfirmed'' is charged 
an additional cost.
Therefore, such an untruthful bidding strategy 
as mentioned above must be overbidding but unconfirmed.
However, we know that under the old utility notion, such an untruthful bidding strategy
results in utility $0$ and thus is not profitable.
Thus the user does not want to 
adopt this strategy even under the old utility.
This leads to a contradiction.
\ignore{
First, observe that even with weak UIC, 
Myerson's Lemma still holds.
This is because no matter whether an 
overbid and unconfirmed transaction comes for free, 
an individual user 
would never want to overbid if the corresponding transaction is not confirmed.
Only when the overbid transaction is confirmed, 
is it possible for the user 
to increase its utility by overbidding.
}
\end{proof}

\ignore{
\elaine{change narrative after restructuring}
The solitary mechanism in Section~\ref{section:solitary} can resist the coalition
of the miner and any number of users, 
however, it confirms 
only one transaction.
It is natrual to ask whether we can design a mechanism 
sometimes confirm more than one user, 
while still achieving $c$-SCP for any $c$ as in the solitary mechanism.
This notion is formalized as follow.
}

\begin{lemma}\label{lemma:confirmInvariant}
    Let $(\bfx, \bfp,\mu)$ be a deterministic mechanism which is weak UIC and $2$-weak-SCP.
    Let $\bfb = (b_1, \ldots, b_m)$ be an arbitrary bid vector, where there exists a user $i$ having a confirmed bid, i.e., $x_i(\bfb) = 1$.
    Then, for any bid vector $\bfb' = (\bfb_{-i}, b'_i)$ such that $x_i(\bfb') = 1$, the followings holds.
    \begin{enumerate}
        \item Miner's revenue does not change; that is, $\mu(\bfb) = \mu(\bfb')$.
        \item For any user $j$, if $x_j(\bfb) = 1$ and $b_j > p_j(\bfb)$, it must be $x_j(\bfb') = 1$ and $p_j(\bfb) = p_j(\bfb')$.
    \end{enumerate}
\end{lemma}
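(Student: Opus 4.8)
The plan is to prove the two assertions in sequence, since the second relies on the first. Throughout, recall that by Fact~\ref{fct:myerson-weakuic} the deterministic, weak UIC mechanism obeys Myerson's Lemma, so the allocation rule is monotone and every confirmed bid pays the minimal price at which it would still be confirmed, holding the other bids fixed. In particular, since $x_i(\bfb)=x_i(\bfb')=1$ and $\bfb,\bfb'$ differ only in user $i$'s bid, user $i$ pays the same threshold price in both, i.e. $p_i(\bfb)=p_i(\bfb')$. A second fact I will use repeatedly is that under $\gamma$-strict utility a \emph{confirmed} bid incurs no cheating penalty, so a confirmed user's utility is simply its true value minus its payment, exactly as in the old utility notion.

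For the first assertion ($\mu(\bfb)=\mu(\bfb')$) I would replay the argument of Claim~\ref{clm:inconsequentialbidchange}, which only needs weak UIC and $1$-weak-SCP (a special case of $2$-weak-SCP, since $2$-SCP already covers coalitions of the miner with a single user). Suppose toward a contradiction that $\mu(\bfb')>\mu(\bfb)$. Consider the coalition of the miner and user $i$ in a world where everyone bids $\bfb$ truthfully, so user $i$'s true value is $b_i$. Having user $i$ strategically switch its bid to $b'_i$ keeps it confirmed, leaves its payment unchanged since $p_i(\bfb)=p_i(\bfb')$, and hence leaves its contribution to the joint utility unchanged, while strictly increasing the miner's revenue; the net change is $\mu(\bfb')-\mu(\bfb)>0$, contradicting $1$-weak-SCP. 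The symmetric case $\mu(\bfb)>\mu(\bfb')$ is identical with the roles of $\bfb$ and $\bfb'$ swapped.

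For the second assertion I would use the miner together with the two users $i$ and $j$ as the coalition, invoking $2$-weak-SCP; note that $c$-SCP only covers coalitions that \emph{contain} the miner, which is why the miner must be included even though its revenue cancels via the first assertion. I first rule out $x_j(\bfb')=0$. The key is to choose $\bfb'$ (not $\bfb$) as the honest world, with true values $b'_i$ for $i$ and $b_j$ for $j$; user $j$ is then unconfirmed and contributes $0$. The coalition deviates by having user $i$ bid $b_i$, moving to $\bfb$, where both $i$ and $j$ are confirmed. Since $\mu(\bfb)=\mu(\bfb')$ and $p_i(\bfb)=p_i(\bfb')$, the net change in joint utility is exactly $b_j-p_j(\bfb)$, which is strictly positive by the hypothesis $b_j>p_j(\bfb)$, contradicting $2$-weak-SCP. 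Hence $x_j(\bfb')=1$. This is the step I expect to be the main obstacle: one must pick the honest world so that the \emph{profitable} deviation is precisely the move that would un-confirm $j$, and the strict-overbid hypothesis $b_j>p_j(\bfb)$ is exactly what makes that deviation a strict gain.

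Finally, with both users confirmed in $\bfb$ and $\bfb'$, I would pin down the payment by two symmetric coalition arguments. Taking $\bfb$ as the honest world and deviating to $\bfb'$, the change in joint utility reduces, after cancelling $\mu$ via the first assertion and $p_i$ via Myerson, to $p_j(\bfb)-p_j(\bfb')$, which must be $\le 0$ by $2$-weak-SCP, giving $p_j(\bfb')\ge p_j(\bfb)$. Swapping the roles of $\bfb$ and $\bfb'$ yields $p_j(\bfb')\le p_j(\bfb)$, and hence $p_j(\bfb)=p_j(\bfb')$, completing the proof. Throughout, I would verify that every deviation considered leaves the deviating members confirmed, so that no $\gamma$-penalty terms ever appear and all the utility bookkeeping stays linear and exact.
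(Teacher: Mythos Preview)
Your proposal is correct and follows essentially the same approach as the paper: use Myerson's Lemma (via Fact~\ref{fct:myerson-weakuic}) to pin down $p_i(\bfb)=p_i(\bfb')$, invoke a miner--user-$i$ coalition for the revenue invariance, then a miner--user-$i$--user-$j$ coalition with $\bfb'$ as the honest world to force $x_j(\bfb')=1$, and finally two symmetric coalition deviations to equate $p_j(\bfb)$ and $p_j(\bfb')$. Your treatment of the payment-equality step is in fact cleaner than the paper's, which states the case split slightly awkwardly; your explicit observation that all deviating members remain confirmed (so no $\gamma$-penalty ever appears) is exactly the point that makes the weak-SCP bookkeeping go through.
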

\begin{proof}
    Because $x_i(\bfb) = x_i(\bfb') = 1$, we know that $p_i(\bfb) = p_i(\bfb')$ by Myerson's lemma.
Recall that there is no cost for overbidding as long as the bid is confirmed. Therefore,
    user $i$'s utility does not change no matter it bids $b_i$ or $b'_i$.
    However, if $\mu(\bfb) \neq \mu(\bfb')$, the miner can sign a side contract to ask user $i$ to bid the price that makes miner's revenue higher, thus violating 2-weak-SCP.
    For example, suppose $\mu(\bfb) < \mu(\bfb')$, 
    then, in case the actual bid vector is $\bfb$ (where everyone's bidding its true value), 
    the coalition of user $i$ and the miner can gain by having user $i$ bid $b'_i$ instead of
    its true value $b_i$.
    Similarly, if $\mu(\bfb) > \mu(\bfb')$, a symmetric argument holds. 
    Thus, it must be $\mu(\bfb) = \mu(\bfb')$.
    
    We next prove that $x_j(\bfb') = 1$ for any user $j$ with $x_j(\bfb) = 1$ and $b_j > p_j(\bfb)$.
    For the sake of reaching a contradiction, suppose that $x_j(\bfb') = 0$.
    We now show that the coalition of the miner, user $i$, and user $j$ 
    can gain if everyone's true value is $\bfb'$.
    Suppose user $i$ were to  bid its true value $b'_i$, user $j$'s utility would be $0$
    since $x_j(\bfb') = 0$.
    Therefore, the 
    coalition is better off having user $i$ bid $b_i$ instead.
    In this case, user $j$'s utility would be $b_j - p_j(\bfb) > 0$.
    Furthermore, as we have shown, $\mu(\bfb') = \mu(\bfb)$, and moreover,
    by Myerson's
    Lemma, user $i$'s payment and utility 
    do not change as long as it bids high enough to be confirmed.
    Thus the coalition gains positively by having user $i$ bid $b_i$ instead of its
    true value $b'_i$.
    This violates $2$-weak-SCP.
    
    Finally, we prove that $p_j(\bfb) = p_j(\bfb')$ for any user $j$ with $x_j(\bfb) = 1$ and $b_j > p_j(\bfb)$.
    Because we have shown $x_j(\bfb) = x_j(\bfb') = 1$, user $j$'s utility is $v_j - p_j(\bfb)$ if user $i$ bids $b_i$, and $v_j - p_j(\bfb')$ if user $i$ bids $b'_i$.
    Suppose for the sake of contradiction that 
     $p_j(\bfb) \neq p_j(\bfb')$.
    There are two cases. First, suppose that 
     $p_j(\bfb) > p_j(\bfb')$. Imagine now that everyone's true value is $\bfb'$.
    In this case, 
    the miner can collude with both user $i$ 
    and user $j$, and have user $i$ bid $b_i$ rather than its true value  
    to increase the coalition's joint utility. This violates $2$-weak-SCP.
    Similarly, we can rule out the casw where 
    $p_j(\bfb) < p_j(\bfb')$ due to a symmetric argument.
\end{proof}

\begin{lemma}\label{lemma:samePayment}
Let $(\bfx, \bfp,\mu)$ be a deterministic mechanism that achieves weak UIC and $1$-weak-SCP.
Then, for all users $i,j$, if $x_i(\bfb) = x_j(\bfb) = 1$, it must be $p_i(\bfb) = p_j(\bfb)$.
In other words, all confirmed users must pay the same price.
\end{lemma}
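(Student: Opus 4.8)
The plan is to argue by contradiction. Suppose $\bfb$ confirms both $i$ and $j$ but $p_i(\bfb) \neq p_j(\bfb)$, and without loss of generality assume $p_i(\bfb) < p_j(\bfb)$. Throughout I would rely on two facts already available in this regime. First, by Fact~\ref{fct:myerson-weakuic}, Myerson's Lemma applies to our deterministic, weak UIC mechanism, so every confirmed user pays exactly its threshold bid (the minimum winning bid given the other bids), and this payment is independent of the user's own bid as long as it stays confirmed. Second, a confirmed transaction never triggers the overbidding penalty of $\gamma$-strict utility, so on the region where a user remains confirmed its weak utility coincides with its ordinary utility; this is what lets the same side-contract arguments as in Claim~\ref{clm:inconsequentialbidchange} go through under weak incentive compatibility. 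Before the main argument I would dispatch the easy case $b_i = b_j$: since the mechanism is identity-agnostic, swapping users $i$ and $j$ leaves the bid vector unchanged, so their outcomes must coincide and $p_i(\bfb) = p_j(\bfb)$ immediately.

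For the remaining case, WLOG $b_i > b_j \ge p_j(\bfb) > p_i(\bfb)$. The idea is to exhibit a miner--user coalition (of size one user, so that only $1$-weak-SCP is needed) that strictly increases its joint utility, contradicting $1$-weak-SCP. The engine is the following invariance, which I would first re-establish directly from weak UIC and $1$-weak-SCP exactly as in Claim~\ref{clm:inconsequentialbidchange}: if a confirmed user changes its bid but remains confirmed, the miner's revenue is unchanged. Using identity-agnosticism I would then symmetrize: lowering user $i$'s bid to $b_j$ produces a profile in which positions $i$ and $j$ carry identical bids, and there anonymity forces the two payments to be equal. The plan is to transport this equality back to $\bfb$ by having the coalition (the miner together with the higher-paying user $j$) recreate the ``symmetric, lower-threshold'' competitive environment for user $j$ --- either by relabelling or excluding user $i$'s stronger bid through the miner's inclusion freedom, or by injecting a fake competitor --- so that user $j$ ends up paying only $p_i(\bfb) < p_j(\bfb)$, while the revenue invariance pins down how the miner's revenue moves. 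Comparing the honest joint utility $\mu(\bfb) + (b_j - p_j(\bfb))$ with the deviating one should then exhibit a strict gain.

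The main obstacle I anticipate is controlling the two bookkeeping terms that a naive deviation leaves uncontrolled: (i) the change in miner revenue when the stronger competitor $b_i$ is removed or replaced, and (ii) the confirmation status --- and hence the cost --- of any injected fake transaction, since a confirmed fake with true value $0$ costs the coalition its full payment and can exactly cancel the gain from user $j$'s reduced payment. The delicate part is to choose the deviation so that the injected or removed bid is provably \emph{unconfirmed} (incurring at most a negligible penalty) while still pulling user $j$'s threshold strictly below $p_j(\bfb)$, and to bound the induced miner-revenue change using only the $1$-weak-SCP invariance rather than the stronger cross-user invariance of Lemma~\ref{lemma:confirmInvariant}, which relies on $2$-weak-SCP and is therefore off-limits here. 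Making these two bounds compatible, so that the net change in joint utility is strictly positive, is where the real work lies.
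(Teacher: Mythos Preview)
Your plan has a genuine gap in the hard case, and it stems from a constraint you have imposed on yourself that the paper does not respect. You flag the cross-user invariance (item~2 of Lemma~\ref{lemma:confirmInvariant}) as off-limits because that lemma is stated under $2$-weak-SCP, and then try to engineer a single-user-coalition deviation whose bookkeeping you admit you cannot close. The paper does not avoid Lemma~\ref{lemma:confirmInvariant}; it uses it directly. Its argument splits on whether at least one of the two confirmed users bids strictly above its payment. If so, repeated applications of Lemma~\ref{lemma:confirmInvariant} together with Myerson let one raise $b_i$ and $b_j$ to a common value while preserving both confirmations and both payments, landing at a vector $\bfc$ with $c_i=c_j$ but $p_i(\bfc)\neq p_j(\bfc)$; then the position-swap deviation --- exactly your easy case --- gives the miner together with the higher-paying user a strict gain. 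If instead both users pay their full bid, say $b_i=p_i(\bfb)>p_j(\bfb)=b_j$, one raises user $j$'s bid to $b_i$; either user $i$ remains confirmed and we are back in the first case, or user $i$ becomes unconfirmed, in which case a swap that re-creates the original $\bfb$ (user $i$ underbids to $b_j$ and the miner places it at position $j$) yields a profitable miner--user~$i$ deviation using only the miner-revenue part of Lemma~\ref{lemma:confirmInvariant}. So the paper's proof does lean on the $2$-weak-SCP hypothesis carried by Lemma~\ref{lemma:confirmInvariant}, which is stronger than what the present lemma's statement advertises; since Lemma~\ref{lemma:samePayment} is only invoked inside Theorem~\ref{theorem:twoweakSCPwithburn} (which already assumes $2$-weak-SCP), this is harmless. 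Your attempt to stay strictly within $1$-weak-SCP leads precisely to the unresolved obstacles you list.

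Two smaller issues. First, your ``WLOG $b_i>b_j$'' after already fixing $p_i<p_j$ is not a symmetry reduction: the cases $(p_i<p_j,\ b_i>b_j)$ and $(p_i<p_j,\ b_i<b_j)$ are not interchangeable, and your sketched move of lowering $b_i$ to $b_j$ does not even make sense in the second. Second, in your easy case, identity-agnosticism as formulated in this paper does not by itself force $p_i(\bfb)=p_j(\bfb)$ when $b_i=b_j$ --- the mechanism is allowed to treat positions differently. What actually works (and what the paper does at the end of its first case) is to use the swap as a \emph{deviation}: the miner reassigns the two equal-bidding users to each other's positions, leaving the bid vector and hence the miner revenue unchanged while the higher-paying user now occupies the lower-paying slot, so the miner--single-user coalition strictly gains, contradicting $1$-weak-SCP.
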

\begin{proof}
Suppose $i$ and $j$ are two confirmed users; that is $x_i(\bfb) = x_j(\bfb) = 1$.
For the sake of reaching a contradiction, we assume that $p_i(\bfb) \neq p_j(\bfb)$.
There are two possible cases:
\begin{enumerate}
\item 
At least one user's bid is higher than its payment; that is, either $b_i > p_i(\bfb)$ or $b_j > p_j(\bfb)$ (or both).
\item 
Both users pay their bids; that is, $b_i = p_i(\bfb)$ and $b_j = p_j(\bfb)$.
\end{enumerate}

We start from the first case.
Without loss of generality, assume $b_i > p_i(\bfb)$.
According to Lemma \ref{lemma:confirmInvariant}, user $j$ can increase its bid without changing user $i$'s confirmation and payment.
Furthermore, by Myerson's lemma, user $j$'s payment should not change.
Thus, we have another bid vector $\bfb' = (\bfb_{-j}, b'_j)$ such that $b_i > p_i(\bfb')$ and $b'_j > p_j(\bfb')$.
Using Lemma \ref{lemma:confirmInvariant} again, we can increase user $i$'s and user $j$'s bid arbitrarily while remaining their confirmation and payment.
Consequently, we have a bid vector $\bfc = (c_1,\ldots,c_m)$ such that user $i$ and user $j$ have the same bid.
Formally, $x_i(\bfc) = x_j(\bfc) = 1$, $c_i > p_i(\bfc) = p_i(\bfb)$, $c_j > p_j(\bfc) = p_j(\bfb)$ and $c_i = c_j$.
Without loss of generality, we assume $p_i(\bfc) > p_j(\bfc)$.
Imagine that the real bid vector is $\bfc$.
In this case, miner's utility is $\mu(\bfc)$, and user $i$'s utility is $c_i - p_i(\bfc)$.
The miner can sign a contract with user $i$, and switch user $i$'s and user $j$'s positions 
in the bid vector.
Since users $i$ and $j$ are bidding the same, the miner's revenue is unaffected
if their positions are switched. On the other hand, user $i$ and user $j$'s payments
will be switched as a result.  \elaine{note: we are dealing with tie-breaking 
implicitly in the modeling. the same approach was used in the 
finite-block size impossibility for strong ic}
Thus, user $i$'s utility has increased to $c_i - p_j(\bfb)$.
This violates $1$-weak-SCP.

Next, we analyze the second case.
Without loss of generality, we assume $b_i = p_i(\bfb) > b_j = p_j(\bfb)$.
By Myerson's lemma, user $j$ can increase its bid without changing its payment.
Thus, user $j$ can increase its bid to $b_i$, 
and we have a bid vector $\bfb' = (\bfb_{-j}, b_i)$.
By Lemma \ref{lemma:confirmInvariant}, miner's revenue should not change, so we have $\mu(\bfb) = \mu(\bfb')$.
If $x_i(\bfb') = 1$, it goes back to the first case, so we assume $x_i(\bfb') = 0$.
Now, imagine that the real bid vector is $\bfb'$.
In this case, miner's utility is $\mu(\bfb') = \mu(\bfb)$, and user $i$'s utility is zero.
However, the miner can sign a contract with user $i$, and ask it to bid $b_j$ instead.
Consequently, the miner prepares a bid vector $\bfb$, where $b_j$ comes from user $i$.
In this case, miner's utility is still $\mu(\bfb') = \mu(\bfb)$, while user $i$'s utility becomes $b_i - p_j(\bfb) > 0$.
This violates $1$-weak-SCP.
\end{proof}

\begin{lemma}\label{lemma:ordered}
Let $(\bfx, \bfp,\mu)$ be a deterministic mechanism that achieves weak UIC and $2$-weak-SCP.
Then, for all users $i,j$, if $x_i(\bfb) = 1$ and $x_j(\bfb) = 0$, it must be $b_i \geq b_j$.
In other words, 
the confirmed bids must be the highest $k$ bids for some $k \in \mathbb{N}$ where
$k$ may be a function of the bid vector. 
\end{lemma}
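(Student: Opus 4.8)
The plan is to argue by contradiction. Suppose there is a bid vector $\bfb$ and users $i,j$ with $x_i(\bfb)=1$, $x_j(\bfb)=0$, yet $b_i < b_j$. Set $p := p_i(\bfb)$. Since user $i$ is confirmed, Myerson's Lemma (which applies to deterministic weak UIC mechanisms by Fact~\ref{fct:myerson-weakuic}) gives $p \le b_i$; moreover $p$ is the threshold price determined by $\bfb_{-i}$ alone, so it is independent of user $i$'s own bid as long as $i$ remains confirmed.

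Next I would raise user $i$'s bid to $b_j$, forming $\bfb' := (\bfb_{-i}, b_j)$, and extract three facts. First, by monotonicity $x_i(\bfb')=1$ and the threshold payment is unchanged, so $p_i(\bfb') = p$. Second — the key move — in $\bfb'$ users $i$ and $j$ bid the identical value $b_j$, so by the identity-agnostic (anonymity) property of the TFM, swapping their roles leaves the vector literally unchanged and forces $x_i(\bfb') = x_j(\bfb')$ and $p_i(\bfb') = p_j(\bfb')$; hence $x_j(\bfb') = 1$ and $p_j(\bfb') = p$. Third, since user $i$ is confirmed in both $\bfb$ and $\bfb'$ and only its own bid changed, Lemma~\ref{lemma:confirmInvariant} gives $\mu(\bfb) = \mu(\bfb')$.

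With these in hand I would exhibit a violation of $2$-weak-SCP. Take the real scenario to be $\bfb$ with everyone truthful, so user $i$'s true value is $b_i$ and user $j$'s is $b_j$, and let the coalition be the miner together with users $i$ and $j$. Under honest play the joint utility is $(b_i - p) + 0 + \mu(\bfb)$. Now let the coalition deviate by having user $i$ overbid to $b_j$, producing $\bfb'$ with the miner running honestly. User $i$ stays confirmed, so its overbid triggers \emph{no} penalty under $\gamma$-strict utility and it still pays $p$, contributing $b_i - p$ as before; user $j$ is now confirmed paying $p$, contributing $b_j - p > 0$ (recall $b_j > b_i \ge p$); and the miner revenue is unchanged at $\mu(\bfb')=\mu(\bfb)$. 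The joint utility thus increases by exactly $b_j - p > 0$, contradicting $2$-weak-SCP. The ``in other words'' clause follows immediately: every confirmed bid weakly exceeds every unconfirmed bid, so the confirmed set is a top prefix of the bids sorted in decreasing order.

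I expect the main obstacle to be the second step, namely certifying that user $j$ becomes confirmed once user $i$'s bid is changed: Myerson monotonicity only governs a user's response to its \emph{own} bid and so cannot directly justify $j$'s flip, and it is precisely the anonymity/identity-agnostic property that bridges this gap after the two bids are equalized. A secondary point requiring care is the interaction with the weak ($\gamma=1$) utility model — the argument hinges on user $i$'s overbid being costless because $i$ remains confirmed, which is exactly the regime in which the weak-utility penalty does not fire.
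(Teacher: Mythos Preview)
Your proposal breaks at the step you yourself flag as the main obstacle. The paper's identity-agnostic property does \emph{not} say that two positions carrying the same bid value must receive the same outcome; it says only that the outcome at each position is a function of the bid vector, independent of which user occupies that position. Position-based tie-breaking is permitted. So in $\bfb' = (\bfb_{-i}, b_j)$, where positions $i$ and $j$ both hold the value $b_j$, it is entirely consistent with identity-agnosticity that $x_i(\bfb') = 1$ while $x_j(\bfb') = 0$. Indeed, the paper's own proofs of this lemma and of Lemma~\ref{lemma:samePayment} explicitly treat precisely this asymmetric-outcome case, so the stronger permutation-equivariance you are invoking is not part of the model.

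The repair is a case split on $x_j(\bfb')$. If $x_j(\bfb') = 1$, your deviation (miner plus users $i,j$, with $i$ overbidding to $b_j$) works essentially as written, except that $p_j(\bfb') = p$ should come from Lemma~\ref{lemma:samePayment} (all confirmed users pay the same) combined with $p_i(\bfb') = p$, not from anonymity. If $x_j(\bfb') = 0$, you need a different maneuver: take the true values to be $\bfb'$, let the miner collude with user $j$ alone, and have the miner swap the positions of $i$ and $j$ when assembling the block. The bid vector as a sequence of values is unchanged, so the miner's revenue is unchanged; but user $j$ now sits at the confirmed slot and earns $b_j - p > 0$, giving a $1$-weak-SCP (hence $2$-weak-SCP) violation. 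This position-swap is exactly what the identity-agnostic property licenses --- it lets the miner reassign who occupies which slot without altering the outcome at that slot --- and it is the piece your argument is missing.
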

\begin{proof}
For the sake of reaching a contradiction, suppose that there exist two users $i,j$ such that $x_i(\bfb) = 1$ and $x_j(\bfb) = 0$, while $b_i < b_j$.
By Myerson's lemma, user $i$ can increase its bid to $b_j$ without changing its confirmation and payment.
Thus, we have a bid vector $\bfb' = (\bfb_{-i}, b_i = b_j)$ such that $x_i(\bfb') = 1$ and $p_i(\bfb') = p_i(\bfb)$.
There are two possible cases: either $x_j(\bfb') = 1$ or $x_j(\bfb') = 0$.

First, we assume $x_j(\bfb') = 1$.
Imagine the real bid vector is $\bfb$ which also represents everyone's true value.
In this case, miner's utility is $\mu(\bfb)$, user $i$'s utility is $b_i - p_i(\bfb)$, and user $j$'s utility is zero.
However, the miner can sign a contract with user $i$ and user $j$, and ask user $i$ to bid $b_j$ instead.
By Lemma \ref{lemma:samePayment}, we have $p_i(\bfb') = p_j(\bfb')$.
Because $p_i(\bfb') = p_i(\bfb)$ and $p_i(\bfb) \leq b_i < b_j$, we have $p_j(\bfb') < b_j$.
Besides, by Lemma \ref{lemma:confirmInvariant}, $\mu(\bfb) = \mu(\bfb')$.
Therefore, after signing the contract, miner's utility is still $\mu(\bfb)$, user $i$'s utility is still $b_i - p_i(\bfb)$, while user $j$'s utility becomes $b_j - p_j(\bfb') > 0$.
This violates $2$-weak-SCP.

Next, we assume $x_j(\bfb') = 0$.
Imagine the real bid vector is $\bfb'$ which also represents everyone's true value.
Recall that in $\bfb'$, 
user $i$ and user $j$ are bidding the same; however, user $i$ is confirmed
but user $j$ is not.
Furthermore, user $i$ 
is bidding strictly higher than its payment as we have shown above.
The coalition of the miner and user $j$
can strictly benefit, 
if the miner switched 
user $i$ and user $j$'s positions in the bid vector; since this does not affect
the miner's utility, but user $j$'s utility now becomes positive.
This violates $1$-weak-SCP.

\ignore{
In this case, miner's utility is $\mu(\bfb) = \mu(\bfb')$, and user $j$'s utility is zero.
However, the miner can sign a contract with user $j$, and ask user $j$ to bid $b_i$ instead.
The miner prepares a bid vector $\bfb$, where $b_i$ comes from user $j$ and $b_j$ comes from user $i$.
In this case, miner's utility is still $\mu(\bfb) = \mu(\bfb')$, while user $j$'s utility becomes $b_j - p_i(\bfb) > 0$ because $b_j > b_i \geq p_i(\bfb)$.
This violates $2$-weak-SCP.
}
\end{proof}

\paragraph{Notation for the universal payment.}
According to Lemma \ref{lemma:samePayment}, all confirmed users must pay the same price.
Thus, we may simplify the notation, 
and define $p(\bfb)$ to be the universal payment 
price for all confirmed users under the bid vector $\bfb$. 
If no one is confirmed under $\bfb$, then 
we define $p(\bfb) = 0$.
\elaine{i changed it to 0 rather than infty} 

\begin{lemma}\label{lemma:unconfirmedPayment}
Let $(\bfx, \bfp,\mu)$ be a deterministic mechanism that achieves weak UIC and $2$-weak-SCP.
Let $\bfb$ be a bid vector such that at least one user is confirmed.
Then, for any unconfirmed user $i$, it must be $b_i \leq p(\bfb)$.
\end{lemma}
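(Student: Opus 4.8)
The plan is to argue by contradiction: suppose there is an unconfirmed user $i$ with $b_i > p(\bfb)$, and let $j$ be any confirmed user (one exists by hypothesis). First I would collect the easy structural facts. By Lemma~\ref{lemma:samePayment} user $j$ pays the universal price, so $p_j(\bfb) = p(\bfb) < b_i$; by Lemma~\ref{lemma:ordered} a confirmed bid dominates any unconfirmed bid, hence $b_j \geq b_i > p(\bfb)$. Thus $j$ is a confirmed user bidding strictly above its payment, and $b_i$ sits strictly between $p(\bfb)$ and $b_j$.

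The heart of the argument is to manufacture a tie between $i$ and $j$ and then exploit tie-breaking. I would lower $j$'s bid from $b_j$ down to exactly $b_i$, obtaining $\bfb' = (\bfb_{-j}, b_i)$. Since $b_i$ exceeds $j$'s critical price $p(\bfb)$, user $j$ stays confirmed and still pays $p(\bfb)$ (Fact~\ref{fct:myerson-weakuic}), and by Lemma~\ref{lemma:confirmInvariant} the miner revenue is unchanged, $\mu(\bfb') = \mu(\bfb)$. Now $i$ and $j$ bid the identical value $b_i$ in $\bfb'$, and I would split into two cases according to $x_i(\bfb')$. If $x_i(\bfb') = 1$, then by Lemma~\ref{lemma:samePayment} user $i$ also pays $p(\bfb)$ at $\bfb'$; here I take the coalition of the miner together with users $i$ and $j$, true values given by $\bfb$, and let $j$ underbid to $b_i$ (which, remaining confirmed, costs it nothing and leaves its utility $b_j - p(\bfb)$ unchanged) while $i$ bids truthfully. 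This confirms $i$ at price $p(\bfb) < b_i$, raising the coalition's joint utility by exactly $b_i - p(\bfb) > 0$ and contradicting $2$-weak-SCP.

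If instead $x_i(\bfb') = 0$, I would switch to the instance whose true-value profile is $\bfb'$ itself and consider the size-one coalition of the miner and user $i$. Since $i$ and $j$ are tied at $b_i$, the miner can break the tie in favour of $i$ — a legitimate inclusion-side move, exactly the position swap already used in the proofs of Lemmas~\ref{lemma:samePayment} and \ref{lemma:ordered} — which by identity-agnosticism hands $i$ the confirmed outcome at price $p(\bfb)$ while keeping $\mu$ fixed. This turns $i$'s utility from $0$ into $b_i - p(\bfb) > 0$ at no cost to the miner, contradicting $1$-weak-SCP and hence $2$-weak-SCP.

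The step I expect to be the main obstacle is precisely this case split. The naive single deviation (lower $j$ to tie with $i$, then swap to confirm $i$) only yields a net coalition gain of $b_i - b_j \le 0$ whenever $i$ is not already confirmed after the tie is created, so any attack whose coalition still includes $j$ loses when $b_i < b_j$. Recognising that the $x_i(\bfb')=0$ branch must instead be attacked in the \emph{shifted} instance with true values $\bfb'$ — where $j$ becomes a bystander and a lone coalition of the miner and user $i$ suffices — is the delicate point, together with verifying that re-breaking the tie counts as an admissible miner inclusion strategy rather than a bid change that would incur an overbidding penalty.
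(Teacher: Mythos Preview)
Your argument is correct, but the paper's proof is strictly simpler and avoids your case split entirely. Instead of lowering $j$'s bid to exactly $b_i$ and creating a tie, the paper lowers $j$'s bid to $p(\bfb) + \Delta/2$ where $\Delta = b_i - p(\bfb) > 0$, i.e.\ to a point strictly between $p(\bfb)$ and $b_i$. By Myerson (Fact~\ref{fct:myerson-weakuic}) user $j$ remains confirmed at price $p(\bfb)$, but now $j$'s bid is strictly smaller than $b_i$, so Lemma~\ref{lemma:ordered} forces $x_i(\bfb') = 1$ outright. Lemma~\ref{lemma:samePayment} then gives $p_i(\bfb') = p(\bfb)$, and the single coalition $\{\text{miner}, i, j\}$ with true values $\bfb$ and the underbid by $j$ finishes the proof via $2$-weak-SCP, exactly as in your first case.

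In other words, your Case~2 (the tie with $x_i(\bfb') = 0$, handled by the position-swap trick in a shifted instance) never needs to arise: had you stopped $j$'s bid anywhere strictly below $b_i$ rather than at $b_i$, Lemma~\ref{lemma:ordered} would have eliminated that branch for you. Your route still works, and the position-swap manoeuvre is indeed legitimate (the paper uses it in Lemmas~\ref{lemma:samePayment} and~\ref{lemma:ordered}), but it is extra machinery that the sharper choice of $j$'s new bid renders unnecessary.
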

\begin{proof}
For the sake of reaching a contradiction, suppose there exists an unconfirmed user $i$ such that $b_i - p(\bfb) = \Delta$ for some $\Delta > 0$.
Let $j$ be a confirmed user in $\bfb$.
By Lemma~\ref{lemma:ordered}, we know that $b_j \geq b_i$.
Now, consider another bid vector $\bfb' = (\bfb_{-j}, p(\bfb) + \Delta/2)$.
By 
Myerson's Lemma, user $j$ is still confirmed and is still paying $p(\bfb)$.
However, notice that user $j$'s new bid $p(\bfb) + \Delta/2 < b_i$. 
By Lemma~\ref{lemma:ordered}, user $i$ must be confirmed too under
the bid vector $\bfb'$, and by Lemma~\ref{lemma:samePayment}, 
it would be paying 
the same as user $j$, which is $p(\bfb)$.
Now, 
imagine that everyone's true value is the vector $\bfb$.
If everyone bids honestly, 
the miner's utility is $\mu(\bfb)$, user $j$'s utility is $b_j - p(\bfb)$, 
and user $i$'s utility is zero.
If the miner colludes with users $i,j$, 
the coalition can benefit by having 
user $j$ bid $p(\bfb) + \Delta/2$ instead.
In this case, miner's utility is still $\mu(\bfb)$ 
due to Lemma~\ref{lemma:confirmInvariant}, 
user $j$'s utility is still $b_j - p(\bfb)$, 
while user $i$'s utility increases to $\Delta/2$.
This violates $2$-weak-SCP.
\end{proof}


\begin{lemma}\label{lem:minerutilkto0}
Let $(\bfx, \bfp,\mu)$ be a deterministic, weak UIC, and $1$-weak-SCP mechanism.
Then, for any bid vector $\bfb$, any user $k$, and any $0< \Delta \leq b_k$, 
it must be that 
$\mu(\bfb) - \Delta
\leq \mu(\bfb_{-k}, b_k - \Delta) \leq \mu(\bfb)$.
\end{lemma}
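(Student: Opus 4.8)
The plan is to read the two inequalities as a monotonicity statement ($\mu(\bfb') \le \mu(\bfb)$) and a Lipschitz statement ($\mu(\bfb) - \mu(\bfb') \le \Delta$) for the one-dimensional function $r \mapsto \mu(\bfb_{-k}, r)$, where $\bfb' := (\bfb_{-k}, b_k - \Delta)$, and to prove each by exhibiting a concrete deviation of the coalition consisting of the miner and user $k$ and invoking $1$-weak-SCP (the miner stays honest and only user $k$ rebids, so $1$-weak-SCP guarantees honest play is no worse than this deviation). First I would invoke Fact~\ref{fct:myerson-weakuic} so that Myerson's Lemma applies: $x_k(\bfb_{-k}, \cdot)$ is monotone with a single threshold $\theta$, the minimum bid confirming user $k$ given $\bfb_{-k}$, which equals the payment when confirmed. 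Since $b_k - \Delta < b_k$, monotonicity leaves three cases for $(x_k(\bfb'), x_k(\bfb))$: both confirmed, both unconfirmed, and the jump case $(0,1)$.

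In the two pure cases the argument is short because I can place user $k$'s true value at an endpoint so that its utility is insensitive to the rebid. If both bids confirm, I set the true value to $b_k$ and have user $k$ underbid to $b_k - \Delta$ (still confirmed, same payment $\theta$ by Myerson, hence zero utility change); $1$-weak-SCP gives $\mu(\bfb) \ge \mu(\bfb')$, and the symmetric overbid with true value $b_k - \Delta$ gives the reverse, so $\mu(\bfb) = \mu(\bfb')$. If both bids leave user $k$ unconfirmed, the true value $b_k$ with an underbid to $b_k - \Delta$ (utility $0$, no overbid penalty) yields $\mu(\bfb) \ge \mu(\bfb')$, while the true value $b_k - \Delta$ with an overbid to $b_k$ (unconfirmed, so the $\gamma = 1$ strict utility charges exactly the penalty $-\Delta$) yields $\mu(\bfb') \ge \mu(\bfb) - \Delta$. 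Both target inequalities follow.

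The main obstacle is the jump case $(0,1)$, where user $k$ is unconfirmed at $b_k - \Delta$ but confirmed (paying $\theta$, with $b_k - \Delta < \theta \le b_k$) at $b_k$. The lower bound is still direct: take true value $b_k - \Delta$ and overbid to $b_k$; since $b_k$ is now confirmed, user $k$'s utility is $(b_k-\Delta) - \theta \ge -\Delta$ (using $\theta \le b_k$), so $1$-weak-SCP gives $\mu(\bfb) - \mu(\bfb') \le \theta - (b_k - \Delta) \le \Delta$. The upper bound $\mu(\bfb') \le \mu(\bfb)$ is where the naive single deviation fails: setting the true value to $b_k$ and underbidding to $b_k - \Delta$ only yields the too-weak bound $\mu(\bfb') \le \mu(\bfb) + (b_k - \theta)$. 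I would resolve this by inserting the threshold bid $\theta$ as an intermediate point: first establish $\mu(\bfb) = \mu(\bfb_{-k}, \theta)$ via the both-confirmed argument applied to the confirmed bids $\theta \le b_k$, and then show $\mu(\bfb_{-k}, \theta) \ge \mu(\bfb')$ by taking true value $\theta$, where honest bidding of $\theta$ is confirmed with utility $0$ and underbidding to $b_k - \Delta$ is unconfirmed with utility $0$, so $1$-weak-SCP forces $\mu(\bfb_{-k}, \theta) \ge \mu(\bfb_{-k}, b_k - \Delta)$. Chaining the two gives $\mu(\bfb') \le \mu(\bfb_{-k}, \theta) = \mu(\bfb)$, completing the case. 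A minor point to handle carefully is the boundary convention that $x_k(\bfb_{-k}, \theta) = 1$, i.e.\ the threshold bid is itself confirming, which is exactly what the deterministic special case of Myerson's payment formula provides.
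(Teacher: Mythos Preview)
Your proof is correct and follows essentially the same approach as the paper: both arguments hinge on inserting the Myerson threshold $\theta$ as an intermediate point so that the problematic confirmed-to-unconfirmed transition is split into a ``both confirmed'' segment (where $\mu$ is constant) and a step from $\theta$ down to $b_k-\Delta$ (where the user's utility is zero on both ends). Your case split on $(x_k(\bfb'),x_k(\bfb))$ is a repackaging of the paper's split on whether $b_k>p(\bfb)$; one minor advantage of your write-up is that it is fully self-contained under the stated hypotheses (weak UIC and $1$-weak-SCP), whereas the paper's version cites auxiliary lemmas (Lemmas~\ref{lemma:confirmInvariant}, \ref{lemma:samePayment}, \ref{lemma:unconfirmedPayment}) that are formally stated under $2$-weak-SCP, even though only their $1$-weak-SCP content is actually needed here.
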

\begin{proof}
We first prove the direction $\mu(\bfb_{-k}, b_k -\Delta) \leq \mu(\bfb)$. 
We want to show that if the users' true value is $\bfb$, 
but now user $k$ bids  
$b_k - \Delta$ instead of its true value $b_k$, 
then the miner revenue should not increase. 
There are two cases. 
\begin{itemize}[leftmargin=5mm,itemsep=1pt]
\item First, if user $k$ is unconfirmed under $\bfb$ or confirmed
but paying its full bid $b_k$, then its utility is $0$ under $\bfb$. 
In this case, obviously 
decreasing user $k$'s bid should not make the miner benefit; since otherwise
the coalition of user $k$ and the miner can benefit by having user $k$ bid 
$b_k - \Delta$ instead,
thus violating $1$-weak-SCP.  
\item 
Second, suppose that user $k$ is initially confirmed under $\bfb$
and moreover, $b_k > p(\bfb)$.
We can first decrease user $k$'s bid to exactly $\max(b_k - \Delta, p(\bfb))$, and let
$\bfb' := (\bfb_{-k}, \max(b_k - \Delta, p(\bfb)))$
be the resulting new bid vector.
Due to Myerson's Lemma, Lemma~\ref{lemma:confirmInvariant}, 
and Lemma~\ref{lemma:samePayment}, 
$x_k(\bfb') = 1$, $\mu(\bfb') = \mu(\bfb)$,
and $p(\bfb') = p(\bfb)$.
Then, we can decrease user $k$'s bid from $\max(b_k - \Delta, p(\bfb))$ 
to $b_k - \Delta$,
and due to the same argument as the first case, the miner's revenue should
not increase.
\end{itemize}

We next prove the other direction, that is, 
$\mu(\bfb) - \Delta \leq \mu(\bfb_{-k}, b_k - \Delta)$.
If no one is confirmed under $\bfb$, the statement trivially holds.
Henceforth, we assume that at least one user is confirmed under $\bfb$.
Again, there are two cases.
\begin{itemize}[leftmargin=5mm,itemsep=1pt]
\item 
First, suppose that user $k$ is not confirmed under $\bfb$ or confirmed but paying
its full bid $b_k$. Due to 
Lemma~\ref{lemma:unconfirmedPayment}, we know that $b_k \leq p(\bfb)$.
By Myerson's Lemma, user $k$ should be unconfirmed or confirmed but paying full bid 
under $\bfb' := (\bfb_{-k}, b_k - \Delta)$.
Now, suppose everyone's true value is actually $\bfb'$,  
notice that user $k$'s utility is $0$. 
We argue that if user $k$ bids $b_k$ instead of its true value $b_k -\Delta$,
it should not make the miner revenue increase by more than $\Delta$. 
If so, the coalition of user $k$ and the miner can strictly benefit 
by having user $k$ bid $b_k$ instead of its true value $b_k - \Delta$, since the cost
of such overbidding is at most $\Delta$ under the new utility notion.
This violates $1$-weak-SCP.
\item 
Second, suppose that user $k$ is confirmed under $\bfb$ and moreover
$b_k > p(\bfb)$. In this case, due to 
Myerson's Lemma, Lemma~\ref{lemma:confirmInvariant},
and Lemma~\ref{lemma:samePayment}, 
the miner's revenue should not be affected when 
we reduce user $k$'s bid to $\max(p(\bfb), b_k - \Delta)$.
We now further decrease user $k$'s bid from $\max(p(\bfb), b_k - \Delta)$ 
to $b_k - \Delta$ --- 
due to the same analysis as the first case, the miner's revenue 
should not increase 
by more than $\Delta$ in this process.
\end{itemize}

\ignore{
There are two cases. 
user $k$ is initially unconfirmed under $\bfb$, 
and the miner revenue increases
when user $k$ bids $0$ instead, then  
the miner's revenue increase when user 
$k$ decreases its bid from its true value $b_k$ to $0$, then the coalition
of 
}

\end{proof}

\begin{lemma}\label{lemma:paychangeslow}
Let $(\bfx, \bfp,\mu)$ be a deterministic mechanism which is weak UIC and $2$-weak-SCP.
Let $\bfb = (b_1, \ldots, b_m)$ be an arbitrary bid vector.
Suppose that there exist three different users $i,j,k$ such that $x_i(\bfb_{-k}, 0 ) 
= x_j(\bfb_{-k}, 0) = 1$, and moreover,
$b_i - p(\bfb_{-k}, 0) > b_k$ 
and $b_j - p(\bfb_{-k}, 0) > b_k$.
Then, it must be 
that $x_i(\bfb) = x_j(\bfb) = 1$ and
$p(\bfb) \leq p(\bfb_{-k}, 0) + b_k/2$.
\end{lemma}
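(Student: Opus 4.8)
The plan is to prove both conclusions by side-contract arguments in which the miner's deviation is simply to \emph{exclude} user $k$ from the block, so that the realized outcome is the one associated with the bid vector $(\bfb_{-k}, 0)$. The single workhorse is Lemma~\ref{lem:minerutilkto0}: taking $\Delta = b_k$ there (we may assume $b_k > 0$, as otherwise $\bfb = (\bfb_{-k},0)$ and the claim is trivial) gives $\mu(\bfb) - b_k \leq \mu(\bfb_{-k}, 0) \leq \mu(\bfb)$, so replacing $k$ by a $0$-bid costs the miner at most $b_k$ in revenue. Write $p_0 := p(\bfb_{-k}, 0)$ and $p_1 := p(\bfb)$, and recall that by Lemma~\ref{lemma:samePayment} every confirmed bidder pays the universal price, so these quantities are well defined.

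First I would establish $x_i(\bfb) = x_j(\bfb) = 1$ using only $1$-weak-SCP. Suppose toward a contradiction that $x_i(\bfb) = 0$, and consider the world in which everyone bids truthfully according to $\bfb$. Let the coalition be the miner together with user $i$. Under honest play user $i$ is unconfirmed with utility $0$, so the joint utility is $\mu(\bfb)$; if instead the miner excludes $k$, the outcome becomes that of $(\bfb_{-k}, 0)$, where $x_i = 1$ at price $p_0$, so user $i$ gains $b_i - p_0$ (with no overbidding cost, since $i$ bids its true value and is confirmed). The joint utility then changes by $[\mu(\bfb_{-k}, 0) - \mu(\bfb)] + (b_i - p_0) \geq -b_k + (b_i - p_0) > 0$, where the last step is exactly the hypothesis $b_i - p_0 > b_k$. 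This contradicts $1$-weak-SCP, so $x_i(\bfb) = 1$; the argument for $j$ is identical.

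Next I would bound the payment using $2$-weak-SCP. Suppose for contradiction that $p_1 > p_0 + b_k/2$. Again take the true-value profile $\bfb$, but now let the coalition be the miner together with users $i$ and $j$, which are both confirmed at price $p_1$ by the previous step. Honest play yields joint utility $\mu(\bfb) + (b_i - p_1) + (b_j - p_1)$, while the miner excluding $k$ yields $\mu(\bfb_{-k}, 0) + (b_i - p_0) + (b_j - p_0)$, since $i$ and $j$ remain confirmed in $(\bfb_{-k}, 0)$ by hypothesis (and again incur no overbidding cost). The net change is $[\mu(\bfb_{-k}, 0) - \mu(\bfb)] + 2(p_1 - p_0) \geq -b_k + 2(p_1 - p_0) > 0$, the final inequality being $p_1 - p_0 > b_k/2$. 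This violates $2$-weak-SCP, giving $p_1 \leq p_0 + b_k/2$ as claimed. Note that the factor of two colluding bidders is precisely what upgrades the $b_k$-bound on the miner's revenue swing into the $b_k/2$ bound on the payment.

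The step I expect to require the most care is the modeling of the miner's deviation: I must argue that excluding $k$'s transaction from the block indeed realizes the outcome of $(\bfb_{-k}, 0)$, which is exactly where the separation between the inclusion rule and the confirmation rule is used (an empty slot is treated as a $0$-bid, so honest blockchain processing of the remaining bids coincides with evaluating at $(\bfb_{-k},0)$). Everything else is routine: the two hypotheses $b_i - p_0 > b_k$ and $b_j - p_0 > b_k$ are used solely to make each coalition member's gain dominate the $b_k$ revenue swing bounded by Lemma~\ref{lem:minerutilkto0}, and no appeal to block size, Lemma~\ref{lemma:ordered}, or Lemma~\ref{lemma:unconfirmedPayment} is needed for this particular statement.
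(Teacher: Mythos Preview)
Your argument is correct and matches the paper's proof almost exactly: both parts use Lemma~\ref{lem:minerutilkto0} to bound the miner's revenue loss by $b_k$, then exhibit a coalition (miner with $i$ for confirmation, miner with $i,j$ for the payment bound) whose gain strictly exceeds that loss. The one point where you diverge is precisely the one you flag: the paper does not ``exclude $k$'' and appeal to empty slots being $0$-bids (that is a property of the burning second-price auction, not of arbitrary TFMs); instead the miner \emph{replaces} $b_k$ by an injected fake $0$-bid and then runs the honest inclusion rule on the resulting vector $(\bfb_{-k},0)$, which by construction realizes exactly the outcome $(\bfx,\bfp,\mu)(\bfb_{-k},0)$ and costs nothing under $\gamma$-strict utility since the fake bid equals its true value $0$. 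With that adjustment your proof is the paper's proof.
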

\begin{proof}
We first prove that $x_i(\bfb) = x_j(\bfb) = 1$.
Suppose not, without loss of generality, let us suppose $x_i(\bfb) = 0$
since the case $x_j(\bfb) = 0$ has a symmetric proof.
Imagine that the real bid vector is $\bfb$ which also 
represents everyone's true value.
Suppose the miner and user $i$ form a coalition, 
and they replace user $k$'s bid with an injected $0$-bid.
Due to Lemma~\ref{lem:minerutilkto0}, the miner's utility decreases
by at most $b_k$ as a result.
However, user $i$ now 
becomes confirmed and 
its utility is $b_i - p(\bfb_{-k}, 0) > b_k$. Therefore, the coalition
strictly gains which violates $1$-weak-SCP.

We next prove that 
$p(\bfb) \leq p(\bfb_{-k}, 0) + b_k/2$.
For the sake of reaching a contradiction, suppose $p(\bfb) > p(\bfb') + b_i/2$.
Imagine the real bid vector is $\bfb$ which is also everyone's true value.
In this case, miner's utility is $\mu(\bfb)$, user $i$'s utility is $b_i - p(\bfb)$, and user $j$'s utility is $b_j - p(\bfb)$.
However, the miner can collude with user $i$ and user $j$, 
and miner replaces $b_k$ with an injected $0$.
Notice that injecting a $0$-bid costs nothing.
In this case, miner's utility becomes $\mu(\bfb') \geq \mu(\bfb) - b_k$, 
where the inequality follows from Lemma~\ref{lem:minerutilkto0}.
On the other hand, user $i$'s utility becomes $b_i - p(\bfb')$, and user $j$'s utility becomes $b_j - p(\bfb')$.
Because $p(\bfb) > p(\bfb') + b_k/2$, user $i$'s and user $j$'s utilities each increases more than $b_k/2$.
Consequently, the coalition's joint utility increases, which violates $2$-weak-SCP.
\end{proof}

\begin{lemma}\label{lemma:paychangeslow2}
Let $(\bfx, \bfp,\mu)$ be a deterministic mechanism which is weak UIC and $1$-weak-SCP.
Let $\bfb = (b_1, \ldots, b_m)$ be an arbitrary bid vector.
Let $i$ and $k$ be two different users, 
and suppose that $x_i(\bfb) = 1$ and $b_i - p(\bfb) > b_k$.
Then, $x_i(\bfb_{-k}, 0) = 1$
and $p(\bfb_{-k}, 0) \leq p(\bfb) + b_k$.
\end{lemma}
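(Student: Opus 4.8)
The plan is to establish both conclusions with a single, uniform deviation. Since both statements concern the mechanism's behavior on the vector $(\bfb_{-k}, 0)$, I am free to fix the true-value profile to be $(\bfb_{-k}, 0)$ (so user $k$'s true value is $0$) and run incentive arguments against it. The deviation I would use throughout is: the miner drops user $k$'s genuine $0$-bid and injects a single fake transaction of value $b_k$, thereby presenting the blockchain with an effective included vector identical to $\bfb$. The key accounting fact, and the only place the weak ($\gamma=1$) utility model enters, is that this fake transaction costs the coalition at most $b_k$: if it is unconfirmed its cost is $\gamma\cdot(b_k-0)=b_k$, and if it is confirmed its cost is its payment, which is at most its bid $b_k$. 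Hence the fake bid contributes at least $-b_k$ to the joint utility. I would first dispose of the trivial case $b_k=0$, in which $(\bfb_{-k},0)=\bfb$ and both conclusions are immediate, so that I may assume $b_k>0$ and invoke Lemma~\ref{lem:minerutilkto0} with $\Delta=b_k$ to record the crucial monotonicity fact $\mu(\bfb_{-k},0)\le\mu(\bfb)$: dropping user $k$'s bid to $0$ cannot raise the miner's revenue.

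For the first conclusion $x_i(\bfb_{-k},0)=1$, I would argue by contradiction, assuming user $i$ is unconfirmed under $(\bfb_{-k},0)$. Taking the coalition to be the miner together with user $i$, under honest play the joint utility is simply $\mu(\bfb_{-k},0)$, since user $i$ bids truthfully, is unconfirmed, and there is no fake bid. Under the fake-bid deviation the effective vector is $\bfb$, so user $i$ is confirmed paying $p(\bfb)$, and the joint utility is at least $\mu(\bfb)+(b_i-p(\bfb))-b_k$. The net gain is therefore at least $(\mu(\bfb)-\mu(\bfb_{-k},0))+(b_i-p(\bfb)-b_k)$, which is strictly positive because $\mu(\bfb_{-k},0)\le\mu(\bfb)$ and the hypothesis gives $b_i-p(\bfb)>b_k$. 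This contradicts $1$-weak-SCP.

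For the second conclusion $p(\bfb_{-k},0)\le p(\bfb)+b_k$, I would again argue by contradiction, assuming $p(\bfb_{-k},0)>p(\bfb)+b_k$. By the first conclusion user $i$ is now confirmed under both $(\bfb_{-k},0)$ and $\bfb$, paying the respective universal payments $p(\bfb_{-k},0)$ and $p(\bfb)$ by Lemma~\ref{lemma:samePayment}. Running the same deviation with the miner--user-$i$ coalition, the honest joint utility is $\mu(\bfb_{-k},0)+(b_i-p(\bfb_{-k},0))$ while the deviation joint utility is at least $\mu(\bfb)+(b_i-p(\bfb))-b_k$. The net gain is at least $(\mu(\bfb)-\mu(\bfb_{-k},0))+(p(\bfb_{-k},0)-p(\bfb))-b_k$, which is again strictly positive using $\mu(\bfb_{-k},0)\le\mu(\bfb)$ and the contradiction hypothesis. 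This violates $1$-weak-SCP and finishes the proof.

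I do not anticipate a genuine obstacle here beyond careful bookkeeping; the proof is essentially the dual of Lemma~\ref{lem:minerutilkto0}, reading the change $b_k\mapsto 0$ in the ``injection'' direction rather than the ``deletion'' direction. The two points that require attention are the $b_k$-bound on the fake bid's cost (where the $\gamma=1$ model is used) and the ordering of the argument, since the payment bound in the second part reuses the confirmation conclusion of the first part to identify user $i$'s payment with the universal payment $p(\bfb_{-k},0)$.
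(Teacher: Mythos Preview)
Your proposal is correct and follows essentially the same approach as the paper: fix true values at $(\bfb_{-k},0)$, have the miner--user-$i$ coalition inject a fake bid of value $b_k$ (costing at most $b_k$ under $1$-strict utility), invoke Lemma~\ref{lem:minerutilkto0} for $\mu(\bfb_{-k},0)\le\mu(\bfb)$, and compare joint utilities to contradict $1$-weak-SCP. The only cosmetic differences are that the paper merges your two contradictions into one disjunctive assumption, and you explicitly dispose of the $b_k=0$ edge case that the paper leaves implicit.
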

\begin{proof}
For the sake of contradiction, suppose 
either $x_i(\bfb_{-k}, 0) = 0$
or $p(\bfb_{-k}, 0) > p(\bfb) + b_k$.
Imagine that the real bid vector is $\bfb' := (\bfb_{-k}, 0)$, which also
represents everyone's true value.
Now, the miner replaces user $k$'s $0$-bid in $\bfb'$ with an injected
bid $b_k$. 
Injecting this bid costs at most $b_k$.
Due to Lemma~\ref{lem:minerutilkto0}, the miner's utility cannot decrease,
i.e., $\mu(\bfb) \geq \mu(\bfb')$.
However, consider user $i$'s utility. 
If $x_i(\bfb') = 0$  
but $x_i(\bfb) = 1$, user $i$'s utility has increased from $0$ to 
$b_i - p(\bfb) > b_k$.
Else, 
if $x_i(\bfb') = x_i(\bfb) = 1$, but 
$p(\bfb') > p(\bfb) + b_k$, then user $i$'s utility 
increases by strictly more than $b_k$ too. 
In either case, the coalition of the miner and user $i$
can strictly increase their 
joint utility by replacing user $k$'s bid with the injected $b_k$ bid, which violates
$1$-weak-SCP.
\end{proof}

\ignore{
\begin{lemma}\label{lemma:paymentGrowsSlowly}
Let $(\bfx, \bfp,\mu)$ be a deterministic mechanism which is $2$-weak-SCP.
Let $\bfb = (b_1, \ldots, b_m)$ be an arbitrary bid vector.
If there exist three different users $i,j,k$ such that $x_i(\bfb) = x_j(\bfb) = 1$,
then, it must be $p(\bfb_{-k}, 0) - b_k \leq p(\bfb) \leq p(\bfb_{-k}, 0) + b_k/2$.
The statement holds no matter user $k$ is confirmed or not.
\end{lemma}
\begin{proof}
Let $\bfb' = (\bfb_{-k}, 0)$.
We first show that $|\mu(\bfb) - \mu(\bfb')| \leq b_k$.
For the sake of reaching a contradiction, suppose $\mu(\bfb) - b_k > \mu(\bfb')$.
Imagine the real bid vector is $\bfb'$.
The miner can sign a contract with user $k$, and asks it to bid $b_k$ instead.
Depending on whether user $k$ is confirmed, it may have different utility, while its utility can only decrease by $b_k$ at most, while miner's utility increases more than $b_k$.
This violates $2$-weak-SCP.
On the other hand, for the sake of reaching a contradiction, suppose $\mu(\bfb') - b_k > \mu(\bfb)$.
Imagine the real bid vector is $\bfb$.
The miner can sign a contract with user $k$, and asks it to bid zero instead.
User $k$'s utility can only decrease by $b_k$ at most, while miner's utility increases more than $b_k$.
This violates $2$-weak-SCP.
Thus, we conclude $|\mu(\bfb) - \mu(\bfb')| \leq b_k$.

Next, we show that $p(\bfb) \leq p(\bfb_{-k}, 0) + b_k/2$.
For the sake of reaching a contradiction, suppose $p(\bfb) > p(\bfb') + b_i/2$.
Imagine the real bid vector is $\bfb$.
In this case, miner's utility is $\mu(\bfb)$, user $i$'s utility is $b_i - p(\bfb)$, and user $j$'s utility is $b_j - p(\bfb)$.
However, the miner can sign a contract with user $i$ and user $j$, and miner replaces $b_k$ with a injected $0$.
Notice that injecting a zero costs the miner nothing.
In this case, miner's utility becomes $\mu(\bfb')$, which decreases by $b_k$ at most.
On the other hand, user $i$'s utility becomes $b_i - p(\bfb')$, and user $j$'s utility becomes $b_j - p(\bfb')$.
Because $p(\bfb) > p(\bfb') + b_k/2$, user $i$'s and user $j$'s utilities each increases more than $b_k/2$.
Consequently, the joint utility increases, so it violates $2$-weak-SCP.

Finally, we show that $p(\bfb') - b_k \leq p(\bfb)$.
For the sake of reaching a contradiction, suppose $p(\bfb') > p(\bfb) + b_k$.
Imagine the real bid vector is $\bfb'$.
In this case, miner's utility is $\mu(\bfb')$, user $i$'s utility is $b_i - p(\bfb')$, and user $j$'s utility is $b_j - p(\bfb')$.
However, the miner can sign a contract with user $i$ and user $j$, and miner injects a fake bid $b_k$.
This fake bid costs the miner $b_k$ at most.
In this case, miner's utility becomes at least $\mu(\bfb) - b_k$, which decreases by $2\cdot b_k$ at most.
On the other hand, user $i$'s utility becomes $b_i - p(\bfb)$, and user $j$'s utility becomes $b_j - p(\bfb)$.
Because $p(\bfb') > p(\bfb) + b_k$, user $i$'s and user $j$'s utilities each increases more than $b_k$.
Consequently, the joint utility increases, so it violates $2$-weak-SCP.
\end{proof}
}

\begin{lemma}\label{lemma:twointervals}
    Let $(\bfx, \bfp,\mu)$ be a deterministic mechanism which is $2$-user-friendly, weak-UIC and $2$-weak-SCP.
    Then, there exists a bid vector $\bfb = (b_1,\ldots,b_m)$ where two different 
users $i,j$ are confirmed, and moreover, $b_i > p_i(\bfb)$ and $b_j > p_j(\bfb)$.
\end{lemma}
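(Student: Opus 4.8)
The plan is to start from the vector guaranteed by $2$-user-friendliness and massage it into one where two confirmed bidders both sit strictly above the universal payment. Let $\bfb$ be a vector with two confirmed users $i,j$; by Lemma~\ref{lemma:samePayment} they share a common payment $p := p(\bfb)$, and since a user never pays more than it bids, $b_i, b_j \ge p$. The easy case is when one of them, say $b_i > p$, already bids strictly above $p$. Then I would raise the other bidder from $p$ to any $b_j' > p$: by Myerson's Lemma (Fact~\ref{fct:myerson-weakuic}) user $j$ stays confirmed and still pays its threshold $p$, while Lemma~\ref{lemma:confirmInvariant}(2), applied with changing user $j$ and witness $i$ (which satisfies $b_i > p_i(\bfb)$), keeps $i$ confirmed at the same payment. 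Combined with Lemma~\ref{lemma:samePayment}, both $i$ and $j$ are then confirmed and bid strictly above the payment. Hence it suffices to produce \emph{any} vector with two confirmed bids of which at least one is strictly above the payment.

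So assume the remaining case $b_i = b_j = p$; since a confirmed bid strictly above $p$ together with any second confirmed user would already reduce to the easy case, I may assume every confirmed user of $\bfb$ bids exactly $p$, and by Lemma~\ref{lemma:unconfirmedPayment} every other bid is $\le p$. Fix a large target $z := 2p+1$ (so $z > 2p \ge p$ dominates all bids of $\bfb$). Raising $i$ alone to $z$ gives $\bfb^{(1)} := (\bfb_{-i}, z)$, where by Myerson's Lemma $i$ is confirmed and still pays $p < z$, hence is strictly above the payment; symmetrically for $\bfb^{(2)} := (\bfb_{-j}, z)$. If either vector confirms a second user we are back in the reducible case, so I may assume each confirms exactly one user, whence $\mu(\bfb^{(1)}), \mu(\bfb^{(2)}) \le p$ (miner revenue is at most the total payment, which here is $p$).

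The key object is the symmetric vector $\bfc := (\bfb_{-i,-j}, z, z)$. Since the mechanism is identity-agnostic, swapping $i$ and $j$ fixes $\bfc$ and swaps outcomes, so $x_i(\bfc) = x_j(\bfc)$. I would first rule out that both are unconfirmed: if some user were confirmed in $\bfc$, it bids $\le p$ and pays the universal payment, so that payment is $\le p < z$, and Lemma~\ref{lemma:unconfirmedPayment} would force $i$ (bidding $z$) to be confirmed, a contradiction; hence nobody is confirmed and $\mu(\bfc)=0$. But then, with true values $v_i=v_j=z$, the coalition of the miner with $i,j$ strictly gains by having $j$ underbid to $p$ (reaching $\bfb^{(1)}$, where $i$ becomes confirmed paying $p$): its utility rises from $\mu(\bfc)=0$ to at least $\mu(\bfb^{(1)})+(z-p)\ge z-p>0$, violating $2$-weak-SCP. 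Therefore $x_i(\bfc)=x_j(\bfc)=1$, and if their common payment satisfies $p(\bfc)<z$ we are done, as both bid $z$ strictly above $p(\bfc)$.

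The main obstacle, and the only remaining case, is $p(\bfc)=z$, i.e. both confirmed bidders pay their full bid. Here I expect to derive a contradiction by squeezing $\mu(\bfc)$ from two sides. Lowering $i$ from $z$ back to $p$ turns $\bfc$ into $\bfb^{(2)}$, so Lemma~\ref{lem:minerutilkto0} gives $\mu(\bfc) \le \mu(\bfb^{(2)}) + (z-p) \le p + (z-p) = z$. Conversely, with true values $v_i=v_j=z$, the coalition of the miner with $i,j$ can have \emph{both} users underbid to $p$, reaching the original $\bfb$ where both are confirmed paying $p$; each user gains $z-p$, so $2$-weak-SCP forces $\mu(\bfc) \ge \mu(\bfb) + 2(z-p) \ge 2(z-p)$ (using that miner revenue is nonnegative). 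Combining the two bounds yields $z \ge 2(z-p)$, i.e. $z \le 2p$, contradicting $z = 2p+1$. This rules out $p(\bfc)=z$ and finishes the proof. The delicate point throughout is that Lemma~\ref{lemma:confirmInvariant} only preserves the confirmation of a bidder that is \emph{strictly} above the payment, which is precisely why the all-at-payment configuration resists the monotonicity arguments and must instead be broken by the revenue-squeeze above.
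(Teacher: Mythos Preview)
Your argument has one genuine gap: the symmetry step ``since the mechanism is identity-agnostic, swapping $i$ and $j$ fixes $\bfc$ and swaps outcomes, so $x_i(\bfc) = x_j(\bfc)$'' is not justified by the paper's notion of identity-agnostic. That property only says outcomes are determined by \emph{position} in the bid vector, not by user identity; it does not assert that the allocation is permutation-equivariant in positions. Concretely, when $c_i = c_j = z$, the mechanism may still tie-break by index and set $x_i(\bfc)=1$, $x_j(\bfc)=0$. So your case analysis on $\bfc$ omits the case ``exactly one of $i,j$ is confirmed.''

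Fortunately this gap is local and your own machinery already closes it. If, say, $i$ is confirmed and $j$ is not in $\bfc$, then Lemma~\ref{lemma:unconfirmedPayment} forces $z = c_j \le p(\bfc) \le c_i = z$, so $p(\bfc)=z$ and both users have utility $0$ --- exactly the hypothesis of your final revenue-squeeze. Your bounds $\mu(\bfc) \le \mu(\bfb^{(2)}) + (z-p) \le z$ and $\mu(\bfc) \ge 2(z-p)$ then give the same contradiction. This is precisely how the paper handles it: rather than invoking symmetry, it simply lists the three possibilities (both unconfirmed; exactly one unconfirmed; both confirmed paying full bid), observes that in all of them both users have utility $0$ under $\bfb^*$, and runs a single deviation argument. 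The paper's version is also tighter in that it uses Lemma~\ref{lemma:confirmInvariant} to get $\mu(\bfb')=\mu(\bfb)$ and hence $\mu(\bfb^*)\le \mu(\bfb)+\Delta$ directly; this makes the contradiction work for \emph{any} $\Delta>0$, so the intermediate single-raise vectors $\bfb^{(1)},\bfb^{(2)}$, the assumption that they confirm only one user, and the specific choice $z=2p+1$ are all unnecessary. Your route is correct once the missing sub-case is folded in, but the paper's is shorter and avoids the detour.
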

\begin{proof}
   Since $(\bfx, \bfp,\mu)$ is 2-user-friendly, there exists a bid vector $\bfb$ such that at least two users' bids are confirmed. 
    There are three possible cases:
    \begin{enumerate}
        \item There are two users $i,j$ such that $b_i > p_i(\bfb)$ and $b_j > p_j(\bfb)$.
        \item 
Only a single confirmed 
user bids strictly above its payment. Without loss of generality,
we may assume 
user $j$ is confirmed and $b_j > p_j(\bfb)$; however, 
for any confirmed user 
$i \neq j$, $b_i = p_i(\bfb)$, and there exists
at least one such $i$. 
        \item For any confirmed bid $b_i$, it holds that $b_i = p_i(\bfb)$.
    \end{enumerate}
    The first case is exactly what we want. For the second case, 
we can raise $i$'s bid
by an arbitrary amount $\Delta > 0$, and the new bid vector
is  $(\bfb_{-i}, b_i + \Delta)$. 
By Myerson's Lemma, $i$ should still be confirmed  
and paying the same price. Due to Lemma~\ref{lemma:confirmInvariant},
$j$ should still be confirmed and paying the same price, too.
Therefore, the bid vector $(\bfb_{-i}, b_i + \Delta)$
satisfies the claim we want to prove.
Moreover, due to Lemma~\ref{lemma:confirmInvariant}, it must be $\mu(\bfb) = \mu(\bfb_{-i}, b_i + \Delta)$.\footnote{$\mu(\bfb) = \mu(\bfb_{-i}, b_i + \Delta)$ is not important for this proof, while this fact will be useful in the proof of Lemma \ref{lemma:twointervals2}.}

We now focus on the third case which is the trickiest.
Due to Lemma~\ref{lemma:samePayment}, it must be that
everyone confirmed has the same bid, and thus $b_i = b_j$. 
Fix an arbitrary $\Delta > 0$. 
Consider the bid vector $\bfb^*$ which is the same as $\bfb$ except
that user $i$ and user $j$'s bids are replaced with $b_i + \Delta$.
We claim that the bid vector $\bfb^*$ satisfies the claim we want to prove.
In other words, $x_i(\bfb^*) = x_j(\bfb^*) =1$
and $b_i + \Delta > p(\bfb^*)$. 
Suppose this is not the case. There are three cases:
\begin{enumerate}[leftmargin=5mm,itemsep=1pt]
\item both $i$ and $j$ are not confirmed under $\bfb^*$;
\item 
exactly one of them is not confirmed under $\bfb^*$ --- without loss of generality,
we may assume that $j$ is not confirmed under $\bfb^*$. 
In this case, by Lemma~\ref{lemma:unconfirmedPayment}, 
it must be that $p_i(\bfb^*) = b_i + \Delta$;
\Hao{I change the wording from ``at least'' to ``exactly''}
\item both $i$ and $j$ are confirmed under $\bfb^*$ but 
$b_i + \Delta = p(\bfb^*)$.
\end{enumerate}
In all of these cases, 
user $i$ and $j$ both have utility $0$ if the true values are $\bfb^*$.

Let $\bfb' := (\bfb_{-i}, b_i + \Delta)$.
\ignore{
By Myerson's Lemma, $i$ should still be confirmed
and paying the same price $p(\bfb') = p(\bfb)$.
It must be that $j$ is unconfirmed under $\bfb'$. 
Otherwise, 
$j$ must be paying the same price  
$p(\bfb') = p(\bfb)$ due to Lemma~\ref{lemma:samePayment} --- 
we are now back to the second case, and 
and using the same argument as the second case, we can conclude that $\bfb^*$
satisfies the claim, which violates our assumption.
}
By Lemma~\ref{lem:minerutilkto0}, 
$\mu(\bfb^*) \geq \mu(\bfb')  \geq \mu(\bfb^*) - \Delta$.
By Lemma~\ref{lemma:confirmInvariant}, $\mu(\bfb') = \mu(\bfb)$.
Thus, $\mu(\bfb^*) \geq \mu(\bfb)  \geq \mu(\bfb^*) - \Delta$.\footnote{$\mu(\bfb^*) \geq \mu(\bfb)$ is not important for this proof, while this fact will be useful in the proof of Lemma \ref{lemma:twointervals2}.}
Now, suppose the true values are $\bfb^*$.
The miner can collude with users $i$ and $j$, and have them bid $b_i = b_j$ instead. 
Both users $i$ and $j$ are paying $b_i$ in this case.
Therefore, each of them has utilty $\Delta$ now.
On the other hand, 
the miner's utility decreases by at most $\Delta$, and therefore
the coalition's utility increases.
This violates $2$-weak-SCP.
\end{proof}

\ignore{
\begin{lemma}\label{lemma:twointervals}
    Let $(\bfx, \bfp,\mu)$ be a deterministic mechanism which is $2$-user-friendly, UIC and $2$-weak-SCP.
    Then, there exists a bid vector $\bfb = (b_1,\ldots,b_m)$ and two users $i,j$ whose bids are confirmed such that $\mu(\bfb) > 0$, $b_i > p_i(\bfb)$ and $b_j > p_j(\bfb)$.
\end{lemma}
\begin{proof}
    Because $(\bfx, \bfp,\mu)$ is 2-user-friendly, there exists a bid vector $\bfb$ such that at least two users' bids are confirmed, and miner's revenue $\mu(\bfb) > 0$.
    There are three possible cases:
    \begin{enumerate}
        \item There are two users $i,j$ such that $b_i > p_i(\bfb)$ and $b_j > p_j(\bfb)$.
        \item Only a single confirmed bid $b_j$ satisfies $b_j > p_j(\bfb)$; that is, $b_i = p_i(\bfb)$ for any confirmed bid $b_i$ where $i \neq j$.
        \item For any confirmed bid $b_i$, it holds that $b_i = p_i(\bfb)$.
    \end{enumerate}
    The first case is exactly what we want, so we analyze the second and the third cases.
    \Hao{I found the third case is tricky, and I haven't had a shorter proof by Lemma \ref{lemma:samePayment} and Lemma \ref{lemma:ordered}.}
    
    We start from the second case.
    Recall that $\bfb$ contains at least two confirmed bids, so there exists a confirmed bid $i \neq j$ such that $b_i = p_i(\bfb)$.
    By Myerson's lemma, we can enhance user $i$'s bid without changing its confirmation and payment.
    Because $b_j$ is a confirmed bid such that $b_j > p_j(\bfb)$, Lemma \ref{lemma:confirmInvariant} guarantees that $x_j(\bfb') = x_j(\bfb) = 1$ and $p_j(\bfb') = p_j(\bfb)$.
    Thus, user $j$ also has a confirmed bid, and its bid is higher than its payment.
    
    Next, we analyze the third case.
    Without loss of generality, we assume $x_1(\bfb) = x_2(\bfb) = 1$.
    Let $\Delta = |\bfb|_1$.
    Now, imagine that the real bid vector is $\bfb' = (b_1 + \Delta, b_2 + \Delta, b_3, \ldots, b_m)$.
    We are going to show that $x_1(\bfb') = x_2(\bfb') = 1$.
    For the sake for reaching a contradiction, suppose $x_1(\bfb') = 0$.
    We now calculate the joint utility of miner, user $1$ and user $2$.
    Because user $1$ is unconfirmed, its true value does not contribute to the joint utility.
    Thus, the joint utility is upperbounded by user $2$ and miner's revenue ``comes from other users.''
    In this case, the joint utility of miner, user $1$ and $2$ is upperbounded by $|\bfb'|_1 - (b_1 + \Delta) \leq |\bfb|_1 + \Delta$.
    However, the miner can sign a contract with user $1$ and $2$, and ask them to bid $b_1$ and $b_2$, respectively.
    In this case, miner's utility is $\mu(\bfb)$.
    Because user $1$'s true value is $b_1 + \Delta$ and $p_1(\bfb) = b_1$, we know that user $1$'s utility is $\Delta$.
    Similarly, user $2$'s utility is $\Delta$.
    Consequently, the joint utility becomes $\mu(\bfb) + 2\Delta > |\bfb|_1 + \Delta$.
    This violates $2$-weak-SCP, so we have $x_1(\bfb') = x_2(\bfb') = 1$.
    
    Now, if either $b_1 + \Delta > p(\bfb')$ or $b_2 + \Delta > p(\bfb')$, it becomes either the first case or the second case.
    Thus, we assume $b_1 + \Delta = b_2 + \Delta = p(\bfb')$ throughout the following proof.
    Let's consider the bid vector $\bfb'' = (b_1 + \Delta, b_2, b_3, \ldots, b_m)$.
    By Myerson's lemma, we know that $x_1(\bfb'') = 1$ and $p(\bfb'') = p(\bfb)$.
    Besides, by Lemma \ref{lemma:confirmInvariant}, we have $\mu(\bfb) = \mu(\bfb'')$.
    If $x_2(\bfb'') = 1$, then it becomes either the second case.
    
    The remaining proof is going to show that $x_2(\bfb'') = 0$ is impossible.
    Imagine the real bid vector is $\bfb''$.
    If $\mu(\bfb') > \mu(\bfb'') + \Delta$, the miner can sign a contract with user $2$, and ask it to bid $b_2 + \Delta$ instead.
    User $2$'s utility decreases by $\Delta$ at most, while miner's utility increases more than $\Delta$.
    This violates $2$-weak-SCP, so we have $\mu(\bfb') - \Delta \leq \mu(\bfb'')$.
    Finally, imagine that the real bid vector is $\bfb'$.
    In this case, miner's utility is $\mu(\bfb')$, and user $1$ and $2$'s utilities are both zero, because $b_1 + \Delta = p_1(\bfb')$ and $b_2 + \Delta = p_2(\bfb')$.
    The miner can sign a contract with user $1$ and $2$, and ask them to bid $b_1$ and $b_2$, respectively.
    In this case, miner's utility becomes $\mu(\bfb)$, while user $1$ and $2$'s utilities are both $\Delta$.
    Consequently, their joint utility becomes $\mu(\bfb) + 2\Delta = \mu(\bfb'') + 2\Delta > \mu(\bfb')$.
    This violates $2$-weak-SCP.

\end{proof}
   } 

\paragraph{Proof of Theorem \ref{theorem:twoweakSCPwithburn}.}
By Lemmas~\ref{lemma:samePayment}
and \ref{lemma:twointervals}, there exists a bid vector $\bfb = (b_1,\ldots,b_m)$ such that $x_1(\bfb) = x_2(\bfb) = 1$, $b_1 > p(\bfb)$ and $b_2 > p(\bfb)$ --- we can always relabel the bids to make any two confirmed users with positive utility 
labeled as users $1$ and $2$.

Let $B$ denote the block size, and we define 
$\Gamma = 2^{B + 8} \cdot |\bfb|_1 \cdot \max(m, B+1)$ 
to be
a sufficiently large number.
Now, we consider a bid vector $\bfc = (c_1,c_2,\ldots,c_m)$, where $c_1 = c_2 = \Gamma$ and $c_i = 0$ for all $i \geq 3$.
We are going to show that $x_1(\bfc) = x_2(\bfc) = 1$.
By the Myerson's Lemma and Lemma~\ref{lemma:confirmInvariant}, 
from $\bfb$, we can increase $b_1$ without 
changing the confirmation and the payment of 
$b_1$ and $b_2$, 
so $x_1(\Gamma, b_2, \ldots, b_m) = 1$ and $x_2(\Gamma, b_2, \ldots, b_m) = 1$.
Similarly, we can then increase $b_2$ and we obtain $x_1(\Gamma, \Gamma, b_3, \ldots, b_m) = 1$, $x_2(\Gamma, \Gamma, b_3, \ldots, b_m) = 1$ and $p(\Gamma, \Gamma, b_3, \ldots, b_m) = p(\bfb)$.
Next, we reduce all remaining bids to zero one by one.
Repeatedly applying Lemma~\ref{lemma:paychangeslow2}
and observing that $\Gamma$ is sufficiently large, 
we have that 
$x_1(\bfc) = x_2(\bfc) = 1$, 
and the payment increases by $\sum_{i = 3}^m b_i$ at most, 
so $p(\bfc) \leq p(\bfb) + \sum_{i = 3}^m b_i < |\bfb|_1$.

Without loss of generality, 
we may henceforth assume that $m > B$.
If not, that is, if $m < B$, 
we can always add $0$ bids one by one until there are at least $B+1$ bids
 --- we claim that this does not change
the miner's utility nor 
user $1$ or $2$'s confirmation status and payment.
To see this, notice that adding or removing a $0$-bid is free of charge for the miner
or a miner-user coalition.
Therefore, adding or removing a $0$-bid should not 
change the joint utility of the miner and user $1$ by $1$-weak-SCP. 
This implies that user $2$'s confirmation status and payment
should not change, since otherwise, user $2$'s utility would change, 
and thus the joint utility of the miner and  
users $1$ and $2$ would change. This means that the coalition
of the miner and users $1$ and $2$ can cheat by adding or removing a $0$-bid to increase 
their joint utility, thus violating $2$-weak-SCP.
By a symmetric argument, user $1$'s confirmation status or payment should not change either.
Now, since the joint utility of the miner 
and user $1$ should not change due to adding or removing a $0$-bid, the miner's utility should 
be unaffected too.
\ignore{
To see this, suppose adding a $0$-bid to $\bfc$ changes the miner's utility.
Suppose the miner's utility decreases. 
Then, suppose the actual bid is actually $\bfc || 0$, the 
miner can simply remove the $0$ 
to increase its utility, which violates weak MIC. 
\Hao{we use MIC here. Need to change theorem statement?}
Similarly, we can argue that the miner's utility does not increase. 
Now, suppose that adding a $0$-bid changes 
user $1$'s confirmation status or payment --- since miner's utility does not change,
this means that adding a $0$-bid changes the utility of the coalition
of the miner and user $1$. Using a similar argument but now
for the miner-user coalition, we can rule this out, too, due to $1$-weak-SCP.
}

Now, we consider another bid vector $\bfd = (d_1,d_2,\ldots,d_m)$, 
where $d_i = \Gamma$ for all $i\in [B + 1]$ and $d_i = 0$
for $i > B+1$.
We are going to show that $x_i(\bfd) = 1$ for all $i \in [B+1]$ --- 
note that this is sufficient for reaching a contradiction since the block size
is only $B$.
To see this, we start from $\bfc$, and increase 
the bids of each user $j \in \{3, \ldots, m\}$ one by one.
Intuitively, Lemma~\ref{lemma:paychangeslow} guarantees that 
if the payment grows at all during the process, it must grow slower 
than the increase in a user's bid, so 
at some point, user $j$'s bid will catch up with the payment, as long
as there is still a large enough gap left between $\Gamma$ and the payment.
Formally, 
since $\Gamma - p(\bfc) > 2|\bfb|_1$, 
by Lemma~\ref{lemma:paychangeslow}, it must be that 
$p(\Gamma, \Gamma, 2|\bfb|_1,0,\ldots,0) \leq p(\bfc) + 
|\bfb|_1 < 2|\bfb|_1$, and 
$x_3(\Gamma, \Gamma, 2|\bfb|_1,0,\ldots,0) = 1$.
We now further increase user $3$'s bid to $\Gamma$, and 
by Lemma~\ref{lemma:confirmInvariant}, 
users $1$ to $3$ remain confirmed and 
$p(\Gamma, \Gamma,\Gamma, 0,\ldots,0) < 2|\bfb|_1$.
By the same reasoning, 
since $\Gamma - p(\Gamma, \Gamma,\Gamma, 0,\ldots,0) > 4 |\bfb|_1$,  
by Lemma~\ref{lemma:paychangeslow}, it must be that 
$p(\Gamma, \Gamma, \Gamma, 4|\bfb|_1 ,\ldots,0) \leq 
p(\Gamma, \Gamma,\Gamma, 0,\ldots,0) + 
2|\bfb|_1 < 4|\bfb|_1$, and 
$x_4(\Gamma, \Gamma, \Gamma, 4|\bfb|_1,0,\ldots,0) = 1$.
Therefore, 
$p(\Gamma, \Gamma, \Gamma, \Gamma, 0,\ldots,0) < 4|\bfb|_1$.
We 
can now repeat this process and raise
the bid of each user $i \in [B+1]$ to $\Gamma$.
It is not hard to check that our choice of $\Gamma$ is sufficiently large
for the reasoning to go through in all steps.

\ignore{
\elaine{TO FIX: why is m greater than B?}

In this case, each user has the utility at least $3|\bfb|_1$.
Let $\tau$ denote the user's utility when the bid vector is $\bfd$.

Finally, imagine that there are $B+1$ users, each with a bid $\Gamma$.
No matter what the inclusion rule is, there must be a user cannot be included and gets the utility zero.
In expectation, user $i$'s bid is chosen by a honest miner is at most $B/(B+1)$, so the expected utility is $\frac{B}{B+1} \cdot \tau$.
The miner can sign a contract with user $i$, and guarantees that user $i$'s bid is included.
In this case, miner's utility is still $\mu(\bfd)$, while user $i$'s utility becomes $\tau$.
This violates $2$-weak-SCP.
}

\section{Necessity for Blocks to Contain Unconfirmed Transactions}
Observe that in our buring second-price auction, 
not all transactions in the block are confirmed.
In particular, only the 
top $k$ have a chance of being confirmed, and 
remaining $B-k$ included bids are  
not confirmed. Instead, they serve the role of setting the price, i.e.,
they are used by the blockchain to compute the payment for each confirmed
bid and the miner revenue.
In the cryptocurrency community, there is an ongoing debate
whether including unconfirmed transactions in a block is a good idea.
The argument against this approach is that 
``real estate'' on the blockchain is a scarce resource, 
so we ideally do not
want to waste space including 
unconfirmed transactions in the block.

We argue that having unconfirmed transactions in the block 
indeed can lead to more versatile mechanisms.
To show this, we argue that 
if ``included'' must be equal to ``confirmed'',
then, even weakly incentive mechanisms are not possible. 
More specifically, we prove the following corollary: 

\begin{corollary}[Impossibility for ``included = confirmed'']
Assume that all transactions included in the block must be confirmed. 
Then, no (possibly randomized) TFM $({\bf x}, {\bf p}, \mu)$ 
with non-trivial miner revenue
can satisfy weak UIC, weak MIC, and 1-weak-SCP at the same time 
--- this impossibility holds no matter whether the block size
is finite or infinite.

Moreover, if the block size is finite, then the only (possibly randomized) TFM 
that achieves weak UIC, weak MIC, and 1-weak-SCP 
is the trivial mechanism that always confirms nothing  
and pays the miner nothing.
\label{cor:includedeqconfirmed}
\end{corollary}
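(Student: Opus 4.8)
The plan is to reduce the statement to the machinery already developed for the strong impossibility (Theorem~\ref{theorem:randomized} and Corollary~\ref{cor:finiteblocksize}), exploiting the ``included $=$ confirmed'' constraint to neutralize the gap between the original and the $\gamma$-strict utility notions. The first observation is that under this constraint, $\gamma$-strict utility collapses to the original utility for every strategic player involving the miner. Indeed, by Remark~\ref{rmk:overbid-weakic} we may assume a miner or a miner-user coalition never submits an overbid or fake transaction that is left out of the block; but when ``included $=$ confirmed'', any included transaction is confirmed, so the extra $\gamma$-penalty (charged only to overbid/fake transactions that are included yet unconfirmed) is never incurred. Hence $1$-weak-SCP coincides with $1$-SCP. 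In particular Lemma~\ref{lemma:randomInequality} applies verbatim, so for each $\bfb_{-i}$ the virtual payment $\pi_{\bfb_{-i}}$ obeys the Myerson payment sandwich, which immediately forces $x_i(\bfb_{-i}, \cdot)$ to be monotone: if it decreased across some $r<r'$, the two sandwich bounds $r(x_i(r')-x_i(r))\le \pi(r')-\pi(r)\le r'(x_i(r')-x_i(r))$ would be contradictory.

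The only place where the weaker hypothesis bites is the user side: weak UIC does \emph{not} recover full UIC, since an overbidding user is now charged for the unconfirmed event. The key point I would isolate is that the revenue argument needs only the payment \emph{upper} bound $p_i(\bfb_{-i}, r) \le r\, x_i(\bfb_{-i},r) - \int_0^r x_i(\bfb_{-i},t)\,dt =: M_i(r)$, not the full Myerson equality. This upper bound arises purely from \emph{underbidding} deviations (a user of true value $r$ shading its bid down to $s<r$), and underbidding is never assessed a $\gamma$-cost; thus the inequality $p_i(r)-p_i(s)\le r(x_i(r)-x_i(s))$ holds under weak UIC exactly as under UIC. Refining a partition of $[0,r]$ together with the border condition $p_i(\bfb_{-i},0)=0$ (a bid of $0$ pays $0$) then yields $p_i \le M_i$.

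With these two ingredients I would run the proof of Theorem~\ref{theorem:randomized} almost unchanged. From the sandwich on $\pi$ and Lemma~\ref{lem:sandwich} we obtain $\widetilde\pi_{\bfb_{-i}}(r)=M_i(r)$ exactly, i.e.\ $\mu(\bfb_{-i},r)=p_i(\bfb_{-i},r)-M_i(r)+\mu(\bfb_{-i},0)$. Substituting $p_i\le M_i$ gives $\mu(\bfb_{-i},r)\le \mu(\bfb_{-i},0)$, so zeroing out any coordinate can only weakly increase the miner revenue. Telescoping coordinate by coordinate down to the all-zero vector, and using $\mu\ge 0$ together with $\mu(\mathbf{0})\le |\mathbf{0}|_1=0$, yields $\mu\equiv 0$. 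This already rules out non-trivial miner revenue and proves the first half of the corollary, irrespective of block size; note that weak MIC is not actually needed for this direction.

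For the second half I would adapt Corollary~\ref{cor:finiteblocksize}. Assuming a finite block bound $B$ and, for contradiction, a bid vector $\bfb$ with $x_{i^*}(\bfb)>0$, pad the instance with $n\gg B$ extra users all bidding $b_{i^*}+\epsilon$; by pigeonhole some such user $j$ is confirmed with probability at most $B/n$ under honest play, so its honest utility is negligible. The coalition of the miner and $j$ instead has $j$ bid $b_{i^*}$ and has the miner present the block exactly as the honest mechanism would on $\bfb$ (with $j$ occupying the $i^*$ slot); then $j$ is confirmed with probability $x_{i^*}(\bfb)$ for a fixed positive gain while the miner still earns $0$. Crucially this deviation consists only of underbidding plus an inclusion choice --- it contains no unconfirmed overbid or fake transaction --- so it carries no $\gamma$-penalty and the contradiction goes through under $1$-weak-SCP just as under $1$-SCP. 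Hence nothing is ever confirmed, which together with $\mu\equiv 0$ forces the trivial mechanism. The main obstacle throughout is conceptual rather than computational: recognizing that weak UIC preserves exactly the one half of Myerson's characterization (the payment upper bound, from underbidding) that the revenue argument requires, while the lost half (the lower bound, from overbidding) is precisely what the ``included $=$ confirmed'' assumption renders harmless.
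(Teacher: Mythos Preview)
Your argument is correct and takes a genuinely different route from the paper's. The paper first invests effort in recovering the \emph{full} Myerson identity $p_i(\bfb_{-i},r)=r\,x_i(\bfb_{-i},r)-\int_0^r x_i(\bfb_{-i},t)\,dt$ under the weak notions; this is done through Lemmas~\ref{lem:minerindiff}--\ref{lem:myerson-incleqconf}, and it is here that weak MIC is used in an essential way (to force a user who is sometimes included and sometimes not to pay its full bid whenever included, which neutralizes the overbidding penalty and resurrects user-DSIC). Your key observation is that the revenue argument in Theorem~\ref{theorem:randomized} does not need the equality---only the inequality $p_i\le M_i$, and that inequality is supplied by the underbidding half of weak UIC alone, which carries no $\gamma$-cost. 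Combined with your direct derivation of monotonicity of $x_i$ from the $\pi$-sandwich itself (so Lemma~\ref{lem:sandwich} applies), this yields $\mu(\bfb_{-i},r)\le\mu(\bfb_{-i},0)$ and hence $\mu\equiv 0$ without ever invoking weak MIC. Your reduction of $1$-weak-SCP to $1$-SCP via Remark~\ref{rmk:overbid-weakic} is the same idea as the paper's ``simulation'' argument, just phrased more directly. For the finite-block-size half you follow the paper's Corollary~\ref{cor:finiteblocksize} and correctly note that the deviation there is pure underbidding plus an inclusion choice, so it transfers to the weak setting unchanged. The net effect is that your proof is shorter and establishes a slightly stronger statement for the first half (weak MIC can be dropped from the hypotheses there), at the cost of the small extra step of justifying $p_i\le M_i$ via a one-sided Riemann--Stieltjes refinement rather than quoting Myerson wholesale.
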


\paragraph{Myerson's lemma still holds.}
To prove this corollary, an important stepping stone is
to prove that Myerson's lemma 
still holds for any weak UIC, weak MIC, 
and 1-weak-SCP (randomized) mechanism where ``included = confirmed''. It turns out that this 
is somewhat non-trivial to prove.

Recall that in a randomized mechanism, the random coins come from two sources:
1) the miner can flip random coins to decide which transactions 
to include in the block;
2) once the inclusion choices are made,
the blockchain flips random coins to determine  
which of the included transactions are confirmed, 
how much each confirmed transaction pays, and how much the miner gets.
In other words, the randomness in the inclusion rule
is chosen by the miner, whereas the randomness
in the confirmation, payment, and miner-revenue rules are chosen
by the blockchain. 
A strategic miner may choose its random coins  
arbitrarily and not uniformly at random, to increase its expected gain.
On the other hand, we assume that the blockchain's randomness
is trusted. 
In other words, 
we assume that the blockchain can toss fresh random coins 
that are revealed {\it after}
the miner commits to its inclusion decision --- this makes our impossibility  
result stronger,
since if the blockchain's randomness is revealed
to the miner earlier, it makes mechanism design even harder.

\paragraph{Terminology and notation.}
Fix an arbitrary bid vector $\bfb = (b_1, \ldots, b_m)$. 
Let $S \subseteq \{b_1, \ldots, b_m\}$ denote a subset
of these bids to include in the block. We often call $S$ an {\it inclusion outcome}.
Note that if ``included = confirmed'', the miner is essentially choosing
which transactions are confirmed directly, too.
Whenever the miner picks an inclusion outcome $S \subseteq \{b_1, \ldots, b_m\}$, 
it can calculate its expected
utility denoted $\E(\mu | S)$
where the expectation is taken over  
the choice of the blockchain's random coins.
We use the notation $\E(\mu)$ to denote the miner's expected
utility under $\bfb$,
 had it executed the TFM honestly.


Fix an arbitrary bid vector $\bfb = (b_1, \ldots, b_m)$.
We say that an inclusion outcome 
$S \subseteq \{b_1, \ldots, b_m\}$ is {\it possible} (w.r.t. $\bfb$), 
if it is encountered with non-zero probability 
in an honest execution of the TFM over $\bfb$.

\begin{lemma}
Suppose that a randomized TFM satisfies weak MIC, and moreover, 
any transaction included in the blockchain must be confirmed.
Fix an arbitrary bid vector $\bfb = (b_1, \ldots, b_m)$.
For any possible inclusion outcome $S \subseteq \{b_1, \ldots, b_m\}$
it must be that $\E(\mu | S) = \E(\mu)$.

As a direct corollary, for any 
possible inclusion outcomes $S, S' \subseteq \{b_1, \ldots, b_m\}$
it must be that $\E(\mu | S) = \E(\mu | S')$.
\label{lem:minerindiff}
\end{lemma}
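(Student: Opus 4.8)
The plan is to exploit weak MIC through a simple averaging (convex-combination) argument. The crucial observation is that, since ``included $=$ confirmed'', a possible inclusion outcome $S$ is itself a valid block that the miner could have committed to deterministically. Hence the deviation ``ignore the honest (possibly randomized) inclusion rule and always output exactly the block $S$'' is a legitimate strategy in the miner's strategy space, and weak MIC must guarantee that this deviation is not profitable.

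First I would write the honest expected miner utility as an average over inclusion outcomes. Let $\mathcal{S}$ denote the set of possible inclusion outcomes, i.e.\ those $S$ with $w_S := \Pr[\bfI(\bfb) = S] > 0$, where the probability is over the miner's honest inclusion coins. Conditioning on the realized inclusion outcome and then taking expectation over the blockchain's trusted random coins gives
\[
\E(\mu) = \sum_{S \in \mathcal{S}} w_S \cdot \E(\mu \mid S),
\]
a convex combination with strictly positive weights summing to $1$.

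Next I would invoke weak MIC on the fixed-$S$ deviation. Because $S$ is possible, it satisfies every block-validity constraint (otherwise the honest inclusion rule could never have produced it), so the miner is free to deterministically output the block $S$. Under this deviation the inclusion randomness is eliminated while the blockchain's randomness remains, so the miner's expected utility is exactly $\E(\mu \mid S)$; moreover, since every included transaction is confirmed, the miner never leaves an injected or overbid transaction unconfirmed, so the $\gamma$-strict penalty never fires and the miner's utility is simply its revenue. Weak MIC then yields $\E(\mu \mid S) \le \E(\mu)$ for every $S \in \mathcal{S}$ (indeed for every valid $S$).

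Finally I would combine the two facts. Since $\E(\mu)$ is a convex combination of the values $\{\E(\mu \mid S)\}_{S \in \mathcal{S}}$, each of which is at most $\E(\mu)$ and each carrying strictly positive weight, no term can fall strictly below the mean: if some $\E(\mu \mid S_0) < \E(\mu)$ with $w_{S_0} > 0$, then $\sum_{S} w_S \, \E(\mu \mid S) < \sum_{S} w_S \, \E(\mu) = \E(\mu)$, contradicting the displayed identity. Hence $\E(\mu \mid S) = \E(\mu)$ for all $S \in \mathcal{S}$, and the stated corollary $\E(\mu \mid S) = \E(\mu \mid S')$ for any two possible outcomes follows immediately. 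The only real subtlety to justify carefully is the admissibility claim --- that committing to a fixed \emph{possible} block is a deviation genuinely captured by weak MIC; once that is granted, the rest is the elementary principle that an average equal to its own upper bound forces every positively weighted term to meet that bound.
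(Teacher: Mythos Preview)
Your proposal is correct and is essentially the same argument as the paper's: both hinge on the observation that the miner can deviate by deterministically committing to any fixed possible inclusion outcome $S$, so weak MIC forces $\E(\mu\mid S)\le \E(\mu)$, and then the convex-combination identity $\E(\mu)=\sum_S w_S\,\E(\mu\mid S)$ pins every term to equality. The paper phrases it contrapositively (if some term differs then some term strictly exceeds the average, contradicting weak MIC), while you phrase it directly, but the content is identical.
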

\begin{proof}
\elaine{TODO: we need to fix the defns for MIC for randomized case to incorprate
``expected'' utility}
Suppose 
that there is a possible inclusion 
outcome $S$ where $\E(\mu|S) \neq \E(\mu)$. 
It must be that there is  a possible inclusion 
outcome $S^*$ where $\E(\mu|S^*) > \E(\mu)$. 
In this case, instead of choosing the miner coins at random as prescribed
by the mechanism, 
it strictly benefits the miner 
to choose the specific inclusion outcome $S^*$.
This violates weak MIC.
\end{proof}

\ignore{
\begin{lemma}
Suppose that a randomized TFM satisfies 
1-weak-SCP, 
and moreover, 
any transaction included in the blockchain must be confirmed.
Fix an arbitrary bid vector $\bfb = (b_1, \ldots, b_m)$, and consider 
an arbitrary inclusion outcome $S \subseteq \{b_1, \ldots, b_m\}$ that is possible
w.r.t. $\bfb$.
It must be that $S$ includes the highest $|S|$ 
number of bids\footnote{This does not preclude having two possible inclusion
outcomes $S$ and $S'$ 
that include different number of bids.}
from the bid vector $\bfb$.
\label{lem:inclhighest}
\end{lemma}
\begin{proof}
Suppose there is a possible inclusion outcome $S$ 
that does not include the highest $|S|$ bids, that is, there is some bid
$b_j$ that gets included and thus confirmed, 
but another bid $b_i > b_j$ is not included.
Imagine that the users' true values are reflected by the vector $\bfb$. 
The miner 
now colludes with the $i$-th user and the coalition adopts the following strategy.
The miner executes the inclusion rule honestly. If the inclusion outcome happens 
to be $S$, then 
the miner asks user $i$ to 
bid $b_j < b_i$ instead of its true value
$b_i$. \elaine{TODO: this also needs some clarification in modeling}
The miner now includes 
the set $S$ where the coordinate $b_j$ is replaced with user $i$'s bid (which
also equals $b_j$ now).
In comparison with truthful bidding and honest execution of the inclusion rule, 
the coalition's utility increases by 
$\Pr[S] \cdot (b_i - b_j) > 0$
if it adopts the above strategy. 
\end{proof}
}

\begin{lemma}
Suppose that a randomized TFM satisfies 
weak MIC and 1-weak-SCP, and moreover, 
any transaction included in the blockchain must be confirmed.
Suppose that under some bid vector $\bfb = (b_1, \ldots, b_m)$, 
there is at least one possible inclusion outcome 
that includes $b_i$, and at least one possible inclusion outcome
that does not include $b_i$.
Then, 
consider any possible inclusion outcome $S$ that includes $b_i$, 
it must be that 
conditioned on $S$, user $i$ pays 
its full bid $b_i$ with probability $1$.
\label{lem:payfullbid}
\end{lemma}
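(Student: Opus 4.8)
The plan is to pit the coalition of the miner and user $i$ against the mechanism, exploiting that under ``included $=$ confirmed'' the miner can deterministically force any possible inclusion outcome. First I would fix $\bfb$ and, for every possible inclusion outcome $S$, define the quantity
\[
W(S) := \E(\mu \mid S) + \mathbb{1}[b_i \in S]\cdot\bigl(b_i - \E(p_i \mid S)\bigr),
\]
which is exactly the expected joint utility of the coalition when the miner forces the inclusion outcome $S$ while user $i$ bids truthfully (so $v_i = b_i$, meaning no overbidding and hence no weak-utility cost is charged). Here it is important that the blockchain's payment and revenue rules read only the included set, so $\E(\mu \mid S)$ and $\E(p_i \mid S)$ are the same whether $S$ arises in an honest execution or is forced by a strategic miner. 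By Lemma~\ref{lem:minerindiff}, $\E(\mu \mid S) = \E(\mu)$ for every possible $S$; thus $W(S) = \E(\mu)$ when $b_i \notin S$, and $W(S) = \E(\mu) + b_i - \E(p_i \mid S)$ when $b_i \in S$.

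The crux is an averaging argument. The honest coalition utility $U$ decomposes as $U = \sum_S \Pr[S]\,W(S)$, a convex combination over the possible outcomes, so $U \le \max_S W(S)$. On the other hand, forcing any single possible outcome $S^{*}$ is a legal deviation, since the miner may include an arbitrary valid subset of the mempool; hence $1$-weak-SCP yields $U \ge W(S^{*})$ for every possible $S^{*}$, i.e.\ $U \ge \max_S W(S)$. Combining, $U = \max_S W(S)$, and since an average equals its maximum only when every term carrying positive weight attains the maximum, $W(S)$ is constant over all possible $S$.

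To finish, I would invoke the hypothesis that there is a possible outcome $S_0 \not\ni b_i$ and a possible outcome $S^{*} \ni b_i$. Equating $W(S^{*}) = W(S_0)$ gives $\E(\mu) + b_i - \E(p_i \mid S^{*}) = \E(\mu)$, whence $\E(p_i \mid S^{*}) = b_i$. Because a user never pays more than its bid we have $p_i \le b_i$ pointwise, so a conditional expectation equal to $b_i$ forces $p_i = b_i$ with probability one conditioned on $S^{*}$, which is the claim.

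The main obstacle I expect is bookkeeping rather than conceptual: I must simultaneously justify both directions of the averaging inequality (the $\le$ from convexity of the honest mixture and the $\ge$ from $1$-weak-SCP applied to each forced outcome) and argue that the ``force $S$, bid truthfully'' deviation genuinely charges no hidden $\gamma$-strict cost, since the coalition only reshuffles which honest bids are included and injects no fake or overbid transaction. It is also worth flagging that the existence of a possible outcome \emph{omitting} $b_i$ is precisely what pins the common value of $W$ down to $\E(\mu)$; without that hypothesis the argument would only equalize payments across including-outcomes, not force them up to the full bid.
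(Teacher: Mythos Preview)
Your proposal is correct and follows essentially the same approach as the paper: both fix the coalition of the miner and user $i$, invoke Lemma~\ref{lem:minerindiff} to freeze the miner's revenue across all possible inclusion outcomes, and then use $1$-weak-SCP against the deviation ``force a single possible outcome'' to conclude that user $i$'s conditional utility must be constant, hence zero (by the existence of an excluding outcome), hence the payment equals the bid. The only difference is cosmetic: the paper argues by contradiction, picking the outcome $S^*$ that minimizes user $i$'s payment and showing the coalition strictly gains by forcing it, whereas you package the same comparison as an averaging identity $U=\sum_S \Pr[S]\,W(S)$ squeezed between $\max_S W(S)$ from both sides.
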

\begin{proof}
Suppose that the claim does not hold, i.e., there is a 
possible inclusion outcome 
that includes $b_i$, but user $i$ pays $p_i < b_i$; and moreover,
there is at least one possible inclusion outcome 
that does not include $b_i$.
Let $S^*$ be a possible inclusion outcome 
that includes $b_i$
that minimizes the payment of user $i$.
In this case, 
the miner can form a coalition with user $i$, and the miner
can choose the inclusion  
outcome $S^*$ with probability $1$.
Due to weak MIC and Lemma~\ref{lem:minerindiff}, the miner's utility is 
still $\E(\mu)$ when it adopts this strategy, i.e., the same as playing honestly.
However, user $i$'s utility is positive and is maximized
under $S^*$. Furthermore, since there is at least one possible
inclusion outcome that does not include $b_i$ where user $i$'s utility is $0$, 
it must be that user $i$'s expected utility
is strictly greater under this strategy than playing honestly.
Therefore, the coalition strictly benefits 
under this strategy, which violates $1$-weak-SCP.
\end{proof}

\begin{lemma}
Suppose that a randomized TFM satisfies weak UIC, weak MIC, and 1-weak-SCP,
and moreover, any included transaction must be confirmed.
Then, 
the TFM must satisfy the constraints imposed by the Myerson's Lemma.
\label{lem:myerson-incleqconf}
\end{lemma}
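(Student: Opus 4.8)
The plan is to reduce the claim to the standard Myerson characterization by showing that, under the ``included $=$ confirmed'' restriction together with weak MIC and $1$-weak-SCP, weak UIC already forces the mechanism to be user-DSIC under the \emph{old} (equivalently, $0$-strict) utility notion, restricted to a single user's untruthful bidding. Once this is established, the payment-sandwich derivation behind Lemma~\ref{lem:myerson} applies verbatim and yields both monotonicity of $\bfx$ and the Myerson payment rule of Equation~(\ref{eq:payment}). Exactly as in Fact~\ref{fct:myerson-weakuic}, only single-user bid changes matter for this derivation, so I may ignore fake-transaction injection throughout.

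Fix $\bfb_{-i}$ and abbreviate $x(r) := x_i(\bfb_{-i}, r)$ and $p(r) := p_i(\bfb_{-i}, r)$ for the allocation probability and expected payment. Underbidding is immediate: for any $r' < v_i$ the bid $r'$ is not an offending transaction, so the $\gamma$-strict utility and the old utility coincide, and weak UIC directly rules out profitable underbidding. The crux is overbidding, $r' > v_i =: r$, where the two utility notions may differ because an unconfirmed overbid is charged under the $\gamma$-strict notion. I will show overbidding is unprofitable under the \emph{old} utility by splitting on $x(r')$:

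\begin{itemize}
\item If $x(r') = 0$, the old utility of bidding $r'$ is $0$, which is at most the truthful old utility $r\cdot x(r) - p(r)$, since payments never exceed bids and hence $p(r) \le r\cdot x(r)$.
\item If $0 < x(r') < 1$, then under the bid vector $(\bfb_{-i}, r')$ there is a possible inclusion outcome containing this bid and a possible inclusion outcome not containing it, so Lemma~\ref{lem:payfullbid} forces the bid to pay its full value $r'$ whenever confirmed; thus $p(r') = r'\cdot x(r')$ and the old utility of overbidding equals $r\cdot x(r') - r'\cdot x(r') = x(r')\,(r - r') \le 0$, again at most the truthful utility.
\item If $x(r') = 1$, the overbid transaction is confirmed with probability $1$, so it is never unconfirmed and no extra cost is charged even under the $\gamma$-strict notion; hence the $\gamma$-strict and old utilities of this deviation coincide, and weak UIC forbids it outright.
\end{itemize}

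Combining the three cases shows that no single-user bidding deviation is profitable under the old utility, i.e.\ the mechanism is user-DSIC in the classical sense, and Lemma~\ref{lem:myerson} then delivers the Myerson constraints, completing the proof. I expect the $x(r') = 1$ case to look like the main obstacle at first glance, since weak UIC genuinely relaxes UIC whenever the overbid is confirmed with intermediate probability; the realization that defuses it is that an always-confirmed transaction incurs no unconfirmed-cost, so the relaxation is vacuous precisely there, while the genuinely intermediate regime $x(r') \in (0,1)$ is exactly the one pinned down by Lemma~\ref{lem:payfullbid} (whose applicability hinges on $0 < x(r') < 1$ guaranteeing possible inclusion outcomes of both kinds).
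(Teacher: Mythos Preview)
Your proposal is correct and follows essentially the same approach as the paper: both arguments reduce to classical user-DSIC by handling underbidding via the coincidence of the two utility notions, and overbidding via a three-way case split on $x(r')\in\{0,(0,1),1\}$, invoking Lemma~\ref{lem:payfullbid} in the intermediate case and the vacuousness of the penalty in the $x(r')=1$ case. If anything, your write-up is slightly more explicit than the paper's (e.g., you compute the exact negative utility $x(r')(r-r')$ in the intermediate case and justify $p(r)\le r\cdot x(r)$ for the truthful baseline).
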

\begin{proof}
\elaine{copied from the deterministic version of this proof, repeated text.
consider refactoring}
Recall that a mechanism disincentivizes an individual user from overbidding
or underbidding under the old utility notion, 
we say that it is user-DSIC (short for dominant-
strategy-incentive-compatible). Similarly, if a mechanism disincentivizes 
an individual user from
overbidding or underbidding under the new utility notion, we say that it 
is weakly user-DSIC.
Clearly, UIC implies user-DSIC and weak UIC implies weakly user-DSIC,
since in our definitions of 
(weak) UIC, the user can misbehave in more ways 
besides over- or under-bidding.
Since Myerson's lemma
holds for user-DSIC, it suffices 
to show that any (randomized) TFM where ``included = confirmed''
and satisfying weak user-DSIC,  
weak MIC, and 1-weak-SCP must also satisfy user-DSIC.

Suppose that 
this is not true, i.e., 
there is some TFM where ``included = confirmed''
and satisfying weak user-DSIC,
weak MIC, and 1-weak-SCP, however, the TFM does not  
satisfy user-DSIC.
Notice that if a user underbids, its utility is the same
under the old and new utility notions.  
Therefore, 
there must exist a bid vector $\bfb = (b_1, \ldots, b_m)$ 
some user $j \in [m]$,  and a bid $b'_j > b_j$, such that 
the user $j$ is incentivized to overbid  
under the old utility notion, but not incentivized to overbid
under the new utility notion.
There are the following cases, and we rule each one out, which allows
us to reach a contradiction. Below we use the terms ``included''
and ``confirmed'' interchangeably, and we define $\bfb' := (\bfb_{-j}, b'_j)$.
\begin{itemize}[leftmargin=5mm]
\item 
{\it Case 1: user $j$ is confirmed with probability $1$ under $\bfb'$.}
In this case, user $j$'s utility is the same under the old
and new utility definitions, and therefore, it is not possible
that user $j$ wants to deviate under the old utility but does not want to
under the new utility notion.
\item 
{\it Case 2: user $j$ is unconfirmed with probability $1$ under $\bfb'$.}
In this case, under the old utility, user $j$'s utility is $0$ even when it 
bids $b'_j$. Therefore, user $j$ does not want to deviate under
the old utility notion which contradicts our assumption.
\item 
{\it Case 3: user $j$ sometimes confirmed and sometimes unconfirmed under $\bfb'$.}
Since the TFM is weak MIC and 1-weak-SCP, and satisfies
``included = confirmed'', by 
Lemma~\ref{lem:payfullbid}, 
whenever the user 
$j$ is confirmed, it must pay its full bid. 
Therefore, under the old utility notion, if user $j$ bids $b'_j$ instead,
its utility is always $0$.
This means that the user does not want to deviate 
under the old utility notion, which contradicts our assumption.
\end{itemize}

\end{proof}



\paragraph{Proof of Corollary~\ref{cor:includedeqconfirmed}.}
We now continue with the proof of Corollary~\ref{cor:includedeqconfirmed}.
The proof of Lemma~\ref{lemma:randomInequality}
also makes use of the strategic deviation
where a user colluding with the miner 
overbids relative to its true value.
Specifically, 
the proof of Lemma~\ref{lemma:randomInequality}
relies on the fact that such overbidding 
comes for free if the offending transaction is not confirmed. 
In general, this is not true under the new utility function
associated with weak incentive compatibility.
However, we now argue that 
if ``included'' must be equal to ``confirmed'', then, 
the effect of this  
deviation (where the overbid transaction is not confirmed)
can alternatively be realized in a way 
that is free of charge.

More concretely, instead of having 
the colluding user actually carry out the overbidding,  
we instead exploit the miner's ability to include
an arbitrary  
subset of the mempool in the block.
Let ${\bf b}$ be the current mempool, which
includes the colluding user's bid $b$.
If in the proof of Lemma~\ref{lemma:randomInequality}, the miner wants 
the user to overbid $b' > b$ instead, it can simply 
pretend that the colluding user's bid  
is $b'$. In other words,
the miner can simulate running the mechanism
on ${\bf b} \backslash \{b\} \cup \{b'\}$.
As a result, the transaction $b'$ 
would not be confirmed, and thus $b'$ 
would not be included in the block, either. 
Therefore, the fact that the colluding user has not 
authorized/signed the transaction $b'$ does not matter 
in carrying out this deviation\footnote{In other words, $b'$ exists only in the simulation in the miner's head,
but is not released to the public network.}.

It is easy to see that as long as Myerson's Lemma holds
and any overbid transaction
that is not confirmed in the present block comes for free, 
Lemma~\ref{lemma:randomInequality}
still holds. 
Therefore, 
we conclude that Lemma~\ref{lemma:randomInequality}
still holds even under weak UIC and 1-weak-SCP,
if we insist that ``included'' be equal to ``confirmed''.
Now, 
as long as Lemma~\ref{lemma:randomInequality} and Myerson's Lemma still hold,
the proof of 
Theorem~\ref{theorem:randomized}
follows in the same way as before,
and
so does the proof of 
Corollary~\ref{cor:finiteblocksize}.
\elaine{double check that the corollary holds after it's fixed}
\Hao{If the mechanism is deterministic, then I think Corollary~\ref{cor:finiteblocksize} still holds, because the attack in the proofx is actually asking the user to ``underbid.'' However, if the mechanism is randomized, that means whether a bid is confirmed is probabilistic. Then, it is not clear what's inclusion = confirmation. I think ``inclusion = confirmation'' only makes sense if the mechanism is deterministic.}

If we restrict ourselves to {\it deterministic} mechanisms, we can actually
prove a counterpart of 
Corollary~\ref{cor:includedeqconfirmed}
without having to even rely on weak MIC. 
This is formally stated in the following corollary:
\begin{corollary}
Assume that all transactions included in the block must be confirmed. 
Then, no deterministic TFM $({\bf x}, {\bf p}, \mu)$ 
with non-trivial miner revenue
can satisfy weak UIC and 1-weak-SCP at the same time 
--- this impossibility holds no matter whether the block size
is finite or infinite.
Moreover, if the block size is finite, then the only deterministic TFM 
that achieves weak UIC and 1-weak-SCP 
is the trivial mechanism that always confirms nothing  
and pays the miner nothing.
\end{corollary}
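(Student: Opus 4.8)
The plan is to reuse, essentially verbatim, the argument already developed for Corollary~\ref{cor:includedeqconfirmed}, but to observe that in the deterministic regime one of its ingredients --- Myerson's Lemma --- comes for free and, crucially, without appealing to weak MIC. Recall that the proof of Corollary~\ref{cor:includedeqconfirmed} invoked weak MIC exactly once, namely to establish that the mechanism obeys Myerson's Lemma (through Lemma~\ref{lem:myerson-incleqconf}); every other step rested solely on weak UIC, 1-weak-SCP, and the ``included $=$ confirmed'' assumption. For deterministic mechanisms, however, Fact~\ref{fct:myerson-weakuic} already guarantees that any weakly UIC mechanism satisfies the monotonicity and payment constraints of Myerson's Lemma, with no reference to the miner at all. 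Thus my first move is to discard Lemma~\ref{lem:myerson-incleqconf} entirely and substitute Fact~\ref{fct:myerson-weakuic}.

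Second, I would verify that the deterministic specialization of Lemma~\ref{lemma:randomInequality} holds under weak UIC and 1-weak-SCP once ``included $=$ confirmed'' is imposed. The single delicate deviation is the one in which the miner asks a colluding user whose true value is $r$ to bid some $r' > r$ with the resulting transaction left unconfirmed: under $\gamma$-strict utility this would ordinarily cost the coalition $\gamma(r'-r)$. The point, exactly as argued in the proof of Corollary~\ref{cor:includedeqconfirmed}, is that when ``included $=$ confirmed'' this deviation can be realized at no cost, since a bid that is unconfirmed is also unincluded, so the miner never actually places an overbid transaction in the block. Concretely, with real mempool $(\bfb_{-i}, r)$ the miner simply includes the subset that the honest inclusion rule would select on $(\bfb_{-i}, r')$; because that subset omits $r'$, it is a subset of the genuine mempool, and the miner collects revenue $\mu(\bfb_{-i}, r')$ while the colluding user is untouched. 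Hence the ``payment sandwich'' inequalities of Lemma~\ref{lemma:randomInequality} survive unchanged under weak incentive compatibility, and the underbidding direction is free in any case.

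With Myerson's Lemma (Fact~\ref{fct:myerson-weakuic}) and Lemma~\ref{lemma:randomInequality} both in force using only weak UIC and 1-weak-SCP, the proof of Theorem~\ref{theorem:randomized} carries over word for word: combining the sandwich with Lemma~\ref{lem:sandwich} forces the virtual payment to coincide with the true payment, which makes $\mu(\bfb_{-i}, \cdot)$ constant in user $i$'s bid and therefore $\mu \equiv 0$ on every bid vector, regardless of block size. This establishes the first assertion --- no deterministic TFM with non-trivial miner revenue can be simultaneously weak UIC and 1-weak-SCP. For the finite-block-size claim I would then replay the proof of Corollary~\ref{cor:finiteblocksize}: zero miner revenue plus a finite block size $B$ lets us pad the instance with many users bidding just above a confirmed bid, forcing one of them to be unconfirmed, after which the miner colludes with that user and asks it to underbid down to the confirmed level. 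Since underbidding never triggers the extra $\gamma$-charge, this deviation is free exactly as in the original proof, the coalition strictly gains, and 1-weak-SCP is violated unless the mechanism confirms nothing and pays the miner nothing.

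The main obstacle, and the only genuinely new content relative to Corollary~\ref{cor:includedeqconfirmed}, is the bookkeeping in the second step: one must confirm that the inclusion set the deviating miner wishes to realize is always available from the true mempool --- which holds precisely because the ``overbid'' transaction is unconfirmed, hence unincluded, hence absent from the block --- and that no step of the transplanted Theorem~\ref{theorem:randomized} or Corollary~\ref{cor:finiteblocksize} arguments secretly relied on weak MIC. A short audit shows it does not: weak MIC entered Corollary~\ref{cor:includedeqconfirmed} solely through the Myerson step, which Fact~\ref{fct:myerson-weakuic} now supplies for deterministic mechanisms directly, while both the free-overbid simulation and the finite-block-size underbidding attack use only 1-weak-SCP together with ``included $=$ confirmed''. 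Everything else is a faithful transcription.
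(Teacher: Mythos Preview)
Your proposal is correct and takes essentially the same approach as the paper: replace the weak-MIC-dependent Lemma~\ref{lem:myerson-incleqconf} by Fact~\ref{fct:myerson-weakuic} (which needs only determinism and weak UIC), then rerun the proof of Corollary~\ref{cor:includedeqconfirmed} verbatim. Your audit --- that weak MIC entered solely through the Myerson step, that the overbid-unconfirmed deviation is free because unconfirmed bids are unincluded, and that the finite-block-size attack in Corollary~\ref{cor:finiteblocksize} uses only underbidding --- is exactly the content the paper compresses into a single sentence.
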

\begin{proof}
Almost the same as the proof of 
Corollary~\ref{cor:includedeqconfirmed}, except that now, since the TFM is promised to be deterministic,
we can use  
Fact~\ref{fct:myerson-weakuic} instead of Lemma~\ref{lem:myerson-incleqconf}
to establish the fact that Myerson's lemma still holds.
\end{proof}

\section{Conclusion and Open Questions}
\label{sec:conclusion}
Mechanism design in decentralized settings (e.g., cryptocurrencies)
departs significantly from the classical literature
in terms of modeling and assumptions, 
and thus is relatively little understood.
For example, our work shows that even how to formally define incentive compatibility
is subtle and requires careful thought. 
Our work helps to unravel the mathematical structures of incentive compatible
 TFMs, and we hope that our  
definitional contributions 
can serve as a basis 
for future work in this space.
Part of our contribution is also to expose
the general lack of understanding in this fascinating
area of mechanism design. 
The following are some open questions we expose:
\ignore{
\item 
Our burning second-price auction is $1$-SCP, but it does not resist  
collusion between the miner and $2$ users.
For exmaple, if the miner colludes with two of the top $k-1$ users and  
replaces the $k$-th bid with another lower bid, 
then the coalition's joint utility increases. 
Does there exist a non-trivial TFM that simultaneously 
satisfies weak UIC, weak MIC, and $c$-weak-SCP for $c \geq 2$?
} 
\begin{itemize}[leftmargin=5mm,itemsep=1pt]
\item 
Are there other reasonable relaxations 
in the modeling and in incentive compatible notions  
that allow us to circumvent the impossibility results 
we showed?
\item 
Can we formally model the TFM as a repeated game 
and reason about potential
strategic behavior over a longer time scale?
Although our approach which adopts the $\gamma$ parameter 
captures potential costs in the future, right now we still
model the auction as one-shot.
\item 
Can cryptography help in the design of transaction fee mechanisms?
For example, the elegant work of 
Ferreira and Weinberg~\cite{commit-credible-auction} showed
that using cryptographic commitments can help overcome some
of the lower bound results 
shown by Akbarpour and Li~\cite{credibleauction}. 
Other works also use cryptography to help alleviate
strategic manipulations on transaction ordering~\cite{fairblock,fairordering}.  
Whether we can apply such techniques to TFMs is an interesting future direction,
and a subsequent work by Shi, Chung, and Wu~\cite{crypto-decentralized-mech} 
took an initial
step at exploring this direction.
\item 
So far, we assume that the block reward (besides the transaction fees)
paid to the miner is a constant 
that does not affect our game-theoretic analysis. 
It is interesting to explore what the mathematical implications are
when the block reward need not be a constant and can depend on the bids
included in the block. For example, this may mean that  
the miner revenue could even exceed the total user payment.
\item 
Our burning second price mechanism demonstrates
a tradeoff between resilience and efficiency of the mechanism
which is inevitable as shown by our impossibility result. 
However, we currently do know know whether 
the burning second price mechanism achieves optimal efficiency
(in terms of the utilization of on-chain space and money burnt)
for every choice of $\gamma$. This is another exciting direction for future work.
\end{itemize}


\ignore{
\elaine{TODO: rewrite}
Decentralized cryptocurrencies
provide a new playground for mechanism design, and
at the same time, raise exciting new challenges 
that require us to depart significantly from 
traditional modeling techniques and 
assumptions. 
\ignore{
For example, we can no longer assume
a trusted mediator, and 
parties can more easily enter binding side contracts
through the smart contracts provided by modern cryptocurrencies. 

Our work is inspired by a recent line   
of work~\cite{zoharfeemech,yaofeemech,functional-fee-market,eip1559,roughgardeneip1559,dynamicpostedprice} that strived to design the ``ideal'' transaction fee mechanism,  
and yet failed in different capacities.
We prove an impossibility result, showing that it is 
impossible to satisfy   
incentive compatibility for the user and the miner, and at the same time resist side
contracts between the miner and the user(s).
}
As such, mechanism design in a decentralized world is relatively little understood. 
The recent line of work on transaction fee mechanisms~\cite{zoharfeemech,yaofeemech,functional-fee-market,eip1559,roughgardeneip1559,dynamicpostedprice}  
as well as our work
raise several intriguing questions. For example, 
if we could introduce a burn rule like EIP-1559, can we have a TFM  
that satisfies UIC, MIC, and side contract resilience simultaneously,
both in the congested and uncongested regimes? 
Can we formally model and reason about the repeated nature
of the TFM, and reason about potential  
strategic behavior over a longer time scale?
See also Roughgarden~\cite{roughgardeneip1559,roughgardeneip1559-ec}
for a list of open questions.
}

\section*{Acknowledgments}
We gratefully acknowledge helpful technical discussions 
with Kai-Min Chung during an early phase of the project.
We also thank T-H. Hubert Chan for insightful technical discussions.
We thank 
Tim Roughgarden and Matt Weinberg
for helpful discussions at EC'21 about whether blocks
should contain unconfirmed transactions that are just
there to ``set the price''.


\bibliographystyle{alpha}
\bibliography{refs,crypto,bitcoin}

\appendix
\section{Relations Between Incentive Compatibility Notions}
\label{sec:ic-compare}

The notions UIC, MIC, and $1$-SCP are incomparable as depicted
in Figure~\ref{fig:ICgraph}.
\begin{figure}[h]
    \centering
    \begin{tikzpicture}[
	squarednode/.style={rectangle, draw=white, minimum size=12mm},
	]
	\node[squarednode] (A) at (90:3) {UIC};
	\node[squarednode] (B) at (210:3) {MIC};
	\node[squarednode] (C) at (330:3) {$1$-SCP};
	\draw[decoration={markings,
	mark=at position 0.5 with \node[transform shape] (tempnode) {$\backslash$};,
	mark=at position 1 with {\arrow[scale=1.5,>={latex}]{>}}
	},postaction={decorate}, thick] 
	(A.225) -- node[rotate=60,above] {second-price} (B.75);
	\draw[decoration={markings,
	mark=at position 0.5 with \node[transform shape] (tempnode) {$\backslash$};,
	mark=at position 0 with {\arrow[scale=1.5,>={latex}]{<}}
	},postaction={decorate}, thick] 
	(A.255) -- node[rotate=60,below] {first-price} (B.45);
	\draw[decoration={markings,
	mark=at position 0.5 with \node[transform shape] (tempnode) {$\backslash$};,
	mark=at position 1 with {\arrow[scale=1.5,>={latex}]{>}}
	},postaction={decorate}, thick]
	(A.285) -- node[rotate=300,below] {second-price} (C.135);
	\draw[decoration={markings,
	mark=at position 0.5 with \node[transform shape] (tempnode) {$\backslash$};,
	mark=at position 0 with {\arrow[scale=1.5,>={latex}]{<}}
	},postaction={decorate}, thick]
	(A.315) -- node[rotate=300,above] {first-price} (C.105);
	\draw[decoration={markings,
	mark=at position 0.5 with \node[transform shape] (tempnode) {$\backslash$};,
	mark=at position 1 with {\arrow[scale=1.5,>={latex}]{>}}
	},postaction={decorate}, thick]
	(B.345) -- node[rotate=0,below] {posted-price (no burning)} (C.195);
	\draw[decoration={markings,
	mark=at position 0.5 with \node[transform shape] (tempnode) {$\backslash$};,
	mark=at position 0 with {\arrow[scale=1.5,>={latex}]{<}}
	},postaction={decorate}, thick]
	(B.15) -- node[rotate=0,above] {1st-price-or-free} (C.165);
\end{tikzpicture}
    \caption{Relationship among incentive compatibility notions. The same chart holds
for UIC, MIC, and $1$-SCP under $\gamma$-strict-utility for any $\gamma \in [0, 1]$.}
    \label{fig:ICgraph}
\ignore{\qquad
     \begin{subfigure}[b]{0.38\textwidth}
    \input{ICgraph-weak.tex}
\caption{Weak UIC, weak MIC, and $1$-SCP are incomparable.}
    \label{fig:ICgraph}
\end{subfigure}
}
\end{figure}

We explain Figure~\ref{fig:ICgraph} in more detail below:
\begin{itemize}[leftmargin=5mm]
\item
UIC $\not\Rightarrow$ MIC,  
UIC $\not\Rightarrow$ $1$-SCP: the second-price auction satisfies  
UIC, but does not satisfy MIC or $1$-SCP.  
This was pointed out  
in several earlier 
works~\cite{functional-fee-market,roughgardeneip1559,roughgardeneip1559-ec}.
Recall that in the second-price auction, the highest $B$ bids are included
in the block, the top $B-1$ are confirmed and they pay the $B$-th price,
where $B$ is the block size. The miner gets all payment. 
\item 
MIC $\not\Rightarrow$ UIC, $1$-SCP $\not\Rightarrow$ UIC:
the first-price auction satisfies MIC and $c$-SCP for any $c \geq 1$, but is
not UIC.
This was also pointed out
in earlier
works~\cite{functional-fee-market,roughgardeneip1559,roughgardeneip1559-ec}.
Recall that in the first-price auction, the top $B$ bids are included and confirmed,
they each pay their bid, and the miner gets all payment.
\item 
MIC $\not\Rightarrow$ $1$-SCP: 
the posted-price auction 
satisfies MIC but not $1$-SCP. Recall that in the posted-price auction,
there is a fixed reserve
price $r$.  Everyone  bidding at 
least $r$ is included and confirmed 
and pays exactly $r$. The miner gets all payment.
It is easy to check that the mechanism is indeed MIC. 
However, it is not $1$-SCP, since if a user's true value is $0 < r' < r$,
the miner can collude with the user, have the user bid $r$ instead, and the joint
utility of the coalition strictly increases.
\elaine{was this shown by roughgarden?}
\item 
$1$-SCP $\not\Rightarrow$ MIC: 
this is the most subtle to see. 
We construct the following 
``first-price-or-free'' mechanism which is $c$-SCP for any $c \geq 1$, but
not MIC.
The mechanism is not MIC since if there is only one bid, it makes sense
for the miner to inject a fake bid  
to increase its utility.
We show that the mechanism satisfies $c$-SCP for any $c \geq 1$ below.
\end{itemize}

\begin{mdframed}
    \begin{center}
    {\bf First-price-or-free mechanism} 
    \end{center}
    \begin{itemize}[leftmargin=5mm,itemsep=1pt]
    \item 
    Choose all bids in the current bid vector to include in the block.
    Let $\bfb = (b_1,\ldots,b_m)$ be the included bids of the block, where $b_1 \geq \cdots \geq b_m$.
    \item 
	Only the highest bid ($b_1$) is confirmed.
	Every other bid is unconfirmed.
	\item 
    If there is only one bid in the block ($m = 1$), the only confirmed user pays nothing.
    Otherwise, if $m \geq 2$, the only confirmed user pays $b_1$.
    \item 
    The miner gets all the payment.
    \end{itemize}
\end{mdframed}

\ignore{
In this paper, we examine transaction fee mechanisms by three metrics: 
user incentive compatible (UIC), miner incentive compatible (MIC), and $c$-side-contract-proof ($c$-SCP).
It is natural to ask what their relations are.
Do we really need three properties to characterize a mechanism?
In this section, we will show that none of them implies each other.


Recall that in the first-price auction, every included bid is confirmed, and pays its bid to the miner.
In the second price, if the block size is $B$, only the highest $B-1$ bids are confirmed, and they pay the price of $B^\text{th}$ bid to the miner.
As Roughgarden \cite{roughgardeneip1559} pointed out, first-price auction is MIC and $c$-SCP for all $c$, while it is not UIC.
On the other hand, second-price auction is known to be UIC, while it is not MIC and $1$-SCP.

The relation MIC $\not\Rightarrow$ $1$-SCP can be seen from the following posted-price auction without burning.
\begin{mdframed}
    \begin{center}
    {\bf Posted-price auction without burning} 
    \end{center}
	\paragraph{Parameter:} posted-price $r$
    \begin{itemize}[leftmargin=5mm,itemsep=1pt]
    \item 
    Choose every bid $\geq r$ to include in the block.
    \item 
    Every bid $\geq r$ is confirmed, and other bids are unconfirmed.
	\item 
	All confirmed bids pay the posted-price $r$
    \item 
	All payment goes to the miner.
    \end{itemize}
\end{mdframed}

Posted-price auction without burning is MIC.
Intuitively, the confirmation of each bid is independent of other bids, so injecting fake transactions would not help more transactions to be confirmed.
Besides, to miner's best interest, it wants to confirm as many transactions as possible, so honestly including every bid $\geq r$ maximize miner's utility.

However, it is not $1$-SCP.
Consider a user whose true value is $r - \epsilon$, for any $0 < \epsilon < r$.
The miner can sign a contract with that user, and asks it to bid $r$ instead.
In this case, that user's utility becomes $\epsilon$, while the miner earns $r$ more.

The last relation $1$-SCP $\not\Rightarrow$ MIC is the trickiest.
At first glance, it seems that $1$-SCP should implies MIC trivially.
If a mechanism can resist a coalition, why can't it resist a single miner deviation?
However, it is possible that the deviation done by the miner only benefits the miner, while it hurts users.
Consider the following mechanism. 

\begin{mdframed}
    \begin{center}
    {\bf Zero-or-all auction \Hao{Need a better name}} 
    \end{center}
	
    \begin{itemize}[leftmargin=5mm,itemsep=1pt]
    \item 
    Choose all bids in the mempool to include in the block.
    Let $\bfb = (b_1,\ldots,b_m)$ be the included bids of the block, where $b_1 \geq \cdots \geq b_m$.
    \item 
	Only the highest bid ($b_1$) is confirmed.
	Every other bid is unconfirmed.
	\item 
    If there is only one bid in the block ($m = 1$), the only confirmed user pays nothing.
    Otherwise, if $m \geq 2$, the only confirmed user pays $b_2$.
    \item 
    The miner gets all the payment.
    \end{itemize}
\end{mdframed}
}

\begin{theorem}
The first-price-or-free mechanism is $c$-SCP for all $c \geq 1$. 
\end{theorem}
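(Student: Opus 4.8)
The plan is to show that for any coalition consisting of the miner together with a set $C$ of at most $c$ users, no strategy can produce a larger joint utility than honest, truthful play. Write $v_i$ for the true value of each coalition user $i \in C$, and set $\alpha := \max_{i \in C} v_i$ and $\beta := \max_{j \notin C} b_j$, the largest fixed bid among the non-colluding users (taking $\beta := 0$ when no such users exist). I would first isolate the one structural feature that drives the whole argument: in the first-price-or-free mechanism exactly one transaction — the highest included bid — is ever confirmed, it pays its own bid, and that payment goes entirely to the miner.

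The key step is to evaluate the coalition's joint utility purely as a function of \emph{who} the single confirmed transaction is. If the confirmed transaction is a coalition member $i$ bidding $b$, its payment (which is $b$, or $0$ if it is the only bid) flows to the miner, who is in the coalition, so the payment cancels and the net joint contribution is exactly $v_i \le \alpha$. If the confirmed transaction is a non-coalition user $j$, the coalition derives only the miner's revenue, which is $b_j$ when at least two bids are included and $0$ otherwise; in either case this is at most $\beta$. If the confirmed transaction is an injected fake bid paying $f$, the miner collects $f$ while the coalition-owned fake transaction has true value $0$ and pays $f$, so the contribution is $f - f = 0$. Since exactly one transaction is confirmed, any coalition strategy yields joint utility at most $\max(\alpha,\beta)$.

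I would then check that honest, truthful play already attains $\max(\alpha,\beta)$, so it is optimal. Under honest play all bids are included and the overall highest bid is confirmed. If that highest bid belongs to a coalition member (who, bidding truthfully, bids exactly $\alpha$), the joint utility is $\alpha$ by the cancellation above; if it belongs to a non-coalition user, then a coalition member is also present, so at least two bids are included, the winning bid $\beta$ is paid to the miner, and the joint utility is $\beta$. Either way honest play achieves $\max(\alpha,\beta)$, matching the upper bound, which establishes $c$-SCP for every $c \ge 1$.

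The remaining work — which I expect to be the main obstacle rather than the core idea — is careful dispatch of the edge cases. The $m=1$ ``free'' rule only ever \emph{lowers} the confirmed transaction's payment, hence can never help the coalition exceed the bound (and it is precisely irrelevant when a coalition member is confirmed, since the payment cancels anyway). Ties at the top are resolved consistently by the identity-agnostic tie-breaking and are valued identically whether the tied winner sits inside or outside the coalition, so the value $\max(\alpha,\beta)$ is unchanged. The degenerate case with no non-coalition users reduces to $\beta = 0$ and the bound to $\alpha \ge 0$. Finally, I would note that the same reasoning carries over verbatim under $\gamma$-strict utility for every $\gamma \in [0,1]$: the only change there is a nonnegative penalty $\gamma(b-v)$ charged to included-but-unconfirmed overbids and fakes, which can only \emph{decrease} the joint utility of a deviation while leaving the honest, truthful outcome — which contains no overbids or fakes — unaffected.
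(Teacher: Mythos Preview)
Your proposal is correct and takes essentially the same approach as the paper: case analysis on who the single confirmed transaction belongs to, exploiting that when a coalition member is confirmed the payment-to-miner cancels, and when a non-coalition user is confirmed the miner revenue is bounded by that user's (fixed) bid. Your framing via the explicit upper bound $\max(\alpha,\beta)$ and the matching honest value is a bit cleaner than the paper's $m=1$ versus $m\ge2$ case split, but the underlying argument is the same.
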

\begin{proof}
Suppose there is only one user with true value $v^*$.
The miner and that user is the only possible coalition.
If they play honestly, that user's bid is the only bid in the block, so it must be confirmed.
Thus, in the honest case, the joint utility is $v^*$. 
If the coalition now deviates, 
then the confirmed bid is either a fake bid or the colluding user's bid.  
In either case, the coalition's utility cannot exceed $v^*$.

Suppose the number of users is $m \geq 2$.
There are two cases. 
First, suppose the highest bid $b_1$ does not belong to the
coalition if all colluding users are bidding truthfully. 
In this case, the coalition's utility is $b_1$ when it behaves honestly,
and $b_1 \geq v^*$ where $v^*$ denotes the highest true value of any colluding user.
Second, 
the highest bid $b_1$ belongs to the 
coalition if colluding users are bidding truthfully.
In this case, the coalition's utility is $b_1 = v^*$
if it behaves honestly.
In either case, we show that if the coalition deviates, it cannot gain. Suppose
$b'_1$ is the new highest bid after deviating.
If $b'_1$ belongs to the coalition, then the miner revenue offsets the coalition's payment,
and thus the coalition's utility cannot exceed  
the highest true value of any colluding user. 
If $b'_1$ does not belong to the coalition, it must be that $b'_1 \leq b_1$,
and the coalition's utility is $b'_1 \leq b_1$.

\ignore{
Suppose there are $m \geq 2$ users so that the bids in the mempool is $\bfb = (b_1,\ldots,b_m)$, where $b_1 \geq \cdots \geq b_m$.
Let $v_i$ be the true value of user $i$ for $i \in [m]$.
Now, suppose the miner and any subset of users form a coalition.
Let $j$ be the user with the highest true value in the coalition.
There are three possible cases.
\begin{itemize}
\item 
{\it Case 1: $v_j > b_1$.}
In this case, if the miner and every user in the coalition play honestly, user $j$ should be confirmed, and the joint utility of the coalition is $v_j$.
Since there is only one user can be confirmed in this mechanism, $v_j$ is the maximal utility the coalition can achieve.
\item 
{\it Case 2: $v_j = b_1$.}
In this case, if the miner and every user in the coalition play honestly, both user $1$ and user $j$ have a chance to be confirmed, while only one of them is actually confirmed.
In either case, the joint utility of the coalition is $v_j$.
It is the maximal utility the coalition can achieve.
\item 
{\it Case 3: $v_j < b_1$.}
In this case, if the miner and every user in the coalition play honestly, user $1$ should be confirmed, and the joint utility of the coalition is $b_1$.
Since there is only one user can be confirmed in this mechanism, if the coalition bids strategically so that other user is confirmed, the utility of the coalition is either smaller or equal to $b_2$ or $v_j$.
It cannot exceed $b_1$.
\end{itemize}
Therefore, in all cases, the utility of the coalition is maximized if the miner and every user in the coalition play honestly.
The argument holds for any coalition, so the mechanism is $c$-SCP, for all $c$.
}
\end{proof}

\paragraph{Relationship for incentive compatibility notions 
under $\gamma$-strict-utility.}
Note that in Figure~\ref{fig:ICgraph}, for each arrow 
$X \not\Rightarrow Y$ shown by some example mechanism, 
it is easy to check that the same mechanism 
also shows that $X \not\Rightarrow \text{weak } Y$.
Thus,  Figure~\ref{fig:ICgraph}
in fact also holds for UIC, MIC, and $1$-SCP under $\gamma$-strict-utility,
for any choice of $\gamma \in [0, 1]$.


\section{Additional Results for Weak Incentive Compatibility}
In this section, we present some additional results that 
further unfold the mathematical landscape
of weakly incentive compatible mechanisms. 

\subsection{The Solitary Mechanism}
\label{section:solitary}
Earlier in Section~\ref{sec:lb-rand-weak}, we ruled out the existence 
of a deterministic, 2-user-friendly mechanism
that satisfies weak UIC and 2-weak-SCP simultaneously, assuming finite block size.
In this section, we show that the 2-user-friendly restriction is necessary
for this theorem to hold. 
In particular, we describe a mechanism called the solitary mechanism,
which always confirms a single transaction, and satisfies weak UIC, 
weak MIC, and $c$-weak-SCP for all $c \in \N$.
\Hao{I changed it to c-SCP.}  

\begin{mdframed}
    \begin{center}
    {\bf The solitary mechanism} 
    \end{center}
    \begin{itemize}[leftmargin=5mm,itemsep=1pt]
    \item 
    Choose the highest two bids to include in the block.
    \item 
    Only the highest bid is confirmed. 
    Other bids are all unconfirmed.
    The highest bid pays the second highest bid, and the miner is paid the second highest bid.
    \end{itemize}
\end{mdframed}

\begin{theorem}[The solitary mechanism]\label{theorem:solitary}
The solitary mechanism satisfies weak UIC, weak MIC, and $c$-weak-SCP for all $c > 0$.
The theorem holds no matter the block size is infinite or finite.
\end{theorem}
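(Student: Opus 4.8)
The plan is to verify the three properties separately, exploiting one structural feature that drives all three arguments: in every execution exactly one transaction is confirmed, its payment is handed in full to the miner (nothing is burnt), and that payment equals the second-highest included bid. Consequently, whenever the confirmed transaction belongs to the strategic player, its payment cancels against the miner's revenue in the joint utility, so the joint utility collapses to the \emph{true value} of whatever is confirmed; and whenever the confirmed transaction does not belong to the strategic player, the only gain is the miner's revenue, i.e. the price paid by the winner. I would state these two observations up front and reuse them throughout, working under $1$-strict utility (so unconfirmed overbids and unconfirmed fakes are charged their full excess over true value).

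For \textbf{weak UIC} I would argue directly, since an individual user controls only its own bid and fake injections (the miner runs the inclusion rule honestly). Fix user $i$ with true value $v_i$ and let $h$ be the highest competing bid. Truthful bidding gives utility $v_i - h$ when $v_i > h$ and $0$ otherwise. I then rule out each deviation: overbidding either leaves $i$ unconfirmed, in which case it is charged the positive cost $b_i' - v_i$, or it wins at price $h > v_i$, giving negative utility; underbidding can only turn a winning bid into a losing one, forfeiting nonnegative surplus; and injecting a fake bid can only raise the price $i$ pays when it wins or create an unconfirmed overbid, never helping. Hence truthful bidding is optimal. For \textbf{weak MIC} the miner's utility equals the second-highest included bid plus the weak utility of any fake it injects; the honest revenue is the second-highest real bid $b_2 \ge 0$. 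I would observe that any included fake is either confirmed — in which case, being the winner, it pays exactly the revenue, so its contribution $0 - \mu$ cancels the revenue to $0$ — or unconfirmed, in which case it is an overbid relative to true value $0$ and its cost equals its value, again cancelling its contribution to the second price; meanwhile excluding or lowering real bids can only decrease the second-highest included value. Thus no deviation beats $b_2$.

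The crux, and the \textbf{main obstacle}, is \textbf{$c$-weak-SCP} for all $c$. Let $v_1$ be the largest true value among the at most $c$ colluding users and let $h_1 \ge h_2 \ge \cdots$ be the fixed non-colluding bids. I would first compute the honest joint utility and show it equals $\max(v_1, h_2)$, distinguishing whether the global highest bid is a coalition user (confirmed, joint utility $= v_1$) or a non-coalition user (joint utility equals the second price, which is $\max(v_1, h_2)$). Then I would bound the utility of an arbitrary deviation by a case split on the unique confirmed transaction: if it is a coalition transaction (real or fake), its payment cancels the miner's revenue and the joint utility is at most its true value, hence at most $v_1$; if it is a non-coalition transaction, the joint utility equals the miner's revenue (the second-highest included bid) minus the cost of any unconfirmed overbid or fake used to set that price. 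The delicate step is showing that inflating the price via a coalition overbid or an injected fake never helps: such a price-setter is necessarily unconfirmed, since only the maximum is confirmed, so under $1$-strict utility it is charged exactly the amount by which it inflates the revenue, and its net contribution is at most its own true value ($\le v_1$) for a coalition user, or $0$ for a fake. Combining the cases yields joint utility $\le \max(v_1, h_2)$ for every deviation, matching the honest value.

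The reason this argument delivers resilience against coalitions of \emph{every} size is that the bound $\max(v_1, h_2)$ is independent of $c$: no matter how many colluding users there are, only one transaction is confirmed and only one bid sets the second price, and each candidate price-setter contributes at most its own true value, so enlarging the coalition adds only redundant price-setters or extra unconfirmed-overbid costs. Finally I would note that the block size never binds, because the mechanism and all the deviations considered use at most two slots, so the claim holds verbatim for finite and infinite block size. I expect the write-up to be routine once the weak-utility cost accounting in the non-coalition-winner case is made precise; that cancellation of ``inflated revenue'' against ``overbid cost'' is the one place where the $1$-strict utility notion, rather than the classical one, is essential.
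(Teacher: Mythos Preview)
Your proposal is correct and follows essentially the same approach as the paper: both proofs exploit the two cancellation observations you highlight (payment equals revenue when the winner is in the coalition; the $1$-strict overbid penalty exactly offsets any revenue inflation when the price-setter is in the coalition), and both split the $c$-SCP analysis on whether the unique confirmed bid belongs to the coalition. Your parameterization via $\max(v_1,h_2)$ (coalition top value versus non-coalition second bid) is slightly different from the paper's use of the global top-two values, but the argument structure and key ideas are the same.
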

\begin{proof}
We prove the three properties one by one.

\paragraph{Weak UIC.}
A user has two kinds of strategies to deviate from the honest behavior:
to bid strategically or to inject fake transactions.
Since the payment is decided by the second highest bid, injecting fake transactions can only increase the payment, no matter whether the user is bidding truthfully or not.
Moreover, since 
this is exactly a classical second-price auction, bidding truthfully is known to 
be DSIC for an individual user, even under the old utility notion where overbidding
is never penalized.
Therefore, bidding untruthfully is not incentive compatible 
under the new utility notion as well.

\paragraph{Weak MIC.}
The miner has two kinds of strategies to deviate from honest behavior: not to choose the highest two bids and to inject fake bids.
\elaine{may need to add to defn: reordering bids.}
Without loss of generality, we assume that the miner chooses the included bids first, and replaces some of them with fake bids then.
We will show that both steps would not increase the miner's utility.

The miner's revenue is decided by the second highest bid that is included, 
so if miner does not choose the highest two bids to include, its revenue can only decrease or remain the same.
Next, suppose that the miner replaces 
one or both of the included bids with fake ones.
If after the replacement, the highest bid is a fake one, then the miner
has to pay the fee 
for the highest bid which is equal to its revenue. Therefore, the miner's utility cannot 
be greater than $0$.
If after the replacement, the highest bid is not a fake one but the second highest 
bid is a fake one,  
and the bid amount is $b$.
Then, the miner revenue is $b$. However, the cost to inject the fake bid $b$ is also $b$.
Thus, the miner does not gain overall. 
\ignore{
\elaine{this argument is a little buggy}
If the miner's injected bid is the highest bid, the miner pays the price to itself, and miner's utility becomes zero. 
If the miner's injected bid $b$ is the second highest bid, its revenue becomes $b$.
However, this fake transaction also costs miner $b$ since it is unconfirmed, so miner's utility is still zero.
Thus, any deviation does not increase miner's utility.
}

\paragraph{Weak $c$-SCP.}
Consider an arbitrary coalition of the 
miner and a subset of the users.
Suppose the coalition plays honestly: 
\begin{itemize}[leftmargin=5mm,itemsep=1pt] 
\item if the top confirmed bidder is in the coalition, 
the coalition's utility is top bidder's true value
denoted $v_1$;
\item 
if the top confirmed bidder 
is not in the coalition, then the coalition's utility is  
the true value of the 2nd bidder $v_2 \leq v_1$. 
\end{itemize}

Now, consider an arbitrary strategy 
where two bids are included and the higher of the two gets confirmed.
Each included bid can either come from some user, or 
is a fake bid.
Without loss of generality, we may equivalently assume that a fake bid belongs
to some imaginary user 
which belongs to the coalition, and its true value is $0$.
We may use the fake indices $0$ and $-1$
to refer to the one or two imaginary users.
There are the following cases:
\begin{itemize}[leftmargin=5mm,itemsep=1pt]
\item 
{\it Case 1: The confirmed user $i$ belongs to the coalition.}
In this case, the coalition's utility is upper bounded 
by (the possibly imaginary) user $i$'s true value $v_i$.
If $i$ has the highest true value, it means 
the coalition's utility is upper bounded by $v_1$; else
if $i$ does not have the highest true value, it means 
that the coalition's utility is upper bounded by $v_2$.
Either way, the coalition's utility cannot exceed the aforementioned honest case.

\item 
{\it Case 2: The confirmed user $i$ does not belong to the coalition.}
In this case, 
suppose that the included but unconfirmed user 
is denoted $j$ where $j$ is possibly
an imaginary user.
The coalition's utility
is $b_j - \max(0, b_j - v_j) \leq v_j$
where $b_j$ is user $j$'s bid, $v_j$ is its true value, and the
part $\max(0, b_j - v_j)$ is the penalty due to overbidding.
Since the confirmed user $i$ does not belong to the coalition, 
it  must be a real user and it must be bidding its truthful value, i.e., $b_i = v_i$.
Note that the $b_j$ cannot be bidding higher than $b_i$ since $b_j$ is unconfirmed
but $b_i$ is confirmed.
Therefore, 
$v_j \leq b_j \leq b_i = v_i$. This also implies that
that $v_j  \leq v_2$, and thus the coalition's utility is also upper bounded
by $v_2$. Recall that the coalition's utility
is at least $v_2$ had it played honestly; therefore, 
the coalition does not gain anything in comparison with 
playing honestly.
\end{itemize}


\end{proof}

\subsection{The Solitary-Or-Posted-Price Mechanism}
\label{sec:solitarypostedprice}
Earlier in Section~\ref{sec:lb-rand-weak}, we ruled out the existence 
of a deterministic, 2-user-friendly mechanism
that satisfies weak UIC and 2-weak-SCP simultaneously, assuming {\it finite block size}.
In this section, we show that the finite block size restriction is necessary for this lower bound
to hold, by showing a deterministic, 2-user-friendly mechanism that  
satisfies weak UIC, weak MIC, and 2-weak-SCP, but only under {\it infinite} block size.

\begin{mdframed}
    \begin{center}
    {\bf The solitary-or-posted price mechanism} 
    \end{center}
    \paragraph{Parameters:} a reserve price $r$.
    \paragraph{Mechanism:}
    \begin{itemize}[leftmargin=5mm,itemsep=1pt]
    \item 
Choose the top two bids as well as every other bid that is at least $r$ to include in the block.
\elaine{i changed this description. at least 2 must be included}
    \item 
    Every bid at least $r$ is confirmed, and the highest bid in the 
block is always confirmed (even if it is smaller than $r$).
    \item 
    Let $b_2$ be the second highest bid in the block.
    Every confirmed bid pays $\min(b_2, r)$, and miner is paid $\min(b_2, r)$. 
    The remaining payment is burnt.
    \end{itemize}
\end{mdframed}

\begin{theorem}[Solitary-or-posted-price mechanism]
Suppose the block size is infinite.
The solitary-or-posted-price mechanism 
satisfies weak UIC, weak MIC, and $c$-weak-SCP for all $c > 0$.
\end{theorem}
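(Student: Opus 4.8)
The plan is to first pin down the allocation and payment seen by a single user, then dispatch weak UIC and weak MIC quickly, leaving $c$-weak-SCP as the crux. Fix a bid vector, let $v_{(1)}\geq v_{(2)}\geq\cdots$ be the sorted true values, and let $p:=\min(b_2,r)$ be the universal payment, where $b_2$ is the second-highest included bid. The mechanism operates in two regimes: when at least two bids are $\geq r$ we have $b_2\geq r$, so $p=r$ and the mechanism behaves like a posted-price-with-burn auction confirming every bid $\geq r$; when fewer than two bids are $\geq r$ we have $p=b_2<r$, and I would first record the structural fact that \emph{exactly one} bid is then confirmed (the unique highest bid), so the mechanism coincides with the solitary/second-price auction of Theorem~\ref{theorem:solitary}. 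This dichotomy---$p=r$ permits many confirmations, $p<r$ forces a single confirmation---drives the whole argument.

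For weak UIC, I would fix $\bfb_{-i}$ and verify that user $i$'s threshold bid equals its payment: if some other bid is $\geq r$ then both are $r$, and otherwise both equal the highest competing bid. Hence the induced single-user mechanism is exactly the Myerson payment rule, so it is user-DSIC; by Fact~\ref{fct:myerson-weakuic} a deterministic user-DSIC mechanism is weak user-DSIC, ruling out profitable misreports. It then remains to note that injecting fake transactions cannot help a user: an extra bid can only raise $b_2$ and hence $p=\min(b_2,r)$, and it cannot turn a non-confirmed real bid into a confirmed one, so it never increases utility. For weak MIC, the miner's revenue is $\min(b_2,r)$, which is non-decreasing in the included bids and capped at $r$, so honestly including the top two bids (and everything $\geq r$) maximizes the real-bid revenue; for fake injections I would case on whether the injected bid of value $f$ lands below $r$ (then it is unconfirmed, and its $\gamma$-strict cost is its full bid $f$, which is at least the revenue it can add) or at least $r$ (then it is confirmed and pays $\min(b_2,r)$, again cancelling the revenue it generates), so in every case the miner's utility does not increase.

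The main obstacle is $c$-weak-SCP. I would write the coalition's joint utility in any final configuration as $U=\sum_{j\in C\cap\mathrm{conf}} v_j-(n_c-1)\,p-\Pi$, where $C$ is the set of colluding users, $n_c=\lvert C\cap\mathrm{conf}\rvert$, and $\Pi\geq 0$ is the total overbidding penalty; the single surviving unit of miner revenue cancels one confirmed member's payment, which produces the $-(n_c-1)p$ term, and non-coalition confirmed bids never feed back to the coalition. I would then bound $U$ against the honest utility $U_H$ by casing on the final price. If $p<r$ then the structural fact forces $n_c\leq 1$, giving $U\leq\max_{j\in C}v_j$ when a coalition member is confirmed and $U\leq p<r$ otherwise. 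If $p=r$ then every confirmed bid is $\geq r$ and there are at least two of them, and comparing to the two honest regimes ($v_{(2)}\geq r$ and $v_{(2)}<r$) finishes the argument.

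I expect the one genuinely subtle step to be the forced-manufacturing accounting in the low-competition honest regime $v_{(2)}<r$: there the naive ``drop the negative terms'' upper bound $U\leq r+\sum_{j\in C:\,v_j>r}(v_j-r)$ is too loose, because reaching $p=r$ compels the coalition to \emph{manufacture} a second confirmed bid $\geq r$---an overbid member with $v_j<r$, or a fake---whose confirmed contribution $v_j-r\leq 0$ (or $-r$ for a fake) exactly offsets the $r$ revenue windfall. Carefully charging these forced non-positive contributions, I would show $U\leq v_{(1)}=U_H$ when the top user is in the coalition and $U\leq v_{(2)}=U_H$ otherwise, with the $p<r$ cases bounded identically; every remaining case reduces to the posted-price and solitary analyses already established, completing the proof.
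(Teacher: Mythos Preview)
Your proposal is correct, and the accounting identity $U=\sum_{j\in C\cap\mathrm{conf}}v_j-(n_c-1)p-\Pi$ gives a cleaner organization of the $c$-weak-SCP argument than the paper's. The paper instead cases first on the honest configuration (whether $v_{(2)}<r$ or $v_{(2)}\geq r$) and then, within each honest case, enumerates the possible deviated outcomes somewhat ad hoc: whether one or many bids end up confirmed, and whether the confirmed bid belongs to the coalition. Your approach inverts this: you case on the \emph{deviated} price regime ($p<r$ versus $p=r$), use the structural one-confirmed-bid fact to collapse the low-price branch to $n_c\leq 1$, and only at the end compare against $U_H$ for either honest regime. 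The payoff of your route is that the ``forced manufacturing'' observation---that reaching $p=r$ when $v_{(2)}<r$ forces at least one confirmed coalition bid with $v_j<r$, whose negative $v_j-r$ term absorbs the $r$ revenue windfall---makes the bound $U\leq U_H$ transparent and uniform in the coalition size $c$. The paper reaches the same conclusions but via more case splits and without isolating this cancellation explicitly.

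Two minor remarks. First, your appeal to Fact~\ref{fct:myerson-weakuic} is in the wrong direction: that fact shows deterministic weak user-DSIC implies user-DSIC (used for lower bounds), whereas you need the trivially true converse that user-DSIC implies weak user-DSIC, which holds simply because the $1$-strict penalty only makes overbidding less attractive. Second, your weak MIC case split ``$f<r$ implies $f$ is unconfirmed'' misses the possibility that $f<r$ is the highest bid and hence confirmed; this is easily patched, since then $f$ pays exactly the new miner revenue $\min(b'_2,r)$ and the net is non-positive. The paper handles this uniformly by observing that whenever any fake bid is confirmed it pays the full miner revenue, so the miner's utility is at most zero.
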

\begin{proof}
\ignore{
We prove the properties one by one.
\paragraph{Weak UIC.}
Just like the proof of Theorem \ref{theorem:solitary},  
it does not benefit a user for it to inject a fake transaction.
Consider a bid vector $\bfb = (b_1, \ldots, b_m)$ where $b_1 \geq \cdots \geq b_m$.
For the user bidding $b_1$, suppose $b_1$ is its true value. The user must be confirmed
and paying $\min(b_2, r)$ if it bids honestly. If  
user changes its bid and becomes unconfirmed, its utility cannot increase.
If the user changes its bid such that stays confirmed, 
there are two cases: 1) after changing the bid, it is still ranked top.
In this case, its payment and utility are unaffected; 
2) after changing its bid, the user is not longer ranked the top.  
In this case, the new bid must be at least $r$ for it to be confirmed, and the user
is now paying $r$. However, before the change, the user was paying $\min(b_2, r)$. Therefore,
the user's utility cannot increase.

Consider any other user $b_i$ where $i \geq 2$, and suppose $b_i$ is its true value. 
First, suppose $b_i$ is confirmed under $\bfb$. In this case, it must be that 
$b_1 \geq b_2 \geq b_i \geq r$.
If the user changes its bid and causes itself to be unconfirmed, the utilty cannot increase.
If the user changes its bid such that it remains confirmed, then the 
new bid must still be at least $r$, and the user's payment and utility are unaffected. 
Second, suppose $b_i$ is unconfirmed under $\bfb$. In this case, it must be that
$b_i < r$. 
If the user changes its bid and remains unconfirmed, its utility cannot increase.
If the user changes its bid such that it now becomes confirmed, 
the user's new bid must be at least $r$ or the highest bid, i.e., the 
new bid must have increased. In this case, the user is paying at least $\min(b_2, r) \geq b_i$. 
Thus, its utility cannot increase.
}

We prove the three properties one by one.

\paragraph{Weak UIC.}
A user has two kinds of strategies to deviate from the honest behavior:
to bid strategically or to inject fake transactions.
Since the payment is decided by $\min(b_2, r)$, injecting fake transactions can only increase the payment, no matter whether the user is bidding truthfully or not.

Suppose the real bid vector is $\bfb = (b_1,\ldots,b_m)$, where $b_1 \geq \cdots \geq b_m$.
Suppose user $i$ bids truthfully and other users may bid arbitrarily.
Let $v_i$ and $b'_i$ be user $i$'s true value and strategic bid, respectively.
\begin{itemize}
\item 
{\it Case 1: $b_2 \geq r$.}
User $i$ is facing a posted-price auction such that it if confirmed if and only if $b'_i \geq r$.
In this case, it is not hard to see that no matter user $i$ overbids or underbids, its utility does not increase.
\item 
{\it Case 2: $b_2 < r$.}
When $v_i = b_1$, then user $i$'t utility is $v_i - b_2 \geq 0$ in the honest case.
If user $i$ overbids, it still pays $b_2$ so the utility does not change.
If user $i$ underbids, it is either confirmed with the same payment, or becomes unconfirmed.
In either case, the utility does not increase.
\end{itemize}

\ignore{
\paragraph{Weak MIC.}
A miner has two kinds of strategies to deviate from honest behavior: not to choose the highest two bids and to inject fake bids.
\elaine{may need to add to defn: reordering bids.}
Without loss of generality, we assume that the miner chooses the included bids first, and replaces some of them with fake bids then.
We will show that both steps would not increase the miner's utility.

The miner's revenue is decided by the second highest bid that is included, 
so if miner does not choose the highest two bids to include, its revenue can only decrease or remain the same.
Next, if the miner's injecting bid is the highest bid, the miner pays the price to itself, and miner's utility becomes zero. 
If the miner's injecting bid $b \geq r$ is the second highest bid, its revenue becomes $r$.
However, this fake bid costs miner $r$ since it is confirmed and needs to pay, so miner's utility is still zero.
If the miner's injecting bid $b < r$ is the second highest bid, its revenue becomes $b$.
However, this fake bid also costs miner $b$ since it is unconfirmed, so miner's utility is still zero.
If the miner's injecting bid is not among the top $2$, it does not affect the revenue, while it may introduce extra cost, so miner's utility does not increase.
Thus, any deviation does not increase miner's utility.
}
\paragraph{Weak MIC.}
The miner has two strategies to deviate: not to choose the highest two bids and to inject fake bids.
Without loss of generality, we assume that the miner 
chooses the included bids first, and replaces some of them with fake bids then.
We will show that both steps would not increase the miner's utility.
The miner's revenue is decided by the second highest bid that is included, 
so if miner does not choose the highest two bids to include, its revenue 
can only decrease or remain the same.

Now, suppose the miner replaces some of the included bids with fake ones.
If any fake bid is confirmed, then the fake bid must be paying an amount
equal to the miner revenue, and thus the miner's utility is at most $0$.
If no fake bid is confirmed, and some fake bid is unconfirmed
and its bid amount is $b$. Then, 
it must be that $b$ is the second highest bid and the highest bid is smaller than $r$.
In this case, 
the miner gets revenue $b$; however, it costs $b$ 
to inject this fake bid. Thus, the miner does not gain overall.


\paragraph{Weak $c$-SCP.}
Suppose there are $m$ users, and their true values are $(v_1,\ldots,v_m)$ where $v_1 \geq \cdots \geq v_m$.
Henceforth, we also call the user with the highest true value $v_1$ the top user.
There are two possible cases.
\begin{itemize}[leftmargin=5mm,itemsep=1pt]
\item 
{\it Case 1: $v_2 < r$.}
When everyone behaves honestly, the miner's revenue is $v_2$, user $1$'s utility is $v_1 - v_2$, and all other users are zero since they are unconfirmed.
Suppose the miner colludes with a subset of users, and they prepare a bid vector $\bfe = (e_1,\ldots,e_m)$ where $e_1 \geq \cdots \geq e_m$.
Each bid $e_i$ in $\bfe$ is either a non-colluding bid coming from a non-coalition user in which case
$e_i = v_i$, or it is a colluding bid, i.e., one that comes from a colluding user
or a fake bid.
If only one user $i$ is confirmed in $\bfe$, there are two possibilities.
\begin{itemize}[leftmargin=5mm,itemsep=1pt]
    \item Suppose user $i$ is not in the coalition.
    The utility of the coalition is miner's revenue ($e_2$) minus potential extra cost
if there are overbid or fake bids that are unconfirmed. 
    If $i$ is not the top user, then $e_2 \leq v_i \leq v_2$.
    This means the utility of the coalition cannot exceed $v_2$, which can be achieved by playing honestly.

    If $i$ is top user, to make the miner's revenue larger than the honest case, 
it must be that $e_2$ is a colluding bid and $e_2 > v_2$. Let $v'$ be the true value
of this colluding user or $v' =0$ if $e_2$ is fake.
    Since $e_2$ is unconfirmed in $\bfe$, its utility becomes $v' - e_2$.
    Thus, the utility of the coalition cannot exceed $v' \leq v_2$, which can be achieved by playing honestly.

    \item Suppose user $i$ is in the coalition.
    The utility of the coalition is $v_i$ minus potential extra cost.
However, if top user is also in the coalition, 
the utility of the coalition is $v_1$ in the honest case, and is at most $v_i \leq v_1$ in the strategic case.
If the top user is not in the coalition, 
the utility of the coalition is $v_2$ in the honest case, and is at most $v_i \leq v_2$ in the strategic case.
\end{itemize}

If there are two or more confirmed bids in $\bfe$, the miner's revenue becomes $r$, 
while each confirmed user needs to pay $r$.
In this case, the utility of the top user decreases by $r - v_2$.
The utilities of all other bids (including fake ones) 
are non-positive, 
since if they are confirmed, they have to pay $r$ which is higher than their true values.
Since $v_2 < r$, there must be a colluding user (that is not the top user) 
bidding $b' \geq r$ or the miner injects a fake bid $b' \geq r$.
Let $v'$ be the true value of this colluding user or $v' = 0$ if the bid is fake.
The utility of this bid is $v' - r$.
The joint utility of this colluding or fake bid 
and the miner is $v' \leq v_2$, so the joint utility does not increase.
\item 
{\it Case 2: $v_2 \geq r$.}
When everyone behaves honestly, the miner's utility is $r$, 
user $i$'s utility 
is $v_i - r$ for all confirmed user $i$, and all other users are zero since they are unconfirmed.
In this case, miner's utility is already maximized.
Suppose the miner colludes with a subset of users, and they prepare a bid vector $\bfe$.
If there is only one confirmed bid in $\bfe$, then only the confirmed user can 
benefit by the deviation since its payment decreases.
However, the amount that the confirmed user gains is exactly what the miner loses, so the joint utility of any coalition does not increase.
If there are two or more confirmed bids in $\bfe$, then every user's utility is maximized when they bid truthfully, since the payment is fixed at $r$ regardless of others' bids.
\end{itemize}
\end{proof}


\subsection{Necessity of Burning}
Earlier in Section~\ref{sec:lb-rand-weak}, we ruled out the existence 
of a deterministic, 2-user-friendly mechanism
that satisfies weak UIC and 2-weak-SCP simultaneously, assuming {\it finite block size}.
In this section, we show that if the mechanism is not allowed to use a burning mechanism, i.e.,
if the miner's payment must be the sum of all users' payment, 
then, the same lower bound would hold even under infinite block size. 
This lower bound also shows that the burning in the solitary-or-posted-price mechanism
is necessary.

\begin{theorem}\label{theorem:twoweakSCP}
Let $(\bfx, \bfp,\mu)$ be a deterministic mechanism without burning.
If $(\bfx, \bfp,\mu)$ is 2-user-friendly, then it cannot achieve UIC and $2$-weak-SCP at the same time.
\end{theorem}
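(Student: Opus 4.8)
The plan is to argue by contradiction, assuming a deterministic mechanism $(\bfx,\bfp,\mu)$ that is $2$-user-friendly, UIC, and $2$-weak-SCP, and that does not burn, i.e.\ $\mu(\bfb)=\sum_i p_i(\bfb)$ for every bid vector $\bfb$. Since UIC implies weak UIC for deterministic mechanisms (Fact~\ref{fct:myerson-weakuic}), every structural lemma established for Theorem~\ref{theorem:twoweakSCPwithburn} is available: confirmed users share a universal payment $p(\bfb)$ (Lemma~\ref{lemma:samePayment}), the confirmed set is a prefix of the sorted bids (Lemma~\ref{lemma:ordered}), unconfirmed bids never exceed $p(\bfb)$ (Lemma~\ref{lemma:unconfirmedPayment}), and the price and miner revenue change slowly in a single bid (Lemmas~\ref{lem:minerutilkto0}, \ref{lemma:paychangeslow}, \ref{lemma:paychangeslow2}). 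Writing $k(\bfb)$ for the number of confirmed users, the no-burning hypothesis yields the clean identity $\mu(\bfb)=k(\bfb)\cdot p(\bfb)$. As with the other impossibility theorems of the paper, I expect the statement to implicitly exclude the degenerate all-zero-payment mechanism (confirm everything for free), so I would first reduce to the case where some bid vector has $p(\bfb)>0$, which the non-trivial-revenue assumption supplies.

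The conceptual heart of the proof is a no-burning accounting identity for coalitions, which is exactly what a burn rule is designed to defeat. For a coalition of the miner and a user set $C$, the payments of the confirmed \emph{coalition} users flow straight back to the miner and cancel, so (ignoring overbidding penalties, which only help us) the coalition's joint utility equals $\sum_{j\notin C,\ \text{confirmed}} p_j + \sum_{i\in C,\ \text{confirmed}} v_i = n_{\bar C}\cdot p(\bfb) + \sum_{i\in C,\ \text{confirmed}} v_i$, where $n_{\bar C}$ is the number of confirmed non-coalition users. Consequently the coalition strictly gains whenever it can raise the universal price $p$ charged to confirmed non-coalition users (or enlarge $n_{\bar C}$ at that price) without a compensating loss — a manoeuvre that under a burn rule would be neutralised, because the extra payment would be destroyed rather than handed to the miner. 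The key enabling move is that the miner may simply \emph{exclude} a low bid from the block; by Lemma~\ref{lemma:paychangeslow2} this perturbs the price in a controlled way and in the direction that can only raise it.

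With this identity in hand, the plan is to reuse the construction behind Theorem~\ref{theorem:twoweakSCPwithburn} (Figure~\ref{fig:proofroadmap}): starting from the two-confirmed-users configuration guaranteed by Lemma~\ref{lemma:twointervals}, push both bids up to a large value $\Gamma$ via Lemma~\ref{lemma:confirmInvariant}, zero out the remaining bids, and then raise the others one at a time so that, by Lemma~\ref{lemma:paychangeslow} together with Lemma~\ref{lemma:unconfirmedPayment}, each becomes confirmed. Where the finite-block proof extracted its contradiction from forcing more than $B$ confirmations, here I would instead feed the resulting configuration into the harvesting argument above: since there is no block-size cap, the coalition of the miner together with two confirmed users can manipulate the included set and the colluding bids to strictly increase $n_{\bar C}\cdot p$ while keeping its own two users confirmed, contradicting $2$-weak-SCP. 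Securing $p(\bfb)>0$ at the outset is essential, since with $p\equiv 0$ there is nothing to harvest.

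The step I expect to be the main obstacle is precisely the control of the universal payment $p$ through the construction: under no burning $p$ is the single load-bearing quantity, as it determines $\mu$ via $\mu=k\cdot p$, and I must show that the harvesting deviation strictly increases the coalition's joint utility rather than merely redistributing a conserved amount of revenue among more payers. This demands pinning down how $p$ responds to excluding a bid or to crossing a confirmation threshold — a delicate interplay of Lemmas~\ref{lem:minerutilkto0}, \ref{lemma:paychangeslow}, and \ref{lemma:paychangeslow2} — and verifying that the net effect at infinite block size genuinely beats the honest baseline. Establishing that a strictly price-raising exclusion must exist (equivalently, ruling out that a $2$-weak-SCP mechanism can keep $p$ rigid under exclusions) is the crux that separates the no-burning regime from the burning one, and it is where I would concentrate the technical effort.
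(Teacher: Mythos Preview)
Your structural setup is correct --- the lemmas from Section~\ref{sec:lb-rand-weak} carry over, the identity $\mu(\bfb)=k(\bfb)\cdot p(\bfb)$ is the right leverage, and your coalition accounting $n_{\bar C}\cdot p+\sum_{i\in C,\text{ conf}}v_i$ is valid. But the contradiction you propose rests on a misreading. You write that excluding a low bid perturbs the price ``in the direction that can only raise it,'' citing Lemma~\ref{lemma:paychangeslow2}; that lemma only \emph{upper-bounds} the increase ($p(\bfb_{-k},0)\le p(\bfb)+b_k$), it does not say $p$ goes up. Worse, Lemma~\ref{lem:minerutilkto0}, which you also cite, gives $\mu(\bfb_{-k},0)\le\mu(\bfb)$, so under no burning the product $k\cdot p$ can only \emph{fall} when a bid is lowered. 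A coalition of the miner and two confirmed users therefore cannot raise $n_{\bar C}\cdot p$ by excluding an unconfirmed bid while keeping the confirmed set fixed: if $k$ is unchanged then $p$ cannot rise. The crux you flag as unresolved is in fact unresolvable in the direction you propose.

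The paper runs the argument the other way. The missing ingredient is Lemma~\ref{lem:unconfirmed0}: under no burning, if every unconfirmed bid is $0$ then $p=0$. (Sketch: squeeze all $u$ confirmed bids to $p+\epsilon$; if one user's true value were instead $p-\epsilon$ it would be unconfirmed and the revenue would be at most $(u-1)(p+\epsilon)$, so the miner--user pair gains order $p$ by having that user overbid to $p+\epsilon$, at cost $\epsilon$, violating $1$-weak-SCP.) This lemma also underlies the strengthened starting point (Lemma~\ref{lemma:twointervals2}) giving a vector with $\mu>0$ \emph{and} two confirmed users above the price. From there, set each coordinate in turn to $\Gamma$ (if confirmed) or $0$ (if not); at the end all unconfirmed bids are $0$, so by Lemma~\ref{lem:unconfirmed0} $\mu(\bfb^{(m)})=0$. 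Hence there is a first index $i^*$ where revenue drops to $0$, and by Lemma~\ref{lemma:confirmInvariant} that step must zero an \emph{unconfirmed} coordinate. At $\bfb^{(i^*-1)}$ the coalition of the miner with users $1$ and $i^*$ deviates by having user $i^*$ bid $\Gamma$, excluding one non-coalition $\Gamma$-bidder, and injecting a fake $0$ --- producing the multiset $\bfb^{(i^*)}$ where $p=0$ and user $i^*$ is now confirmed for free. The gain is $b_{i^*}+p(\bfb^{(i^*-1)})-\mu(\bfb^{(i^*-1)})\ge p(\bfb^{(i^*-1)})>0$, using $b_{i^*}\ge\mu(\bfb^{(i^*-1)})$ from Lemma~\ref{lem:minerutilkto0}. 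So the profitable deviation is price-\emph{lowering} (dropping $p$ to zero and absorbing a colluding user into the confirmed set), not price-raising.
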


The remainder of 
this section will focus on proving Theorem~\ref{theorem:twoweakSCP}.
We first prove a useful lemma
that says in a mechanism satisfying the desired properties, 
if in some bid vector, all unconfirmed bids
are bidding $0$, then all confirmed bids must be paying $0$.

\begin{lemma}\label{lem:unconfirmed0}
Let $(\bfx, \bfp,\mu)$ be a deterministic mechanism without burning 
that is 
weak UIC and $2$-weak-SCP.
Suppose there exists a bid vector $\bfb = (b_1, \ldots, b_m)$ 
that confirms at least one bid, and moreover, all unconfirmed bids are $0$.
Then, all confirmed bids must pay $0$.
\end{lemma}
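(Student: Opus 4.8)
The plan is to prove that the common payment of the confirmed bids is zero, i.e. that $p := p(\bfb) = 0$, where $p(\bfb)$ is the universal payment guaranteed by Lemma~\ref{lemma:samePayment}. Suppose instead $p > 0$, and let $k \ge 1$ be the number of confirmed bids (positive since $\bfb$ confirms at least one bid). The two facts I would record first both come from the no-burning assumption $\mu = \sum_i p_i$ together with Fact~\ref{fct:myerson-weakuic}: for any bid vector $\bfb'$ the miner revenue equals the total payment, so $\mu(\bfb')$ equals $p(\bfb')$ times the number of confirmed bids; and a $0$-bid can never be confirmed while the universal payment is strictly positive, because a confirmed bid pays at most its own value, so a confirmed $0$-bid would force the universal payment to be $0$. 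In particular $\mu(\bfb)=kp$, and whenever the universal payment equals our fixed $p>0$ the number of confirmed bids is exactly $\mu/p$ and every confirmed bid must be one of the original winners.

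Next I would normalize the winners. Starting from $\bfb$, I process the winners one at a time and move each winner's bid to $p+\delta$ for a small fixed $\delta>0$ (raising it if it was below $p+\delta$, lowering it otherwise). Since each winner's Myerson threshold equals its payment $p$, any bid strictly above $p$ keeps it confirmed and still paying $p$, and Lemma~\ref{lemma:confirmInvariant}(1) then guarantees that the miner revenue is unchanged; by no burning the total payment stays $kp$, so exactly $k$ bids remain confirmed and (again because $0$-bidders cannot be confirmed at a positive universal payment) they must still be the $k$ winners. Call the resulting vector $\bfb^\star$: every winner bids $p+\delta$ and pays $p$, every other bid is $0$ and unconfirmed, and $\mu(\bfb^\star)=kp$.

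The core of the argument is a single threshold-crossing deviation that uses only the coalition of the miner and one winner $i$. Fix a true-value profile equal to $\bfb^\star$ except that user $i$'s true value is $p-\epsilon$ for a small $\epsilon\in(0,p)$. Under honest play user $i$ bids $p-\epsilon<p$ and is unconfirmed (its threshold is $p$), contributing $0$, so the honest coalition utility is $\mu(\bfb^\star_{-i},p-\epsilon)$. I would upper bound this revenue: by Lemma~\ref{lemma:samePayment} all confirmed bids of $(\bfb^\star_{-i},p-\epsilon)$ share one payment $q$; user $i$ is unconfirmed, any confirmed $0$-bidder forces $q=0$, and any confirmed winner pays $q\le p+\delta$, so at most the $k-1$ remaining winners are confirmed with positive payment and $\mu(\bfb^\star_{-i},p-\epsilon)\le (k-1)(p+\delta)$. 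Now let user $i$ overbid to $p$: it becomes confirmed and pays $p$, and since this is a \emph{confirmed} overbid the $\gamma$-strict utility charges no penalty; by Lemma~\ref{lemma:confirmInvariant}(2) the other $k-1$ winners stay confirmed and keep paying $p$. All $k$ winners now pay $p$, the revenue is $kp$, and the coalition utility is $kp+(p-\epsilon-p)=kp-\epsilon$. Weak $1$-SCP (implied by $2$-weak-SCP) forces the honest value to dominate, giving $(k-1)(p+\delta)\ge kp-\epsilon$, i.e. $p\le (k-1)\delta+\epsilon$; choosing $\delta=\epsilon=p/(3k)$ bounds the right side by $p/3<p$, a contradiction, so $p=0$.

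The step I expect to require the most care is the normalization to $\bfb^\star$: I must verify that collapsing each winner to just above the common payment neither drops a winner nor promotes a $0$-bidder, and this is precisely where the no-burning identity $\mu=\sum_i p_i$ does the work, pinning the number of confirmed bids to $\mu/p$ so the confirmed set cannot silently change. The other delicate point is ensuring the profitable deviation is a \emph{confirmed} overbid, so that it escapes the $\gamma$-strict penalty; this is what allows a $1$-SCP-style argument that would otherwise be blocked under the weak utility to go through, and it is the reason the lemma survives the passage from strong to weak incentive compatibility.
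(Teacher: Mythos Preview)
Your proof is correct and follows essentially the same approach as the paper: normalize the winners to bid just above the universal payment $p$, then exhibit a profitable miner--user deviation in which one winner's true value is lowered to just below $p$ and the coalition has that user overbid back to (or above) $p$; the no-burning identity $\mu=\sum_i p_i$ pins both the revenue after deviation and the upper bound on the honest revenue, yielding the contradiction. The only differences are cosmetic---you have the user overbid to exactly $p$ rather than to $p+\delta$, and you keep separate parameters $\delta$ and $\epsilon$---and your normalization step is actually argued more carefully than the paper's, explicitly using $\mu=kp$ together with Lemma~\ref{lemma:samePayment} to rule out a silent change in the confirmed set when a winner's bid is moved.
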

\begin{proof}
Due to Lemma~\ref{lemma:samePayment}, let $p := p(\bfb)$ denote the universal payment for $\bfb$.
Let $\epsilon < p/2m$ 
be a sufficiently small positive number. 
By Lemma~\ref{lemma:confirmInvariant} and Myerson's Lemma,
we can change all confirmed bids in $\bfb$ to 
$p + \epsilon$
such that all confirmed bids in $\bfb$ remain confirmed, and their payment unaffected.
Let $\bfb'$ be the resulting bid vector, and let $u \in [m]$
be the number of confirmed users in $\bfb'$, and recall all unconfirmed users bid $0$. 
Since there is no burning, $\mu(\bfb') = u \cdot p$.
Let $i$ be a confirmed user in $\bfb'$.
Now, suppose that the real bid vector is actually $\bfb'' := (\bfb'_{-i}, p-\epsilon)$, and this also
represents everyone's true value. User $i$ becomes unconfirmed in $\bfb''$ by Myerson's Lemma.
Thus $\mu(\bfb'') \leq (p+\epsilon) \cdot (u - 1)$.
In this case, the miner can collude with user $i$ and ask it to bid $p+\epsilon$ instead.
In this case, user $i$'s utility is $-\epsilon$, 
however, the miner's revenue is $p \cdot u > \mu(\bfb'') + \epsilon$.
Thus, the coalition 
can strictly gain 
from this deviation, which violates $1$-weak-SCP.

\ignore{
Due to Lemma~\ref{lemma:ordered}, one largest bid denoted $b_i$ must be confirmed.
By Lemma~\ref{lemma:confirmInvariant}, we can increase $b_i$ to an arbitrarily large amount
without affecting its payment. 
Suppose we have increased $b_i$ to some sufficiently large $\Gamma > |\bfb|_1$, and 
let $\bfb' := (\bfb_{-i}, \Gamma)$ be the resulting vector.
}

\end{proof}

\begin{lemma}\label{lemma:twointervals2}
Let $(\bfx, \bfp,\mu)$ be a deterministic mechanism without burning which is $2$-user-friendly, weak-UIC, $2$-weak-SCP, and with non-trivial miner revenue.
Then, there exists a bid vector $\bfb = (b_1,\ldots,b_m)$ where two different 
users $i,j$ are confirmed, and moreover, $\mu(\bfb)> 0$, $b_i > p_i(\bfb)$ and $b_j > p_j(\bfb)$.
\end{lemma}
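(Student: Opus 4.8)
The plan is to reduce the statement to locating a single \emph{two-confirmed} bid vector of strictly positive miner revenue, and then to rerun the exact case analysis already carried out in the proof of Lemma~\ref{lemma:twointervals}. Indeed, the footnotes in that proof record that in each of the three cases the construction turns a two-confirmed input vector $\bfb$ into an output vector whose confirmed set contains two users bidding strictly above the universal payment, \emph{while the miner revenue never decreases} (it is preserved in Cases~1 and~2, and satisfies $\mu(\bfb^*) \geq \mu(\bfb)$ in Case~3). Consequently, if I feed that construction a two-confirmed vector with $\mu > 0$, the output automatically witnesses everything Lemma~\ref{lemma:twointervals2} demands: two confirmed users $i,j$ with $b_i > p_i(\bfb)$, $b_j > p_j(\bfb)$, and $\mu(\bfb) > 0$. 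So all the difficulty is concentrated in producing \emph{some} two-confirmed vector of positive revenue.

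First I would invoke non-trivial miner revenue to fix a vector $\bfc$ with $\mu(\bfc) > 0$. Since the mechanism has no burning, $\mu(\bfc)$ equals the total confirmed payment, which by Lemma~\ref{lemma:samePayment} is $(\text{number confirmed})\cdot p(\bfc)$; hence $p(\bfc) > 0$ and at least one user is confirmed. If $\bfc$ already confirms two or more users I am done, so the interesting case is that it confirms exactly one user $w$ with $\mu(\bfc) = p(\bfc) =: p > 0$. Because $p > 0$, Lemma~\ref{lem:unconfirmed0} (the no-burning fact that an all-zero unconfirmed profile forces payment $0$) guarantees a strictly positive unconfirmed bid, whose value $b_t$ lies in $(0,p]$ by Lemma~\ref{lemma:unconfirmedPayment}. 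Using Lemma~\ref{lemma:confirmInvariant} I would raise $w$ to a very large value $\Gamma$ (keeping it confirmed and keeping $\mu = p$), and then lower every remaining bid except the price-setter $t$ down to $0$, checking via Lemma~\ref{lem:minerutilkto0} and Lemma~\ref{lemma:unconfirmedPayment} that throughout this reduction $w$ stays confirmed and the revenue stays $\geq b_t > 0$. This yields a clean vector $\bfc^* = (\Gamma, b_t, 0, \ldots, 0)$ with exactly one confirmed user and $\mu(\bfc^*) > 0$.

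The crux, and the step I expect to be the main obstacle, is to rule out the ``stingy'' possibility that \emph{every} vector confirming two or more users has revenue $0$. I would argue by contradiction: assume it does. Take the $2$-user-friendly vector from the hypothesis and, via Lemma~\ref{lemma:confirmInvariant}, raise its two confirmed users to $\Gamma$; it still confirms two users, so by assumption its payment is $0$, whence (no burning plus Lemma~\ref{lemma:unconfirmedPayment}) every other bid is forced to $0$. After padding with inert $0$-bids to align lengths, this produces the vector $(\Gamma, \Gamma, 0, \ldots, 0)$ with revenue $0$. But $(\Gamma, \Gamma, 0, \ldots, 0)$ is obtained from $\bfc^* = (\Gamma, b_t, 0, \ldots, 0)$ by merely raising the second coordinate from $b_t$ to $\Gamma$, and Lemma~\ref{lem:minerutilkto0} states that miner revenue is monotonically non-decreasing in any single bid. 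Hence $\mu(\Gamma, \Gamma, 0, \ldots, 0) \geq \mu(\bfc^*) > 0$, contradicting that it is $0$. Therefore some vector confirms two users with positive revenue, and feeding it into the Lemma~\ref{lemma:twointervals} construction completes the proof.

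The remaining delicate points I anticipate are bookkeeping rather than conceptual: justifying that appended $0$-bids are genuinely inert (which follows from the same ``a coalition may add or delete a $0$-bid for free'' argument used in the proof of Theorem~\ref{theorem:twoweakSCPwithburn}), and handling the tight boundary case in the reduction to $\bfc^*$ where the revenue equals exactly $b_t$ so that Lemma~\ref{lem:minerutilkto0} is applied at equality. For the latter I would reduce the smallest bids first, or perturb $\Gamma$ slightly, to keep $w$ strictly confirmed and the revenue strictly positive throughout.
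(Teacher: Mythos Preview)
Your approach is correct in outline and genuinely different from the paper's. Both proofs share the reduction in your first paragraph (find any two-confirmed vector with $\mu>0$, then reuse the case analysis of Lemma~\ref{lemma:twointervals}, whose footnotes indeed guarantee $\mu$ does not decrease). The divergence is in how the contradiction is produced under the hypothesis ``every two-confirmed vector has $\mu=0$.'' The paper builds $(\Gamma,\Gamma,0,\ldots,0)$ with two confirmed users and $p=0$, and then exhibits a \emph{single explicit $2$-weak-SCP attack}: starting from the one-confirmed vector $\bfb$ with $\mu(\bfb)>0$, the miner colludes with the confirmed user $i$ and an unconfirmed user $j$ with $b_j>0$, has both overbid $\Gamma$, and presents the block $(\Gamma,\Gamma,0,\ldots,0)$; the joint utility rises from $b_i$ to $b_i+b_j$. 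You instead evaluate $\mu$ at $(\Gamma,\Gamma,0,\ldots,0)$ from two sides (zero from the two-user-friendly side, strictly positive from $\bfc^*$ via monotonicity), obtaining a purely structural contradiction without naming an attack. The paper's route is shorter; yours isolates the role of the coordinate-wise monotonicity and Lipschitz property of $\mu$ in Lemma~\ref{lem:minerutilkto0}, which is a nice conceptual takeaway.

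Two corrections to your bookkeeping. First, ``every other bid is forced to $0$'' after raising the two confirmed users to $\Gamma$ is not literally true: Lemma~\ref{lemma:unconfirmedPayment} only zeroes the \emph{unconfirmed} bids, and there may be further confirmed users bidding positively. This is harmless---just lower those coordinates to $0$ as well and use monotonicity to conclude $\mu(\Gamma,\Gamma,0,\ldots,0)\le 0$---but it should be said. Second, your proposed fixes for the ``tight boundary case'' (order by size, perturb $\Gamma$) do not close the gap when some remaining unconfirmed bid equals $b_t$. What actually works is the $1$-Lipschitz continuity of $z\mapsto\mu(\bfe_{-k},z)$ given by Lemma~\ref{lem:minerutilkto0}: under the standing assumption, at any intermediate vector one has $\mu\in\{0\}\cup[b_t,\infty)$ (if $\mu>0$ then exactly one user is confirmed, namely $w$, and Lemma~\ref{lemma:unconfirmedPayment} forces $\mu=p\ge b_t$). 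A continuous function taking values in $\{0\}\cup[b_t,\infty)$ and starting at a value $\ge b_t$ cannot reach $0$, so $\mu\ge b_t$ persists all the way to $\bfc^*$. Equivalently, lower each bid in increments strictly smaller than $b_t$; after each step $\mu$ stays positive, hence (by the dichotomy) $\ge b_t$ again.
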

\begin{proof}
It suffices to show that 
there exists a bid vector $\bfb$ such that $\mu(\bfb) > 0$ and at least two users' bids are confirmed.
If so, we can use the same argument in the proof of Lemma~\ref{lemma:twointervals}
to show that there exists 
a bid vector $\bfb'$ such that $\mu(\bfb') > 0$, and moreover at least two users' bids are confirmed,
and they are both bidding 
strictly higher than their payment.
In particular, cases 1 and 2 follow just like the proof of Lemma~\ref{lemma:twointervals}.
For case 3, 
suppose that $\bfb$ is a bid vector such that $\mu(\bfb) > 0$ and two different
users $i$ and $j$ are confirmed, and both bid exactly their payment.
In this case, the proof of Lemma~\ref{lemma:twointervals} constructed a new bid vector $\bfb'$ 
which is otherwise equal to $\bfb$ except that $b_i$ and $b_j$ now bid  
$b_i + \Delta$ for an arbitrary $\Delta > 0$,  
and showed that under $\bfb'$, both $i$ and $j$ are confirmed and bidding strictly above payment.
Here, 
we only need to additionally argue that $\mu(\bfb') > 0$.
This can be achieved by 
choosing $\Delta$ to be sufficiently small, and then applying Lemma~\ref{lem:minerutilkto0}.


Therefore, below, we focus on proving that 
there exists a bid vector $\bfb$ such that $\mu(\bfb) > 0$ and at least two users' bids are confirmed.
Suppose this is not true. In other words,  
for any bid vector $\bfb$ satisfying $\mu(\bfb) > 0$, only one user is confirmed. 
We will show that this contradicts $2$-weak-SCP.
\ignore{
For the first case, given $\bfb$, the argument in the proof of Lemma \ref{lemma:twointervals} guarantees there must exist a bid vector we want, including non-trivial miner revenue.
Henceforth, we will focus on the second case, and we are going to show that it must violate $2$-weak-SCP.
}

\ignore{
Suppose $\bfb = (b_1,\ldots,b_m)$ is the bid vector satisfying $\mu(\bfb) > 0$, while only user $i$ is confirmed.
We will show that there must be another user $j$ such that $b_j > 0$.
For the sake of reaching a contradiction, suppose $b_i$ is the only non-zero bid.
Since there is no burning and other users' bids are all zero, we have $\mu(\bfb) = p_i(\bfb) > 0$.
For any $0 < \epsilon < \mu(\bfb)$, imagine that the real bid vector is $(\bfb_{-i}, p_i(\bfb) - \epsilon)$.
Since user $i$'s bid is lower than $p_i(\bfb)$, it is unconfirmed.
However, other users' bids are all zero, so miner's revenue is zero.
The miner can collude with user $i$, and ask it to bid $p_i(\bfb)$ instead.
In this case, user $i$ is confirmed due to Myerson's Lemma, and its 
utility becomes $-\epsilon$, while miner's utility becomes $\mu(\bfb)$.
Thus the coalition strictly gains, and this 
violates $1$-weak-SCP. Therefore, there must be another user $j$ such that $b_j > 0$.
}

Since the mechanism is $2$-user-friendly, Lemma~\ref{lemma:twointervals} 
guarantees that there exists a bid vector $\bfc = (c_1,\ldots,c_m)$ such that $x_1(\bfc) = x_2(\bfc) = 1$ and $c_1 > p(\bfc)$ and $c_2 > p(\bfc)$.
By our assumption, it must be $\mu(\bfc) = 0$.
Because there is no burning, we have $p(\bfc) = 0$.
By Lemma~\ref{lemma:confirmInvariant}, we can increase 
user $1$'s and user $2$'s bids arbitraily without changing their confirmation and payment.
As a result, we obtain $\bfc' = (\Gamma, \Gamma, c_3, \ldots, c_m)$, where $\Gamma = |\bfc|_1$.
By Lemma~\ref{lemma:paychangeslow2}, we can now reduce each bid 
$b_3, \ldots, b_m$
down to zero one by one, without changing user $1$'s and user $2$'s confirmation.
Formally, we obtain a bid vector $\bfc'' = (\Gamma, \Gamma, 0, \ldots, 0)$ 
such that $x_1(\bfc'') = x_2(\bfc'') = 1$. By our assumption, both confirmed users in $\bfc''$ are 
paying $0$.

Since the mechanism has non-trivial miner revenue, there must exist
a bid vector $\bfb$ where $\mu(\bfb) >0$. By our assumption, only one user denoted $i$ is confirmed
in $\bfb$. By Lemma~\ref{lem:unconfirmed0}, there must be another user $j$ bidding non-zero.
Suppose $\bfb$ also represents everyone's true value.
In this case, miner's utility is $\mu(\bfb)$, user $i$'s utility is $b_i - \mu(\bfb)$, 
and user $j$'s utility is zero.
The miner can collude with user $i$ and user $j$, and ask them to bid $\Gamma$ instead.
Then, the coalition prepares a bid vector $\bfc'' = (\Gamma, \Gamma, 0, \ldots, 0)$ where the first two bids
are user $i$ and user $j$'s bids.
In this case, the miner's utility is zero, while user $i$'s utility is $b_i$ and user $j$'s utility is $b_j$.
The joint utility increases by $b_j > 0$, which violates $2$-weak-SCP.
\end{proof}

\paragraph{The proof of Theorem \ref{theorem:twoweakSCP}.}
By Lemma~\ref{lemma:samePayment} and 
Lemma~\ref{lemma:twointervals2}, there exists a bid vector $\bfb^{(0)} = (b_1,\ldots,b_m)$ such that $\mu(\bfb^{(0)}) > 0$, $x_1(\bfb^{(0)}) = x_2(\bfb^{(0)}) = 1$, $b_1 > p(\bfb^{(0)})$ and $b_2 > p(\bfb^{(0)})$ ---
note that we can always relabel the bids to make the first two bids represent two confirmed bids.

Now, one by one, we reduce every unconfirmed bid down to zero and increase 
every confirmed bid to a sufficiently large value $\Gamma > |\bfb|_1$.
Formally, 
for $i = 1,\ldots,m$, we define \[
    \bfb^{(i)} = \left\{\begin{matrix}
        (\bfb^{(i-1)}_{-i}, 0),      & \text{ if } x_i(\bfb^{(i-1)}) = 0,\\ 
        (\bfb^{(i-1)}_{-i}, \Gamma), & \text{ if } x_i(\bfb^{(i-1)}) = 1.
       \end{matrix}\right.
\]
By Lemma~\ref{lemma:confirmInvariant}, 
when increasing user $1$'s and user $2$'s bids, their confirmation, payment, and miner revenue do not change. 
Later on, when increasing any confirmed user $i$'s bid where $i > 2$,  
any previous user bidding $\Gamma$ 
would still remain confirmed and pay the same.
When decreasing any unconfirmed user $i$'s bid to $0$ where $i > 2$, since $\Gamma$
is sufficiently large, and by Lemma~\ref{lemma:paychangeslow2}, 
any previous user bidding $\Gamma$ would remain confirmed, and although their payment
may change, change in the payment is slow.
Thus, at the end, users $1$ and $2$ are confirmed in the final vector  
$\bfb^{(m)}$.
\ignore{
both of them will bid $\Gamma$ in $\bfb^{(m)}$.
By Lemma \ref{lemma:paychangeslow2}, the payment grows slowly, so we have $\Gamma > p(\bfb^{(m)})$.
Then, by Lemma \ref{lemma:unconfirmedPayment}, every user who bids $\Gamma$ must be confirmed in $\bfb^{(i)}$ for all $i$.
}

\ignore{
Next, there are two possible cases.
\begin{enumerate}
    \item 
    $\mu(\bfb^{(m)}) > 0$; and 
    \item 
    $\mu(\bfb^{(i)}) = 0$ for some $i = 1,\ldots,m$.
\end{enumerate}
}
By Lemma~\ref{lem:unconfirmed0} and the fact that there is no burning, 
it must be that $\mu(\bfb^{(m)}) = 0$.
\ignore{
We start from the first case. 
Conceptually, once we reduce confirmed users bid continuously, miner's revenue must ``suddenly drops'' at some moment.
At that moment, the miner has incentive to sign a contract with that user.
Formally, let $\epsilon = p(\bfb^{(m)}) / 2$.
By Lemma \ref{lemma:confirmInvariant}, we can tune each confirmed user's bid to $p(\bfb^{(m)}) + \epsilon$ without changing their comfirmation, payment, and miner's revenue.
As a result, we obtain $\bfb'$ such that $x_i(\bfb') = 1$ for all $i$ such that $x_i(\bfb) = 1$.
Additionally, $p(\bfb') = p(\bfb^{(m)})$, and $\mu(\bfb') = \mu(\bfb^{(m)})$.
Let $t$ be the number of confirmed users in $\bfb^{(m)}$.
Imagine the real bid vector is $(\bfb^{(m)}_{-1}, p(\bfb^{(m)}) - \epsilon)$.
Since user $1$'s bid is lower than $p(\bfb')$, it is unconfirmed and pays nothing.
In this case, user $1$'s utility is zero, and miner's utility is at most $(t-1)(p(\bfb') + \epsilon)$.
The miner can sign a contract with user $1$, and ask it to bid $p(\bfb')$ instead.
In this case, user $1$'s utility becomes $-\epsilon$, while miner's utility becomes $\mu(\bfb') = t\cdot p(\bfb')$.
It violates $2$-weak-SCP.
}
Let $i^*$ be the smallest integer $i \in \{1,\ldots,m\}$ such that $\mu(\bfb^{(i)}) = 0$.
Then, we have $\mu(\bfb^{(i^* - 1)}) > 0$ and $\mu(\bfb^{(i^*)}) = 0$.
By Lemma~\ref{lemma:confirmInvariant}, increasing 
a confirmed user's bid does not change miner revenue, so user $i^*$ must be unconfirmed in $\bfb^{(i^* - 1)}$.
Imagine the real bid vector is $\bfb^{(i^* - 1)}$ which also represents everyone's true value.
In this case, the miner's revenue is $\mu(\bfb^{(i^* - 1)})$, user $1$'s utility is $\Gamma - p(\bfb^{(i^* - 1)})$, and user $i^*$'s utility is zero.
The miner can collude with user $1$ and user $i^*$, and ask user $i^*$ to bid $\Gamma$ instead.
The coalition now prepares a bid vector $\mu(\bfb^{(i^*)})$ where the second coordinate $\Gamma$ actually comes from user $i^*$ and $b_{i^*} = 0$ is a fake bid injected by the miner.
Since there is no burning and $\mu(\bfb^{(i^*)}) = 0$, the payment must be zero.
Therefore, the miner's revenue becomes zero, 
while user $1$'s utility becomes $\Gamma$, and user $i^*$'s utility becomes $b_{i^*}$.
By Lemma~\ref{lem:minerutilkto0}, we have $b_{i^*} \geq \mu(\bfb^{(i^* - 1)})$, and thus
the coalition strictly gains from this deviation, which 
violates $2$-weak-SCP.



\ignore{
\begin{lemma}\label{lemma:unconfirmInvariant}
Let $(\bfx, \bfp,\mu)$ be a deterministic mechanism which is UIC and $2$-weak-SCP.
Let $\bfb = (b_1, \ldots, b_m)$ be an arbitrary bid vector, where there exists a user $i$ having an unconfirmed bid, i.e., $x_i(\bfb) = 0$.
Then, for any bid vector $\bfb' = (\bfb_{-i}, b_i')$ such that $b_i' \geq 0$ and $\Delta := b_i - b'_i > 0$, the followings holds.
\begin{enumerate}
    \item Miner's revenue is constrainted by $\mu(\bfb) - \Delta \leq \mu(\bfb') \leq \mu(\bfb)$.
    \item For any user $j$, if $x_j(\bfb) = 1$ and $b_j > p_j(\bfb) + \Delta$, it must be $x_j(\bfb') = 1$ and $p_j(\bfb') \leq p_j(\bfb) + \Delta$.
\end{enumerate}
\end{lemma}
\begin{proof}
First, we show that miner's revenue is constrainted by $\mu(\bfb) - \Delta \leq \mu(\bfb')$.
For the sake of reaching a contradiction, suppose that $\mu(\bfb) - \Delta > \mu(\bfb')$.
Imagine that the real bid vector is $\bfb'$.
In this case, the miner can sign a contract to ask user $i$ to bid $b_i$ instead.
User $i$'s utility decreases by $\Delta$.
However, miner's utility changes from $\mu(\bfb')$ to $\mu(\bfb)$, which increases by more than $\Delta$.
That means the joint utility of miner and user $i$ would increase by signing the contract, which violates $2$-weak-SCP.

Next, we show that miner's revenue is constrainted by $\mu(\bfb') \leq \mu(\bfb)$.
For the sake of reaching a contradiction, suppose that $\mu(\bfb') > \mu(\bfb)$.
Imagine that the real bid vector is $\bfb$.
In this case, the miner can sign a contract to ask user $i$ to bid $b'_i$ instead.
Notice that user $i$ utility does not change, because it underbids.
However, miner's utility changes from $\mu(\bfb)$ to $\mu(\bfb')$, which increases.
That means the joint utility of miner and user $i$ would increase by signing the contract, which violates $2$-weak-SCP.

Finally, we show that user $j$'s bid must still be confirmed and its payment never increases more than $\Delta$.
For the sake of reaching a contradiction, suppose that $x_j(\bfb') = 0$ or $p_j(\bfb') > p_j(\bfb) + \Delta$.
Because $(\bfx, \bfp,\mu)$ is UIC, user $j$'s true value is $b_j$.
When user $i$ bids $b_i$, user $j$'s utility is $b_j - p_j(\bfb)$;
when user $i$ bids $b'_i$ and $x_j(\bfb') = 0$, user $j$'s utility is $0 < b_j - p_j(\bfb) - \Delta$;
when user $i$ bids $b'_i$ and $x_j(\bfb') = 1$, user $j$'s utility is $b_j - p_j(\bfb') < b_j - p_j(\bfb) - \Delta$.
Consequently, if user $i$ bids $b_i$ instead of $b'_i$, user $j$'s utility always increases by more than $\Delta$.
Now, imagine that the real bid vector is $\bfb'$.
The miner can sign a contract to ask user $i$ to bid $b_i$ instead.
In this case, user $i$'s utility decreases by $\Delta$, miner's utility never decreases, and user $j$'s utility increases by more than $\Delta$.
Therefore, their joint utility increases, which violates $2$-weak-SCP.
\end{proof}

\begin{mdframed}
\underline{{\bf Procedure} ${\sf ReduceBids}$:}

\vspace{5pt}
\noindent \textbf{Termination condition:}
The procedure terminates if one of the followings holds.
\begin{itemize}
\item The miner's revenue $\mu(\bfb) = 0$.
\item $b_i = 0$ for all $i \in S^\complement$ ($S^\complement = [m]\setminus S$).
\end{itemize}

\noindent 
Let $\Delta = \mu(\bfb)/m$ and $\bfb := \bfb^{(0)}$.
Repeat the following until the termination condition holds: 
\begin{enumerate}[leftmargin=5mm,itemsep=2pt]
\item \label{step:confirmed}
While there exists a user $j$ such that $x_j(\bfb) = 1$ but $b_j \neq p_j(\bfb) + 2 \Delta$, do the following:
    \begin{itemize}
        \item For all $i \in [m]$, if $x_i(\bfb) = 1$, set $b_i := p_i(\bfb) + 2 \Delta$ and add $i$ to $S$.
    \end{itemize}
\item \label{step:unconfirmed}
For all $i \in [m]$:
    \begin{itemize}
        \item If $x_i(\bfb) = 0$ and $b_i > 0$, set $b_i := b_i - \min(\Delta, b_i)$.
        \item Go to Step \ref{step:confirmed}.
    \end{itemize}
\end{enumerate}
\end{mdframed}

First, we show that the procedure always terminates within finite steps.
According to Lemma \ref{lemma:confirmInvariant}, for any user $i$, if $b_i = p_i(\bfb) + 2\Delta$ for some $\bfb$ at Step \ref{step:confirmed}, its payment would not change throughout the following setting at Step \ref{step:confirmed}.
Thus, there is only $m$ iterations at most before the procedure moves on to Step \ref{step:unconfirmed}.
By Lemma \ref{lemma:confirmInvariant} and Lemma \ref{lemma:unconfirmInvariant}, once a user $i$ is added to $S$, it would stay confirmed throughout the entire procedure.
Besides, a bid can increase only at Step \ref{step:confirmed}, and it must be added to $S$ after increasing.
In other words, if a user $i$ is in $S^\complement$ when the procedure terminates, $b_i$ must be non-increasing throughout the procedure.
At Step \ref{step:unconfirmed}, a member in $S^\complement$ either reduces by $\Delta$ or becomes zero.
Thus, after finite number of Step \ref{step:unconfirmed}, the value $\sum_{i \in S^\complement} b_i$ must become zero.
Notice that it is exactly one of the termination conditions. 
Therefore, the procedure terminates within finite steps.

Next, suppose that the procedure terminates because $\mu(\bfb) = 0$.
According to Lemma \ref{lemma:confirmInvariant}, Step \ref{step:confirmed} never changes miner's revenue.
Thus, the procedure must terminate because of Step \ref{step:unconfirmed}.
Let $k$ be the user who makes $\mu(\bfb) = 0$ when user $k$ reduces its bid at Step \ref{step:unconfirmed}.
Let $\bfc = (c_1,\ldots, c_m)$ be the bid vector just before the last Step \ref{step:unconfirmed}; 
that is, $\bfc$ is the vector such that $\bfb = (\bfc_{-k}, c_{k} - \min(\Delta, c_{k}))$.
Additionally, because the mechnism is without burning and $\mu(\bfb)$ is already zero, every user's payment must still be zero once we reduce $b_k$ further.
\Hao{Here we use the fact of without burn.}
By Lemma \ref{lemma:unconfirmInvariant}, we can reduces $b_k$ by $\Delta$ many times while remaining the confirmation of other bids.
Consequently, we reach another bid vector $\bfd = (\bfb_{-k}, 0)$ such that $x_i(\bfb) = 1$ implies $x_i(\bfd) = 1$ for all $i$.
Similarly, we also have $x_i(\bfc) = 1$ implies $x_i(\bfb) = 1$ for all $i$ by Lemma \ref{lemma:unconfirmInvariant}.
Thus, we conclude that $x_i(\bfc) = 1$ implies $x_i(\bfd) = 1$ for all $i$.
Because $\mu(\bfc) > 0$, there exists a user $l$ whose bid is confirmed in $\bfc$ such that $p_l(\bfc) > 0$. 
Besides, $b_1$ and $b_2$ must be added to $S$, so we have $|S| \geq 2$ when the procedure terminates.
Thus, there exists another user $t$ such that $x_t(\bfd) = 1$ and $t \neq l$.

Now, imagine that the real bid vector is $\bfc$.
In this case, miner's utility is $\mu(\bfc) > 0$, user $l$'s utility is $b_l - p_l(\bfc)$, and user $k$'s utility is zero since it is unconfirmed.
Because of UIC, user $k$'s true value is $b_k$.
Besides, by Lemma \ref{lemma:unconfirmInvariant}, we know that $b_k \geq \mu(\bfc) - \mu(\bfb) = \mu(\bfc)$.
The miner can sign a contract with user $l$ and user $k$ to ask user $k$ to bid $b_t$ instead.
In this case, the coalition prepares a bid vector $\bfd$ where the $b_t$ item actually comes from user $k$ and $b_k = 0$ is a fake transaction injected by the miner.
As we have shown, $x_i(\bfc) = 1$ implies $x_i(\bfd) = 1$ for all $i$, so we have $x_l(\bfd) = 1$.
In this case, miner's utility is $\mu(\bfd) = 0$, user $l$'s utility is $b_l$, and user $k$'s utility is $b_k \geq \mu(\bfc)$.
Notice that the coalition's utility increases, so it violates $2$-weak-SCP.

Finally, suppose that the procedure terminates because $b_i = 0$ for all $i \in S^\complement$, while $\mu(\bfb) > 0$.
Because $\mu(\bfb) > 0$, there must exist a user $r$ such that $p_r(\bfb) > 0$.
Let $\epsilon = p_r(\bfb) / m$ and consider a bid vector $\bfe = (e_1,\ldots, e_m)$ such that $e_i = p_i(\bfe) + \epsilon$ for all $i \in S$ and  $e_i = 0$ for all $i \in S^\complement$.
By Lemma \ref{lemma:unconfirmInvariant}, whenever we reduces a bid $b$ by $\Delta$ at Step \ref{step:unconfirmed}, other users' payments can only increase by $\Delta$ at most.
Thus, for all $i \in S$, user $i$'s bid must strictly larger than its payment throughout the procedure.
In this case, Lemma \ref{lemma:confirmInvariant} guarantees that we can tune the bid from $b_i$ to $e_i$ one by one so that $x_i(\bfb) = x_i(\bfe) = 1$ and $p_i(\bfb) = p_i(\bfe)$ for all $i \in S$.
Now, imagine that the real bid vector is $\bfe' = (\bfe_{-r}, p_r(\bfe) - \epsilon/2)$.
Notice that user $r$ becomes unconfirmed.
In this case, miner's utility is $\mu(\bfe')$ and user $r$'s utility is zero.
However, for all $i \neq r$, user $i$'s payment $p_i(\bfe')$ is upperbounded by $e_i$, so we have $\sum_{i \neq r}p_i(\bfe') \leq \sum_{i \neq r}p_i(\bfe) + \epsilon\cdot(m-1)$.
Because $\mu(\bfe') = \sum_{i \neq r}p_i(\bfe')$ and $\mu(\bfe) = \sum_{i \neq r}p_i(\bfe) + \epsilon\cdot m$, we have $\mu(\bfe) - \mu(\bfe') \geq \epsilon$.
\Hao{Here we use the fact of without burn.}
The miner can sign a contract with user $r$ and ask it to bid $e_i$ instead.
In this case, miner's utility becomes $\mu(\bfe)$ and user $r$'s utility becomes $e_i - p_i(\bfe) = -\epsilon /2$.
Thus, their joint utility increases by $\epsilon / 2$, which violates $2$-weak-SCP.
}

\section{A More Detailed Discussion of Related Work}
\label{sec:detailedrelatedwork}
In this section, we discuss some known fee mechanisms, monopolistic price \cite{zoharfeemech}, random sampling optimal price (RSOP)\cite{competitiveauction,zoharfeemech}, BEOS mechanism \cite{functional-fee-market}, and analyze why they fail to achieve UIC, MIC, and $1$-SCP at the same time.
Roughgarden~\cite{roughgardeneip1559,roughgardeneip1559-ec}
also provided a summary of known results --- we provide some more details
in this section.

\subsection{Monopolistic Price}
The monopolistic price mechanism was introduced by  
Lavi et al.~\cite{zoharfeemech}. We describe the mechanism below.
\begin{mdframed}
\begin{center}
{\bf Monopolistic Price}
\end{center}
\paragraph{Parameters:} the block size $B$

\paragraph{Mechanism:}
\begin{itemize}[leftmargin=5mm,itemsep=1pt]
\item 
{\it Inclusion rule.}
Given the bid vector $b_1 \geq b_2 \geq \cdots$, the miner calculates
\begin{equation}\label{eq:monoPrice}
	k^* = \newargmax_{k \in [B]} k \cdot b_k.	
\end{equation}
Then, the miner chooses $(b_1,b_2,\ldots,b_{k^*})$ to be the block.
\item 
{\it Confirmation rule.}
All transactions in the block are confirmed.
\item 
{\it Payment rule and miner revenue rule.}
All transactions in the block pay the lowest bid in the block, and all payments go to the miner.
\end{itemize}
\end{mdframed}

As Example 2.2 in Lavi et al.~\cite{zoharfeemech} pointed out, monopolistic price is not UIC.
However, as conjectured by Lavi et al.\cite{zoharfeemech} and proven by Yao~\cite{yaofeemech}, the strategic gain of the users by deviating from truthful bidding goes to zero as the number of users goes to infinity.
Moreover, monopolistic price is also not $1$-SCP.
Consider the following example.
Suppose there are only two bids in the mempool, $(10,6)$.
If the miner is honest, it will choose both bids. 
In this case, the miner's revenue is $2 \cdot 6 = 12$, and the utility of the second user is $0$.
However, the miner can sign a contract with the second user and ask it to bid $10$ instead.
Now, the miner's revenue becomes $2 \cdot 10 = 20$, and the utility of the second user becomes $-4$.
Their joint utility becomes $16$, which increases by $4$.
It violates $1$-SCP.

As the authors in \cite{zoharfeemech} pointed out, a myopic miner has no incentive to deviate.
We formulate the notion as the following proposition.

\begin{proposition}
Monopolistic price is MIC.
\end{proposition} 
\begin{proof}
The miner has two kinds of strategies to deviate from honest behavior: 
not to choose the highest $k^*$ bids and to inject fake bids, where $k^*$ is decided by Eq.(\ref{eq:monoPrice}).
Without loss of generality, we assume that
the miner chooses the included bids first, and replaces some of them with fake bids then. 
We will show that both steps would not increase the miner's utility.

Notice that choosing the highest $k^*$ bids always gives the optimal revenue by Eq.(\ref{eq:monoPrice}).
Next, suppose the miner may or may not follow the inclusion rule, and prepares $\bfc = (c_1,\ldots,c_t)$, where $\bfc$ may or may not include some fake bids and $c_1 \geq \cdots \geq c_t$.
In this case, the revenue is $t \cdot c_t$.
Now, suppose the miner injects one more fake bid $f$.
If $f \geq c_t$, the revenue becomes $(t+1) \cdot c_t$, while the miner needs to pay $c_t$ for injecting $f$.
Thus, the overall utility does not increase.
If $f < c_t$, the revenue becomes $(t+1) \cdot f$, while the miner needs to pay $f$ for injecting $f$.
Thus, the overall utility becomes $t \cdot f$, which is less than not injecting $f$.

Finally, notice that the argument above holds no matter $B$ is finite or infinite.
\end{proof}

\subsection{Random Sampling Optimal Price}
Random sampling optimal price (RSOP) was introduced by Goldberg et al.~\cite{competitiveauction}, and Lavi et al.~\cite{zoharfeemech} analyzed the incentive compatibility in the context of fee mechanism.
We describe the mechanism below.

\begin{mdframed}
\begin{center}
{\bf Random Sampling Optimal Price (RSOP)}
\end{center}
\paragraph{Parameters:} the block size $B$

\paragraph{Mechanism:}
\begin{itemize}[leftmargin=5mm,itemsep=1pt]
\item 
{\it Inclusion rule.}
Choose the highest $B$ bids into the block.
\item 
{\it Confirmation rule.}
Let $C$ and $D$ be empty sets.
For each bid, it is put in $C$ with probability $1/2$ and in $D$ with probability $1/2$.
In other words, $C$ and $D$ form a partition for all bids in the block.
Let $C = (c_1,\ldots, c_{|C|})$ where $c_1 \geq \cdots \geq c_{|C|}$, and $D = (d_1,\ldots, d_{|D|})$ where $d_1 \geq \cdots \geq d_{|D|}$.
Then, a bid $b$ is confirmed if it is in $C$ and $b \geq c_{k_D^*}$, where $k_D^* = \newargmax_{k \in [|D|]} k \cdot d_k$ or it is in $D$ and $b \geq d_{k_C^*}$, where $k_C^* = \newargmax_{k \in [|C|]} k \cdot c_k$.
\item 
{\it Payment rule.}
All confirmed transactions in $C$ pay $c_{k_C^*}$ and all confirmed transactions in $D$ pay $d_{k_C^*}$.
\item 
{\it Miner revenue rule.}
All payments go to the miner.
\end{itemize}
\end{mdframed}


\Hao{The following example says that RSOP is not UIC if a user can submit multiple bids.
Suppose there are only two users in the mempool, and the true value of the first user is $4$.
Now, the second user bids $b_2 = 10$ (its true value does not matter).
If the first user bids truthfully, $b_1$ is confirmed if $b_1$ and $b_2$ fall into the same set, and the utility of the first user is $4$.
However, if they fall into differcent sets, $b_1$ would not be confirmed, so the utility is $0$.
The expected utility of the first user is $2$ if it bids truthfully.
On the other hand, suppose the first user submits two bids $b_1 = 1$ and $b_1' = 1$.
We define user $1$'s utility such that it earns its true value if at least one of its bids is confirmed, and the overall utility is the true value minuses the cost.

Without loss of generality, we assume $b_2$ falls into set $A$.
Then, there are three cases:
\begin{itemize}
	\item Both $b_1$ and $b_1'$ fall into set $A$. 
	The utility of the first user is $4$.
	This event happens with probability $1/4$.
	\item Either $b_1$ or $b_1'$ (but not both) falls into set $A$. 
	The utility of the first user is $3$.
	This event happens with probability $1/2$.
	\item Neither $b_1$ nor $b_1'$ falls into set $A$. 
	The utility of the first user is $0$.
	This event happens with probability $1/4$.
\end{itemize}
Thus, the expected utility of the first user is $5/2 > 2$ if it bids strategically.
}

RSOP is UIC by our definition, because each user's confirmation and payment 
do not depend on its own bid, but depends on the bids in the other set.
As pointed out in Example 5.3 and Example 5.5 in Lavi et al.~\cite{zoharfeemech}, the miner is incentived to inject some fake bids or not to choose the highest bids from the mempool, so RSOP is not MIC.

By modifying Example 5.3 in \cite{zoharfeemech}, we obtain the following example saying that RSOP is not $1$-SCP.
Suppose there are only two users in the mempool, and the true value of the first user is $1$.
Now, the second user bids $3$ (its true value does not matter).
If both the miner and the first user are honest, two bids either fall to the same set or fall into differcent sets, each with probability $1/2$.
If they fall to the same set, the miner's revenue is zero, and the utility of the first user is $1$;
if they fall into differcent sets, the miner's revenue is $1$, and the utility of the first user is $0$.
Thus, the expected joint utility is $1$.
However, the miner can sign a contract with the first user and ask it to bid $2$ instead.
If they fall to the same set, the miner's revenue is zero, and the utility of the first user is $1$; 
if they fall into differcent sets, the miner's revenue is $2$, and the utility of the first user is $0$.
Their joint utility becomes $3/2$, which violates $1$-SCP.

\subsection{The BEOS Mechanism}

The following mechanism was introduced by Basu et al.\cite{functional-fee-market}, and we formulate in terms of inclusion, confirmation, payment, and miner revenue rule.
Roughgarden~\cite{roughgardeneip1559,roughgardeneip1559-ec} argued that BEOS mechanism does not satisfy UIC, MIC, or $1$-SCP, and we briefly explain why below for completeness.

\begin{mdframed}
\begin{center}
{\bf BEOS mechanism}
\end{center} 
\paragraph{Parameters:} 
\begin{itemize}[leftmargin=5mm,itemsep=1pt]
	\item the block size $B$
	\item fill threshold $K$, where $0 \leq K \leq B$
	\item interval $I$ for profit-sharing 
	\item minimum entry fee $f$
\end{itemize}

\paragraph{Mechanism:}
\begin{itemize}[leftmargin=5mm,itemsep=1pt]
\item 
{\it Inclusion rule.}
Choose the highest $t$ non-zero bids from mempool to include in the block, so the block is $(b_1, \ldots, b_t)$ where $b_1 \geq \cdots \geq b_t$.
If $t < K$, the miner further chooses one of the following options with higher revenue.
\begin{itemize}
	\item The miner declares the mempool is too empty.
	\item The miner chooses to pay fill penalty, which is $b_t \cdot (K - t)$.
\end{itemize}
\item 
{\it Confirmation rule.}
If the miner declares the mempool is too empty, only the transactions $\geq f$ are confirmed.
Otherwise, all transactions in the block are confirmed.
\item 
{\it Payment rule.}
If $t \leq K$, all transactions in the block pay the lowest confirmed bid in the block. 
Otherwise, if $t > K$, the highest $K$ bids all pay $b_K$, and all other bids pay nothing.
\item 
{\it Miner revenue rule.}
Let $S$ denote the fee of the current block, which is sum of all payment in the block and the fill penalty (if there is any).
The miner is paid $S/I$ and the corresponding fees in the upcoming $I - 1$ blocks.
\end{itemize}
\end{mdframed}

Because the bid ($b_k$) who decides the payment is also confirmed, that user can try to bid lower strategically so that it is still confirmed.
Therefore, the BEOS mechanism violates UIC.

As the authors \cite{functional-fee-market} pointed out, the BEOS mechanism is not MIC, while the strategic gain of the miner by deviating from the mechanism decreases as the reward is splitting into more future blocks ($I$).
To see why it is not MIC, consider the following example.
Suppose $K = 4$ and $I = 3$, and the mempool is $(5, 5, 5, 1)$.
Since it is possible to make the block full, an honest miner should include all four transactions in the mempool, and earn $(1 \cdot 4)/3 = 4/3$ from this block.
However, if the miner only includes $(5, 5, 5)$ and pays the fill penalty $5$, the fee from this block is $(5 \cdot 4)/3 = 20/3$.
Thus, the miner's utility from this block is the difference between the total fee and the fill penalty, $20/3 - 5 = 5/3 > 4/3$.

The same example also suggests the mechanism is not $1$-SCP.
In the honest case, the miner's revenue is $4/3$ and the utility of the fourth user is $v - 1$, where $v$ is the true value.
However, the miner can sign a contract with the fourth user and ask it to bid $5$ instead.
In this case, the miner's revenue becomes $20$ and the utility of the fourth user becomes $v - 5$.
Their joint utility increases.

\Hao{Actually, I conjecture that if $f = 0$ and $K \leq I$, their protocol is MIC.}

\subsection{Dynamic Posted-Price}

Ferreira et al.~\cite{dynamicpostedprice} proposed the dynamic posted-price mechanism.
On the single block level, their mechanism is exactly the posted-price without burning.
Their contribution is to design a rule to decide the posted-price for each block given the blockchain history.
As explained in Appendix \ref{sec:ic-compare}, the posted-price auction satisfies UIC and MIC, but it is not $1$-SCP.

\end{document}